\theoremstyle{plain}
\newtheorem{theorem}{Theorem}
\newtheorem{prop}{Proposition}
\newtheorem{lemma}{Lemma}
\newtheorem{conjecture}{Conjecture}
\newtheorem{cor}{Corollary}
\theoremstyle{definition}
\newtheorem{defi}{Definition}
\newtheorem{remark}{Remark}
\newcommand{\M}{\mathcal{M}}
\newcommand{\F}{\mathcal{F}}
\newcommand{\R}{\mathcal{R}}
\newcommand{\G}{\mathcal{G}}
\newcommand{\EE}{\mathbb{E}}
\newcommand{\CC}{\mathbb{C}}
\newcommand{\QQ}{\mathbb{Q}}
\newcommand{\ZZ}{\mathbb{Z}}
\newcommand{\e}{\epsilon}
\newcommand{\nn}{\nonumber}
\newcommand{\p}{\partial}
\newcommand{\W}{\mathcal{W}}
\newcommand{\beq}{\begin{equation}}
\newcommand{\eeq}{\end{equation}}
\def\bt{\mathbf t}
\def\bT{\mathbf T}
\def\={\,=\,}
\def\+{\,+\,}
\def\G{\mathcal{G}}
\def\P{\mathcal{P}}
\def\W{\mathcal{W}}
\def\thin{\hskip 1 pt}
\newcolumntype{?}{!{\vrule width 1pt}}
\title{Mapping partition functions}
\author{Di Yang}
\author{Don Zagier}
\date{}
\begin{document}
\begin{abstract}
We introduce an infinite group action on
partition functions of WK type, meaning of the type of the partition function 
$Z^{\scriptscriptstyle \rm WK}$ in the famous result of Witten and Kontsevich 
 expressing the partition function of $\psi$-class integrals on the compactified moduli space $\overline{\M}_{g,n}$ 
as a $\tau$-function for the Korteweg--de~Vries hierarchy. 
Specifically, the group which acts 
 is the group $\G$ of formal power series of one variable $\varphi(V)=V+O(V^2)$,
with group law given by composition, acting in a suitable way on the infinite tuple of variables 
of the partition functions. 
In particular, any \hbox{$\varphi \in \G$} sends the Witten--Kontsevich (WK) partition function $Z^{\scriptscriptstyle \rm WK}$ to a new partition function
$Z^\varphi$, which we call the {\it WK mapping partition function associated to~$\varphi$}.   
We show that the genus zero part of $\log Z^\varphi$ is independent of~$\varphi$
and give an explicit recursive description for its higher genus parts (loop equation), 
and as applications of this obtain relationships of the $\psi$-class integrals 
to Gaussian Unitary Ensemble and generalized Br\'ezin--Gross--Witten correlators.
In a different direction, we use $Z^\varphi$ to construct a new integrable hierarchy, which we call 
 the {\it WK mapping hierarchy associated to~$\varphi$}. We show that this hierarchy is a 
 bihamiltonian perturbation of the Riemann--Hopf hierarchy possessing a $\tau$-structure, 
 and prove that it is a universal object for all such perturbations.
Similarly, for any \hbox{$\varphi\in\G$}, 
we define the {\it Hodge mapping partition function associated to~$\varphi$}, prove that it is integrable, and 
study its role in hamiltonian perturbations of the Riemann--Hopf hierarchy possessing a $\tau$-structure. 
Finally, we establish a {\it generalized Hodge--WK correspondence} relating different Hodge mapping partition functions.
\end{abstract}

\keywords{KdV hierarchy, mapping partition function, Dubrovin--Zhang hierarchy, mapping universality.}
\maketitle

\setcounter{tocdepth}{1}
\tableofcontents

\section{Introduction}
{\it The Korteweg--de Vries (KdV) equation} 
\beq\label{kdvequation1002}
\frac{\p u}{\p t} \= u \, \frac{\p u}{\p x} \+ \frac{\e^2}{12} \, \frac{\p^3 u}{\p x^3}
\eeq
was discovered in the study of shallow water waves in the 19th century~\cite{Bouss, KdV}. 
It was shown~\cite{Lax, Miura68, MGK} in the 1960s that this equation can be  
extended to a family of pairwise commuting evolutionary PDEs, called the {\it KdV hierarchy}:
\beq\label{kdvflows26}
\frac{\p u}{\p t_i} \= \frac{u^i}{i!} \, \frac{\p u}{\p x} 
\+ \e^2 \, K_i\biggl(u, \frac{\p u}{\p x}, \frac{\p^2 u}{\p x^2}, \dots, \frac{\p^{2i+1} u}{\p x^{2i+1}}, \e \biggr)\,, \quad i\ge0\,.
\eeq
Here $t_0=x$, $t_1=t$, and $K_i$, $i\ge0$, are certain polynomials.
For more about the KdV hierarchy 
see e.g.~\cite{DYZ21, DZ-norm, Sato, Wi91}. 
The {\it Riemann--Hopf (RH) hierarchy}, aka the dispersionless KdV hierarchy, is  
defined again as~\eqref{kdvflows26} but with $\e$ taken to be~0.

In 1990, Witten~\cite{Wi91} made a famous conjecture: the partition function $Z^{\scriptscriptstyle \rm WK}(\bt;\e)$ of $\psi$-class 
integrals on the Deligne--Mumford moduli space of algebraic curves~\cite{DM69}
\beq\label{defZ}
Z^{\scriptscriptstyle\rm WK}(\bt;\e) \= \exp\biggl(\,\sum_{g, \, n\ge0} \e^{2g-2}\sum_{i_1,\dots,i_n\ge0} \frac{t_{i_1} \cdots t_{i_n}}{n!}\int_{\overline{\M}_{g,n}} \, \psi_1^{i_1}\cdots\psi_n^{i_n}\biggr) \,, 
\eeq
 is a $\tau$-function for the KdV hierarchy, and in particular, 
\beq\label{defuwk114}
u^{\scriptscriptstyle \rm WK}(\bt;\e):=\e^2 \p_{t_0}^2 (\log Z^{\scriptscriptstyle \rm WK}(\bt;\e))
\eeq 
 satisfies the KdV hierarchy~\eqref{kdvflows26}.
Here $\bt=(t_0,t_1,t_2,\dots)$ is an 
infinite tuple of indeterminates, $\overline{\M}_{g,n}$ denotes the moduli space 
of stable algebraic curves of genus~$g$ with~$n$ distinct marked points, and  
$\psi_a$ ($a=1,\dots,n$) denotes the first Chern class of the 
 $a$th tautological line bundle on~$\overline{\M}_{g,n}$. 
Note that
the integral appearing in the right-hand side of~\eqref{defZ} vanishes unless the degree-dimension matching condition  
\beq\label{dd1003}
i_1 \+ \cdots \+ i_n \= 3g \,-\, 3 \+ n 
\eeq
is satisfied.
Witten's conjecture, that opens the studies of the deep relations between topology of~$\overline{\M}_{g,n}$ 
and integrable systems, was first proved by Kontsevich~\cite{Ko92} and is now known as 
the {\it Witten--Kontsevich theorem}. 
See \cite{AIS, CLL08, KL07, KL05, Mir07, OP09} for several other proofs of this theorem. 
The function $Z^{\scriptscriptstyle \rm WK}(\bt;\e)$ is 
 referred to indifferently as the {\it WK (Witten--Kontsevich) partition function} 
or as the {\it WK tau-function}.

The general notion behind the story,  which has appeared in combinatorics, statistical physics, 
matrix models, and other places, is that many interesting partition functions are $\tau$-functions of 
integrable systems. 
On one hand, there are axiomatic or constructive ways approaching topologically 
  interesting numbers~\cite{DZ-norm, Gi01, KM}, the 
 partition functions of which would correspond to some 
 integrable systems. 
 In particular, Dubrovin and Zhang~\cite{DZ-norm} gave a constructive way of defining 
a hierarchy of evolutionary PDEs in $(1+1)$ 
dimensions\footnote{For readers not familiar with some of the terminology, we refer to 
Section~\ref{newnewsection7} for a brief review.}  
associated to essentially any partition function.
For instance, their construction applied to the WK partition  function gives the KdV hierarchy. 
The Dubrovin--Zhang hierarchy corresponding to the partition function of Hodge integrals
on~$\overline{\M}_{g,n}$ depending on an infinite family of  parameters~\cite{DLYZ16} 
will also play an important role in this paper and will be called simply the {\it Hodge hierarchy}. 
On the other hand, one is interested in finding certain  integrable systems that admit 
$\tau$-functions, sometimes called possessing a $\tau$-structure\footnote{In literature (see e.g.~\cite{DLYZ16, DZ-norm}), 
 ``(bi-)hamiltonian $\tau$-structure" is often specialized to $\tau$-symmetry, but ``$\tau$-structure" in this paper has the 
broader meaning (for details see Section~\ref{newnewsection8}; cf.~\cite{DYZ21}, \cite{VY}).}, which axiomatically 
 leads to certain classification invariants~\cite{DLYZ16, DZ-norm, LYZZ22}. 
 The deep relations between these two notions is revealed most beautifully 
 when there is a one-to-one correspondence between them, an example
being  the {\it Hodge universality conjecture} in the study of  
Hodge integrals (rank~1 cohomological field theories) and 
$\tau$-symmetric integrable hierarchies of Hamiltonian 
evolutionary PDEs~\cite{DLYZ16}, which says that the Hodge hierarchy is a universal object for
one-component $\tau$-symmetric integrable Hamiltonian perturbations of the 
RH hierarchy,\footnote{Here ``perturbation of the RH hierarchy"
means a hierarchy of evolutionary PDEs whose right-hand sides differ from those of 
the RH hierarchy by terms with more than one spatial derivative.} i.e., conjecturally any such integrable hierarchy 
is equivalent to the Hodge hierarchy. 

In this paper we will study  
the deep relations from a novel perspective that sheds new light on both sides:

\vspace{1.5mm} 

\noindent {\bf (a)} We introduce an infinite group action, different from those of Givental or 
Sato--Segal--Wilson, on the arguments (infinite tuples) of partition functions. 
The group which acts is the group~$\G$ of power series of one variable $\varphi(V)=V+O(V^2)$, acting on the right
on the infinite tuple (denoted $\bt \mapsto \bt.\varphi$ and defined in equation~\eqref{defsigmaction} below).
In particular, if we start with the WK partition function then each element $\varphi$ of the group 
defines a new partition function $Z^\varphi(\bt;\e) := Z^{\scriptscriptstyle\rm WK}(\bt.\varphi^{-1};\e)$, 
which we will call the {\it WK mapping partition function associated to~$\varphi$}.
The coefficients of its logarithm provide new and potentially interesting numbers, although we 
do not know their topological meaning.  We show that the genus-zero part of this logarithm is 
independent of~$\varphi$,  
give the dilaton equation and Virasoro constraints, and derive loop equations determining also the higher genus parts.
Now applying the Dubrovin--Zhang construction to the WK mapping partition function we obtain a new
hierarchy which we will call  the {\it WK mapping hierarchy associated to~$\varphi$}. We show that this hierarchy
can be obtained  by a space-time exchange combined with a Miura-type transformation on the KdV hierarchy,
and then by using a recent result given by S.-Q.~Liu, Z.~Wang and Y.~Zhang~\cite{LWangZ}, 
prove the following theorem in Section~\ref{section6}: 
\begin{theorem}\label{thmmainshortversion}
The WK mapping hierarchy
is a bihamiltonian perturbation of the RH hierarchy possessing a $\tau$-structure. 
\end{theorem}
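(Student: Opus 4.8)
The plan is to deduce all the assertions of the theorem --- that the WK mapping hierarchy is a perturbation of the RH hierarchy, that it is bihamiltonian, and that it possesses a $\tau$-structure --- from the corresponding classical properties of the KdV hierarchy, by tracking how each of these properties behaves under the two operations that, as announced above, connect the two hierarchies: a space-time exchange and a Miura-type transformation. The starting point is that the KdV hierarchy, i.e.\ the Dubrovin--Zhang hierarchy of $Z^{\scriptscriptstyle\rm WK}$, is bihamiltonian --- carrying the Magri pencil of compatible Poisson brackets --- and possesses the canonical $\tau$-structure realized by the WK $\tau$-function. The goal is to show that neither of these structures is destroyed in passing to $Z^\varphi$.

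First I would make the relation to KdV completely explicit. Using the definition $Z^\varphi(\bt;\e)=Z^{\scriptscriptstyle\rm WK}(\bt.\varphi^{-1};\e)$ together with the formula \eqref{defsigmaction} for the action of $\G$, I would compute the Dubrovin--Zhang hierarchy attached to $Z^\varphi$ and verify that it is obtained from the KdV hierarchy by first exchanging the role of the spatial variable $x=t_0$ with a suitable time, and then applying a Miura-type transformation whose coefficients are read off from the power series $\varphi$. This identification is the technical heart of the argument and must be carried out before any structural statement can be invoked, since everything that follows rests on knowing precisely which transformations are being applied.

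Next I would treat the two transformations separately. The Miura-type transformation is the benign one: Miura transformations carry a bihamiltonian structure to a Miura-equivalent bihamiltonian structure and send a $\tau$-structure to a $\tau$-structure, so this step preserves bihamiltonicity and the $\tau$-structure automatically. The delicate step is the space-time exchange, which in general need \emph{not} preserve locality, the Hamiltonian property, or the existence of a $\tau$-function. Here I would invoke the recent result of S.-Q.~Liu, Z.~Wang and Y.~Zhang~\cite{LWangZ}: after checking that the KdV hierarchy, in the normalization relevant to $Z^{\scriptscriptstyle\rm WK}$, satisfies the hypotheses of their theorem, one concludes that the space-time exchanged hierarchy is again bihamiltonian and carries a $\tau$-structure. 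Composing with the Miura step then yields the bihamiltonian property and the $\tau$-structure for the WK mapping hierarchy.

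Finally, for the perturbation statement I would identify the dispersionless limit. Setting $\e=0$ kills the dispersive corrections, and since the genus-zero part of $\log Z^\varphi$ is independent of~$\varphi$ --- a fact established earlier in the paper --- the leading term of the WK mapping hierarchy coincides with that of the WK (hence KdV) hierarchy, namely the RH hierarchy; so the WK mapping hierarchy is indeed a perturbation of the RH hierarchy, completing the proof. The main obstacle is the space-time exchange: it is precisely the place where structure can be lost, and the whole argument hinges on verifying the hypotheses of~\cite{LWangZ} in the present setting and on confirming that the exchange combined with the Miura transformation reproduces exactly the Dubrovin--Zhang hierarchy of $Z^\varphi$ computed in the first step.
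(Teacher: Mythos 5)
Your proposal follows essentially the same route as the paper: Section~\ref{section6} proves the theorem by exhibiting the WK mapping hierarchy as the KdV hierarchy after a space-time exchange (realizing $\p_X$ as the flow $D_{\sqrt{\varphi'(\varphi^{-1}(u))}}$ of the abstract local KdV hierarchy, whose polynomiality is Proposition~\ref{localkdv63}) followed by the Miura-type transformation $u\mapsto U^\varphi=\varphi^{-1}(u)+O(\e^2)$, with the result of~\cite{LWangZ} supplying exactly the preservation of the bihamiltonian structure under the reciprocal transformation, and with Theorem~\ref{thmgenus0} giving the RH dispersionless limit. The one piece you leave implicit that the paper makes explicit is the construction of the concrete Poisson pencil $P_1^\varphi,P_2^\varphi$ and of the two-point functions $\Omega^\varphi_{i,j}$ as linear combinations of the $\Omega^{\scriptscriptstyle\rm KdV}_{i_1,j_1}$, but this is the same argument carried one step further rather than a different idea.
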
 

\noindent {\bf (b)} 
We study the classification of bihamiltonian perturbations of the RH hierarchy possessing 
a $\tau$-structure under the Miura-type group action. In \cite{DLYZ16} a related but different 
classification work was studied and it was conjectured that the universal object 
for the $\tau$-symmetric integrable hierarchies of bihamiltonian evolutionary PDEs
is the Volterra lattice hierarchy. 
Here, however, we consider a larger class by allowing a weaker form of the 
$\tau$-symmetry condition used in~\cite{DLYZ16, DZ-norm}. 
It turns out that there is a rich family of such bihamiltonian perturbations, part of which can be seen from 
Theorem~\ref{thmmainshortversion}, and we propose and prove the {\it WK mapping universality 
theorem}: the WK mapping hierarchy is a universal object in one-component bihamiltonian perturbations 
of the RH hierarchy possessing a $\tau$-structure. This theorem has a precise numerical meaning and 
we also give verifications of it to high orders in Section~\ref{section6}. 

\vspace{1.5mm}

Similarly, we consider the $\G$-action on the Hodge partition function.
The resulting power series will be called the {\it Hodge mapping partition function}, and 
the Dubrovin--Zhang hierarchy for the Hodge mapping partition function will be called the {\it Hodge mapping hierarchy}.
\begin{theorem}\label{thmmain2}
The Hodge mapping hierarchy
is an integrable perturbation of the RH hierarchy possessing a $\tau$-structure. 
\end{theorem}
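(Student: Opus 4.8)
The plan is to mirror the proof of Theorem~\ref{thmmainshortversion}, with the Hodge hierarchy playing the role there played by the KdV hierarchy: the Hodge hierarchy, being the Dubrovin--Zhang hierarchy of the Hodge partition function, is already known to be an integrable perturbation of the RH hierarchy possessing a $\tau$-structure~\cite{DLYZ16}, and the task is to transport these properties along the $\G$-action. I would first settle that the Hodge mapping hierarchy is a \emph{perturbation} of the RH hierarchy. Since the Hodge bundle on $\overline{\M}_{g,n}$ has rank~$g$, every $\lambda$-class restricts to zero on the genus-zero strata, so the genus-zero part of the Hodge free energy coincides with that of $Z^{\scriptscriptstyle\rm WK}$; together with the $\varphi$-invariance of the genus-zero part of a mapping partition function established earlier, this shows that the dispersionless limit of the Hodge mapping hierarchy is again the RH hierarchy, so its leading term is correct.

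Next I would realize the Hodge mapping hierarchy as the image of the Hodge hierarchy under a space-time exchange followed by a Miura-type transformation, in direct analogy with the WK case. The key point is that the $\G$-action, defined by~\eqref{defsigmaction} purely in terms of the tuple~$\bt$, reparametrizes the spatial variable in a way depending only on~$\varphi$ and not on which partition function one starts from; and since the genus-zero structures of the two hierarchies agree by the previous step, the space-time exchange admissible for KdV remains admissible here (the leading velocity of the flow used for the exchange stays invertible). Because space-time exchanges and Miura-type transformations both preserve commutativity of flows and send a $\tau$-structure to a $\tau$-structure, integrability and the existence of a $\tau$-structure for the Hodge mapping hierarchy would follow immediately from the corresponding properties of the Hodge hierarchy.

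The main obstacle is to prove that this composition of transformations genuinely reproduces the Dubrovin--Zhang hierarchy of the mapped partition function---that is, to match the direct DZ construction against the transformed Hodge hierarchy. Concretely, one must check that the normal coordinate and $\tau$-structure produced by the exchange-plus-Miura procedure coincide with those produced by the DZ construction applied directly to the Hodge mapping partition function, and that the infinite family of Hodge parameters is carried consistently through the space-time exchange. Unlike the WK case, no bihamiltonian input (and hence no appeal to~\cite{LWangZ}) is needed, which is precisely why the conclusion is integrability together with a $\tau$-structure rather than the stronger bihamiltonian statement of Theorem~\ref{thmmainshortversion}.
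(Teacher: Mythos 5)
Your proposal follows essentially the same route as the paper's proof in Section~\ref{section8}: there the new spatial derivative is identified with the Hodge-hierarchy flow $D_{\sqrt{\varphi'(\varphi^{-1}(w))}}$ (the space-time exchange), the polynomiality and $\tau$-structure are transported from the $\varphi=\mathrm{id}$ case via the affine-linear relation $\Omega^{\varphi}_{i,j}=\sum \frac{\p t_{i_1}}{\p T_i}\frac{\p t_{j_1}}{\p T_j}\Omega^{\scriptscriptstyle\rm Hodge}_{i_1,j_1}$, and $U=\Omega^{\varphi,{\scriptscriptstyle\rm Hodge}}_{1,1}=\varphi^{-1}(w)+O(\e^2)$ is shown to be a Miura-type transformation, with no bihamiltonian input needed. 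The matching step you flag as the main obstacle is exactly what the paper resolves through this two-point-function identity, so your outline is correct and consistent with the published argument.
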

\noindent 
The proof of a refined version of this theorem is given in Section~\ref{section8}.
We expect that this integrable hierarchy is hamiltonian.
Note that our proof for the integrability also works for the WK mapping hierarchy, and also that  
Theorem~\ref{thmmain2} generalizes part of the result in Theorem~\ref{thmmainshortversion}.
 We will also propose in Section~\ref{section8} (see Conjecture~\ref{Hmuconj226} there)
 the {\it Hodge mapping universality conjecture}: the Hodge mapping hierarchy
is a universal object in one-component hamiltonian perturbations of the RH hierarchy possessing a $\tau$-structure 
(weakening again the $\tau$-symmetry condition from~\cite{DLYZ16, DZ-norm}). This conjecture 
generalizes the Hodge universality conjecture~\cite{DLYZ16}.

\smallskip

For the special case when the group element $\varphi$ is taken to be 
\beq\label{varphispecialdefi}
\varphi_{\rm special}(V) \, :=  \, \frac{e^{2qV}-1}{2q}\,,
\eeq 
by using the loop equation we will
prove in Section~\ref{sectionexample223} 
the {\it Hodge--WK correspondence} described in the following theorem, which is a relationship between 
a certain special-Hodge 
partition function $Z_{\Omega^{\rm special}(q)}(\bt;q)$ (see \eqref{hodgepar1111} and~\eqref{Omegaspecial81} in Section~\ref{sectionexample223} for the definition) and 
the WK partition function $Z^{\scriptscriptstyle \rm WK}(\bt;\e)$.
\begin{theorem}\label{thmhwk} 
The following identity holds in $\CC((\e^2))[[q]][[\bt]]:$ 
\beq\label{mainidentity}
Z_{\Omega^{\rm special}(q)}(\bt . \varphi_{\rm special};\e) \= Z^{\scriptscriptstyle \rm WK}(\bt; \e)\,.
\eeq
\end{theorem}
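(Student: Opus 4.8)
The plan is to establish the identity \eqref{mainidentity} by showing that both sides satisfy the same recursion (loop equation) with the same genus-zero data, so that equality of all coefficients follows by induction on the genus. The key structural observation is that the left-hand side $Z_{\Omega^{\rm special}(q)}(\bt.\varphi_{\rm special};\e)$ is, by the very definition of the Hodge mapping partition function, obtained from the special-Hodge partition function $Z_{\Omega^{\rm special}(q)}$ by applying the $\G$-action with group element $\varphi_{\rm special}$. Thus I would first rewrite \eqref{mainidentity} as an assertion comparing the $\varphi_{\rm special}$-mapped Hodge partition function with the unmapped WK partition function, and interpret it as a statement that the particular Hodge-type deformation encoded by $\Omega^{\rm special}(q)$ is exactly undone, at the level of correlators, by the change of variables $\bt\mapsto\bt.\varphi_{\rm special}$ induced by $\varphi_{\rm special}(V)=\tfrac{e^{2qV}-1}{2q}$.

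First I would verify the genus-zero part. Since the excerpt already records that the genus-zero part of $\log Z^\varphi$ is independent of $\varphi$, and since an analogous independence is available on the Hodge side, the genus-zero contributions of both sides of \eqref{mainidentity} reduce to the common dispersionless (RH) free energy. This pins down the base case of the induction and, simultaneously, identifies the dispersionless limit $u(\bt)$ that enters the loop equation. I would then set up the loop equation for the Hodge mapping partition function associated to $\varphi_{\rm special}$, using the explicit recursive description promised in part (a) of the introduction, and write down in parallel the loop equation for $Z^{\scriptscriptstyle\rm WK}$ (which is the special case $\varphi=\mathrm{id}$). The substitution $e^{2qV}$ in \eqref{varphispecialdefi} is chosen precisely so that the $q$-dependence introduced through $\Omega^{\rm special}(q)$ on the Hodge side matches, term by term, the $q$-dependence introduced through the composition $\bt.\varphi_{\rm special}$ on the WK side.

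The main step is then to check that, after substituting $\varphi=\varphi_{\rm special}$, the two loop equations coincide as operator recursions in the genus parameter. Concretely, I would compute the coefficients appearing in the Hodge loop equation—these are governed by the Chern character classes packaged in $\Omega^{\rm special}(q)$, which for this special choice are generated by the exponential $e^{2qV}$—and show that they are transformed into the WK loop-equation coefficients exactly by the Jacobian and higher derivatives of $\varphi_{\rm special}$. Because $\varphi_{\rm special}'(V)=e^{2qV}$ and all its derivatives are again proportional to $e^{2qV}$, the resulting matching is a clean exponential identity rather than a genuinely complicated combinatorial cancellation; this is the reason the special form \eqref{varphispecialdefi} is singled out.

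The hard part will be controlling the interaction between the genus expansion and the $\G$-action on the infinite tuple $\bt$: one must show that the higher-genus terms produced by the change of variables $\bt\mapsto\bt.\varphi_{\rm special}$ reorganize themselves exactly into the higher-genus Hodge corrections, with no leftover terms, and that this holds as a formal identity in $\CC((\e^2))[[q]][[\bt]]$ rather than merely order by order in a single variable. I expect that the cleanest way to secure this is to phrase the whole comparison at the level of the basic two-point function (or the associated resolvent/one-point series), where the loop equation takes its most rigid form, and to check that the two sides obey an identical Riccati-type or quadratic recursion with identical initial term; uniqueness of the solution of that recursion, together with the matched genus-zero data, then forces \eqref{mainidentity}. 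Throughout, I would keep careful track of the substitution $\e\mapsto\e$ versus the role of $q$ as the Hodge parameter, since the identity equates an object naturally living in $q$ with one living in $\e$, and the bookkeeping of these two expansion parameters is where errors are most likely to creep in.
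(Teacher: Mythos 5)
Your proposal follows essentially the same route as the paper's proof: the genus-zero parts are matched using Theorem~\ref{thmgenus0} together with the known fact that the genus-zero Hodge free energy equals $\F_0^{\scriptscriptstyle\rm WK}$, and the higher-genus parts are identified by showing that the jet-space functions $F_{\Omega_g^{\rm special}(q)}$ and $F_g^{\varphi_{\rm special}}$ satisfy the same loop equation (the special-Hodge loop equation from~\cite{DLYZ20} being transformed into~\eqref{230} precisely because $\varphi_{\rm special}'(V)=e^{2qV}$ makes the two spectral variables $P$ and $\widetilde P$ obey the same first-order relation under $\p$), whence uniqueness of the loop-equation solution forces equality. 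Your emphasis on the rigidity of the quadratic recursion and on the exponential form of $\varphi_{\rm special}$ as the mechanism for the matching is exactly the paper's argument.
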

We note that, although not completely obvious, this theorem is equivalent to a result of Alexandrov~\cite{Al212}; 
see Section~\ref{sectionexample223} for more details.

As an application of the Hodge--WK correspondence, 
we will establish in the following two theorems 
explicit relationships of the WK partition function 
$Z^{\scriptscriptstyle \rm WK}(\bt; \epsilon)$ to the modified GUE partition function
 $Z^{\scriptscriptstyle \rm meGUE}(x, {\bf s}; \e)$ and to 
the generalized BGW partition function $Z^{\scriptscriptstyle\rm cBGW}(x,{\bf r};\e)$
(see~\cite{DLYZ20, DY17-2} or Section~\ref{sectionexample223} for 
the definition of $Z^{\scriptscriptstyle \rm meGUE}(x, {\bf s}; \e)$ and 
see~\cite{YZ21} (cf.~\cite{Al18, BG, GW, MMS96, YZ21}) for the definition of $Z^{\scriptscriptstyle\rm cBGW}(x,{\bf r};\e)$). 
Here and below, ``GUE" refers to 
 Gaussian Unitary Ensemble, and ``BGW" refers to Br\'ezin--Gross--Witten.
\begin{theorem}\label{WKGUEthm81}
The following identity holds true in $\CC((\e^2))[[x-1]][[{\bf s}]]:$
\beq\label{wkgue1106}
Z^{\scriptscriptstyle \rm WK}\bigl({\bf t}^{\scriptscriptstyle \rm WK-GUE}(x,{\bf s}); \epsilon\bigr) \, e^{\frac{A(x,{\bf s})}{\e^2}}
\= Z^{\scriptscriptstyle \rm meGUE}\Bigl(x, {\bf s}; \frac{\e}{\sqrt{2}}\Bigr) \,,
\eeq
where $A(x,{\bf s})$ is a quadratic series defined by
\begin{align}
A(x,{\bf s}) \=  & \frac12 \sum_{j_1,j_2\ge1} \frac{j_1j_2}{j_1+j_2} \binom{2j_1}{j_1} \binom{2j_2}{j_2} 
\Bigl(s_{j_1}-\frac{\delta_{j_1,1}}2\Bigr) \Bigl(s_{j_2} - \frac{\delta_{j_2,1}}2\Bigr) \label{aseries1107}\\
& +  \, x \sum_{j\geq1} \binom{2j}j \Bigl(s_{j}-\frac{\delta_{j,1}}2\Bigr) \,, \nn
\end{align}
 and 
\begin{align}
&\frac{2^{m}}{(2m+1)!!} \, t_m^{\scriptscriptstyle \rm WK-GUE}(x,{\bf s}) \\
& \; \= \frac23 \, \delta_{m,1} 
\+ \frac{1}{2m+1} \, x 
\+ \sum_{j\geq 1} \, \binom{m+j-1/2}{j-1} \, 2^{2j-1} \, \Bigl(s_{j}-\frac{\delta_{j,1}}2\Bigr) \,, \quad m\ge0\,. \nn
\end{align}
\end{theorem}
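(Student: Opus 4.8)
The plan is to deduce Theorem~\ref{WKGUEthm81} from the Hodge--WK correspondence of Theorem~\ref{thmhwk} by combining it with a Hodge-theoretic expression for the modified GUE partition function. The logical chain is
\[ Z^{\scriptscriptstyle \rm meGUE}\;\longleftrightarrow\;Z_{\Omega^{\rm special}(q)}\;\longleftrightarrow\;Z^{\scriptscriptstyle \rm WK}, \]
where the second link is exactly~\eqref{mainidentity} and the first is a known realization of the Gaussian ensemble through Hodge integrals. First I would rewrite~\eqref{mainidentity}: setting $\bu=\bt.\varphi_{\rm special}$ and using that the $\G$-action is invertible, it becomes $Z_{\Omega^{\rm special}(q)}(\bu;\e)=Z^{\scriptscriptstyle \rm WK}(\bu.\varphi_{\rm special}^{-1};\e)$. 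Thus once the GUE side is expressed in terms of $Z_{\Omega^{\rm special}(q)}$ at an explicit time substitution, applying this identity produces a WK partition function whose arguments are the $\varphi_{\rm special}^{-1}$-transform of those times; reading off that transform yields the formula for $t_m^{\scriptscriptstyle \rm WK-GUE}(x,\bs)$, while the quadratic prefactor of the correspondence is responsible for $e^{A(x,\bs)/\e^2}$.

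For the first link I would recall from~\cite{DLYZ20,DY17-2} the expression of $Z^{\scriptscriptstyle \rm meGUE}(x,\bs;\e)$ as a specialization of the special Hodge partition function. The passage from the matrix-model normalization to the Hodge normalization is what produces the rescaling $\e\mapsto\e/\sqrt2$ in~\eqref{wkgue1106}, together with a purely quadratic correction in the free energy. On the level of times, the couplings $s_j$ enter through the equilibrium (semicircle) density of the Gaussian ensemble, which is the source of the central binomial coefficients $\binom{2j}{j}$ in~\eqref{aseries1107} and of the factors $\binom{m+j-1/2}{j-1}\,2^{2j-1}$ in the definition of $t_m^{\scriptscriptstyle \rm WK-GUE}$; the shifts $s_j\mapsto s_j-\tfrac12\delta_{j,1}$ reflect the dilaton-type shift built into the quadratic part of the potential, i.e.\ the ``modification'' in \emph{meGUE}.

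The remaining task is to compose the two explicit substitutions. I would compute the $\varphi_{\rm special}^{-1}$-action on the Hodge times coming from the GUE realization, using the series $\varphi_{\rm special}(V)=(e^{2qV}-1)/(2q)$ from~\eqref{varphispecialdefi} together with the definition~\eqref{defsigmaction} of $\bt\mapsto\bt.\varphi$. Since $\varphi_{\rm special}$ is an exponential substitution, its compositional inverse is logarithmic, and the resulting coefficients should resum into the stated half-integer binomials; the auxiliary parameter $q$ is specialized in the matching (so that $\varphi_{\rm special}$ becomes an explicit numerical map) and does not survive in the final formula. Simultaneously I would collect all quadratic contributions---those from the GUE-to-Hodge prefactor and those generated by the constant and linear parts of the time substitution acting on the genus-zero two-point function of $Z^{\scriptscriptstyle \rm WK}$---and verify that they assemble into~\eqref{aseries1107}, including the symmetric double sum with kernel $\tfrac{j_1 j_2}{j_1+j_2}$.

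The hard part will be precisely this last bookkeeping step. Checking that the composed change of variables collapses to the clean half-integer binomial expression for $t_m^{\scriptscriptstyle \rm WK-GUE}$, and that the accumulated quadratic terms reproduce $A(x,\bs)$ exactly, is delicate: it rests on a generating-function identity for the inverse of an exponential substitution, and any slip in tracking the $\e$-rescaling or the dilaton shift feeds directly into the quadratic prefactor. I would organize the computation by passing to generating functions in an auxiliary spectral variable and verifying the identity coefficient-by-coefficient in $x$ and in each $s_j$, thereby reducing the whole matching to a finite list of binomial identities.
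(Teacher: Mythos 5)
Your proposal is correct and follows essentially the same route as the paper: the paper's proof consists precisely of combining the Hodge--GUE correspondence \eqref{Hodgegue1107} (quoted from \cite{DLYZ20, DY17}) with the Hodge--WK correspondence \eqref{mainidentity} at $q=-1/2$, so that ${\bf t}^{\scriptscriptstyle\rm WK-GUE}={\bf T}^{\scriptscriptstyle\rm Hodge-GUE}.\varphi_{\rm special}^{-1}$ and the prefactor $e^{A/\e^2}$ is inherited directly from \eqref{Hodgegue1107} under the $\e$-rescaling. The only small simplification versus your plan is that \eqref{mainidentity} carries no exponential correction, so no additional quadratic contributions need to be collected beyond those already present in the Hodge--GUE prefactor.
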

We call~\eqref{wkgue1106} the {\it WK--GUE correspondence}.
\begin{theorem}\label{thmwkbgw}
The following identity holds true in $\QQ((\e^2))[[x+2]][[{\bf r}]]:$
\beq\label{wkbgw1106}
Z^{\scriptscriptstyle \rm WK}\bigl({\bf t}^{\scriptscriptstyle \rm WK-BGW}(x,{\bf r}); \sqrt{-4}\epsilon\bigr) \, 
e^{\frac{A_{\scriptscriptstyle \rm cBGW} (x,{\bf r})}{\e^2}}
\=  Z^{\scriptscriptstyle\rm cBGW}(x,{\bf r};\e) \,,
\eeq
where $A_{\scriptscriptstyle \rm cBGW} (x,{\bf r})$ is a quadratic function given by
\beq\label{Abgw1107}
A_{\scriptscriptstyle \rm cBGW} (x,{\bf r}) \= \frac12 \sum_{a,b\ge0} \frac{(r_a-\delta_{a,0})(r_b-\delta_{b,0})}{a! \, b! \, (a+b+1)} 
- x\, \sum_{b\ge0} \frac{r_b-\delta_{b,0}}{b! \, (2b+1)} \,, 
\eeq
 and 
\begin{align}
&t_m^{\scriptscriptstyle \rm WK-BGW}(x,{\bf r}) \= \delta_{m,1} \+ 2 \, \delta_{m,0} \+ \frac{(2m-1)!!}{2^m} \, x -2 
\sum_{j \geq m} \, \frac{(-1)^m}{(j-m)!}\, r_j \,. 
\end{align}
\end{theorem}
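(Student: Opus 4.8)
The plan is to obtain Theorem~\ref{thmwkbgw} as an application of the Hodge--WK correspondence (Theorem~\ref{thmhwk}), in complete parallel with the WK--GUE correspondence of Theorem~\ref{WKGUEthm81}. The first step is to recognize the generalized BGW partition function $Z^{\scriptscriptstyle\rm cBGW}(x,{\bf r};\e)$ as a special-Hodge partition function: using the description of $Z^{\scriptscriptstyle\rm cBGW}$ from~\cite{YZ21}, one identifies, after a linear change from $(x,{\bf r})$ to Hodge times and a specialization of the parameter $q$, the series $Z^{\scriptscriptstyle\rm cBGW}\,e^{-A_{\scriptscriptstyle\rm cBGW}/\e^2}$ with the partition function $Z_{\Omega^{\rm special}(q)}$ of Theorem~\ref{thmhwk}, evaluated at a rescaled dispersion. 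The rescaling $\e\mapsto\sqrt{-4}\,\e$ recorded in the statement is forced by the relation between the matrix-model normalization of the dispersion parameter and the Hodge one, and the value $\sqrt{-4}=2\sqrt{-1}$ fixes which $q$ must be used.

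With this identification in hand, the second step is to invoke Theorem~\ref{thmhwk}, which by invertibility of the $\G$-action reads $Z_{\Omega^{\rm special}(q)}(\bs;\e)=Z^{\scriptscriptstyle\rm WK}(\bs.\varphi_{\rm special}^{-1};\e)$. Composing the change of variables $\bs\mapsto\bs.\varphi_{\rm special}^{-1}$ with the substitution identifying $(x,{\bf r})$ with the Hodge times $\bs$ produces the explicit formula for ${\bf t}^{\scriptscriptstyle\rm WK-BGW}(x,{\bf r})$. The shape of the answer is then transparent: the constant shift $\delta_{m,1}+2\delta_{m,0}$ is the dilaton/string background of the BGW point; the double-factorial coefficients $(2m-1)!!/2^m$ multiplying $x$ and the coefficients $(-1)^m/(j-m)!$ multiplying $r_j$ are produced by expanding the inverse series $\varphi_{\rm special}^{-1}(V)=\frac{1}{2q}\log(1+2qV)$ and applying the action formula~\eqref{defsigmaction}.

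The remaining and most delicate input is the quadratic prefactor $e^{A_{\scriptscriptstyle\rm cBGW}(x,{\bf r})/\e^2}$, which records the genus-zero discrepancy between the matrix-model normalization of $Z^{\scriptscriptstyle\rm cBGW}$ and the Hodge/WK normalization. Here I would use crucially the fact, established earlier, that the genus-zero part of $\log Z^\varphi$ is independent of~$\varphi$: this guarantees that, after the change of variables, the two free energies agree in genus zero up to a term that is at most quadratic in $(x,{\bf r})$, so that the prefactor is genuinely the exponential of a quadratic. The explicit $A_{\scriptscriptstyle\rm cBGW}$ in~\eqref{Abgw1107} is then read off from the genus-zero one- and two-point contributions at the BGW background, the Beta-function-type denominators $a!\,b!\,(a+b+1)$ and $b!\,(2b+1)$ being precisely the values of these genus-zero two-point correlators; the shifts $r_a-\delta_{a,0}$ record that the expansion is centered at the background $r_0=1$, $r_{\ge1}=0$, consistent with the ground ring $\QQ((\e^2))[[x+2]][[{\bf r}]]$.

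The main obstacle I anticipate is the accurate bookkeeping of this quadratic term, because an error there shifts $A_{\scriptscriptstyle\rm cBGW}$ by a linear or quadratic expression without disturbing any positive-genus coefficient, and so is invisible to spot-checks in higher genus. A robust way to control it is to split the verification: first match the two sides in genus zero, which by the $\G$-invariance of $F_0$ reduces to an identity among the explicit genus-zero two-point functions and fixes $A_{\scriptscriptstyle\rm cBGW}$ unambiguously, and only then invoke Theorem~\ref{thmhwk} to conclude that the full generating series coincide in all genera, the positive-genus agreement being automatic once the change of variables, the dispersion rescaling, and the genus-zero normalization have all been pinned down.
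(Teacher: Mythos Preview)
Your approach is correct and coincides with the paper's: one combines the Hodge--BGW correspondence~\eqref{hodgebgw1107id} from~\cite{YZ21} with the Hodge--WK correspondence (Theorem~\ref{thmhwk}) at $q=-\tfrac12$, and composes the two changes of variables.

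The one place where you over-elaborate is the quadratic prefactor. You describe an independent derivation of $A_{\scriptscriptstyle\rm cBGW}$ from genus-zero two-point data and the $\G$-invariance of $\F_0$, calling this ``the most delicate input''. In the paper's argument there is nothing delicate here: the prefactor $e^{A_{\scriptscriptstyle\rm cBGW}/\e^2}$ and the dispersion rescaling $\e\mapsto\sqrt{-4}\,\e$ are already part of the Hodge--BGW identity~\eqref{hodgebgw1107id} imported from~\cite{YZ21}, and the Hodge--WK identity~\eqref{mainidentity} introduces no additional genus-zero correction (it is an exact equality of partition functions, not merely of higher-genus parts). So once you cite~\cite{YZ21}, the quadratic term is handed to you and the proof is a one-line composition; your proposed genus-zero verification would be redundant rederivation of what is already established there.
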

We call~\eqref{wkbgw1106} the {\it WK--BGW correspondence}. 

Using the Hodge--WK correspondence and the $\G$-action 
we establish in Theorem~\ref{thm1114} an explicit relationship between 
the Hodge mapping partition function with a special choice of 
its parameters associated to an arbitrarily given group element $\psi\in\G$  
(which will be called the {\it special-Hodge mapping partition function associated to~$\psi$}) 
 and the WK mapping partition function associated to~$\varphi$, where $\varphi$ and $\psi$ are related by
 $\varphi = \varphi_{\rm special} \circ \psi$ with $\varphi_{\rm special}$ as in~\eqref{varphispecialdefi}, i.e.,
\beq\label{phipsi1122}
\varphi(V) \=  \frac{e^{2 \, q \, \psi(V)}-1}{2 \, q}\,, \quad \; \psi(V) \= \frac{\log (1+ 2 \, q \, \varphi(V))}{2 \, q} \,.
\eeq
Such a relationship will be called the {\it generalized Hodge--WK correspondence}.

\medskip

\noindent {\bf Organization of the paper.} 
In Section~\ref{section2} we introduce the infinite group action on infinite tuples,  
 define the WK mapping partition function, and prove Theorem~\ref{thmgenus0}: the genus zero part is a fixed point of the group action.  
In Section~\ref{section3} we give the dilaton equation and Virasoro constraints for the WK mapping partition function. 
In Section~\ref{section4} we give a geometric proof of Theorem~\ref{thmgenus0}.
In Section~\ref{section5} we prove the existence of the jet-variable representation for the higher genus WK mapping free energies,
and in Section~\ref{newnewsection6} we derive the loop equation. 
In Sections~\ref{newnewsection7} and~\ref{newnewsection8} we study the classification of 
hamiltonian and bihamiltonian perturbations of the RH hierarchy possessing a $\tau$-structure. 
In Section~\ref{section6} we 
prove Theorem~\ref{thmmainshortversion} and prove the WK mapping universality theorem. 
A particular example is discussed in Section~\ref{sectionexample223}, where we prove 
Theorems \ref{thmhwk}, \ref{WKGUEthm81}, \ref{thmwkbgw}.
In Section~\ref{section8} we prove Theorem~\ref{thmmain2} and 
propose the Hodge mapping universality conjecture. 
Section~\ref{section9} is devoted to generalizations.

\smallskip 

\noindent {\bf Acknowledgements}. One of the authors D.Y. is grateful to Youjin Zhang and Boris Dubrovin 
for their advice and teaching over many 
years and for specific suggestions that were important for this paper.
Part of the work of D.Y. was done during his visit 
in Max Planck Institute for Mathematics; he thanks MPIM for excellent working conditions and financial support. 
The work was partially supported by NSFC No.~12371254,  the CAS No.~YSBR-032, 
and by National Key R and D Program of China 2020YFA0713100.

\section{$\G$-action and the definition of the WK mapping partition function}\label{section2}
In this section we define an infinite group action on infinite tuples and the WK mapping 
partition function, and prove Theorem~\ref{thmgenus0} below.

Fix a ground ring $R$ (which for us will always be a $\QQ$-algebra, usually $\QQ$ or $\CC$ or~$\QQ[q]\thin$)
and let $\G=V+V^2R[[V]]$ be the group of invertible power series of one variable with leading coefficient~1, 
with the group law given by composition and denoted by~$\circ$. 
We define an affine-linear right action of the group~$\G$ on tuples~$\bt$ by 
\beq
\bt \= (t_0,t_1,t_2,\dots) \quad  \mapsto \quad \bt.\varphi \= \bT \= (T_0, T_1,T_2,\dots) \,,
\eeq
where $\bt$ and $\bT$ are related by 
\beq \label{defsigmaction}
B_{\bT}(V) \= \sqrt{\varphi'(V)} \,  B_\bt(\varphi(V)) 
\eeq
with $B_\bt$ defined for any infinite tuple~$\bt$ by 
\beq\label{Bvt1001}
B_{\bt}(v) \,:=\, v \,-\, \sum_{i\ge0} \, \frac{t_i}{i!} \, v^i \,.
\eeq
%$\bt \mapsto \bT=\bt . \varphi$
%of the group~$\G$ on tuples~$\bt$ by 
%\beq \label{defsigmaction}
%B_{\bT}(V) \= \sqrt{\varphi'(V)} \,  B_\bt(\varphi(V)) \,,
%\eeq
%where $\bt=(t_0,t_1,t_2,\dots)$, $\bT=(T_0,T_1,T_2,\dots)$, and 
%\beq\label{Bvt1001}
%B_{\bt}(v) \,:=\, v \,-\, \sum_{i\ge0} \, \frac{t_i}{i!} \, v^i \,.
%\eeq
Explicitly, if we write $\varphi(V)=\sum_{k=0}^\infty a_k V^k$ with $a_0=0$, $a_1=1$, then 
\begin{align}
& T_0= t_0\,, \quad T_1=t_1 \+ a_2 \, t_0\,, \quad T_2 = t_2 \+ 4 \, a_2 \, (t_1-1) \+ \bigl(3a_3-a_2^2\bigr) \, t_0\,,  \quad \dots\,,\\ 
& t_0 = T_0\,, \quad t_1= T_1 - a_2 \, T_0\,, \quad t_2 = T_2 - 4  \, a_2 \, (T_1-1) - \bigl(3a_3-5a_2^2\bigr) \, T_0\,, \quad \dots\,.
\end{align}
Note that if we introduce 
for any tuple~$\bf t$  
the 1-form
\beq
\omega_\bt(v) \,:=\, 
B_\bt(v)^2 \, dv \,,
\eeq
then the defining equation~\eqref{defsigmaction} for the $\G$-action can be stated equivalently as
\beq
\omega_\bT(V) \= \omega_\bt(\varphi(V)) \,.  
\eeq 

Let $E(\bt)$ denote the following power series 
\beq\label{defVandvf1001}
E(\bt) \= \sum_{n\ge1} \, \frac1{n} \, \sum_{i_1,\dots,i_n\ge0 \atop i_1+\cdots+i_n=n-1} \, \frac{t_{i_1}}{i_1!} \cdots \frac{t_{i_n}}{i_n!} 
\= t_0 \+ t_0 \, t_1 \+ \frac{2 \, t_0 \, t_1^2 + t_0^2 \, t_2} 2 \+ \cdots \,,
\eeq
which is the unique power-series solution (see~\cite{DW90, Wi91}) to the 
RH hierarchy
\beq\label{RHhierarchy}
\frac{\p E(\bt)}{\p t_i} \= \frac{E(\bt)^i}{i!} \, \frac{\p E(\bt)}{\p x}\,, 
\quad i\geq0
\eeq
specified by the initial condition $E(x,0,\dots)=x$, where $x := t_0$.
Alternatively, it can be uniquely determined by the following equation:   
\beq\label{ELEv}
B_\bt(E(\bt)) \= 0 
\eeq
 (see~e.g.~\cite{DW90}), sometimes called the {\it genus zero Euler--Lagrange equation} \cite{Du96, DZ-norm}.
It is easily seen (and well known) that the power series~$E(\bt)$ has the property:
\beq
\frac{\p^k E(\bt)}{\p x^k} \= \delta_{k,1} \+ t_k \+ \text{higher degree terms}\,.
\eeq
 
The following two lemmas are important.
\begin{lemma}\label{lemmavfv}
For any $\varphi\in\G$, we have the identity:
\beq\label{gvVx0}
\varphi(E(\bT)) \= E(\bt) \,, 
\eeq
where $\bt$ and~$\bT$ are related by~\eqref{defsigmaction}. 
\end{lemma}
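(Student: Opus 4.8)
The plan is to use the characterization of $E(\bt)$ via the genus zero Euler--Lagrange equation~\eqref{ELEv}, namely that $E(\bt)$ is the unique power-series solution to $B_\bt(E) = 0$ with the appropriate initial behavior. Since $\varphi \in \G$ is an invertible power series, composing with it preserves the structure of power series solutions, so it suffices to verify that $\varphi(E(\bT))$ satisfies the Euler--Lagrange equation defining $E(\bt)$, i.e., that $B_\bt\bigl(\varphi(E(\bT))\bigr) = 0$, and then invoke uniqueness.

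First I would recall that by definition $E(\bT)$ satisfies $B_\bT(E(\bT)) = 0$. The key is to connect $B_\bT$ and $B_\bt$ through the defining relation~\eqref{defsigmaction}, which reads $B_{\bT}(V) = \sqrt{\varphi'(V)}\, B_\bt(\varphi(V))$. Setting $V = E(\bT)$ in this identity gives
\beq
0 \= B_\bT(E(\bT)) \= \sqrt{\varphi'(E(\bT))} \; B_\bt\bigl(\varphi(E(\bT))\bigr)\,.
\eeq
Since $\varphi'(V) = 1 + O(V)$ has nonzero constant term, $\sqrt{\varphi'(E(\bT))}$ is an invertible power series and in particular nonzero, so we may cancel it to conclude $B_\bt\bigl(\varphi(E(\bT))\bigr) = 0$. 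Thus $\varphi(E(\bT))$ is a solution of the genus zero Euler--Lagrange equation~\eqref{ELEv} associated to~$\bt$.

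To finish, I would check that $\varphi(E(\bT))$ has the correct leading behavior so that uniqueness of the power-series solution applies. Recall $E(\bt) = t_0 + \cdots$ has $E \equiv x$ when all higher $t_i$ vanish, and the same holds for $E(\bT)$ in terms of $\bT$; because $T_0 = t_0 = x$ (as recorded in the explicit formulas in the excerpt) and $\varphi(V) = V + O(V^2)$, the composition $\varphi(E(\bT))$ reduces to $\varphi(x) $ which, at the level of the degree-graded structure, has the same leading term $x$ as $E(\bt)$. Hence by the uniqueness of the power-series solution to~\eqref{ELEv} we obtain $\varphi(E(\bT)) = E(\bt)$.

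The main subtlety, rather than any hard computation, is the bookkeeping of which grading or filtration makes the "unique power-series solution" statement precise: one must ensure that $\varphi(E(\bT))$, regarded as a power series in the original variables, lies in the correct space and has the normalization pinning down the solution uniquely. I expect this to be the only point requiring care; once the solution space and initial condition are matched, the cancellation of the invertible factor $\sqrt{\varphi'(E(\bT))}$ makes the identity immediate. An alternative, equivalent route would be to phrase everything through the $1$-form $\omega_\bt(v) = B_\bt(v)^2\,dv$ and its covariance $\omega_\bT(V) = \omega_\bt(\varphi(V))$, noting that $E(\bt)$ is precisely the zero of $B_\bt$, i.e. of $\omega_\bt$, and that $\varphi$ maps the zero of $\omega_\bT$ to the zero of $\omega_\bt$; this gives the same conclusion in a more geometric language.
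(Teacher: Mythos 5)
Your argument is correct and follows essentially the same route as the paper's own proof: substitute $V=E(\bT)$ into the defining relation~\eqref{defsigmaction}, cancel the invertible factor $\sqrt{\varphi'(E(\bT))}$ to get $B_\bt(\varphi(E(\bT)))=0$, and conclude by uniqueness of the power-series solution to~\eqref{ELEv}. The extra care you take with the normalization and the invertibility of the square-root factor is fine but not a different method.
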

\begin{proof}
By definition we have
$B_\bT(E(\bT)) = 0$.
Then by~\eqref{defsigmaction} we obtain 
\beq
B_\bt(\varphi(E(\bT))) \= 0\,.
\eeq
Note that $\varphi(E(\bT))$ can be viewed as a power series of~$\bt$. The identity~\eqref{gvVx0} then holds due to
 the uniqueness of power-series solution to equation~\eqref{ELEv}.
\end{proof}

\begin{remark} We can 
extend the group~$\G$ to a semi-direct product consisting
of all pairs $(m, \varphi)$ with power series  
$m(V) \in  1 + V R[[V]]$ and $\varphi(V) \in V+V^2R[[V]]$, and with the group law $*$ given by 
\beq
(m_1,\varphi_1) * (m_2,\varphi_2)
\= (m_1 \cdot (m_2 \circ \varphi_1), \varphi_1 \circ \varphi_2) \,,
\eeq 
where ``$\cdot$" denotes multiplication of power series.
It acts on tuples~$\bt$ by sending~$\bt$ to $\bT=\bt . (m,\varphi)$,  where $B_{\bT}(V) = m(V) B_{\bt}(\varphi(V))$.
One can verify that the identity~\eqref{gvVx0} still holds for $\varphi$ in this larger group. 
One could therefore also consider partition functions under the extended group action,
but we do not know whether this would have any interesting applications.
\end{remark}

\begin{lemma}
We have
\begin{align}\label{gvVx}
\frac{\p E(\bT)}{\p t_0} \= \sqrt{\varphi'(E(\bT))} \, \frac{\p E(\bT)}{\p T_0}\,, 
\end{align}
where $\bt$ and $\bT$ are related by~\eqref{defsigmaction}.
\end{lemma}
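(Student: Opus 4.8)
The plan is to reduce everything to the defining ``genus zero Euler--Lagrange equation'' $B_\bs(E(\bs))=0$ (equation~\eqref{ELEv}) together with Lemma~\ref{lemmavfv}. The one computational ingredient I would establish first is the differentiation formula
\beq
\frac{\p E(\bs)}{\p s_0} \= \frac{1}{B_\bs'(E(\bs))}
\eeq
valid for an arbitrary tuple~$\bs$, where $B_\bs'$ denotes the derivative of $B_\bs(v)$ in its argument~$v$. This follows at once by differentiating $B_\bs(E(\bs))=0$ with respect to~$s_0$ and noting that $\p B_\bs(v)/\p s_0 = -1$ at fixed~$v$, since only the $i=0$ term of~\eqref{Bvt1001} contributes. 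Applied with $\bs=\bt$ and with $\bs=\bT$, this gives closed expressions for both the $t_0$-derivative of $E(\bt)$ and the $T_0$-derivative of $E(\bT)$.

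For the left-hand side I would view $E(\bT)$ as a function of~$\bt$ through the $\G$-action and differentiate the identity $\varphi(E(\bT))=E(\bt)$ of Lemma~\ref{lemmavfv} with respect to~$t_0$. The chain rule gives $\varphi'(E(\bT))\,\p_{t_0}E(\bT)=\p_{t_0}E(\bt)=1/B_\bt'(E(\bt))$, hence
\beq
\frac{\p E(\bT)}{\p t_0}\=\frac{1}{\varphi'(E(\bT))\,B_\bt'(E(\bt))}\,.
\eeq
For the right-hand side the formula above yields directly $\p_{T_0}E(\bT)=1/B_{\bT}'(E(\bT))$. Thus the entire statement reduces to the single relation $B_{\bT}'(E(\bT)) = \varphi'(E(\bT))^{3/2}\,B_\bt'(E(\bt))$.

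To obtain this last relation I would differentiate the defining equation~\eqref{defsigmaction}, namely $B_\bT(V)=\sqrt{\varphi'(V)}\,B_\bt(\varphi(V))$, with respect to~$V$:
\beq
B_\bT'(V)\=\frac{\varphi''(V)}{2\sqrt{\varphi'(V)}}\,B_\bt(\varphi(V))\+\varphi'(V)^{3/2}\,B_\bt'(\varphi(V))\,.
\eeq
The crucial point is to evaluate at $V=E(\bT)$: since $\varphi(E(\bT))=E(\bt)$ by Lemma~\ref{lemmavfv} and $B_\bt(E(\bt))=0$ by~\eqref{ELEv}, the first term drops out, leaving precisely $B_\bT'(E(\bT))=\varphi'(E(\bT))^{3/2}\,B_\bt'(E(\bt))$. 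Substituting this into the two displayed expressions and cancelling the common factor $\varphi'(E(\bT))\,B_\bt'(E(\bt))$ produces exactly the claimed factor $\sqrt{\varphi'(E(\bT))}$.

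I expect the only genuine subtlety, and the step I would state most carefully, to be the bookkeeping of the two meanings of differentiation: on the left $E(\bT)$ is treated as a composite function of~$\bt$, so all the $T_i$ vary with~$t_0$ through the $\G$-action, whereas on the right it is differentiated in its own variable~$T_0$. Everything else is short manipulation, and the key simplification---the vanishing of the $\varphi''$ term at $V=E(\bT)$---is exactly what makes the awkward half-integer powers of~$\varphi'$ collapse into the single square root appearing in~\eqref{gvVx}.
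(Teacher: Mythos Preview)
Your argument is correct. It differs from the paper's one-line proof, which invokes~\eqref{defsigmaction} and the RH hierarchy~\eqref{RHhierarchy} directly: there one writes $\p_{t_0}E(\bT)=\sum_i(\p T_i/\p t_0)\,\p_{T_i}E(\bT)$ via the chain rule, replaces each $\p_{T_i}E(\bT)$ by $\frac{E(\bT)^i}{i!}\,\p_{T_0}E(\bT)$ using the RH hierarchy, and then differentiates~\eqref{defsigmaction} in~$t_0$ at fixed~$V$ to read off $\sum_i(\p T_i/\p t_0)\frac{V^i}{i!}=\sqrt{\varphi'(V)}$, which at $V=E(\bT)$ gives the result immediately.

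Your route instead stays entirely at the level of the Euler--Lagrange equation~\eqref{ELEv}: you obtain $\p_{s_0}E(\bs)=1/B_\bs'(E(\bs))$ by implicit differentiation, differentiate Lemma~\ref{lemmavfv} in~$t_0$, and differentiate~\eqref{defsigmaction} in~$V$ rather than in~$t_0$. The key cancellation (the $\varphi''$ term vanishing because $B_\bt(E(\bt))=0$) is a nice observation. Your approach avoids invoking the full RH hierarchy and relies only on the implicit characterization of~$E$; the paper's approach avoids Lemma~\ref{lemmavfv} and handles the infinite chain-rule sum in one stroke via~\eqref{RHhierarchy}. Both are short; yours is arguably more self-contained, while the paper's is closer to the later arguments (e.g.~equation~\eqref{123456}) that repeatedly use exactly this chain-rule-plus-RH-hierarchy pattern.
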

\begin{proof}
By~\eqref{defsigmaction} and~\eqref{RHhierarchy}.
\end{proof}

For convenience, we denote $X\equiv T_0$ and $x\equiv t_0$ as in~\eqref{RHhierarchy},
and write~\eqref{gvVx} as 
\beq\label{gvVx1115}
\frac{\p E(\bT)}{\p x} \= \sqrt{\varphi'(E(\bT))} \, \frac{\p E(\bT)}{\p X}\,.
\eeq
By using the identities~\eqref{gvVx0}, \eqref{gvVx1115} iteratively, one can obtain the map 
between the higher $x$-derivatives of~$E(\bt)$ and $X$-derivatives of~$E(\bT)$. 
For instance,  
\begin{align}
\frac{\p E(\bt)}{\p x} \= \, & \varphi'(E(\bT))^{3/2} \, \frac{\p E(\bT)}{\p X} \,, \label{jetsrelation1}\\
\frac{\p^2 E(\bt)}{\p x^2} 
\= \, & 2 \, \varphi'(E(\bT)) \, \varphi''(E(\bT)) \, \biggl(\frac{\p E(\bT)}{\p X}\biggr)^2 \+ \varphi'(E(\bT))^2 \, \frac{\p^2 E(\bT)}{\p X^2}\,.
\end{align}
By induction we arrive at the following lemma describing this map.
\begin{lemma}\label{vVmaplemma}
For each $k\ge0$, there exists a function $M_k(V_0,\dots,V_k)$, which is a polynomial of $V_1,\dots,V_k$,
 such that 
\beq
\frac{\p E(\bt)}{\p x^k} \= M_k\biggl(E(\bT), \frac{\p E(\bT)}{\p X}, \dots, \frac{\p^k E(\bT)}{\p X^k}\biggr)\,.
\eeq
Moreover, for $k\ge1$, the function $M_k(V_0,\dots,V_k)$ satisfies the homogeneity condition:
\beq
\sum_{j=1}^k \, j \, V_j \, \frac{\p M_k(V_0,\dots,V_k)}{\p V_j} \= k \, M_k(V_0,\dots,V_k)\,.
\eeq
\end{lemma}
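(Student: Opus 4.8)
The plan is to argue by induction on $k$, with the cases $k=0$ and $k=1$ supplied by Lemma~\ref{lemmavfv} (giving $M_0(V_0)=\varphi(V_0)$) and by~\eqref{jetsrelation1} (giving $M_1(V_0,V_1)=\varphi'(V_0)^{3/2}V_1$). For the inductive step I would differentiate the relation $\p_x^k E(\bt)=M_k\bigl(E(\bT),\dots,\p_X^k E(\bT)\bigr)$ once more in $x$ and apply the chain rule, obtaining
\[
\p_x^{k+1}E(\bt)\=\sum_{j=0}^{k}\frac{\p M_k}{\p V_j}\bigl(E(\bT),\dots,\p_X^k E(\bT)\bigr)\;\p_x\!\left(\p_X^j E(\bT)\right)\,.
\]
In this way the whole problem is reduced to understanding the single family of quantities $\p_x\bigl(\p_X^j E(\bT)\bigr)$, and in particular to showing that each is a polynomial in $\p_X E(\bT),\dots,\p_X^{j+1}E(\bT)$ with coefficients that are functions of $E(\bT)$ alone.

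The key observation I would isolate first is that the operators $\p_x=\p_{t_0}$ and $\p_X=\p_{T_0}$ commute. This holds because the $\G$-action defined by~\eqref{defsigmaction} and~\eqref{Bvt1001} is affine-linear in $\bt$ with coefficients depending only on $\varphi$: since $B_\bt(v)=v-\sum_i \frac{t_i}{i!}v^i$ is affine-linear in $\bt$ while the factor $\sqrt{\varphi'(V)}$ is independent of $\bt$, the new coordinates $T_i$ are affine-linear functions of the $t_j$ with constant coefficients (as is already visible in the explicit formulas for $T_0,T_1,T_2$ displayed above). Consequently $\p_{t_0}=\sum_j c_j\,\p_{T_j}$ with constants $c_j=\p T_j/\p t_0$, so that $[\p_x,\p_X]=\sum_j c_j[\p_{T_j},\p_{T_0}]=0$. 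With this in hand one computes $\p_x\bigl(\p_X^j E(\bT)\bigr)=\p_X^j\bigl(\p_x E(\bT)\bigr)=\p_X^j\bigl(\sqrt{\varphi'(E(\bT))}\,\p_X E(\bT)\bigr)$, using~\eqref{gvVx1115} in the last step; the right-hand side is manifestly a polynomial in $\p_X E(\bT),\dots,\p_X^{j+1}E(\bT)$ with coefficients built from $\varphi'$ and its higher derivatives at $E(\bT)$. Substituting this back into the chain-rule expansion exhibits $M_{k+1}$ as a polynomial in $V_1,\dots,V_{k+1}$, which completes the existence part.

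For the homogeneity condition I would introduce the grading assigning weight $j$ to $V_j$ (and weight $0$ to $V_0$), whose Euler operator is $\sum_{j\ge1}jV_j\,\p/\p V_j$, and prove that $M_k$ is homogeneous of weight $k$ by tracking weights through the induction. On differential polynomials in $E(\bT)$ the operator $\p_X$ acts as $\sum_m V_{m+1}\,\p/\p V_m$ and raises weight by exactly $1$; since the seed $\sqrt{\varphi'(V_0)}\,V_1$ has weight $1$, the quantity $\p_x\bigl(\p_X^j E(\bT)\bigr)$ has weight $j+1$, and because $\p M_k/\p V_j$ has weight $k-j$, every summand in the expansion of $\p_x^{k+1}E(\bt)$ has weight $k+1$. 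I expect the only genuinely delicate point to be the commutation $[\p_x,\p_X]=0$: it is precisely what replaces the naive (and false) guess that $\p_x=\sqrt{\varphi'(E)}\,\p_X$ as an operator identity, the failure of which is exactly what produces the correction terms — for instance the coefficient $2$ rather than $\tfrac32$ in the displayed formula for $\p_x^2 E(\bt)$.
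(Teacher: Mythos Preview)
Your proof is correct and follows the same inductive strategy the paper indicates (it merely writes ``by induction'' after displaying the cases $k=0,1,2$). Your explicit identification of the commutation $[\partial_x,\partial_X]=0$, coming from the affine-linearity of the map $\bt\mapsto\bT$, is exactly the mechanism that makes the iteration of~\eqref{gvVx0} and~\eqref{gvVx1115} go through, and your weight-tracking for the homogeneity is the standard argument; one small remark is that the paper lists $M_0(V)=V$ whereas your $M_0(V_0)=\varphi(V_0)$ is what actually follows from Lemma~\ref{lemmavfv}.
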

The first few $M_k$ are $M_0(V)=V$, $M_1(V,V_1)=\varphi'(V)^{3/2} V_1$, 
$M_2(V,V_1,V_2)=2 \varphi'(V) \varphi''(V) \, V_1^2 + \varphi'(V)^2 V_2$.

Recall that the free energy $\F^{\scriptscriptstyle \rm WK}(\bt;\e)$ of $\psi$-class intersection numbers is defined by 
\beq
\F^{\scriptscriptstyle \rm WK}(\bt;\e) \,:=\, \log Z^{\scriptscriptstyle \rm WK}(\bt;\e) \,.
\eeq
By definition the free energy $\F^{\scriptscriptstyle \rm WK}(\bt;\e)$ admits the following genus expansion:
\beq\label{defFZ813}
\F^{\scriptscriptstyle\rm WK}(\bt; \epsilon) \,=:\, \sum_{g\ge0} \, \e^{2g-2} \, \F^{\scriptscriptstyle\rm WK}_g(\bt) \,. 
\eeq
We call $\F^{\scriptscriptstyle\rm WK}_g(\bt)$ $(g\ge0)$ the {\it genus~$g$ free energy of $\psi$-class intersection numbers}. 
Explicitly, 
\beq
\F^{\scriptscriptstyle\rm WK}_g(\bt) \= \sum_{n\ge0} \, \frac{1}{n!} \, 
\int_{\overline{\M}_{g,n}} \, t(\psi_1) \cdots t(\psi_n)\,, \qquad t(z) \,:=\, \sum_{i\ge0} \, t_i \, z^i\,.
\eeq

\begin{defi} \label{definitionfreeenergyphi}
Let $\varphi\in\G$. The {\it WK mapping free energy associated to~$\varphi$} is defined by
\beq\label{fphi1001} 
\F^\varphi(\bT;\e) \,:=\, \F^{\scriptscriptstyle \rm WK} \bigl(\bT . \varphi^{-1}; \e \bigr) \,,
\eeq
and define the {\it genus $g$ WK mapping free energy associated to~$\varphi$}, denoted as $\F^\varphi_g(\bT)$, by 
\beq\label{fgphi1001} 
\F^\varphi_g(\bT) \,:=\, \F^{\scriptscriptstyle \rm WK}_g \bigl(\bT . \varphi^{-1}\bigr) \,, \quad g\ge0\,.
\eeq
\end{defi}
\begin{remark}
By the degree-dimension matching~\eqref{dd1003}, 
one can deduce that $\F^\varphi(\bT;\e)$ and 
$\F^\varphi_g(\bT)$ ($g\ge0$) are well-defined elements in $\e^{-2}R[[\bT]][[\e^2]]$ and $R[[\bT]]$, respectively. 
Another consequence of~\eqref{dd1003} is that we can upgrade our $\G$-action to an action of the full group of units $\CC[[V]]^\times$ by setting 
\beq\label{upFphi1003}
\F^{\varphi}(\bT;\e) \= \F^{\scriptscriptstyle\rm WK}\bigl(\bT . \varphi^{-1}, \e/\varphi'(0)^{3/2}\bigr)
\eeq
and 
\beq 
\F^\varphi_g(\bT) \= \varphi'(0)^{3g-3} \, \F^{\scriptscriptstyle\rm WK}_g \bigl(\bT . \varphi^{-1}\bigr) \,,
\eeq 
and similarly for~$Z^\varphi(\bT;\e)$ below. 
Note that formula~\eqref{upFphi1003} makes sense even without choosing a square-root of $\varphi'(0)$, 
because $\F^{\scriptscriptstyle \rm WK}$ is an even power series of~$\e$.
\end{remark}
\begin{defi}\label{definitionZphi}
For $\varphi \in \G$, 
the {\it WK mapping partition function associated to~$\varphi$} is defined by
\beq
Z^\varphi(\bT;\e) \,:=\, Z^{\scriptscriptstyle\rm WK}\bigl(\bT . \varphi^{-1};\e\bigr) \,.
\eeq 
\end{defi}

It is clear from the definitions that $\F^\varphi(\bT; \epsilon)$ has a genus expansion
\beq\label{FZphig1001}
\F^\varphi(\bT; \epsilon) \= \sum_{g\ge0} \, \e^{2g-2} \, \F^\varphi_g(\bT)
\eeq
and that 
\beq
Z^\varphi(\bT;\e) \= e^{\F^\varphi(\bT;\e)}  \,.
\eeq

Our first main result says that the power series
$\F_0^{\scriptscriptstyle\rm WK}(\bt)$ is $\G$-invariant, i.e., 
\beq
\F_0^{\scriptscriptstyle \rm WK}(\bt)\=\F_0^{\scriptscriptstyle \rm WK}(\bt . \varphi)\,, \quad \forall\,\varphi\in\G \,.
\eeq
In view of the definition of the group action~\eqref{fgphi1001}, 
we can state this even more compactly in the following way. 
\begin{theorem}\label{thmgenus0}
For any $\varphi\in\G$, we have $\F_0^\varphi=\F_0^{\scriptscriptstyle\rm WK}$.
\end{theorem}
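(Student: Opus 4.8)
The plan is to prove the theorem by producing a closed-form expression for $\F_0^{\scriptscriptstyle\rm WK}$ in which the $\G$-invariance reduces to a one-line change of variables. Concretely, I claim that for the $1$-form $\omega_\bt(v)=B_\bt(v)^2\,dv$ introduced above,
\beq\label{plan:closed}
\F_0^{\scriptscriptstyle\rm WK}(\bt) \= \frac12 \int_0^{E(\bt)} B_\bt(v)^2 \, dv \= \frac12 \int_0^{E(\bt)} \omega_\bt(v)\,.
\eeq
Granting \eqref{plan:closed}, the theorem is immediate. Given $\varphi\in\G$, set $\bt:=\bT.\varphi^{-1}$, so that $\bT=\bt.\varphi$. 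Then the defining relation \eqref{defsigmaction} gives $B_\bT(V)^2=\varphi'(V)\,B_\bt(\varphi(V))^2$, i.e.\ $\omega_\bT(V)=\omega_\bt(\varphi(V))$, while Lemma~\ref{lemmavfv} gives $\varphi(E(\bT))=E(\bt)$. Hence the substitution $v=\varphi(V)$ (which sends the limits $0,E(\bT)$ to $0,\varphi(E(\bT))=E(\bt)$) yields
\beq
\F_0^{\scriptscriptstyle\rm WK}(\bT) \= \frac12\int_0^{E(\bT)}\omega_\bT(V) \= \frac12\int_0^{E(\bt)}\omega_\bt(v) \= \F_0^{\scriptscriptstyle\rm WK}(\bt) \= \F_0^{\scriptscriptstyle\rm WK}\bigl(\bT.\varphi^{-1}\bigr) \= \F_0^{\varphi}(\bT)\,,
\eeq
which is exactly $\F_0^{\varphi}=\F_0^{\scriptscriptstyle\rm WK}$.

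So the real content is establishing \eqref{plan:closed}. First I would record the standard genus-zero two-point functions $\p_{t_i}\p_{t_j}\F_0^{\scriptscriptstyle\rm WK} = \frac{E^{i+j+1}}{i!\,j!\,(i+j+1)}$ with $E=E(\bt)$, the genus-zero specialization of the Witten--Kontsevich solution, compatible with the RH hierarchy~\eqref{RHhierarchy}. The most efficient route is then to verify directly that the candidate $F:=\tfrac12\int_0^{E}B_\bt^2\,dv$ reproduces these. Differentiating under the integral, the boundary term drops out because $B_\bt(E)=0$ by \eqref{ELEv}, and since $\p_{t_i}B_\bt(v)=-v^i/i!$ one obtains $\p_{t_i}F=-\int_0^{E}\frac{v^i}{i!}B_\bt(v)\,dv$ and then $\p_{t_i}\p_{t_j}F=\int_0^{E}\frac{v^i}{i!}\frac{v^j}{j!}\,dv=\frac{E^{i+j+1}}{i!\,j!\,(i+j+1)}$, matching the two-point functions. (An alternative derivation of \eqref{plan:closed} integrates $\p_{t_i}\F_0^{\scriptscriptstyle\rm WK}$ in $x$ using the implicit relation $\p_x E=1/B_\bt'(E)$ coming from $B_\bt(E)=0$, and then applies the Euler identity for the quasi-homogeneous $\F_0^{\scriptscriptstyle\rm WK}$; both routes converge on the same formula.)

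It remains to pin down $F$ from its second derivatives. Both $F$ and $\F_0^{\scriptscriptstyle\rm WK}$ are quasi-homogeneous of degree $3$ for the grading $\deg t_i=1-i$: for $\F_0^{\scriptscriptstyle\rm WK}$ this is the degree-dimension matching~\eqref{dd1003} at $g=0$, and for $F$ it follows from the scalings $B_\bt(v)\mapsto\lambda B_\bt(v)$ and $E\mapsto\lambda E$ under $t_i\mapsto\lambda^{1-i}t_i$, $v\mapsto\lambda v$. Under this grading a homogeneous degree-$3$ series has no constant, linear, or quadratic terms, so equality of all second derivatives forces $F=\F_0^{\scriptscriptstyle\rm WK}$. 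I expect this normalization bookkeeping---fixing the integration constants via $B_\bt(E)=0$ and ruling out stray affine terms through homogeneity---to be the only delicate point; the invariance statement of Theorem~\ref{thmgenus0} itself is then purely the change-of-variables computation in the first paragraph, driven entirely by $\omega_\bT(V)=\omega_\bt(\varphi(V))$ and Lemma~\ref{lemmavfv}.
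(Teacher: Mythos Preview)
Your argument is correct, and it takes a genuinely different route from either of the paper's two proofs. The paper's first proof (in Section~\ref{section2}) derives a more elaborate integral representation
\[
\F_0^{\scriptscriptstyle\rm WK}(\bt) \= {\rm res}_{z=0}\Bigl(\int_0^\infty e^{-B_\bt(v)/z}\,dv \; z\,dz\Bigr)
\]
via the trick $(n-3)!=\int_0^\infty s^{n-3}e^{-s}\,ds$, and then shows invariance by a \emph{two}-variable substitution $v=\varphi(V)$, $\tilde z=\sqrt{\varphi'(V)}\,z$. The paper's second proof (Section~\ref{section4}) is closer in spirit to yours: it verifies $\p_{t_i}\p_{t_j}\F_0^{\scriptscriptstyle\rm WK}(\bt)=\p_{t_i}\p_{t_j}\F_0^{\scriptscriptstyle\rm WK}(\bt.\varphi)$ via the three-point function and Lemma~\ref{lemmavfv}, then recovers $\F_0^{\scriptscriptstyle\rm WK}$ from its second derivatives using the dilaton equation \eqref{47eq}--\eqref{48eq}. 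Your proof replaces the residue formula by the much more transparent closed form $\tfrac12\int_0^{E(\bt)}\omega_\bt$, under which the $\G$-invariance becomes the single-variable substitution $v=\varphi(V)$ using only $\omega_\bT=\varphi^*\omega_\bt$ and Lemma~\ref{lemmavfv}; and you replace the dilaton-equation step by the equivalent homogeneity argument (degree~$3$ in the grading $\deg t_i=1-i$, which excludes any affine ambiguity). Both ingredients are legitimate simplifications: your closed form is exactly the one for which the paper's own $1$-form $\omega_\bt$ was seemingly designed, and the homogeneity argument is just the content of the dilaton equation at genus zero repackaged. The trade-off is that the paper's residue formula, while heavier, also yields the integral representation for $E(\bt)$ displayed after~\eqref{idvf0xx1002}, which your approach does not directly give.
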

\begin{proof}
The $\psi$-class intersection numbers in genus zero have the well-known formula:
\beq\label{genuszero712}
\int_{\overline{\M}_{0,n}} \, \psi_1^{i_1}\cdots\psi_n^{i_n} \= \binom{n-3}{i_1,\dots,i_n} \,, \quad i_1,\dots,i_n\ge0\,.
\eeq
Writing  $(n-3)!$ as $\int_0^\infty s^{n-3} e^{-s} ds$, we find
\begin{align}
\F^{\scriptscriptstyle\rm WK}_0(\bt) 
& \= \sum_{n\ge0} \, \frac1{n!} \, \sum_{i_1,\dots,i_n\ge0 \atop i_1+\cdots+i_n=n-3} \prod_{j=1}^n \frac{t_{i_j}}{i_j!} \, \int_0^\infty 
 s^{i_1+\cdots+i_n} \, e^{-s} \, ds \label{F0twoexps} \\
& \=  {\rm res}_{z=0} \biggl(\int_0^\infty e^{-s} \sum_{n=0}^\infty \sum_{i_1,\dots,i_n=0}^\infty \frac{s^{i_1+\dots+i_n}}{n!} \, z^{2-n+i_1+\dots+i_n} \prod_{j=1}^n \frac{t_{i_j}}{i_j!} \, ds \;  dz\biggr)\,, \nn \\
& \=  {\rm res}_{z=0} \biggl(\int_0^\infty e^{-B_\bt(v)/z} \, dv \,  z \; dz\biggr)\,,\nn
\end{align}
where $B_\bt(v)$ is defined by~\eqref{Bvt1001}. Here, 
in the last equality we employed the change of variables $v=zs$.

Therefore, 
\begin{align}
\F_0^\varphi(\bT) \,:=\, \F^{\scriptscriptstyle \rm WK}_0(\bt) & \=  
{\rm res}_{z=0} \biggl( \int_0^\infty e^{-B_\bt(v)/z} dv \, z \; dz\biggr) \nn\\
& \=  
{\rm res}_{\tilde z=0} \biggl(\int_0^\infty e^{-\sqrt{\varphi'(V)} \, B_\bt(\varphi(V))/\tilde z} dV \, \tilde z \; d\tilde z\biggr) \nn\\
& \= 
{\rm res}_{z=0} \biggl( \int_0^\infty e^{-B_\bT(V)/z} dV \, z \; d z\biggr) \,. \nn
\end{align}
Here, in the first line we used \eqref{fgphi1001} and \eqref{F0twoexps}, in the second line we 
employed the change of variables $v=\varphi(V)$ and $\tilde z=\sqrt{\varphi'(V)} \, z$, and 
in the last equality we used the definition~\eqref{defsigmaction}.
The theorem is proved.
\end{proof}

We also note that the power series $E(\bt)$ defined in~\eqref{defVandvf1001} is equal 
to the second $t_0$-derivative of $\F^{\scriptscriptstyle \rm WK}_0(\bt)$, i.e., 
\beq\label{idvf0xx1002}
E(\bt) \= \frac{\p^2 \F^{\scriptscriptstyle \rm WK}_0(\bt)}{\p t_0^2} \,. 
\eeq
Then from~\eqref{F0twoexps} we immediately get an integral representation for $E(\bt)$ as follows:
\begin{align}
E(\bt) 
\=  {\rm res}_{z=0} \biggl(\int_0^\infty e^{-B_\bt(v)/z} \, dv \, \frac1z \; dz\biggr)
\end{align}

Before ending this section, we make the following remark.
\begin{remark}\label{remark2}
There is another way to state Theorem~\ref{thmgenus0}. 
For any $\varphi\in\G$, define a modified right action, denoted  
 $\bt\mapsto\hat{{\bf T}}=\bt|\varphi$  
by the following formula which is similar to~\eqref{defsigmaction}, 
but with the map now being {\it linear} rather than {\it affine linear}:
\beq \label{tTdef627}
\sum_{i\ge0} \, \hat T_i \, \frac{V^i}{i!} \= \sqrt{\varphi'(V)} \, \sum_{i\ge0} \, t_i \, \frac{\varphi(V)^i}{i!} \,.
\eeq
It is clear from that~\eqref{defsigmaction} and~\eqref{tTdef627} that 
\begin{align}\label{ThatT}
 T_i \= \hat T_i + \delta_{i,1}- C_i, \qquad  \sum_{i\ge0} \, C_i \, \frac{V^i}{i!} \,:=\, \sqrt{\varphi'(V)} \, \varphi(V)\,.
\end{align}
It is also easy to deduce from~\eqref{tTdef627} that 
\begin{align}\label{thatTrelation}
B_{\bT}(V) \= \sqrt{\varphi'(V)} \,  B_{\bt+{\bf d}-{\bf c}}(\varphi(V))\,, \quad  \sum_{i\ge0} \, c_i \, \frac{v^i}{i!} \,:= \sqrt{f'(v)} \, f(v) \,, 
\quad f \,:=\, \varphi^{-1}\,,
\end{align}
where ${\bf d}=(0,1,0,0,0,\dots)$ and ${\bf c}=(c_0,c_1,c_2,\dots)$.
Theorem~\ref{thmgenus0} can then be alternatively stated as follows:  
\beq\label{f0f0Tt}
\F_0\bigl(\hat T_0, \hat T_1, \hat T_2, \dots\bigr) \= \F_0(t_0, t_1, t_2-c_2, t_3-c_3,\dots) \,.
\eeq
One can use the modified group action to define a modified 
WK mapping partition function for any $\varphi\in\G$. It leads to the same WK mapping 
hierarchy as before. The shifts are nevertheless interesting due to their connection to 
higher Weil--Petersson volumes 
\cite{KMZ, MZ00, LX09, MS08, BDY16}) 
and will be useful for several of the applications later, e.g. in
connection with the Alexandrov formula where the $c_j$ (up to a scaling
factor~$q^{j-1}$) are specific numbers $(-4, 23, -176, \cdots)$ (Theorem~B in Section~\ref{sectionexample223}). \end{remark}

\section{Virasoro constraints for the WK mapping partition function}\label{section3}
In this section, we give the Virasoro constraints for the WK mapping partition function. 

Recall from~\cite{Wi91} that the free energy $\F^{\scriptscriptstyle \rm WK}(\bt;\e)$ 
satisfies the following {\it dilaton} and {\it string} equations, respectively:
\begin{align}
& \sum_{i\ge0} \, t_i \, \frac{\p \F^{\scriptscriptstyle \rm WK}(\bt;\e)}{\p t_i} \+ \epsilon \, \frac{\p \F^{\scriptscriptstyle\rm WK}(\bt;\e)}{\p \epsilon} \+ \frac1{24} \= 
\frac{\p \F^{\scriptscriptstyle\rm WK}(\bt;\e)}{\p t_1} \,, \label{gwdilaton} \\
& \sum_{i\ge0} \, t_{i+1} \, \frac{\p \F^{\scriptscriptstyle \rm WK}(\bt;\e)}{\p t_i} \+ \frac{t_0^2}{2 \, \epsilon^2}
\= \frac{\p \F^{\scriptscriptstyle \rm WK}(\bt;\e)}{\p t_0} \,. \label{gwstring}
\end{align}

Recall also that the Witten--Kontsevich theorem can be equivalently formulated as 
an infinite family of linear constraints for~$Z^{\scriptscriptstyle \rm WK}(\bt;\e)$, which come from a  
realization of half of the Virasoro algebra of central charge~$1$, called the {\it Virasoro constraints}~\cite{DVV91, DZ-norm}. 
More precisely, define the linear operators $L_k$, $k\geq-1$, by
\begin{align}
L^{\scriptscriptstyle \rm WK}_k \= & \sum_{i\ge0} \, \frac{(2i+2k+1)!!}{2^{k+1} \, (2i-1)!!} \, t_i \, \frac{\p}{\p t_{i+k}} 
- \frac{(2k+3)!!}{2^{k+1}} \, \frac{\p}{\p t_{1+k}} \+ \frac{\delta_{k,0}}{16} \label{virakdvdefinition}\\
& + \, \frac{\epsilon^2}{2} \, \sum_{i,j\ge0 \atop i+j=k-1} \, \frac{(2i+1)!! \, (2j+1)!!}{2^{k+1}} \, \frac{\p^2}{\p t_{i} \p t_{j}} 
\+ \frac{t_0^2}{2 \, \e^2} \, \delta_{k,-1}\,. \nn
\end{align}
These operators satisfy 
the Virasoro commutation relations:
\beq
\bigl[L^{\scriptscriptstyle \rm WK}_{k_1},L^{\scriptscriptstyle \rm WK}_{k_2}\bigr] \= (k_1-k_2) \, L^{\scriptscriptstyle \rm WK}_{k_1+k_2} \,, \quad \forall\, k_1,k_2\geq -1\,.
\eeq
The Virasoro constraints for~$Z^{\scriptscriptstyle \rm WK}(\bt;\e)$ then read 
\beq\label{virasorowk}
L^{\scriptscriptstyle \rm WK}_k \bigl(Z^{\scriptscriptstyle \rm WK}(\bt;\e)\bigr) \= 0\,,  \qquad k\geq -1\,.
\eeq

Obviously, 
the $k=-1$ constraint in~\eqref{virasorowk} is the same as~\eqref{gwstring}.

\begin{prop} \label{propvira}
We have
\begin{align}
& \sum_{i\ge0} \, \tilde T_i \, \frac{\p Z^\varphi(\bT;\e)}{\p T_i} \+ \e \, \frac{\p Z^\varphi(\bT;\e)}{\p \e} \+ \frac1{24} \, Z^\varphi(\bT;\e) 
\= 0\,, \label{dilatonT}\\
& \overline{L}^\varphi_k \bigl(Z^\varphi(\bT;\e)\bigr) \= 0\,, \quad k\geq-1\,, \label{viraD}
\end{align}
where $\tilde T_i:= T_i - \delta_{i,1}$, and $\overline{L}^\varphi_k$, $k\geq-1$, are linear operators of the form 
\begin{align}
\overline{L}^\varphi_k \=  \e^2 \, \sum_{i,j\ge0} \, a^\varphi_{ij}(k) \, \frac{\p^2}{\p T_i \p T_j} \+ 
\sum_{i,j\ge0} \, b^\varphi_{ij}(k) \, \tilde T_i \, \frac{\p}{\p T_j} 
\+ \frac{T_0^2}{2 \e^2} \, \delta_{k,-1} \,+\, \frac{\delta_{k,0}}{16} \,. 
\end{align}
Here the coefficients $a_{ij}^\varphi(k), b_{ij}^\varphi(k)$ depend on $\varphi$ and~$k$. 
\end{prop}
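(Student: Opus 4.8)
The plan is to transport the known dilaton equation and Virasoro constraints for $Z^{\scriptscriptstyle\rm WK}$ through the change of variables $\bt = \bT.\varphi^{-1}$ defining $Z^\varphi$. The key observation is that the $\G$-action on tuples is \emph{affine linear}: by the explicit formulas following~\eqref{defsigmaction}, each old variable $t_i$ is an affine-linear function of the new variables $T_0,T_1,\dots$ (in fact a finite linear combination of $T_0,\dots,T_i$ plus a constant), and conversely. Concretely, writing $\bt = \bT.\varphi^{-1}$ there are constants $\alpha_{ij}$, $\beta_i$ (depending on $\varphi$) with $t_i = \sum_{j\le i} \alpha_{ij}\,(T_j - \delta_{j,1}) + \text{const}$, and the shift by $\delta_{j,1}$ is exactly what the defining equation~\eqref{defsigmaction} produces because of the leading term $v$ in~\eqref{Bvt1001}. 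This is why the shifted variables $\tilde T_i = T_i - \delta_{i,1}$ appear: they are the natural coordinates in which the linear part of the action is homogeneous, so that $\sum_i t_i\,\partial_{t_i}$ and the other differential operators transform cleanly. First I would make this affine-linear transition matrix explicit and record the chain rule $\partial/\partial t_i = \sum_j (\partial T_j/\partial t_i)\,\partial/\partial T_j$, noting that $\partial T_j/\partial t_i$ is again a constant matrix (the inverse of the $\alpha$-matrix).

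Next I would substitute into the dilaton equation~\eqref{gwdilaton}. Since $Z^\varphi(\bT;\e) = Z^{\scriptscriptstyle\rm WK}(\bt;\e)$ with $\bt=\bT.\varphi^{-1}$, every $t_i$-derivative of $Z^{\scriptscriptstyle\rm WK}$ becomes a linear combination of $T_j$-derivatives of $Z^\varphi$ with constant coefficients, and the $\e$-derivative is unchanged. The Euler-type operator $\sum_i t_i\,\partial_{t_i}$ counts the total degree in~$\bt$; because the linear change of variables is weight-preserving once we pass to the shifted variables $\tilde T_i$ (the grading in which $t_i$ and $\tilde T_i$ both have weight one and the inhomogeneous shift sits at weight one as well), this Euler operator should transform into $\sum_i \tilde T_i\,\partial_{T_i}$ up to the scaling of~$\e$. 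Tracking the constant $1/24$ and the role of $\partial_{t_1}$ (which reflects the basepoint $\delta_{i,1}$ shift) then yields~\eqref{dilatonT}. The cleanest way to verify the exact form is to check that the degree-dimension matching~\eqref{dd1003}, which gives $\F^\varphi(\bT;\e)$ its homogeneity, forces precisely the operator displayed in~\eqref{dilatonT}; I would phrase the argument this way rather than by brute-force index juggling.

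For the Virasoro part, the strategy is identical in spirit: apply the constant-coefficient substitution to each $L^{\scriptscriptstyle\rm WK}_k$ in~\eqref{virakdvdefinition}. The first-order term $\sum_i (\cdots) t_i\,\partial_{t_{i+k}}$ becomes, after substitution, a double sum $\sum_{i,j} b^\varphi_{ij}(k)\,\tilde T_i\,\partial_{T_j}$ with constant coefficients $b^\varphi_{ij}(k)$; the $\e^2$ second-order term becomes $\e^2\sum_{i,j} a^\varphi_{ij}(k)\,\partial^2_{T_iT_j}$; the quadratic $t_0^2/(2\e^2)$ term for $k=-1$ is invariant because $T_0 = t_0$ (the zeroth component of the action is the identity, as recorded in the explicit formulas); and the constant $\delta_{k,0}/16$ is untouched. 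Defining $\overline L^\varphi_k$ to be the image of $L^{\scriptscriptstyle\rm WK}_k$ under this substitution then gives~\eqref{viraD} immediately from~\eqref{virasorowk}, and the resulting operator automatically has the asserted shape. The main obstacle, and the only place where care is genuinely required, is verifying that the \emph{inhomogeneous} (constant) part of the affine transformation interacts correctly: substituting $t_i = (\text{linear in } \tilde T) + \text{const}$ into the operator $\sum_i c_i(k)\,t_i\,\partial_{t_{i+k}}$ produces, besides the desired $\tilde T$-linear piece, an extra purely first-order constant-coefficient term coming from the constants, and one must check this combines with the explicit $-\frac{(2k+3)!!}{2^{k+1}}\,\partial_{t_{1+k}}$ term and with the $\delta_{i,1}$ shifts so that the final operator depends on the $T_i$ only through the \emph{shifted} variables $\tilde T_i$, i.e.\ no stray constant-coefficient first-order term survives. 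I expect this to follow from the same identity~\eqref{ThatT}--\eqref{thatTrelation} in Remark~\ref{remark2} that governs the shift $\bf c$, so I would isolate that bookkeeping as a short lemma and leave the remaining index manipulations as routine.
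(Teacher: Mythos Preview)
Your approach is essentially the same as the paper's: transport the operators through the affine-linear change of variables $\bt\leftrightarrow\bT$ via the chain rule. The paper carries this out in three lines by first rewriting the dilaton and Virasoro operators for $Z^{\scriptscriptstyle\rm WK}$ in terms of the \emph{shifted} variables $\tilde t_m:=t_m-\delta_{m,1}$; the point is that the term $-\frac{(2k+3)!!}{2^{k+1}}\partial_{t_{1+k}}$ in~\eqref{virakdvdefinition} is exactly the $i=1$ contribution with the shift, so the first-order part of $L_k^{\scriptscriptstyle\rm WK}$ is already homogeneous-linear in the $\tilde t_m$. Since the map $\tilde t_m\mapsto\tilde T_i$ is genuinely \emph{linear} (not merely affine) by~\eqref{defsigmaction}, the ``stray constant-coefficient first-order term'' you worry about never arises, and there is no need to invoke Remark~\ref{remark2}. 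With this observation your argument goes through exactly as the paper's does.
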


\begin{proof} 
For $\varphi\in\G$, write 
\beq
T_i \= \delta_{i,1} \+ \sum_{m=0}^i \, M^m_i \, (t_m-\delta_{m,1}) \,, 
\quad t_m \= \delta_{m,1} \+ \sum_{i=0}^m \, N^i_m \, (T_i-\delta_{i,1}) \,. 
\eeq
Here $M^m_i$ and $N^i_m$ have dependence on~$\varphi$.
Then we have
\beq
\sum_{m\ge0} \, (t_m-\delta_{m,1}) \, \frac{\p}{\p t_m} \= 
\sum_{m\ge0} \sum_{i \ge m} \, (t_m-\delta_{m,1}) \, M^m_i \, \frac{\p}{\p T_i} 
\= \sum_{i \ge 0} \, (T_i-\delta_{i,1}) \, \frac{\p}{\p T_i} \,.
\eeq
Similarly, 
\begin{align}
 L_k^{\scriptscriptstyle \rm WK}
&  \= \frac{T_0^2}{2 \e^2} \, \delta_{k,-1} \+ \frac{\delta_{k,0}}{16} \+ \sum_{i \ge 0} \, \frac{(2i+2k+1)!!}{2^{k+1} \, (2i-1)!!} \, \sum_{j=0}^i \, N^j_i \, \tilde T_j \, \sum_{r=0}^{i+k} M^{i+k}_r \frac{\p}{\p T_r}
\nn\\
& \quad +\, \frac{\epsilon^2}{2} \, \sum_{i,j\ge0 \atop i+j=k-1} \, \frac{(2i+1)!! \, (2j+1)!!}{2^{k+1}} \, \sum_{r_1=0}^{i} \sum_{r_2=0}^j 
M^{i}_{r_1} M^{j}_{r_2} \frac{\p}{\p T_{r_1}} \frac{\p}{\p T_{r_2}} . \nn
\end{align}
The proposition is proved.
\end{proof}

We call~\eqref{dilatonT} the {\it dilaton equation} for~$Z^\varphi(\bT;\e)$, and~\eqref{viraD} the {\it Virasoro constraints} for~$Z^\varphi(\bT;\e)$ because 
the operators $\overline{L}^\varphi_k$ 
satisfy the Virasoro commutation relations:
\beq
\Bigl[\overline{L}^\varphi_k \,, \, \overline{L}^\varphi_\ell\Bigr]
\= (k-\ell) \, \overline{L}^\varphi_{k+\ell}\,, \quad k,\ell\geq-1\,.
\eeq

\section{The WK mapping free energy in genus zero}\label{section4}
In this section, we give a different and more geometric proof
of Theorem~\ref{thmgenus0}.

\begin{lemma} \label{lemma2derivs}
The following identity holds:
\beq\label{genuszerof0ff0}
\frac{\p^2 \F^{\scriptscriptstyle\rm WK}_0(\bt)}{\p t_i \p t_\ell} \= 
\frac{\p^2 (\F^{\scriptscriptstyle\rm WK}_0 (\bt . \varphi))}{\p t_i \p t_\ell}  \,, \quad i,\ell\ge0 \,.
\eeq
\end{lemma}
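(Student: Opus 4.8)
The plan is to reduce \eqref{genuszerof0ff0} to an explicit closed form for the genus-zero two-point function and then to verify the asserted equality by a single $\varphi$-governed change of variables. The first step I would carry out is to establish, from the data already available, the formula
\[
\frac{\p^2 \F^{\scriptscriptstyle\rm WK}_0(\bt)}{\p t_i \, \p t_j} \= \frac{E(\bt)^{i+j+1}}{i!\,j!\,(i+j+1)}\,, \qquad i,j\ge0\,.
\]
To obtain this I would start from $E(\bt)=\p_{t_0}^2\F^{\scriptscriptstyle\rm WK}_0(\bt)$ in~\eqref{idvf0xx1002} together with the RH flows~\eqref{RHhierarchy} written as $\p_{t_j}E=\tfrac{E^j}{j!}\,\p_{t_0}E$. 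Applying $\p_{t_i}$ and then $\p_{t_j}$ to $E$ and integrating once in $t_0$ shows that $\p_{t_i}\p_{t_j}\F^{\scriptscriptstyle\rm WK}_0$ agrees with the right-hand side up to a $t_0$-independent correction. That correction vanishes by a weight count coming from the dimension constraint~\eqref{dd1003}: under the grading assigning to a monomial the quantity (number of factors) minus (sum of indices), every genus-zero two-point function is homogeneous of weight $i+j+1\ge1$, whereas any nonzero $t_0$-independent series involves only the variables $t_{\ge1}$ and so has weight $\le0$.

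With the closed formula in hand, I would treat the right-hand side of~\eqref{genuszerof0ff0} by the chain rule. Since $\bT=\bt.\varphi$ is affine-linear in $\bt$ (so its second differentials vanish), writing $\hat E:=E(\bt.\varphi)$ and evaluating the formula at the point $\bT$ gives
\[
\frac{\p^2\bigl(\F^{\scriptscriptstyle\rm WK}_0(\bt.\varphi)\bigr)}{\p t_i\,\p t_\ell} \= \sum_{a,b\ge0}\frac{\p T_a}{\p t_i}\,\frac{\p T_b}{\p t_\ell}\,\frac{\hat E^{\,a+b+1}}{a!\,b!\,(a+b+1)}\,,
\]
where the Jacobian is the linear part of the $\G$-action: by~\eqref{tTdef627} one has $\sum_a \tfrac{\p T_a}{\p t_i}\tfrac{V^a}{a!}=\sqrt{\varphi'(V)}\,\tfrac{\varphi(V)^i}{i!}$. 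The last ingredient is Lemma~\ref{lemmavfv}, which supplies $\varphi(\hat E)=E(\bt)$, so the target for the double sum is precisely $E(\bt)^{i+\ell+1}/(i!\,\ell!\,(i+\ell+1))=\varphi(\hat E)^{i+\ell+1}/(i!\,\ell!\,(i+\ell+1))$.

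The heart of the argument, and the step I expect to be the main obstacle, is collapsing this double sum back to the single closed-form expression. I would write $\tfrac1{a+b+1}=\int_0^1 s^{a+b}\,ds$, insert the Jacobian generating function at $V=s\hat E$ to turn each inner sum into $\sqrt{\varphi'(s\hat E)}\,\varphi(s\hat E)^{i}/i!$ (and likewise with $\ell$), and then use the substitution $u=\varphi(s\hat E)$, $du=\hat E\,\varphi'(s\hat E)\,ds$, running from $0$ to $\varphi(\hat E)$. This converts the double sum into $\tfrac1{i!\,\ell!}\int_0^{\varphi(\hat E)}u^{i+\ell}\,du=\varphi(\hat E)^{i+\ell+1}/(i!\,\ell!\,(i+\ell+1))$, matching the left-hand side. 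The delicate points here are purely bookkeeping: justifying the interchange of the double sum (finite in each total $\hat E$-degree) with the $s$-integral as an identity of formal power series in $\hat E$, and checking that the factors $1/(i!\,\ell!)$ pass through the substitution intact.

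As a shorter but less geometric alternative, I would note that differentiating the residue representation $\F^{\scriptscriptstyle\rm WK}_0(\bt)={\rm res}_{z=0}\bigl(\int_0^\infty e^{-B_\bt(v)/z}\,dv\,z\,dz\bigr)$ twice, using $\p_{t_i}B_\bt(v)=-v^i/i!$, and then applying the same change of variables as in the proof of Theorem~\ref{thmgenus0} (namely $v=\varphi(V)$ together with the matching rescaling of $z$, so that $B_\bt(v)/z=B_\bT(V)/\tilde z$) directly identifies the two sides of~\eqref{genuszerof0ff0} without passing through the closed form. I would present the two-point-function route as the primary, more geometric proof and record this residue computation as a cross-check.
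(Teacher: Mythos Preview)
Your proposal is correct and shares the same basic ingredients as the paper's proof---the closed two-point formula~\eqref{twopointgenus0formula}, the Jacobian identity~\eqref{123456}, and Lemma~\ref{lemmavfv}---but the organization differs in an interesting way. The paper differentiates one more time: at the level of third derivatives the denominator $1/(a+b+1)$ disappears, so the double sum over $a,b$ factorizes into a product of two single sums, each evaluated directly by~\eqref{123456}; this gives equality of third derivatives, and the resulting integration constant is disposed of by evaluating at $\bt={\bf 0}$. Your route stays at the level of second derivatives and handles the troublesome $1/(a+b+1)$ by the integral representation $\int_0^1 s^{a+b}\,ds$, after which the substitution $u=\varphi(s\hat E)$ collapses everything in one stroke. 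Your version is slightly more self-contained (no constant to chase) and mirrors the substitution used in the first proof of Theorem~\ref{thmgenus0}; the paper's version trades that for a step where the algebra is trivially a product. Either is a perfectly good proof of the lemma; your residue-representation cross-check is essentially the content of the first proof of Theorem~\ref{thmgenus0} applied after two differentiations, so it does not add independent information but is a nice consistency remark.
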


\begin{proof}
Recall the following well-known identity
\begin{align}\label{twopointgenus0formula}
\frac{\p^2 \F^{\scriptscriptstyle\rm WK}_0(\bt)}{\p t_i \p t_\ell} \= \frac{E(\bt)^{i+\ell+1}}{i! \, \ell! \, (i+\ell+1)} \,, \qquad \forall\, i, \ell\geq0\,,
\end{align}
(see~e.g.~\cite{Du96, DZ-norm})
with the $i=\ell=0$ case being the same as~\eqref{idvf0xx1002}.
From~\eqref{twopointgenus0formula} one can easily deduce:
\begin{align}
& \frac{\p^3 \F^{\scriptscriptstyle \rm WK}_0(\bt)}{\p t_i \p t_\ell \p t_m} \= \frac{E(\bt)^{i+\ell+m}}{i! \, \ell! \, m!} \frac{\p v(\bt)}{\p t_0} \,, \label{3derivatives1}\\
& \frac{\p^3 (\F^{\scriptscriptstyle \rm WK}_0 (\bt . \varphi))}{\p t_i \p t_\ell \p t_m} 
\= \sum_{m_1,m_2,m_3\ge0} \frac{\p T_{m_1}}{\p t_i} \frac{\p T_{m_2}}{\p t_\ell} 
\frac{\p T_{m_3}}{\p t_m} 
\frac{E(\bT)^{m_1+m_2+m_3}}{m_1! \, m_2! \, m_3! } \frac{\p E(\bT)}{\p T_0} \,, \label{3drho1002}
\end{align}
where $\bT$ and $\bt$ are related by $\bT=\bt . \varphi$, and $i, \ell, m\geq0$.
By using~\eqref{defsigmaction} we have
\beq \label{123456}
\sum_{m=0}^\infty \, \frac{\p T_m}{\p t_i} \, \frac{E(\bT)^m}{m!} \= \sqrt{\varphi'(E(\bT))} \, \frac{\varphi(E(\bT))^i}{i!}\,.
\eeq

Using \eqref{3drho1002}, \eqref{123456} and Lemma~\ref{lemmavfv}, we obtain that 
\begin{align}
\frac{\p^3 (\F^{\scriptscriptstyle\rm WK}_0 (\bt . \varphi))}{\p t_i \p t_\ell \p t_m} \= \varphi'(E(\bT))^{3/2} \, \frac{E(\bt)^{i+\ell+m}}{i! \, \ell! \, m!} \, 
\frac{\p E(\bT)}{\p T_0} \,. \label{3derivatives2}
\end{align}
We conclude from~\eqref{3derivatives1}, \eqref{3derivatives2}, \eqref{jetsrelation1} that 
\beq
\frac{\p^3 \F^{\scriptscriptstyle\rm WK}_0(\bt)}{\p t_i \p t_\ell \p t_m} 
\= \frac{\p^3 (\F^{\scriptscriptstyle\rm WK}_0 (\bt . \varphi))}{\p t_i \p t_\ell \p t_m} \,, \quad i,\ell,m\geq0\,.
\eeq
So the two sides of~\eqref{genuszerof0ff0} can only differ by a constant. 
The lemma is then proved by observing that they both vanish when $\bt={\bf 0}$. 
\end{proof}

\begin{proof}[A second proof of Theorem~\ref{thmgenus0}]
It follows from the dilaton equation that 
\beq\label{47eq}
2 \, \F^{\scriptscriptstyle\rm WK}_0(\bt) \= 
\sum_{i\ge0} \, t_i \, \frac{\p \F^{\scriptscriptstyle\rm WK}_0(\bt)}{\p t_i} - \frac{\p \F^{\scriptscriptstyle \rm WK}_0(\bt)}{\p t_1} \,.
\eeq
Differentiating this identity with respect to $t_\ell$ we find 
\beq\label{48eq}
\frac{\p \F^{\scriptscriptstyle \rm WK}_0(\bt)}{\p t_\ell} \= 
\sum_{i\ge0} \, t_i \, \frac{\p^2 \F^{\scriptscriptstyle \rm WK}_0(\bt)}{\p t_i \p t_\ell} 
- \frac{\p^2 \F^{\scriptscriptstyle \rm WK}_0(\bt)}{\p t_1 \p t_\ell} \,, \quad \ell\ge0\,.
\eeq
From \eqref{47eq}, \eqref{48eq} we see that the power series~$\F^{\scriptscriptstyle \rm WK}_0(\bt)$ is uniquely determined 
by $\frac{\p^2 \F^{\scriptscriptstyle \rm WK}_0(\bt)}{\p t_i \p t_\ell}$, $i,\ell\geq0$. 
Combined with Lemma~\ref{lemma2derivs}, this proves the theorem. 
\end{proof}

\section{The higher genus WK mapping free energies} \label{section5}
In this section, we show that the higher genus WK mapping free energies admit jet representations.

It is known that \cite{DW90, DY20-2, DZ-norm, Getzler} 
the power series $\F^{\scriptscriptstyle\rm WK}_g(\bt)$, $g\ge1$, has the $(3g-2)$-jet representation, i.e., 
there exists $F^{\scriptscriptstyle\rm WK}_g(v_1,\dots,v_{3g-2})$, such that 
\beq\label{WK3gminus2}
\F^{\scriptscriptstyle\rm WK}_g(\bt) \= 
F^{\scriptscriptstyle\rm WK}_g\biggl(\frac{\p E(\bt)}{\p t_0}, \dots, \frac{\p^{3g-2} E(\bt)}{\p t_0^{3g-2}}\biggr)\,, \quad g\geq1\,,
\eeq
with 
\beq
F^{\scriptscriptstyle\rm WK}_1(v_1) \= \frac1{24} \log v_1 \,.
\eeq
Moreover, for $g\ge2$, $F^{\scriptscriptstyle\rm WK}_g(v_1,\dots,v_{3g-2})$ is a polynomial 
of $v_2,\dots,v_{3g-2}$ and $v_1^{-1}$ (see~e.g.~\cite{DY20-2, DZ-norm}), that 
satisfies the following two homogeneity conditions:
\begin{align}
& \sum_{k \ge 1} \, k \, v_k \, \frac{\p F^{\scriptscriptstyle\rm WK}_g(v_1,\dots,v_{3g-2})}{\p v_k} 
\= (2g-2) \, F^{\scriptscriptstyle\rm WK}_g(v_1,\dots,v_{3g-2}) \,, \quad g\ge2\,, \label{homowk-1}\\
& \sum_{k \ge 2} \, (k-1) \, v_k \, \frac{\p F^{\scriptscriptstyle\rm WK}_g(v_1,\dots,v_{3g-2})}{\p v_k} 
\= (3g-3) \, F^{\scriptscriptstyle\rm WK}_g(v_1,\dots,v_{3g-2}) \,, \quad g\ge2\,. \label{homowk-2}
\end{align}

By using \eqref{fgphi1001}, Lemma~\ref{vVmaplemma} and~\eqref{WK3gminus2}--\eqref{homowk-2} we arrive at the following proposition.
\begin{prop}\label{jetreprep}
For $g=1$ we have the identity:
\beq\label{genus1K1}
\F^\varphi_1(\bT) \= F^\varphi_1\biggl(E(\bT), \frac{\p E(\bT)}{\p X}\biggr), ~ {\rm with~}
F^\varphi_1(V,V_1) \,:=\, \frac{1}{24} \log V_1 \+ \frac1{16} \log \varphi'(V) \,.
\eeq
For each $g\ge2$,  $\F^\varphi_g(\bT)$ is given by  
\begin{align}
& \F^\varphi_g(\bT) \= F^\varphi_g\biggl(E(\bT), \dots, \frac{\p^{3g-2} E(\bT)}{\p X^{3g-2}}\biggr)  \label{jetfvg89}
\end{align}
for some function $F^\varphi_g(V_0, \dots, V_{3g-2})$
which is a polynomial in $V_1^{-1}$,
$V_2$, \dots, $V_{3g-2}$. Moreover, this polynomial is weighted homogeneous of degree $2g-2$ (where $V_i$ has weight $i$), 
i.e.,
\beq
\sum_{k=1}^{3g-2} \, k \, V_k \, \frac{\p F^\varphi_g(V_0, \dots, V_{3g-2})}{V_k} \= (2g-2) \, F^\varphi_g(V_0, \dots, V_{3g-2})\,.
\eeq
What is more, for each $g\ge1$ the differences 
$\F^\varphi_g(\bT) - \F^{\scriptscriptstyle\rm WK}_g(\bT)$
and $F^\varphi_g(V_0, \dots, V_{3g-2}) - F^{\scriptscriptstyle\rm WK}_g(V_1, \dots, V_{3g-2})$, 
as power series of $a_2,a_3,a_4,\dots$, have vanishing constant terms.
\end{prop}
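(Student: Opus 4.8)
The whole proposition follows from one substitution, so the plan is to feed Lemma~\ref{vVmaplemma} into the WK jet representation~\eqref{WK3gminus2} and then read off each assertion. Writing $\bt=\bT.\varphi^{-1}$, so that $\bt$ and $\bT$ are related by~\eqref{defsigmaction}, the definition~\eqref{fgphi1001} together with~\eqref{WK3gminus2} gives
\beq
\F^\varphi_g(\bT)\=\F^{\scriptscriptstyle\rm WK}_g(\bt)\=F^{\scriptscriptstyle\rm WK}_g\bigl(M_1,\dots,M_{3g-2}\bigr),
\eeq
where $M_k=M_k\bigl(E(\bT),\p_X E(\bT),\dots,\p_X^{k}E(\bT)\bigr)$. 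This identifies the sought function as $F^\varphi_g(V_0,\dots,V_{3g-2}):=F^{\scriptscriptstyle\rm WK}_g\bigl(M_1(V_0,V_1),\dots,M_{3g-2}(V_0,\dots,V_{3g-2})\bigr)$ and proves the jet representation~\eqref{jetfvg89}. For $g=1$ the formula $F^{\scriptscriptstyle\rm WK}_1(v_1)=\frac1{24}\log v_1$ and $M_1=\varphi'(V_0)^{3/2}V_1$ give $F^\varphi_1=\frac1{24}\log V_1+\frac1{16}\log\varphi'(V_0)$ at once, which is~\eqref{genus1K1}.

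Next I would read off the algebraic shape. By Lemma~\ref{vVmaplemma} each $M_k$ with $k\ge2$ is a polynomial in $V_1,\dots,V_k$, while the only inverse that enters is $M_1^{-1}=\varphi'(V_0)^{-3/2}V_1^{-1}$; hence $F^\varphi_g$ is polynomial in $V_2,\dots,V_{3g-2}$ and its only denominators are powers of $V_1$, with coefficients in $R[\varphi'(V_0)^{\pm1},\varphi''(V_0),\dots]$. The weighted homogeneity is then formal: using the chain rule, the identity $\sum_{j=1}^k jV_j\,\p_{V_j}M_k=kM_k$ from Lemma~\ref{vVmaplemma}, and finally~\eqref{homowk-1},
\beq
\sum_{j\ge1}jV_j\frac{\p F^\varphi_g}{\p V_j}\=\sum_k\frac{\p F^{\scriptscriptstyle\rm WK}_g}{\p v_k}\Bigl(\sum_{j\ge1}jV_j\frac{\p M_k}{\p V_j}\Bigr)\=\sum_k k\,M_k\,\frac{\p F^{\scriptscriptstyle\rm WK}_g}{\p v_k}\Big|_{v=M}\=(2g-2)\,F^\varphi_g,
\eeq
which is the asserted degree $2g-2$.

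The delicate point, and the step I expect to be the main obstacle, is the exact range of the exponents of $V_1$, that is, the claim that $F^\varphi_g$ is a genuine polynomial in $V_1^{-1}$ and not a two-sided Laurent expression. The two gradings of $F^{\scriptscriptstyle\rm WK}_g$ give only an upper bound: subtracting~\eqref{homowk-2} from~\eqref{homowk-1} forces every monomial of $F^{\scriptscriptstyle\rm WK}_g$ to carry $v_1^{-(g-1)-\sum_{k\ge2}a_k}$, and since each $M_k$ has $V_1$-degree between $0$ and $k$, the substitution produces $V_1$-exponents bounded above by the weight $2g-2$ but does not by itself confine them to be nonpositive. To reach the stated form one must show that the non-negative-$V_1$-degree part of $F^\varphi_g$ cancels. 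Isolating the part of $F^\varphi_g$ that is free of $V_2,\dots,V_{3g-2}$, homogeneity reduces this to the vanishing of $F^{\scriptscriptstyle\rm WK}_g(\alpha_1,\dots,\alpha_{3g-2})$, where $\alpha_k(V_0)$ is the coefficient of $V_1^{k}$ in $M_k$; this cancellation is genuinely arithmetic and is not implied by the homogeneity bookkeeping. I would establish it by filtering the substitution according to the second grading~\eqref{homowk-2} and inducting on $g$, or—more structurally—by invoking the description of the $\G$-action as a space--time exchange composed with a Miura-type transformation (cf.\ the mechanism used later via~\cite{LWangZ}), under which the quasi-triviality and polynomiality-in-the-inverse-first-jet of~\cite{DZ-norm} are preserved.

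Finally, for the last clause I would specialize to $\varphi=\mathrm{id}$, i.e.\ $a_2=a_3=\dots=0$. Then the action is trivial, $\bt.\varphi=\bt$, so $\F^\varphi_g=\F^{\scriptscriptstyle\rm WK}_g$; correspondingly $\p_x=\p_X$ and $\varphi(V_0)=V_0$ give $M_k|_{\varphi=\mathrm{id}}=V_k$, whence $F^\varphi_g|_{\varphi=\mathrm{id}}=F^{\scriptscriptstyle\rm WK}_g(V_1,\dots,V_{3g-2})$. Thus both differences $\F^\varphi_g-\F^{\scriptscriptstyle\rm WK}_g$ and $F^\varphi_g-F^{\scriptscriptstyle\rm WK}_g$ vanish identically when $a_2=a_3=\dots=0$, i.e.\ have vanishing constant term as power series in $a_2,a_3,a_4,\dots$, as claimed.
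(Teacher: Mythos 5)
Your first, second, and fourth paragraphs are precisely the paper's argument: the paper proves this proposition exactly ``by using \eqref{fgphi1001}, Lemma~\ref{vVmaplemma} and \eqref{WK3gminus2}--\eqref{homowk-2}'', i.e.\ by setting $F^\varphi_g(V_0,\dots,V_{3g-2}):=F^{\scriptscriptstyle\rm WK}_g\bigl(M_1,\dots,M_{3g-2}\bigr)$, and your genus-one computation, your Euler-operator chain-rule derivation of the degree-$(2g-2)$ homogeneity, and your specialization $\varphi=\mathrm{id}$ (i.e.\ $a_2=a_3=\cdots=0$) for the constant-term clause are all correct and complete.

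Your third paragraph, however, manufactures an obstacle that does not exist, and the cancellation you propose to establish is in fact \emph{false}. The phrase ``polynomial in $V_1^{-1}$, $V_2,\dots,V_{3g-2}$'' is not meant to exclude positive powers of $V_1$: unlike the WK case, where subtracting \eqref{homowk-2} from \eqref{homowk-1} forces every monomial of $F^{\scriptscriptstyle\rm WK}_g$ to carry a strictly negative $v_1$-exponent, for general $\varphi$ only the first homogeneity survives, and positive powers of $V_1$ genuinely occur. This is visible in the paper itself: the explicit formula \eqref{expF2phi} for $F^\varphi_2$ ends with a term proportional to $V_1^2$, which is exactly the piece you claim must cancel. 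Indeed, setting $V_2=\cdots=V_{3g-2}=0$ in $F^\varphi_g$ gives, by \eqref{homowk-1}, the quantity $V_1^{2g-2}\,F^{\scriptscriptstyle\rm WK}_g(\alpha_1,\dots,\alpha_{3g-2})$ you isolate; for $g=2$ its coefficient is
$\frac{\varphi^{(4)}(V)}{384\,\varphi'(V)}+\frac{\varphi''(V)^3}{11520\,\varphi'(V)^3}-\frac{\varphi^{(3)}(V)\,\varphi''(V)}{384\,\varphi'(V)^2}$,
which for $\varphi(V)=V+aV^2$ equals $\frac{8a^3}{11520\,(1+2aV)^3}\neq0$. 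The later structural result \eqref{fp59}, $F^\varphi_g=V_1^{2g-2}P_g\bigl(V_2/V_1^2,\dots;l_1(V),\dots\bigr)$, likewise shows that the $V_1$-exponents range over both signs. The intended reading---and what your substitution immediately yields, since $M_1^{-1}=\varphi'(V_0)^{-3/2}V_1^{-1}$ while each $M_k$ with $k\ge2$ is polynomial in $V_1,\dots,V_k$---is that $F^\varphi_g$ lies in $R\bigl[\varphi'(V_0)^{\pm1},\varphi''(V_0),\dots\bigr]\bigl[V_1^{\pm1},V_2,\dots,V_{3g-2}\bigr]$, i.e.\ the only denominators are powers of $V_1$ (and of $\varphi'$). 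Delete the third paragraph, including the appeal to a filtration argument and to the mechanism of~\cite{LWangZ}; nothing further is needed, and the remaining three paragraphs constitute the paper's own proof.
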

For instance, for $g=2$ we have the following explicit expression for $F^\varphi_2$:
\begin{align}
& F^\varphi_2(V,V_1,V_2,V_3,V_4) \= \frac{V_4}{1152 \, V_1^2} \,-\, \frac{7 \, V_3 V_2}{1920 \, V_1^3} \+ \frac{V_2^3}{360 \, V_1^4} \label{expF2phi}\\
& \quad\quad\quad +\, 
\frac{\varphi''(V)}{320 \, \varphi'(V)} \, \frac{V_3}{V_1} \, - \, \frac{11 \, \varphi''(V)}{3840 \, \varphi'(V)} \, \frac{V_2^2}{V_1^2}  \+ \biggl( \frac{5 \, \varphi^{(3)}(V)}{768 \, \varphi'(V)} 
- \frac{29 \, \varphi''(V)^2}{7680 \, \varphi'(V)^2} \biggr) \, V_2 \nn\\
& \quad\quad\quad +\, \biggl( \frac{\varphi^{(4)}(V)}{384 \, \varphi'(V)} + \frac{\varphi''(V)^3}{11520 \, \varphi'(V)^3} - \frac{\varphi^{(3)}(V) \, \varphi''(V)}{384 \, \varphi'(V)^2} \biggr) \, V_1^2 \,. \nn
\end{align}
In the next section we will show that this function, and also the higher $F_g^\varphi$, are equal to a power of $V_1$ 
times a weighted homogeneous polynomial in the variables $V_{i+1}/V_1^{i+1}$ 
and $d^i(\log \varphi'(V))/dV^i$ (eqs~\eqref{fp59} and~\eqref{fkpk59}).

From the last statement in Proposition~\ref{jetreprep} we know that for each $g\ge1$, $\F^\varphi_g(\bT)$ is a deformation 
of~$\F^{\scriptscriptstyle\rm WK}_g(\bT)$,  as well as that $F^\varphi_g(V_0, \dots, V_{3g-2})$ is a 
deformation of $F^{\scriptscriptstyle \rm WK}_g(V_1, \dots, V_{3g-2})$. 
For $g=1,2$, this is obvious from~\eqref{genus1K1}, \eqref{expF2phi}. 
An alternative way to see this e.g. in genus $g=1$ is from the identity
\beq
\F^\varphi_1(\bT) - \F^{\scriptscriptstyle\rm WK}_1(\bT) \= \frac1{16} \, \log \varphi'\biggl(\frac{\p^2\F^{\scriptscriptstyle\rm WK}_0(\bT)}{\p T_0^2}\biggr) \,.
\eeq

\section{The loop equation for the WK mapping free energy}\label{newnewsection6}
This section devotes to the derivation of the loop equations for the WK mapping free energy.

Following~\cite{DZ-norm}, introduce the following creation and annihilation operators: 
\beq
a_{p} \= \left\{\begin{array}{cc} 
\e \, \frac{\p}{\p t_{p-1/2}}\,, & \quad p>0\,, \\ 
\e^{-1} \, (-1)^{p+1/2} \, (t_{-p-1/2}-\delta_{p,-3/2})\,, & \quad p<0 \,, \\
\end{array}\right.
\eeq
where $p$ is a half integer. Let 
\beq
f \= \sum_{p\in \ZZ+\frac12} \, a_{p} \, \int_0^\infty \, e^{-\lambda z} \, z^{p-1} dz \= \sum_{p\in \ZZ+\frac12} \, a_{p} \, \Gamma(p) \, \lambda^{-p}\,.
\eeq
Then
\beq
\p_\lambda (f) \= 
- \sum_{p\in \ZZ+\frac12} \, a_{p} \, \Gamma(p+1) \, \lambda^{-p-1} \, =: \, - 
\sqrt{\pi} \, \e \, A \,-\, \frac{\sqrt{\pi}}{\e} \, B\,,
\eeq
with 
\begin{align}
& A 
\= \sum_{m\ge0} \, \frac{(2m+1)!!}{2^{m+1}} \, \lambda^{-m-3/2} \, \frac{\p}{\p t_{m}} \,, \\
& B 
\= \sum_{m\ge0} \, \frac{2^m}{(2m-1)!!} \, \lambda^{m-1/2} \, (t_{m}-\delta_{m,1})  \,. 
\end{align}

It can then be verified that 
\beq
\sum_{k\geq-1} \, \frac{L_k}{\lambda^{k+2}} \;:=\; (T(\lambda))_{\leq-1}\,,
\eeq 
where $L_k$, $k\ge-1$, are the operators defined in~\eqref{virakdvdefinition}, and 
\begin{align}
&T(\lambda) \;:=\; \frac1{2\pi} \, : (\p_\lambda f)^2 : \+ \frac1{16 \, \lambda^2}
\= \frac{\e^2}2 \, A^2 \+ B \circ A \+ \frac1{2\e^2} \, B^2 \+ \frac1{16 \, \lambda^2} \, .
\end{align}
The Virasoro constraints~\eqref{virasorowk} can now be written as
\beq
(T(\lambda))_{-} (Z(\bt;\e)) \= 0 \,.
\eeq 
Here ``$-$" means taking the negative power of~$\lambda$.
By definition we have
\beq\label{virazft}
(T(\lambda))_{-} \bigl(Z^\varphi(\bT;\e)\bigr) \= 0 \,.
\eeq 

Dividing both sides of~\eqref{virazft} by~$Z^\varphi$ and taking the coefficients of~$\e^{-2}$, we have
\beq\label{virazero}
 (B \circ A)_- \bigl(\F^\varphi_0(\bT)\bigr) 
\+ \frac{1}2 \, \bigl(A\bigl(\F^\varphi_0(\bT)\bigr)\bigr)^2 \+ \frac1{2} \, (B^2)_- \= 0 \,.
\eeq
For $k\ge0$, applying $\p_X^{k+2}$ on both sides of the equality~\eqref{virazero}, one obtains 
\begin{align}
& 
\bigl(A\bigl(\F^\varphi_0\bigr) \circ A + (B\circ A)_- \bigr) (V_{k}) \label{6089}\\
 \= & - \sum_{m=1}^k \binom{k}{m} \, \p_X^{m-1} \bigl(A\bigl(\F^\varphi_{0X}\bigr)\bigr) \, \p_X^{k-m} (A(V)) \nn\\
&  - \p_X^k\Bigl((k+2) \bigl(B_{X}  \circ A \bigl(\F^\varphi_{0X}\bigr)\bigr)_-
\+ \bigl(A\bigl(\F^\varphi_{0X}\bigr)\bigr)^2 \+ \bigl((B_{X})^2\bigr)_-  \Bigr) \nn\\
 \= & - \sum_{m=1}^k \binom{k}{m} \, 
 \Bigl(\p_X^{m-1} \bigl(B_X+A\bigl(\F^\varphi_{0X}\bigr)\bigr) \, \p_X^{k-m+1} \bigl(B_X+A\bigl(\F^\varphi_{0X}\bigr)\bigr)\Bigr)_- \nn\\
&  - \p_X^k\Bigl(\bigl(B_{X} +  A \bigl(\F^\varphi_{0X}\bigr)\bigr)^2 \Bigr)_-\,. \nn
\end{align}
Here $\F^\varphi_0:=\F^\varphi_0(\bT)=\F_0(\bT)$.

Introduce  
\beq
\F^\varphi_{\rm h.g.} \= \F^\varphi_{\rm h.g.} (\bT;\e) \;:=\; \F^\varphi(\bT; \epsilon) - \e^{-2} \F_0 (\bT) \= \sum_{g\geq1} \e^{2g-2} \F^\varphi_g (\bT)\,.
\eeq
Here and below ``h.g." stands for higher genera.
Dividing both sides of~\eqref{virazft} by~$Z^\varphi$ and taking the coefficients of nonnegative power of~$\e$, we find 
\begin{align}
& (B \circ A)_- \bigl(\F^\varphi_{\rm h.g.}\bigr) \+ \frac{\e^2}2 \, \Bigl(\bigl(A\bigl(\F^\varphi_{\rm h.g.}\bigr)\bigr)^2 \+ A^2 \bigl(\F^\varphi_{\rm h.g.}\bigr) \Bigr) \label{higherviraft} \\
& \+ A\bigl(\F^\varphi_0\bigr) \, A\bigl(\F^\varphi_{\rm h.g.}\bigr) \+ \frac12 \, A^2 \bigl(\F^\varphi_0\bigr) \+ \frac1{16 \, \lambda^2} \= 0 \,. \nn
\end{align}
Substituting the jet representation \eqref{genus1K1}, \eqref{jetfvg89} for~$\Delta\F^\varphi$ into~\eqref{higherviraft}, we obtain 
\begin{align}
& - \sum_{k\ge0} \sum_{m=1}^k \binom{k}{m} \Bigl(\p_X^{m-1}\bigl(B_X+A\bigl(\F^\varphi_{0X}\bigr)\bigr) \, 
\p_X^{k-m+1} \bigl(B_X+A(\F^\varphi_{0X})\bigr)\Bigr)_- \, \frac{\p \F^\varphi_{\rm h.g.}}{\p V_{k}} \label{equation80} \\
& - \sum_{k\ge0} \p_X^k\Bigl(\bigl(B_{X} + A \bigl(\F^\varphi_{0X})\bigr)^2\Bigr)_- \, 
 \frac{\p \F^\varphi_{\rm h.g.}}{\p V_{k}} \nn\\
& \+ \frac{\e^2}2 \, \sum_{q_1,q_2\ge0} \, \p_X^{q_1+1} \bigl(A\bigl(\F^\varphi_{0 X}\bigr)\bigr) \, \p_X^{q_2+1} \bigl(A\bigl(\F^\varphi_{0X}\bigr)\bigr) \, 
\Bigl(\frac{\p \F^\varphi_{\rm h.g.}}{\p V_{q_1}}  \frac{\p \F^\varphi_{\rm h.g.}}{\p V_{q_2}} +
\frac{\p^2 \F^\varphi_{\rm h.g.}}{\p V_{q_1} \p V_{q_2}}\Bigr)\nn\\
& \+ \frac{\e^2}2 \, \sum_{m\ge0} \, A^2 (V_{m}) \, \frac{\p \F^\varphi_{\rm h.g.}}{\p V_{m}} \+ \frac12 \, A^2 \bigl(\F^\varphi_0\bigr) 
\+ \frac1{16 \, \lambda^2} \= 0 \,. \nn
\end{align}
Here we also used~\eqref{6089}. 
We note that 
\begin{align}
& A\bigl(\F^\varphi_{0 X}\bigr) 
\= \sum_{a, \,i\ge0} \, \frac{(2a+1)!!}{2^{a+1}} \, \frac{E(\bt)^{a+i+1}}{a! \, i! \, (a+i+1)} \, \frac{\p t_i}{\p X} \, \lambda^{-a-3/2} \,, \label{Aappformula} \\
&A^2\bigl(\F^\varphi_0\bigr) \= \frac{1}{8 \, (\lambda-\varphi(E(\bT)))^2} \,-\, \frac{1}{8 \, \lambda^2} \,,\\
& B_X \= \sum_{i\geq0} \frac{\p B}{\p t_i} \frac{\p t_i}{\p X} \= 
\sum_{b\ge0} \, \frac{2^b \, b!}{(2b-1)!!} \, \lambda^{b-1/2} \, {\rm Coef}\biggl(x^b,\frac1{\sqrt{\varphi'(\varphi^{-1}(x))}}\biggr) \,. \label{BXformula}
\end{align}

It follows from the equality~\eqref{Aappformula} and Lemma~\ref{lemmavfv} that 
\begin{align}
&A\bigl(\F^\varphi_{0 X}\bigr) \= \frac12 \, \int_0^{E(\bt)} \, \frac1{(\lambda-x)^{3/2}}\, \frac{dx}{\sqrt{\varphi'(\varphi^{-1}(x))}} 
\= \frac12 \, \int_0^{E(\bT)} \, \frac{\sqrt{\varphi'(x)}}{(\lambda-\varphi(x))^{3/2}}\, dx \,.
\end{align}
Together with~\eqref{BXformula}
we find via integration by parts 
that $B_X + A(\F^\varphi_{0 X})$ admits the following explicit Puiseux expansion 
as $\lambda\to \varphi(V)$:
\begin{align}
& B_X \+ A\bigl(\F^\varphi_{0 X}\bigr) \\
&  \= \Biggl(
\sum_{k\geq-1} \, \frac{2^{k+1}}{(2k+1)!!}  \, (\lambda-\varphi(V))^{k+1/2}  \, \biggl(\frac1{\varphi'(V)} \, \p_V\biggr)^{k+1} \biggl(\frac1{\sqrt{\varphi'(V)}}\biggr) \Biggr)\Bigg|_{V=E(\bT)} \,.  \nn
\end{align}
Then by noticing that 
the genus~$g$ part of equation~\eqref{equation80} admits a Laurent expansion as $\lambda\to\varphi(V)$
and that the vanishing of the 
coefficients of negative powers in~$\lambda$ is equivalent to the vanishing of the coefficients 
of negative powers in~$\lambda-\varphi(V)$ in the Laurent expansion, 
we arrive at
\begin{align}
	& - \sum_{k\geq 0} \, \Biggl(\p^k\big(\W(\lambda)^2\big) \+ \sum_{j=1}^{k}\,\binom{k}{j}\, \Bigl(
	\p^{j-1}\big(\W(\lambda)\big) \, \p^{k+1-j} \big(\W(\lambda)\big)\Bigr) \Biggr)^- \, \frac{\p \F^\varphi_{\rm h.g.}}{\p V_k} \label{eqn:loopdz} \\
	& \+ \frac{\epsilon^2}{2}\,\sum_{k, \ell\geq 0}\,\p^{k+1}\big(\W(\lambda)\big) \, \p^{\ell+1}\big(\W(\lambda)\big)\,
	\bigg(\frac{\p^2 \F^\varphi_{\rm h.g.}}{\p V_k\p V_\ell} \+ \frac{\p \F^\varphi_{\rm h.g.}}{\p V_k}\frac{\p \F^\varphi_{\rm h.g.}}{\p V_\ell}\bigg)\nn\\
	& \+ \frac{\epsilon^2}{16} \, \sum_{k\ge0} \, \p^{k+2}\biggl(\frac{1}{(\lambda-\varphi(V))^2}\biggr) \,\frac{\p \F^\varphi_{\rm h.g.}}{\p V_k} 
	\+ \frac{1}{16} \, \frac{1}{(\lambda-\varphi(V))^2}\=0\,, \nn
\end{align}
where $\p=\sum_k V_{k+1} \p/\p V_k$, 
\beq
\W(\lambda) \; := \;
\sum_{s\geq0} \, \frac{2^{s}}{(2s-1)!!}  \, (\lambda-\varphi(V))^{s-1/2}  \, \biggl(\frac1{\varphi'(V)} \p_V\biggr)^{s} \biggl(\frac1{\sqrt{\varphi'(V)}}\biggr)\,,
\eeq
and $(\bullet)^-$ means taking terms having negative powers of~$\lambda-\varphi(V)$.

We have the following theorem.
\begin{theorem}\label{thmloop}
The generating function 
\beq 
F^\varphi_{\rm h.g.} \= F^\varphi_{\rm h.g.}(\epsilon) \:= \sum_{g\ge1} \, \e^{2g-2} \, F^\varphi_g(V_0, \dots, V_{3g-2})
\eeq 
satisfies
\begin{align}
	& - \sum_{k\ge0} \, \biggl(\p^k\big(W^2\big) \+ \sum_{j=1}^{k}\,\binom{k}{j}\,
	\p^{j-1}\big(W\big) \, \p^{k+1-j} \big(W\big) \biggr)^- \; \frac{\p F^\varphi_{\rm h.g.}}{\p V_k} \label{loopequationfinal} \\
	& \+ \frac{\epsilon^2}{2}\,\sum_{k,\ell\ge0} \, \Bigl(\p^{k+1}\big(W\big) \, \p^{\ell+1}\big(W\big) \Bigr)^- \,
	\biggl(\frac{\p^2 F^\varphi_{\rm h.g.}}{\p V_k\p V_\ell} \+ \frac{\p F^\varphi_{\rm h.g.}}{\p V_k}\frac{\p F^\varphi_{\rm h.g.}}{\p V_\ell}\biggr)\nn\\
	& \+ \frac{\epsilon^2}{16} \, \sum_{k\ge0} \, \p^{k+2}\biggl(\frac{1}{\Delta^2}\biggr) \, \frac{\p F^\varphi_{\rm h.g.}}{\p V_k} 
	\+ \frac{1}{16} \, \frac{1}{\Delta^2} \= 0\,, \nn
\end{align}
where $W$ is the element in 
\beq
\varphi'(V)^{-1/2} \Delta^{-1/2} \QQ\bigl[\varphi'(V)^{\pm1}, \varphi''(V), \varphi'''(V), \dots\bigr][[\Delta]]
\eeq 
defined by 
\beq 
W \;:=\; 
\sum_{s\geq0}\frac{2^{s}}{(2s-1)!!} \, 
\biggl(\biggl(\frac1{\varphi'(V)} \frac{\p}{\p V} \biggr)^{s} \biggl(\frac1{\sqrt{\varphi'(V)}}\biggr)\biggr) \, \Delta^{s-\frac12} \,,
\eeq
the operator $\p$ is defined on functions of $\Delta, V_0, V_1, V_2, \dots$ by 
\beq
\p \:= - \varphi'(V) \, V_1 \, \frac{\p }{\p \Delta} \+ \sum_{k\ge0} \, V_{k+1} \, \frac{\p}{\p V_k}  \, ,
\eeq 
and $(\bullet)^-$ means taking terms having negative powers of~$\Delta$.
Moreover, the solution to~\eqref{loopequationfinal} is unique up to a sequence of 
additive constants which can be uniquely fixed by~\eqref{genus1K1} and the following equation:
\beq\label{homogfvpg}
\sum_{k\geq1} \, k \, V_k \, \frac{\p F^\varphi_g}{\p V_k} \= (2g-2) \, F^\varphi_g \+ \frac{\delta_{g,1}}{24} \,, \quad g\geq1\,.
\eeq
\end{theorem}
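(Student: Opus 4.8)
The plan is to read off \eqref{loopequationfinal} from the identity \eqref{eqn:loopdz}, which has already been established, by a single change of variable followed by the passage from specialized to free jet coordinates, and then to treat the uniqueness separately. First I would set $\Delta := \lambda - \varphi(V)$ and regard $\Delta$ (rather than $\lambda$) as the independent Puiseux variable. Under this substitution $\W(\lambda)$ becomes precisely $W$, the symbol $1/(\lambda-\varphi(V))^2$ becomes $1/\Delta^2$, and ``taking negative powers of $\lambda-\varphi(V)$'' becomes ``taking negative powers of $\Delta$''. The only point requiring care is the operator $\p$: in \eqref{eqn:loopdz} it is $\sum_k V_{k+1}\,\p/\p V_k$ computed with $\lambda$ held fixed, but when one passes to the coordinate $\Delta=\lambda-\varphi(V_0)$ the chain rule produces the extra term $(\p_X\Delta)\,\p/\p\Delta = -\varphi'(V_0)\,V_1\,\p/\p\Delta$, which is exactly the difference between the two operators $\p$ appearing in \eqref{eqn:loopdz} and \eqref{loopequationfinal}.

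After this substitution, \eqref{eqn:loopdz} is literally \eqref{loopequationfinal} evaluated along the jet locus $V_k=\p_X^k E(\bT)$. To upgrade it to an identity in free variables I would use the fact, recalled in Section~\ref{section2}, that $\p_X^k E(\bT)=\delta_{k,1}+T_k+(\text{higher order in }\bT)$; this triangular form shows that $\bT\mapsto(V_0,V_1,V_2,\dots)$ is an invertible change of formal coordinates, so the $V_k$ are algebraically independent and any identity among the relevant expressions (Laurent in $V_1$, polynomial in $V_2,V_3,\dots$, Puiseux in $\Delta$) that holds on the jet locus holds identically. This gives \eqref{loopequationfinal}.

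For uniqueness I would argue by induction on $g$, first noting that, genus by genus, \eqref{loopequationfinal} is linear in the top free energy: on extracting the coefficient of $\e^{2g-2}$, the only appearance of $F^\varphi_g$ is in the first line $-\sum_k(\cdots)^-\,\p F^\varphi_{\rm h.g.}/\p V_k$, whereas the two terms carrying an extra $\e^2$ (the quadratic $\p F^\varphi_{\rm h.g.}/\p V_k\,\p F^\varphi_{\rm h.g.}/\p V_\ell$ and the second-derivative $\p^2 F^\varphi_{\rm h.g.}/\p V_k\p V_\ell$) involve only $F^\varphi_{g'}$ with $g'<g$, and the inhomogeneous term $1/(16\Delta^2)$ contributes only at $g=1$. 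Hence, assuming the lower-genus data fixed, the difference $D:=F^\varphi_g-\tilde F^\varphi_g$ of two solutions satisfies the homogeneous equation $\sum_{k\ge0}C_k^-\,\p D/\p V_k=0$, where $C_k:=\p^k(W^2)+\sum_{j=1}^k\binom{k}{j}\,\p^{j-1}(W)\,\p^{k+1-j}(W)$.

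The decisive computation is that $C_k^-$ has most-negative part $\Delta^{-1-k}$ with coefficient a nonzero multiple of $(\varphi'(V))^{k-1}V_1^k$: one checks by induction that $\p^m(W)$ has leading term $\tfrac{(2m-1)!!}{2^m}(\varphi')^{m-1/2}V_1^m\,\Delta^{-1/2-m}$, whence $\p^k(W^2)$ leads with $k!\,(\varphi')^{k-1}V_1^k\,\Delta^{-1-k}$ and the product terms contribute the same power with positive coefficients. Since solutions are sought among polynomials in $V_1^{-1},V_2,\dots,V_{3g-2}$ (Proposition~\ref{jetreprep}), $D$ has finite jet order $N$; reading off the coefficient of $\Delta^{-1-N}$ forces $\p D/\p V_N=0$, and a descending induction on $k$ (using that $C_{k'}^-$ with $k'>k$ ceases to contribute once $\p D/\p V_{k'}=0$) yields $\p D/\p V_k=0$ for all $k\ge0$, so $D$ is a constant. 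The remaining constant is then pinned down genus by genus: for $g=1$ by the explicit formula \eqref{genus1K1}, and for $g\ge2$ by \eqref{homogfvpg}, since adding a constant $c$ to a weighted-homogeneous solution alters the right-hand side of \eqref{homogfvpg} by $(2g-2)c$ and so is consistent only for $c=0$. The main obstacle is precisely this leading-coefficient computation: the rest is chain rule and formal-coordinate bookkeeping, but the non-degeneracy $[\Delta^{-1-k}]C_k\neq0$ is what drives the descending induction and hence the whole uniqueness claim.
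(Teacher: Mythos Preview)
Your proposal is correct and follows essentially the same strategy as the paper's proof. For the first part, the paper simply observes that $\W(\lambda)=W|_{\Delta\mapsto\lambda-\varphi(V)}$ and declares \eqref{loopequationfinal} a rewriting of \eqref{eqn:loopdz}; you spell out the chain-rule origin of the extra $-\varphi'(V)V_1\,\p/\p\Delta$ term and the passage from the jet locus to free variables, which is exactly the content behind that one-line observation. For uniqueness, the paper phrases the argument as an upper-triangular linear system for the gradients $(\p F^\varphi_g/\p V_0,\dots,\p F^\varphi_g/\p V_{3g-2})$ with determinant $\prod_{j=0}^{3g-2}\tfrac{(2j+1)!!}{2^j}\varphi'(V)^{j-1}$, while you run the equivalent descending induction on the most-negative $\Delta$-power; your leading coefficients in fact sum to the paper's diagonal entries $\tfrac{(2k+1)!!}{2^k}\varphi'(V)^{k-1}$, so the two computations are the same non-degeneracy check in different clothing.
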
 

\begin{proof} 
Observing that $\W(\lambda)=W|_{\Delta\mapsto\lambda-\varphi(V)}$, equation~\eqref{loopequationfinal} 
is just a rewriting of~\eqref{eqn:loopdz}. 

To see uniqueness, 
by taking coefficients of powers of $\e^{2g-2}$, $g\ge1$, in the loop equation~\eqref{loopequationfinal} 
we see that the loop equation~\eqref{loopequationfinal}  is equivalent to
\begin{align}
	&  \sum_{k\ge0} \, \biggl(\p^k\bigl(W^2\bigr) \+ \sum_{j=1}^{k}\,\binom{k}{j} \, 
	\p^{j-1}\big(W\big) \, \p^{k+1-j} \bigl(W\bigr) \biggr)^- \, \frac{\p F^\varphi_g}{\p V_k} \label{loopequationfinalfg11108} \\
	& \= \frac{1}{2}\,\sum_{k,\ell\ge0} \, \bigl(\p^{k+1}\big(W\big) \, \p^{\ell+1}\big(W\big) \bigr)^- \,
	\bigg(\frac{\p^2 F^\varphi_{g-1}}{\p V_k\p V_\ell} 
	\+ \sum_{m=1}^{g-1} \frac{\p F^\varphi_{m}}{\p V_k}\frac{\p F^\varphi_{g-m}}{\p V_\ell}\bigg)\nn\\
	& \qquad + \, \frac{1}{16} \, \sum_{k\ge0} \, \p^{k+2}\Bigl(\frac{1}{\Delta^2}\Bigr) \,\frac{\p F^\varphi_{g-1}}{\p V_k} 
	\+ \frac{1}{16} \, \frac{1}{\Delta^2} \, \delta_{g,1} \,,\qquad g\geq1 \nn
\end{align}
(setting $F^\varphi_{0}=0$). For each $g\ge1$, since $F^\varphi_{g}$ is a function of $V_0,\dots,V_{3g-2}$ 
the sum $\sum_k$ on the left-hand side of~\eqref{loopequationfinalfg11108} is actually is finite sum, and 
by comparing coefficients of negatives powers of~$\Delta$ we find that~\eqref{loopequationfinalfg11108} 
is equivalent to the following triangular inhomogeneous linear system for the gradients of $F^\varphi_{g}$:
\beq
C_g \, \biggl(\frac{\p F^\varphi_{g}}{\p V_0}, \dots, \frac{\p F^\varphi_{g}}{\p V_{3g-2}}\biggr)^T \= M_g \, ,
\eeq
where $M_g$ is a column vector which is determined by $F^\varphi_{1}, \dots, F^\varphi_{g-1}$ and~$W$, and 
$C_g$ is an upper triangular matrix determined by~$W$. Moreover, by a straightforward calculation we find 
\beq
\det C_g \= \prod_{j=0}^{3g-2} \frac{(2j+1)!!}{2^j} \, \varphi'(V)^{j-1} \;\neq\; 0 \, ,
\eeq
which implies that~\eqref{loopequationfinalfg11108} gives a recursive formula for the gradients of $F^\varphi_{g}$, $g\ge1$.
For $g=1$, 
equation~\eqref{loopequationfinalfg11108} reads
\begin{align}
& \biggl(\frac32 \, \frac{V_1}{\Delta^2} - \frac32 \, \frac{\varphi''(V)}{\varphi'(V)^2} \, \frac{V_1}{\Delta}\biggr) \, \frac{\p F^\varphi_1}{\p V_1} 
\+ \frac{1}{\varphi'(V) } \, \frac1\Delta \, \frac{\p F^\varphi_1}{\p V_0}  \= \frac1{16 \, \Delta^2} \,.  \nn
\end{align}
By equating the coefficients of $\Delta^{-1}$ and $\Delta^{-2}$ to~0, we get
\beq
\frac{\p F^\varphi_1}{\p V_{1}} \= \frac1{24 \, V_1} \,, \quad 
\frac{\p F^\varphi_1}{\p V} \= \frac3{2} \, V_1 \, \frac{\varphi''(V)}{\varphi'(V)} \, \frac{\p F^\varphi_1}{\p V_{1}} \= \frac1{16} \,  \frac{\varphi''(V)}{\varphi'(V)} \,,
\eeq
which agrees with~\eqref{genus1K1}. 
For $g\ge2$, the homogeneity~\eqref{homogfvpg} fixes $F^\varphi_g$ by its gradients. 
\end{proof}

When $g=2$, the expression of $F^\varphi_2$ obtained using a computer algorithm designed from the above theorem  
coincides with the one given by~\eqref{expF2phi}. 
We have made this double check also for $g=3,4,5$ with a simple home computer.

We call \eqref{equation80} or~\eqref{loopequationfinal} or~\eqref{loopequationfinalfg11108} the 
{\it Dubrovin--Zhang type loop equation for the WK mapping partition function}, for short,  
 the {\it loop equation}.
 
 \begin{remark}\label{remarkloopgradient}
 We note that the existence of solution to the loop equation~\eqref{equation80} is a non-trivial fact. Our construction proves this 
 existence.
\end{remark}

\begin{remark}
For the case that $\varphi(V)=V$, the loop equation~\eqref{loopequationfinal} reduces to 
\begin{align}
&  - \sum_{k\ge0} \,  \biggl(\p^k\biggl(\frac1{\Delta} \biggr) \+ \sum_{m=1}^k \binom{k}{m} \p^{m-1}\biggl(\frac1{\sqrt{\Delta}}\biggr) \, 
\p^{k-m+1} \biggl(\frac1{\sqrt{\Delta}}\biggr) \biggr) \, \frac{\p F^\varphi_{\rm h.g.}}{\p V_{k}} 
\nn\\
& \+ \frac{\e^2}2 \, \sum_{k_1, \, k_2 \ge 0} \, \p^{k_1+1} \biggl(\frac1{\sqrt{\Delta}}\biggr) \, 
\p^{k_2+1} \biggl(\frac1{\sqrt{\Delta}}\biggr) \, \biggl(\frac{\p^2 F^\varphi_{\rm h.g.}}{\p V_k\p V_\ell} 
\+ \frac{\p F^\varphi_{\rm h.g.}}{\p V_k}\frac{\p F^\varphi_{\rm h.g.}}{\p V_\ell}\biggr)\nn\\
& \+ \frac{\epsilon^2}{16} \, \sum_{k\ge0} \, \p^{k+2} \biggl(\frac{1}{\Delta^2}\biggr) \, 
\frac{\p F^\varphi_{\rm h.g.}}{\p V_{k}} \+ \frac{1} {16 \, \Delta^2} \= 0\,. \nn
\end{align}
This loop equation coincides with the loop equation for the Witten--Kontsevich partition function, derived by Dubrovin and Zhang 
in~\cite{DZ-norm}.
\end{remark}

Introduce the polynomial ring 
\beq
\mathcal{R} \= \QQ[w_1,w_2,\dots; \ell_1,\ell_2,\dots]\,.
\eeq
As a vector space $\R$ decomposes into a direct sum of homogeneous subspaces
\beq
\R \= \oplus_{m\ge0} \R^{[m]}\,,
\eeq
where elements in $\R^{[m]}$ are weighted homogeneous polynomials of degree~$m$ in 
variables $w_i$ and $\ell_i$ of weight~$i$ ($i\ge 1$).
In the remainder of this section we will give a more elementary description of the functions 
 $F_g^\varphi$ by showing that 
\begin{align}
F^\varphi_g(V,V_1,\dots,V_{3g-2}) \= V_1^{2g-2} \, P_g\Bigl(\frac{V_2}{V_1^2}, \frac{V_{3}}{V_1^{3}},\dots;  l_1(V), l_2(V), \dots \Bigr) \quad (g\ge2)\,,
\label{fp59}
\end{align}
where $l_k(V)$ is defined by 
\beq\label{deflkv63}
l_k(V) \:= \Bigl(\frac{d}{dV}\Bigr)^k (\log \varphi'(V)) \,, \quad k\ge0 \,,
\eeq
and $P_g=P_g(w_1,w_2,\dots; \ell_1,\ell_2,\dots)$ is a polynomial in $\P^{[3g-3]}$. 
For instance, for $g=2$, 
\begin{align}
& P_2 \= \frac{w_3}{1152} \,-\, \frac{7 \, w_1 \, w_2}{1920} \+ \frac{w_1^3}{360} 
\+ \frac{\ell_1 \, w_2}{320} \, - \, \frac{11 \, \ell_1 \, w_1^2}{3840} \\
& \quad + \, \biggl( \frac{5 \, \ell_2 }{768} + \frac{7 \, \ell_1^2 }{2560} \biggr) \, w_1 \+ 
\biggl( \frac{\ell_1 \, \ell_2}{192} + \frac{\ell_1^3}{11520} + \frac{\ell_3}{384} \biggr) \,,\nn
\end{align}
which is much simpler to read than the equivalent expression~\eqref{expF2phi} for $F_2^\varphi$ and  
belongs to $\R^{[3]}$.  
More generally, we will 
show that 
\beq\label{fkpk59}
\frac{\p F_g^\varphi}{\p V_k} \= V_1^{2g-2-k} \, 
P_{g,k}\Bigl(\frac{V_2}{V_1^2}, \frac{V_{3}}{V_1^{3}},\dots;  l_1(V), l_2(V), \dots \Bigr) \quad (0\leq k\leq 3g-2)
\eeq
for some polynomials $P_{g,k}$ in $\R^{[3g-2-k]}$. Notice that this 
formula, unlike~\eqref{fp59}, is true also in genus~1, with 
\beq\label{pini59}
P_{1,0}\=\frac{\ell_1}{16}\,, \quad P_{1,1}\=\frac1{24} \,,
\eeq
as we see immediately from equation~\eqref{genus1K1}.

Let us introduce further some notations. 
First, we can show that $\p^k(W)$ for every $k\ge1$ is $\frac{V_1^k}{\sqrt{\varphi'(V) \Delta}}$ times a polynomial 
in $\frac1y=\frac{\varphi'(V)}{\Delta}$, namely, 
\begin{align}
\p^k (W) \= \frac{V_1^k}{\sqrt{\varphi'(V) \Delta}} \, 
\sum_{n=1}^k M_{k,n}\Bigl(\frac{V_2}{V_1^2}, \dots, \frac{V_{k+1-n}}{V_1^{k+1-n}}; l_1(V),\dots, l_{k-n}(V)\Bigr) \, y^{-n} \,,
\label{dkw420}
\end{align}
where $M_{k,n}(w_1,\dots,w_{k-n}; \ell_1,\dots, \ell_{k-n})\in \R^{[k-n]}$
with $M_{1,n}=\delta_{n,1}/2$.
Second, for $k\ge0$, we define $Y_k=\sum_{j=0}^k \binom{k+1}{j+1} \p^j(W) \p^{k-j}(W)$. 
Then we have that there exist %weighted-homogeneous polynomials 
$$Y_{k,n}(w_1,\dots,w_{k-1}; \ell_1,\dots, \ell_n)\in \R^{[n]}\,,$$ 
such that 
\beq\label{Ykexpr422}
Y_k \= \frac{V_1^k}{\varphi'(V) \, \Delta} \, 
\sum_{n\ge0} \, Y_{k,n} \Bigl(\frac{V_2}{V_1^2}, \dots, \frac{V_{k}}{V_1^{k}}; l_1(V),\dots,l_{n}(V)\Bigr) \, y^{n-k} 
\eeq
with $Y_{k,0}=(2k+1)!!/2^{k}$, $k\ge0$.
Third, for $k\ge1$ we have
\beq 
\p^k\Bigl(\frac1{\Delta^2}\Bigr) \,=:\, \frac{V_1^k}{\varphi'(V) \Delta} \, \sum_{m=3}^{k+2} \, 
Q_{k,m}\Bigl(\frac{V_2}{V_1^2}, \dots, \frac{V_{k+3-m}}{V_1^{k+3-m}}; l_1(V),\dots,l_{k+2-m}(V)\Bigr) \,  y^{1-m} \,, \label{Qkexpr56}
\eeq
for some  
$Q_{k,m}=Q_{k,m}(w_1,\dots,w_{k+2-m}; \ell_1,\dots, \ell_{k+2-m})\in\P^{[k+2-m]}$ with $Q_{1,3}=2$.

We also introduce the first-order 
differential operators $D_k^{[m]} : \R \to \R$ by 
$$D_k^{[m]} \,:=\, D_k \+ \delta_{k,1}\,m \,, $$ 
where $D_k$ is the derivation defined by 
 $$  D_0 \= \sum_{i\ge1} \ell_{i+1} \frac{\p }{\p \ell_i}\,, \quad D_1 \= - \sum_{i\ge1} (i+1) w_i \frac{\p }{\p w_i} \,, \quad 
 D_k \=  \frac{\p }{\p w_{k-1}} ~(k\ge2)\,. $$  
The operators $D_k$, $k\ge0$, simply correspond to 
$V_1^k \frac{\p }{ \p V_k}$ when applied to the function  $P(V_2/V_1^2, V_3/V_1^3, \dots; l_1(V), l_2(V),\dots)$.

\begin{theorem}\label{structureFgphi719}
The functions $F_g^\varphi$ and $\p F^\varphi_g/\p V_k$ are given by 
equations~\eqref{fp59}  
and~\eqref{fkpk59}, where the polynomials $P_{g,k}\in\R^{[3g-2-k]}$ $(0\leq k\leq 3g-2)$ are defined by the initial values~\eqref{pini59} and the recursion
\begin{align}
	&  \frac{(2k+1)!!}{2^k} \, P_{g,k} \= \frac{1}{16} \, \sum_{\ell=0}^{3g-5} \,
Q_{\ell+2,k+1} \, P_{g-1,\ell} - \sum_{j=k+1}^{3g-2} \, Y_{j,j-k} \, P_{g,j} \label{loop57}\\
	& + \, \frac{1}{2}\,\sum_{k_1, k_2=0}^{3g-5} \, \sum_{\substack{ 1\le n_1\le k_1+1 \\ 1\le n_2\le k_2+1 \\ n_1+n_2=k}}
M_{k_1+1,n_1} \, M_{k_2+1,n_2}\, 
\biggl(
D^{[2g-4-k_1]}_{k_2} (P_{g-1,k_1})  
	\+ \sum_{m=1}^{g-1} P_{m,k_1} P_{g-m,k_2}\biggr)  \nn
\end{align}
for $g\ge2$, and the polynomial $P_g \in \R^{[3g-3]}$ is defined by 
\beq\label{PS56}
P_g \= \frac1{2g-2} \, \biggl(P_{g,1} + \sum_{k\geq2} \, k \, w_{k-1} \, P_{g,k} \biggr)  \qquad  (g\ge 2) \,.  
\eeq  
Moreover, the polynomials $P_g$ and $P_{g,k}$ are related by
\beq\label{new116}
P_{g,k} \= \bigl(D_k+(2g-2)\delta_{k,1}\bigr) \, P_g \qquad (g\ge2)\,.
\eeq
\end{theorem}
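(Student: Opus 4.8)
The plan is to establish Theorem~\ref{structureFgphi719} in three stages: first the change-of-variables structure equations~\eqref{dkw420}, \eqref{Ykexpr422}, \eqref{Qkexpr56}; then the translation of the loop equation into the recursion~\eqref{loop57}; and finally the two auxiliary identities~\eqref{PS56} and~\eqref{new116} relating $P_g$ to its ``gradients'' $P_{g,k}$. The underlying principle throughout is that every object built from $W$ and the operator $\p$ is invariant under the scaling $V_k \mapsto c^k V_k$, $\Delta \mapsto c\,\Delta$ (with $\varphi'(V)\mapsto c^{-1}\varphi'(V)$, so that $y = \Delta/\varphi'(V)$ is scale-invariant), and this forces the stated dependence on the reduced variables $V_{k+1}/V_1^{k+1}$ and $l_k(V) = (d/dV)^k \log\varphi'(V)$.

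First I would prove~\eqref{dkw420} by induction on $k$. Writing $W = (\varphi'(V)\Delta)^{-1/2}\sum_{s\ge0}\frac{2^s}{(2s-1)!!}\bigl((\varphi'(V)^{-1}\p_V)^s(\varphi'(V)^{-1/2})\bigr)\varphi'(V)^s y^{-s}$ exhibits the base case, and since $\p = -\varphi'(V)V_1\,\p_\Delta + \sum_k V_{k+1}\p/\p V_k$ raises the overall power of $V_1$ by one and acts on $y$ via $\p(y) = y(l_1(V)V_1 + V_1) = \ldots$ (a computation giving a polynomial in $y$ with coefficients in the $l_k$ and the reduced $V$-ratios), each application preserves the stated form and lowers the $y$-degree range appropriately; the coefficient polynomials land in $\R^{[k-n]}$ by weight-counting. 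The analogous claims~\eqref{Ykexpr422} and~\eqref{Qkexpr56} follow from~\eqref{dkw420} by the same scaling bookkeeping, with the leading coefficients $Y_{k,0} = (2k+1)!!/2^k$ and $Q_{1,3}=2$ read off directly. The operators $D_k$ are introduced precisely so that $V_1^k\,\p/\p V_k$ acting on $P(V_2/V_1^2,\dots;l_1,\dots)$ becomes $D_k$ acting on $P$; I would verify this by the chain rule, the $+\delta_{k,1}m$ shift in $D_k^{[m]}$ accounting for the explicit $V_1^{2g-2}$ prefactor's derivative.

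Next I would substitute the ansatz~\eqref{fkpk59} into the already-established loop recursion~\eqref{loopequationfinalfg11108}, using the structure equations to expand each term as $V_1^{\bullet}(\varphi'(V)\Delta)^{-1}$ times a polynomial in $y$ with coefficients in $\R$. Matching the coefficient of a single monomial in $y$ (equivalently, a single negative power of $\Delta$) on both sides collapses the $\lambda$- or $\Delta$-expansion into the purely algebraic recursion~\eqref{loop57}: the $\sum_j Y_{j,j-k}P_{g,j}$ term comes from the left-hand side of~\eqref{loopequationfinalfg11108}, the $D^{[\cdot]}_{k_2}(P_{g-1,k_1})$ and the quadratic $P_m P_{g-m}$ terms from the genus-$(g-1)$ and convolution contributions carrying the two factors $\p^{k_i+1}(W) \sim M_{k_i+1,n_i}$, and the $Q_{\ell+2,k+1}P_{g-1,\ell}$ term from the $\p^{k+2}(1/\Delta^2)$ source. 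The leading factor $(2k+1)!!/2^k$ on the left is exactly $Y_{k,0}$, i.e. the diagonal entry of the triangular matrix $C_g$ whose nonvanishing determinant was already computed, so the system is solvable and determines each $P_{g,k}$ recursively.

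Finally, for the compatibility relations~\eqref{PS56} and~\eqref{new116} I would argue as follows. Relation~\eqref{new116} is the statement that $P_{g,k}$, defined independently by the recursion, really is the reduced gradient $\p F_g^\varphi/\p V_k$ rewritten in the $D_k$-language; I would prove it by applying $D_k+(2g-2)\delta_{k,1}$ to the definition $F_g^\varphi = V_1^{2g-2}P_g$ and checking consistency with~\eqref{fkpk59}, which reduces to the chain-rule identity for the $D_k$ established in stage one. Relation~\eqref{PS56} then follows from the weighted-homogeneity condition~\eqref{homogfvpg}, $\sum_k k V_k\,\p F_g^\varphi/\p V_k = (2g-2)F_g^\varphi$ for $g\ge2$: expressing both sides in reduced variables via~\eqref{fkpk59} and~\eqref{fp59} and isolating $P_g$ gives exactly $\frac{1}{2g-2}(P_{g,1}+\sum_{k\ge2}k\,w_{k-1}P_{g,k})$. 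The main obstacle I anticipate is not any single step but the bookkeeping in stage two: one must track the $y$-powers and the index shifts through the three sources in~\eqref{loopequationfinalfg11108} carefully enough to see that the ranges of summation ($\ell, k_1, k_2$ up to $3g-5$; $j$ from $k+1$ to $3g-2$) and the precise weight-superscript $2g-4-k_1$ on $D^{[\cdot]}_{k_2}$ emerge correctly, since an off-by-one error in the homogeneity degree would break the $\R^{[3g-2-k]}$ membership and hence the whole triangular-solvability argument.
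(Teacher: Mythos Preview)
Your proposal is correct and follows essentially the same route as the paper's proof: substitute the structural expansions~\eqref{dkw420}--\eqref{Qkexpr56} into the loop recursion~\eqref{loopequationfinalfg11108}, compare coefficients of powers of~$y$ to obtain~\eqref{loop57}, and then extract~\eqref{PS56} from the homogeneity~\eqref{homogfvpg}. The paper's proof is much terser---it treats~\eqref{dkw420}--\eqref{Qkexpr56} as already established (they are stated without proof just before the theorem) and dispatches the rest in two sentences---whereas you rightly flag that these structure equations themselves need an inductive argument, and you correctly identify the role of the shift in $D_k^{[m]}$ as accounting for the $V_1^{2g-2}$ prefactor when converting $\p^2 F_{g-1}^\varphi/\p V_{k_1}\p V_{k_2}$ into the $D^{[2g-4-k_1]}_{k_2}(P_{g-1,k_1})$ term.

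One small point of ordering: in your stage three you propose to prove~\eqref{new116} by applying $D_k+(2g-2)\delta_{k,1}$ to ``the definition $F_g^\varphi = V_1^{2g-2}P_g$'', but at that stage $P_g$ is defined by~\eqref{PS56}, not by~\eqref{fp59}. The clean order (which the paper's remark after the theorem makes explicit) is: the existence of $F_g^\varphi$ is already known from Proposition~\ref{jetreprep}; the loop equation then forces its gradients to be the $P_{g,k}$ of~\eqref{loop57}; homogeneity~\eqref{homogfvpg} then yields~\eqref{fp59} with $P_g$ given by~\eqref{PS56}; and only then does~\eqref{new116} follow by the chain rule. Your proposal contains all of these ingredients, just with the last two steps narrated in reverse.
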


\begin{proof}
Substituting \eqref{dkw420}--\eqref{Qkexpr56} in the loop equation~\eqref{loopequationfinalfg11108}, and  
 comparing the coefficients of powers of~$y$ we obtain~\eqref{loop57}. 
Dividing~\eqref{homogfvpg} by $V_1^{2g-2}$, we find that for $g\ge2$ the polynomials $P_g$ can be constructed by $P_{g,k}$ by~\eqref{PS56}.
\end{proof}

\begin{remark}
Formulas~\eqref{pini59} and~\eqref{loop57} 
define $P_{g,k}$ for all~$g$ and~$k$ explicitly by induction. But the fact that these polynomials and the polynomial
$P_g$ defined by~\eqref{PS56} are related by~\eqref{new116} is not at all obvious from this
definition.  This corresponds to our previous Remark~\ref{remarkloopgradient} about the existence of a solution to the 
loop equation.
\end{remark}

\begin{remark}
Note that the functions $M_k(V_0,\dots,V_k)$ given in Lemma~\ref{vVmaplemma} for $k\ge1$ have the following more accurate form
\beq
\frac{M_k (V_0,\dots,V_k)}{ \varphi'(V)^{\frac{k}2+1} V_1^k} \= N_k\biggl(\frac{V_2}{V_1^2},\dots,\frac{V_k}{V_1^k}; l_1(V),\dots,l_{k-1}(V)\biggr)\,,
\eeq
where $N_k=N_k(w_1,\dots,w_{k-1}; \ell_1,\dots, \ell_{k-1})\in\R^{[k-1]}$ with $N_1=1$. 
Then by using \eqref{fgphi1001} and~\eqref{WK3gminus2}
we get the expressions of the polynomials $P_g$, $g\ge2$, from $F_g^{\rm WK}$.
For the reader's convenience, let us provide here the expressions for the first few $N_k$, $k\ge1$:
\beq
N_1 \=  1\,, \quad N_2\= 2 \, \ell_1 \+ w_1\,, \quad 
N_3 \= \frac{25}{4} \, \ell_1^2 \+ \frac{5}{2} \, \ell_2 \+ \frac{15}{2} \, \ell_1 \, w_1 \+ w_2\,.
\eeq
\end{remark}

\section{Review of Hamiltonian and bihamiltonian evolutionary PDEs}\label{newnewsection7}
In this section we review basic terminologies about evolutionary PDEs and Poisson structures, referring
 to~\cite{Du06, DLZ06, DZ-norm} for more details.  The evolutionary PDEs considered in this paper are always 
in (1+1) dimensions, meaning that the unknown functions have one space variable and one time variable, and
also, the number of unknown functions will be one. (We note that most of the terminology reviewed in this 
section and the classification projects in this section and the next can be generalized without much difficulty 
to the case when there are several unknown functions, often called the 
multicomponent case; we also note that in the last section of this paper, a more general situation
is actually briefly discussed.).    For readers from other areas, we also recall that  ``evolutionary" simply means that 
the PDE expresses the partial derivative of the unknown function with respect to 
the time variable as a function of the partial derivatives with respect to the space variable.  

Let $\mathcal{A}_{U}=\mathcal{S}(U)[U_1,U_2,\cdots]$ be the differential polynomial ring of~$U$, where
$\mathcal{S}(U)$ is some suitable
ring of functions on~$U$. For instance, $\mathcal{S}(U)$ could be $\mathcal{O}_c(U)$, the ring of 
power series in~$U-c$ for some constant~$c$ (we often take $c=0$).
Let $\p:=\sum_{m\ge0} U_{m+1} \p / \p U_m$ 
be a derivation.
When $U$ is taken as a function of~$X$, we identify $U_m$ with $\p^m U/\p X^m$, $m\ge0$, and $\p$ with $\p/\p X$.
Define a gradation $\deg$ on $\mathcal{A}_{U}$ by the degree assignments 
$\deg U_m=m$ ($m\ge1$), and we use $\mathcal{A}_{U}^{[k]}$ to denote 
the set of elements in $\mathcal{A}_{U}$ that are graded homogeneous of degree~$k$ with respect to~$\deg$. For $\ell\in\ZZ$, we also 
denote 
\beq
\mathcal{A}_{U}[[\e]]_{\rm \ell}=\{a\in \mathcal{A}_{U}[[\e]] \,|\, {\rm gr} \, a = \ell \, a\}\,,
\eeq 
where  
$${\rm gr} \= - \e \, \frac{\p}{\p \e} \+ \sum_{m\ge1} \, m\, U_m \, \frac{\p}{\p U_m} \,.$$ 
An element $M$ in $\mathcal{A}_{U}[[\e]]_{\rm 0}$ can be written in the form $M = \sum_{k\ge0} M^{[k]}(U,U_1,\dots,U_k) \e^k$, where $M^{[k]}(U,U_1,\dots,U_k) \in \mathcal{A}_{U}^{[k]}$, $k\ge0$.

A derivation $D:\mathcal{A}_{U}[[\e]] \to \mathcal{A}_{U}[[\e]]$ is called {\it admissible} if 
it commutes with~$\p$ and~$\e$. 
Following Dubrovin and Novikov~\cite{Du91, DN83, DN89, N92}, we call an admissible derivation $D:\mathcal{A}_{U}[[\e]] \to \mathcal{A}_{U}[[\e]]$ 
a {\it derivation of hydrodynamic type} if $D(U)$ has the form
\beq\label{hydropde1120abstract}
D(U) \= S(U) \, U_1 \,, \quad  S(U) \in \mathcal{S}(U)\,.  
\eeq
If we replace~$D$ by $\p / \p T$ and think of~$U$ as a function of $X$ and~$T$, then 
we get a {\it one-component evolutionary PDE of hydrodynamic type}~\cite{Du91, DN83, DN89, N92}: 
\beq\label{hydropde1120}
\frac{\p U}{\p T} \= S(U) \, \frac{\p U}{\p X} \,.
\eeq
We sat that an admissible derivation $D:\mathcal{A}_{U}[[\e]] \to \mathcal{A}_{U}[[\e]]$ 
 is of the {\it Dubrovin--Zhang normal form} if $D(U)$ has the form
\beq\label{DZnormpde1120abstract}
D(U) \= \sum_{k\ge0} \e^k S_k(U,U_1,\dots,U_{k+1})\,, \quad S_k \in \mathcal{A}^{[k+1]}_U\,, \quad S_0(U,U_1) \= S(U) \, U_1\,.
\eeq
We also call~\eqref{DZnormpde1120abstract} a {\it perturbation} of~\eqref{hydropde1120abstract}, or say that \eqref{hydropde1120abstract} is 
the {\it dispersionless limit} (the $\epsilon\to 0$ limit) 
of~\eqref{DZnormpde1120abstract}. 
A derivation of the Dubrovin--Zhang normal form 
is called an {\it infinitesimal symmetry} of~\eqref{DZnormpde1120abstract} 
if it commutes with~\eqref{DZnormpde1120abstract}. 

For any $R(U)\in  \mathcal{S}(U)$, 
the derivation of hydrodynamic type~$D'$, specified by 
\beq\label{symm1121}
D'(U) \= R(U) \, U_1\,,
\eeq 
is an infinitesimal symmetry of~\eqref{hydropde1120abstract}. 
We call the following family of derivations of hydrodynamic type $D_S$, $S \in \mathcal{O}_c(U)$, defined by 
\beq\label{RHE1122}
D_S(U) \= S(U) \, U_1\,,
\eeq
the {\it abstract local RH hierarchy} (see~e.g.~\cite{Du06}), 
sometimes simply the {\it abstract RH hierarchy}. 
It is obvious that the derivations in~\eqref{RHE1122} pairwise commute. 
When we take the countable subfamily of derivations $(S_i(U)=U^i/i!)_{i\ge0}$  
and consider $U$ as a function of $X$ and $T_i$, $i\ge0$, then the 
equations $\p U/\p T_i = S_i(U) \p U/\p X$, $i\ge0$, are nothing but the RH hierarchy~\eqref{RHhierarchy}, where we 
 identify $X$ with~$T_0$ as we do before. 
In practice, some other interesting countable subfamily of derivations like $(\sin(ku))_{k\ge1}$, $(e^{ku})_{k\ge1}$, \dots, 
can also be taken and the resulting family of equations are an integrable hierarchy which we call 
the {\it chord RH hierarchy} (sometimes simply still the {\it RH hierarchy}). 

By a {\it perturbation of the abstract local RH hierarchy} \eqref{RHE1122}, we mean a family of 
derivations $D_S(U)$, $S\in \mathcal{O}_c(U)$, each being given by~\eqref{DZnormpde1120abstract}. We say that  
$D_S(U)$ is {\it integrable} if $D_{S_2} D_{S_1}(U) = D_{S_1} D_{S_2}(U)$, $\forall\, S_1,S_2\in \mathcal{O}_c(U)$.

Denote by $\int: \mathcal{A}_U \rightarrow \mathcal{A}_U/ \p \mathcal{A}_U$ the projection, which extends 
termwise to a projection on $\mathcal{A}_U[[\e]]_0$. 
Elements in $\mathcal{A}_U[[\e]]_0/ \p \mathcal{A}_U[[\e]]_{-1}=:\F$ are called {\it local functionals}. 
The {\it variational derivative} of a local function $\int h$ with respective to~$U$ is defined by 
\beq\label{defvard1122}
\frac{\delta \int h}{\delta U} \= \sum_{k\ge0} \, (-\p)^k \biggl(\frac{\p h}{\p U_k}\biggr) \,.
\eeq
Clearly, if $h\in \mathcal{A}_U[[\e]]_0$, then $\frac{\delta \int h}{\delta U}\in \mathcal{A}_U[[\e]]_0$.
Also, for $a\in\mathcal{A}_U[[\e]]_0$, it is known that $a \in \p \mathcal{A}_U[[\e]]_{-1}$  
if and only if the right-hand side of~\eqref{defvard1122} with $h$ replaced by~$a$ vanishes, 
so~\eqref{defvard1122} is well defined. 

Let $P$ be an operator of the form
\beq\label{expandP1122}
P \= \sum_{k\ge0} \e^k P^{[k]} \,, \quad P^{[k]} \= \sum_{j=0}^{k+1} A_{k,j} \, \p^j\,, \quad 
A_{k,j} \in \mathcal{A}_U^{[k+1-j]} \,.
\eeq 
Such an operator~$P$ defines a bracket $\{\,,\,\}_P: \F \times \F \rightarrow \F$ via 
\beq
\biggl\{\int F, \int G \biggr\}_P \= \int \, \frac{\delta F}{\delta u} \, P \biggl(\frac{\delta G}{\delta u}\biggr) \,, \quad \forall\, F,G\in \mathcal{A}_U[[\e]]_0\,.
\eeq
This bracket is obviously bilinear. We say that $\{\,,\,\}_P$ is {\it Poisson} if it is anti-symmetric and satisfies the Jacobi identity. 
We call that $P$ a {\it Poisson or hamiltonian operator} if $\{\,,\,\}_P$ is a Poisson bracket. An equivalent criterion of the operator~$P$ to be 
Poisson is that $$[P,P]\=0\,,$$ where $[\,,\,]$ denotes the Schouten--Nijenhuis bracket 
(see~e.g.~\cite{DZ-norm}). 
The part $P^{[0]}$ in~\eqref{expandP1122} is called the {\it dispersionless limit} of~$P$. Obviously, if $P$ is Poisson then $P^{[0]}$ is Poisson.
A Poisson operator like $P^{[0]}$ is called a {\it Poisson operator of hydrodynamic type}.  
According to Dubrovin and Novikov~\cite{DN83}, a Poisson operator of hydrodynamic type corresponds to 
a contravariant flat pseudo-Riemannian metric (true also for the multi-component case), i.e., $P^{[0]}$ must have the form
\beq
P^{[0]} \= g(U) \, \p_X \+ \frac12 \, g'(U) \, U_X\,,
\eeq 
where $g(U)$ is a contravariant metric (automatically flat for our one-component case).

We call~\eqref{DZnormpde1120abstract} a {\it hamiltonian derivation of the Dubrovin--Zhang normal form} 
if there exists a Poisson operator~$P$ and an element $h\in\mathcal{A}_U[[\e]]_0$, such that 
\beq\label{hampde1121}
D(U) \= P\biggl(\frac{\delta \int h}{\delta U}\biggr) \,.
\eeq 
We call~$\int h$ the {\it hamiltonian} of~\eqref{hampde1121} and~$h$ the {\it hamiltonian density}. 
A local functional $\int r$, $r\in\mathcal{A}_U[[\e]]_0$, is called a {\it Casimir} for the Poisson operator~$P$ if 
\beq
P\biggl(\frac{\delta \int r}{\delta U}\biggr) \= 0\,,
\eeq
with $r$ being called the {\it Casimir density}. 

We will call any transformation of the form
\beq\label{UW1121}
U \; \mapsto \; M \=  \sum_{k\ge0} \e^k M^{[k]}(U,U_1,\dots,U_k)\in\mathcal{A}_{U}[[\e]]_{\rm 0}\,, \quad 
M^{[0]}(U) \in \mathcal{S}(U)^\times\,,
\eeq
a {\it Miura-type transformation}. These transformations form a group, called the {\it Miura group}, 
which contains the local diffeomorphism group as a subgroup. For more details about Miura-type transformations see e.g.~\cite{DZ-norm, RJ}. 

A further extension is given by the {\it quasi-Miura transformations}
\beq\label{quasidef1121}
U \; \mapsto \; Q \=  \sum_{k\ge0} \e^k Q^{[k]}(U,U_1,\dots,U_{N_k})\,,
\eeq
where $Q^{[k]}(U,U_1,\dots,U_{N_k})$, $k\geq0$, are usually still required to have polynomial dependence in $U_2, \dots, U_{N_k}$ for some 
integers $N_k$ but are now allowed to have rational dependence in~$U_1$.

The class of derivations of the Dubrovin--Zhang normal form, the class of hamiltonian derivations of the Dubrovin--Zhang normal form and the 
class of Poisson operators are invariant under the Miura-type transformations~\cite{DZ-norm}. 
In particular, the hamiltonian derivation of the Dubrovin--Zhang normal form~\eqref{hampde1121} 
under the Miura-type transformation~\eqref{UW1121} 
transforms to the hamiltonian derivation of the Dubrovin--Zhang normal form given by 
\beq
D(M) \= \widetilde P\biggl(\frac{\delta \int h}{\delta M}\biggr) \,, 
\eeq
where $h$ is understood as an element in $\mathcal{A}_M[[\e^2]]_0$, and 
\begin{align}
&\widetilde P \= \sum_{k,\ell\ge0} \, (-1)^\ell \, \frac{\p M}{\p U_k} \circ \p^k \circ P \circ \p^\ell \circ \frac{\p M}{\p U_\ell} \, . \label{Poissontrans1121} 
\end{align}
To make notations compact we will often write $P$ as $P(U)$ and $\widetilde P$ as $P(M)$, when 
$U$ and $M$ are related by a Miura-type transformation. 

If we apply the quasi-Miura transformation~\eqref{quasidef1121} to~\eqref{hampde1121}, the Poisson operator 
still transforms under the rules \eqref{Poissontrans1121} 
but the resulting operator could have rational dependence in~$M_1$ 
and the variational derivative of the hamiltonian could have rational dependence in~$M_1$ (here the definition 
of the variational derivative is extended again with the same rule~\eqref{defvard1122}). 

Two Poisson operators $P_1,P_2$ are called {\it compatible} 
if an arbitrary linear combination of $P_1, P_2$ is a Poisson 
operator. When $P_1, P_2$ are compatible, 
we call $P_2+\lambda \, P_1$ the {\it Poisson pencil} associated to $P_1, P_2$. 
Following Dubrovin~\cite{Du96}, let us start with 
considering the dispersionless limit $P_2^{[0]}+\lambda \, P_1^{[0]}$.
Fix $P_2^{[0]}(U)+\lambda \, P_1^{[0]}(U)$ an arbitrary Poisson pencil of hydrodynamic type.
According to Dubrovin~\cite{Du96} (see also~\cite{DN83, DN89}), it 
corresponds to a flat pencil, that is, a pencil of flat contravariant pseudo-Riemannian metrics $g_2(U)+\lambda \, g_1(U)$. 
The associated {\it canonical coordinate} of the pencil $u=u(U)$ is defined by $u=g_2(U)/g_1(U)$. 
The flat pencil in the $u$-coordinate reads $u \, g(u) + \lambda \, g(u)$ with $g(u)=g_1(U) u'(U)^2$.
A Poisson pencil $P_2(U)+\lambda \, P_1(U)$ with the hydrodynamic limit being $P_2^{[0]}(U)+\lambda \, P_1^{[0]}(U)$ 
is then characterized by the 
so-called {\it central invariant}, denoted $c(u)$, defined in~\cite{DLZ06, LZ05, L02}. 
On one hand, two Poisson pencils having the same hydrodynamic limit $P_2^{[0]}(U)+\lambda \, P_1^{[0]}(U)$ (or say the same $g(u)$) 
are equivalent under Miura-type transformations if and only if they have the same central invariant~\cite{DLZ06}.  
On the other hand, it is 
shown in~\cite{CPS16, CPS18} (see also~\cite{LZ13} for the case $g(u)=u$) that, for any given function $c(u)$ 
there exits a Poisson pencil $P_2(U)+\lambda \, P_1(U)$, with the hydrodynamic limit 
$P_2^{[0]}(U)+\lambda \, P_1^{[0]}(U)$ and with the central invariant $c(u)$; the proof is based on a subtle computation 
of the bihamiltonian cohomology introduced in~\cite{DZ-norm} and developed in~\cite{DLZ06, LZ05}.
For example, 
the Poisson pencil corresponding to the pair 
\beq
(g(u),c(u))\=\Bigl(u,\frac1{24}\Bigr) 
\eeq 
can be obtained from the bihamiltonian structure discovered by Magri~\cite{M78} of the 
KdV equation~\eqref{kdvequation1002} (see~e.g.~\cite{DLZ06}). 

For an element $g(U)\in \mathcal{O}_c(U)$ that is not identically zero, 
the abstract local RH hierarchy~\eqref{RHE1122} can be written in the form:
\beq\label{hamRH1127}
D_S(U) \= \Bigl(g(U) \, \p + \frac12 \, g'(U) \, U_1 \Bigr)\biggl(\frac{\delta \int h^{[0]}_S}{\delta U}\biggr) \,,
\eeq
where $h^{[0]}_S$ is a solution to the following ODE
\beq\label{HS1127}
g(U) \, \bigl(h^{[0]}_S\bigr)'' \+ \frac12 \, g'(U) \, \bigl(h^{[0]}_S\bigr)' \= S(U) \,.
\eeq
Obviously, up to a trivial additive constant, the $h^{[0]}_S$ is unique up to the addition of a Casimir density for the Poisson 
operator $g(U) \, \p_X + \frac12 \, g'(U) \, U_X$.

Let us consider the hamiltonian perturbation of the 
abstract local RH hierarchy~\eqref{hamRH1127}:
\beq\label{ham115}
D_S(U) \= P(U) \biggl( \frac{\delta \int h_S}{\delta U} \biggr)\,, \quad S\in \mathcal{O}_c(U)\,,
\eeq
where $P(U)$ is a Poisson operator of the form~\eqref{expandP1122} with
$P^{[0]}(U)=g(U) \, \p_X + \frac12 \, g'(U) \, U_X$, and $h_S$ are hamiltonian densities of the form
\beq
h_S \= \sum_{k\ge0} \e^k h_S^{[k]}\,, \quad h_S^{[k]} \in \mathcal{A}_U^{[k]}\,,
\eeq 
with $h_S^{[0]}$ given by~\eqref{HS1127}. Here, we point out that our labellings of the derivations and of 
the hamiltonian densities 
for the abstract local RH hierarchy and for its perturbation are different from the one used in~\cite{Du06}. 

According to~\cite{DMS05, DZ-norm, Getzler02}, the Darboux theorem holds for the hamiltonian operator~$P(U)$, 
namely, there exists a Miura-type transformation $U \mapsto M$ of the form
\beq
M \= \sum_{k\ge1} \, \e^k \, M^{[k]}(U,U_1,\dots,U_k) \,, \quad M^{[0]}(U) \= \int_{U^*}^U \, \frac{1}{\sqrt{g(U')}} \, dU'\,,
\eeq
reducing~$P(U)$ to $P(M)=\p$.  

For simplicity, we shall consider in this paper that $h_S$ written in the $M$-coordinate are power series of~$\e^2$. 

Before proceeding we introduce some notations. 
A {\it partition} is a non-increasing infinite sequence of non-negative integers $\mu=(\mu_1,\mu_2,\dots)$. 
The number of non-zero components of~$\mu$ is called the {\it length} of~$\mu$, denoted by~$\ell(\mu)$. 
The sum $\sum_{i\ge1} \mu_i$ is called the {\it weight} of~$\mu$, denote by~$|\mu|$. 
The set of all partitions is denoted by $\mathcal{P}$, and the set of partitions of weight~$d$, $d\ge0$, is denoted by 
$\mathcal{P}_d$. If $\ell(\mu)>0$, 
we often write $\mu$ as $(\mu_1,\dots,\mu_{\ell(\mu)})$; 
otherwise, we write $\mu$ either as $(0)$ or as $(\,)$. 
Denote $\mu+1=(\mu_1+1,\dots,\mu_{\ell(\mu)}+1)$ if $\ell(\mu)>0$, and $(\,)+1=(\,)$ otherwise. 
We use ${\rm mult}_i(\mu)$ to denote the multiplicity of~$i$ in~$\mu$, $i\ge1$, 
and denote ${\rm mult}(\mu)!=\prod_{i=1}^\infty {\rm mult}_i(\mu)!$. 
For any sequence of indeterminates $(y_1, y_2, \dots)$, 
 $y_\mu:=\prod_{i=1}^{\ell(\mu)} y_{\mu_i}$ (clearly, $y_{(\,)}=1$).

S.-Q.~Liu and Y.~Zhang found (see also~\cite{BDGR20, Du06, DLYZ16}) that 
performing a {\it canonical} \cite{Du06} Miura-type transformation 
$M\mapsto w=M+\cdots$, 
that is, a Miura-type transformation 
keeping the Poisson operator~$\p$ invariant, yields the following unique 
standard form:
\beq\label{hamperturb1127}
D_S(w) \= \p \biggl( \frac{\delta \int h_S}{\delta w} \biggr) \,, 
\eeq
with the derivation $D_{M^{[0]}(U)}$ satisfying 
\beq\label{standardform1113}
D_{M^{[0]}(U)}(w) \= \p \biggl(\frac{\delta \int h_{M^{[0]}(U)}}{\delta w}\biggr) \,, 
\eeq
where
\begin{align}
& h_{M^{[0]}(U)} \= \frac{w^3}6-\frac{\e^2}{24} a_0(w) w_1^2 \+\sum_{g\ge2} \e^{2g}
\sum_{\lambda\in \mathcal{P}_{2g} \atop 
\ell(\lambda)>1, \, \lambda_1=\lambda_2} 
\alpha_{\lambda}(w) w_\lambda \,. \label{hstand1128}
\end{align}
Here $a_0(w)$ and $\alpha_\lambda(w)$  with $\lambda\in \mathcal{P}_{2g}$ ($g\ge 2$), 
$\ell(\lambda)>1$, $\lambda_1=\lambda_2$,  are functions of~$w$.
\begin{remark}\label{standardformremark}
It was conjectured by S.-Q.~Liu and Y.~Zhang that if one imposes the integrability to the above standard form~\eqref{hstand1128}, then  
the functions $\alpha_\lambda(w)$ appearing in~\eqref{hstand1128}  
 are uniquely determined by $\alpha_{2^m}(w)$, $m\ge2$, and the functions $\alpha_{2^m}(w)$, $m\ge2$,  are free
 functional parameters.
In~\cite{DLYZ16} (see also~\cite{BDGR20}) 
it is indicated that if one further imposes a symmetry condition~\cite{DLYZ16, DZ-norm} for the 
hamiltonian densities (so-called {\it $\tau$-symmetry}) then the functional parameters 
$a_0(w)$ and $\alpha_\lambda(w)$ appearing in~\eqref{hstand1128} all become constants. 
In this paper, we will impose a new condition, as already mentioned in Introduction, i.e., to require the hamiltonian 
 system to possess a $\tau$-structure (see the next section for the details).
\end{remark}

In~\cite{Du06}, B.~Dubrovin considers the bihamiltonian test for the integrable hamiltonian perturbation and
 obtained the following theorem.
 
\vspace{3mm}
 
\noindent {\bf Theorem A} (Dubrovin~\cite{Du06})  {\it For $a_0(w),q(w), q'(w)$ all not identically~0, let 
 $P_1$, $P_2$ be the Poisson operators of the form: $P_1=\p$, and 
\beq\label{duthm-2}
P_2(w) \= q(w) \, \p \+ \frac12 \, q'(w) \, w_1 \+ \cdots \,,
\eeq 
where ``$\cdots$" contains higher order terms in~$\e$.
The two commutativity properties
\beq\label{duthm-1}
\biggl\{\int h_{S_1}, \int h_{S_2} \biggr\}_{P_1}\= O(\e^6)\,, \,\; \biggl\{\int h_{S_1}, \int h_{S_2} \biggr\}_{P_2} \= O(\e^6)\,, \quad \forall \, S_1,\, S_2\,,
\eeq
hold if and only if}
\beq\label{duthm-3}
\alpha_{2^2}(w) \= \frac{a_0(w)^2}{960} \biggl(5\,\frac{a_0'(w)}{a_0(w)} - \frac{q''(w)}{q'(w)} \biggr) \,.
\eeq

The explicit expression for the $\e^2$-term in $P_2(w)$ is given in the Appendix of~\cite{Du06}.

We have the following proposition. 
\begin{prop}\label{centralprop1202}
The central invariant for the pencil $P_2+\lambda P_1$ in Theorem~A is 
\beq\label{centalcu1203}
c(u) \= \frac{1}{24} \, \frac{a_0(q^{-1}(u))}{q'(q^{-1}(u))}\,,
\eeq
where $u=q(w)$ is the canonical coordinate for the pencil $P_2(w)+\lambda P_1(w)$.
\end{prop}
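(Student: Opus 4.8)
The plan is to compute $c(u)$ directly from its definition in terms of the $\epsilon^2$-jet of the Poisson pencil, as set up in \cite{DLZ06, LZ05}. In the one-component case the relevant data are the two contravariant metrics $g_1,g_2$ of the dispersionless limits, the canonical coordinate $u=g_2/g_1$, and the coefficients of $\delta^{(3)}(x-y)$ (equivalently of $\partial^3$) in the $\epsilon^2$-terms of $P_1$ and $P_2$; writing these last coefficients as $P_a^{(3)}(u)$ in the canonical coordinate, one has $c(u)=\bigl(P_2^{(3)}(u)-u\,P_1^{(3)}(u)\bigr)/\bigl(3\,g_1(u)^2\bigr)$. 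Here $P_1=\partial$ gives $g_1=1$ and, by \eqref{duthm-2}, $P_2^{[0]}=q(w)\partial+\tfrac12 q'(w)w_1$ gives $g_2=q(w)$, so the canonical coordinate is $u=q(w)$, exactly as in the statement.

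First I would note that $P_1=\partial$ is undeformed and that the passage to the canonical coordinate $w\mapsto u=q(w)$ is an $\epsilon$-independent diffeomorphism, i.e.\ a degree-zero Miura-type transformation; under \eqref{Poissontrans1121} it sends $\partial$ to the purely hydrodynamic operator $q'(w)^2\,\partial+q'(w)q''(w)w_1$, which has no $\partial^3$ term. Hence $P_1^{(3)}(u)=0$, while the same transformation turns the constant metric into $g_1(u)=q'(w)^2$ (with $w=q^{-1}(u)$). The central invariant formula therefore collapses to $c(u)=P_2^{(3)}(u)/\bigl(3\,q'(w)^4\bigr)$.

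It remains to extract $P_2^{(3)}$. Let $b(w)$ denote the coefficient of $\partial^3$ in the $\epsilon^2$-part of $P_2(w)$, whose explicit form is recorded in the Appendix of \cite{Du06}; the claim is that $b(w)=a_0(w)\,q'(w)/8$. Under the diffeomorphism $w\mapsto u$ the coefficient of $\partial^3$ picks up a factor $q'(w)^2$, so $P_2^{(3)}(u)=q'(w)^2\,b(w)$, and substituting gives $c(u)=q'(w)^2 b(w)/\bigl(3q'(w)^4\bigr)=b(w)/\bigl(3q'(w)^2\bigr)=a_0(w)/\bigl(24\,q'(w)\bigr)$, which is \eqref{centalcu1203} after writing $w=q^{-1}(u)$.

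The main obstacle is the bookkeeping in this last step: one must read off the top ($\partial^3$) coefficient $b(w)$ from the lengthy $\epsilon^2$-term of $P_2$ in \cite{Du06} (rewriting $\alpha_{2^2}(w)$ via \eqref{duthm-3} wherever it enters), and keep careful track of the tensorial weight of this coefficient under the coordinate change; sign and normalization slips are the real risk. The overall normalization (the factor $3$ and the power $g_1^2$) is pinned down by the KdV calibration quoted in the text: there $q(w)=w$, so $q'\equiv1$ and $a_0\equiv1$, giving $b=1/8$ and $c(u)=1/24$, in agreement with the pair $(g(u),c(u))=(u,1/24)$ of Magri's structure.
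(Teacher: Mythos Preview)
Your approach is essentially the same as the paper's: both transform to the canonical coordinate $u=q(w)$, read off the $\epsilon^2$-part of $P_2$ from the Appendix of~\cite{Du06}, and then apply the central-invariant formula (1.49) of~\cite{DLZ06}. The paper is terser and simply cites these two references after writing down the dispersionless limits of $P_1(u)$ and $P_2(u)$, whereas you spell out the intermediate computation (the tensorial weight $q'(w)^2$ of the $\partial^3$-coefficient under the coordinate change, the vanishing of $P_1^{(3)}(u)$, and the KdV calibration check), but the substance is the same.
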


\begin{proof}
Let us perform the following Miura-type transformation:
\beq
u \= q(w) \,.
\eeq
The Poisson operators $P_1,P_2$ in the $u$-coordinate read:
\begin{align}
& P_1(u) \= \frac12 \, q'(q^{-1}(u))^2 \circ \p \+ \frac12 \, \p \circ q'(q^{-1}(u))^2 \,, \\
& P_2(u) \= \frac12 \, u \, q'(q^{-1}(u))^2 \circ \p \+ \frac12 \, \p \circ u \circ q'(q^{-1}(u))^2 \+  \cdots \,.
\end{align}
Using \cite[formula~(1.49)]{DLZ06} and using \cite[Appendix]{Du06} we obtain the expression~\eqref{centalcu1203} 
of the central invariant $c(u)$.
The proposition is proved.
\end{proof}

\section{Hamiltonian and bihamiltonian perturbations possessing a $\tau$-structure} \label{newnewsection8}
Driven by topological field theories and the 
 Witten--Kontsevich theorem (see~\cite{DVV91, DW90, Du96, DZ-norm, Ko92, Wi91}), 
the {\it $\tau$-structure} for the KdV hierarchy (see~\cite{BDY16, Dickey, DYZ21, DZ-norm}) 
becomes an important notion in the theory of integrable systems.  It still makes sense to speak of a 
$\tau$-structure for more general evolutionary systems (we will give a precise definition in a moment).
One of our main objects for the rest of the paper is to 
give conjectural classifications of hamiltonian and of bihamiltonian systems 
possessing a $\tau$-structure with the help of the group~$\G$.

It can be shown (see~e.g.~\cite{BDY16, Dickey, DW90, DYZ21, DZ-norm, Ko92, Wi91}) 
that there exist unique elements $\Omega^{\scriptscriptstyle \rm KdV}_{i,j}\in \mathcal{A}_u[\e^2]$, $i,j\ge0$, such that  
\beq
\e^2 \, \frac{\p^2 \log \F^{\scriptscriptstyle\rm WK}(\bt;\e)}{\p t_i \p t_j} 
\= \Omega^{\scriptscriptstyle \rm KdV}_{i,j}\big|_{u_k \mapsto \p_x^k(u^{\scriptscriptstyle \rm WK}(\bt;\e)), \, k\ge 0}  \,, \quad i,j\ge0\,.
\eeq
For instance, 
\beq
\Omega^{\scriptscriptstyle \rm KdV}_{0,0} \= u \,, \quad \Omega^{\scriptscriptstyle \rm KdV}_{0,1} \= \frac{u^2}2 \+ \e^2 \, \frac{u_2}{12}\,, 
\quad \Omega^{\scriptscriptstyle \rm KdV}_{1,1} \= \frac{u^3}3 \+ \e^2 \biggl(\frac{u u_2}6 \+ \frac{u_1^2}{24}\biggr) \+ \e^4 \, \frac{u_4}{144}\,.
\eeq
The polynomials $(\Omega^{\scriptscriptstyle \rm KdV}_{i,j})_{i,j\ge0}$ form a $\tau$-structure for the KdV hierarchy~\eqref{kdvflows26} and hence defines $\tau$-functions (see~\cite{BDY16, Dickey, DYZ21, DZ-norm, Sato}). 

Constructively, $\Omega^{\scriptscriptstyle \rm KdV}_{i,j}$ can be obtained in the following way. 
Introduce
\beq\label{defU813kdv1223}
u^{\scriptscriptstyle \rm WK}(\bt;\e) \,:=\, \e^2 \, \frac{\p^2 \F^{\scriptscriptstyle\rm WK}(\bt;\e)}{\p x^2} 
\= E(\bt) \+ \sum_{g\ge1} \, \e^{2g} \, \frac{\p \F^{\scriptscriptstyle\rm WK}_g(\bt)}{\p x^2} \,. 
\eeq
Here $x=t_0$. From~\eqref{WK3gminus2} we know that 
it leads to the quasi-Miura transformation 
\beq\label{qM92WK1}
v \;\mapsto\; u \= v \+ \sum_{g\ge1} \, \e^{2g} \, \p^2 \bigl(F^{\scriptscriptstyle\rm WK}_g\bigr)\,,
\eeq
which transforms the abstract local RH hierarchy $D_S(v) = S(v) \, v_1$ to
\begin{align}
 D_S(u) \= &
S(u) \, u_1 \+   
 \biggl(\frac{S'(u)}{12} \, u_3 
 + \frac{S''(u) }6 \, u_1 u_2 +  \frac{S'''(u)}{24} \, u_1^3 \biggr) \epsilon^2 \label{localkdvhierarchy1223}\\
& + \biggl(
\frac{S''(u)}{240} \, u_5+ \frac{S^{(3)}(u)}{80} \, u_4 u_1 + \frac{S^{(3)}(u)}{48} \, u_3 u_2 + \frac{23 \, S^{(4)}(u) }{1440} \, u_3  u_1^2 \nn\\
& \qquad +\frac{31 \, S^{(4)}(u) }{1440} \, u_2^2 u_1+\frac{S^{(5)}(u)}{90} \, u_2 u_1^3  + \frac{S^{(6)}(u)}{1152} \, u_1^5
\biggr) \, \epsilon^4
 \+ \cdots \,. \nn
\end{align}
By the Witten--Kontsevich theorem, when $S(u)=u^i/i!$ ($i\ge0$), 
equations~\eqref{localkdvhierarchy1223} are the abstract KdV hierarchy (see~\eqref{kdvflows26} and~\cite{BDY16, Dickey, DYZ21, DZ-norm}).
In general, $D_S$ commutes with $D_{u^i/i!}$, $i\ge0$. We call~\eqref{localkdvhierarchy1223} the {\it abstract local KdV hierarchy}.
Note that 
\beq
\e^2 \, \frac{\p^2 \log \F^{\scriptscriptstyle\rm WK}(\bt;\e)}{\p t_i \p t_j} 
\= \frac{E(\bt)^{i+j+1}}{i! \, j! \, (i+j+1)} \+ \sum_{g\ge1} \e^{2g} \, \frac{\p^2 \log \F_g^{\scriptscriptstyle\rm WK}(\bt)}{\p t_i \p t_j} \,, \label{omegaij114}
\eeq
where we used~\eqref{defFZ813} and~\eqref{twopointgenus0formula}. 
By~\eqref{WK3gminus2} we know that the right-hand side of~\eqref{omegaij114} can be represented by the jets 
$v,v_1,v_2$, $\cdots$. Substituting the inverse of the quasi-Miura transformation into the jet representation of 
the right-hand side of~\eqref{omegaij114} we get $\Omega^{\scriptscriptstyle \rm KdV}_{i,j}$.

As in~\cite{DYZ21} (cf.~\cite{BDY16, DZ-norm}), we say that a perturbation of the abstract local RH hierarchy (see~\eqref{DZnormpde1120abstract})
possesses a \hbox{{\it $\tau$-structure}} if there exist $\Omega_{S_1,S_2} \in \mathcal{A}_U[[\e]]_0$, $S_1,S_2 \in \mathcal{O}_c(U)$, such that $\Omega_{1,1}^{[0]} \in \mathcal{O}_c(U)^\times$ and 
\beq
\Omega_{S_1,S_2} \= \Omega_{S_2,S_1}\,, \quad D_{S_1} (\Omega_{S_2,S_3}) \= D_{S_2} (\Omega_{S_1,S_3})\,, \quad \forall\, 
S_1,S_2,S_3\in \mathcal{O}_c(U)\,.
\eeq
It can be easily verified that the existence of a $\tau$-structure implies integrability \cite{DYZ21} (cf.~also~\cite{DLYZ16}). 
More general setups for this principle are given in~\cite{VY}. 
We refer also to a related forthcoming joint work announced in~\cite{LYZZ22}. 

We are ready to give an axiomatic way to approach the class of (bi)-hamiltonian perturbations of 
the RH hierarchy possessing   
 a $\tau$-structure. 

\begin{lemma}\label{haminvtaulemma1203}
The class of hamiltonian perturbations of the RH hierarchy 
possessing a $\tau$-structure is invariant under Miura-type transformations.
\end{lemma}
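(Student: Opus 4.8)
The plan is to decompose the statement into its two independent halves: that the class of hamiltonian perturbations of the RH hierarchy is Miura-invariant, and that the existence of a $\tau$-structure is Miura-invariant. The first half is essentially already recorded in Section~\ref{newnewsection7}: a Miura-type transformation $U\mapsto M$ of the form~\eqref{UW1121} carries the class of Poisson operators to itself via the transformation rule~\eqref{Poissontrans1121}, and carries the class of derivations of Dubrovin--Zhang normal form~\eqref{DZnormpde1120abstract} to itself; moreover its leading (dispersionless) part $M^{[0]}(U)$ is an invertible change of the dependent variable $m=M^{[0]}(U)$, which sends the dispersionless flow $S(U)\,U_1$ to $\widetilde S(m)\,m_1$ with $\widetilde S=S\circ(M^{[0]})^{-1}$. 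Hence a hamiltonian perturbation of the RH hierarchy~\eqref{ham115} is carried to another one, and it remains only to transport the $\tau$-structure.

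First I would record the functorial fact that a Miura-type transformation induces a grading-preserving isomorphism $\Phi\colon \mathcal{A}_U[[\e]]\to\mathcal{A}_M[[\e]]$ of differential algebras commuting with $\p$ and with multiplication by $\e$, and that by construction the transformed flow (whose value on $M$ is the right-hand side denoted $D_S(M)$ in Section~\ref{newnewsection7}) is the unique derivation $\widetilde D_S$ on the $M$-side satisfying the intertwining relation $\widetilde D_S\circ\Phi=\Phi\circ D_S$. Under the relabeling $S\mapsto\widetilde S:=S\circ(M^{[0]})^{-1}$, a bijection $\mathcal{O}_c(U)\to\mathcal{O}_{c'}(M)$ with $c'=M^{[0]}(c)$, this $\widetilde D_S$ is exactly the member $D_{\widetilde S}$ of the transformed perturbed RH hierarchy written in the $M$-coordinate.

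The heart of the argument is then to transport the two-point symbols. Given a $\tau$-structure $(\Omega_{S_1,S_2})$ for the original system, I would set $\widetilde\Omega_{\widetilde S_1,\widetilde S_2}:=\Phi(\Omega_{S_1,S_2})\in\mathcal{A}_M[[\e]]_0$ and verify the three defining axioms. Symmetry is immediate. For the closedness relation one applies $\Phi$ to $D_{S_1}(\Omega_{S_2,S_3})=D_{S_2}(\Omega_{S_1,S_3})$ and uses $\Phi\circ D_{S_i}=D_{\widetilde S_i}\circ\Phi$ to obtain $D_{\widetilde S_1}(\widetilde\Omega_{\widetilde S_2,\widetilde S_3})=D_{\widetilde S_2}(\widetilde\Omega_{\widetilde S_1,\widetilde S_3})$, which is precisely the closedness axiom on the $M$-side. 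For the non-degeneracy condition, note that $\widetilde S=1$ forces $S=1$, so the label $(1,1)$ is preserved, and at leading order $\widetilde\Omega_{1,1}^{[0]}(M)=\Omega_{1,1}^{[0]}\bigl((M^{[0]})^{-1}(M)\bigr)$; as the composition of a unit with a local diffeomorphism this again lies in $\mathcal{O}_{c'}(M)^\times$. Thus $(\widetilde\Omega_{\widetilde S_1,\widetilde S_2})$ is a $\tau$-structure for the transformed hierarchy.

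I expect the genuinely delicate points, which I would treat carefully rather than as routine, to be the base-point bookkeeping under the leading diffeomorphism $M^{[0]}$ (needed to conclude $\widetilde\Omega_{1,1}^{[0]}\in\mathcal{O}_{c'}(M)^\times$) and the compatibility of the relabeling $S\mapsto\widetilde S$ with the simultaneous indexing of the flows $D_S$ and of the symbols $\Omega_{S_1,S_2}$, so that the closedness identity is transported with matching indices. Once one confirms that the $\Omega_{S_1,S_2}$ are intrinsic elements of $\mathcal{A}_U[[\e]]_0$ on which $\Phi$ acts, everything else is a formal consequence of the functoriality of $\Phi$ with respect to $\p$, $\e$, and the derivations $D_S$.
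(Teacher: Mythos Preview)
Your proposal is correct and follows essentially the same approach as the paper: both decompose the claim into invariance of the hamiltonian-perturbation class (citing the general facts from Section~\ref{newnewsection7}) and invariance of the $\tau$-structure under Miura-type transformations. The paper's proof simply asserts that the second half is obvious, whereas you spell out the transport of the $\Omega_{S_1,S_2}$ and the relabeling $S\mapsto\widetilde S$ explicitly; your added care about the base-point shift and the preservation of the label $(1,1)$ is sound and makes the ``obvious'' step honest.
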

\begin{proof}
We already know that the class of hamiltonian perturbations of the RH hierarchy is invariant under Miura-type transformations. 
It is also obvious that under a Miura-type transformation a $\tau$-structure remains a $\tau$-structure.
\end{proof}
  
Let us now consider hamiltonian perturbations of the RH hierarchy possessing a $\tau$-structure with a fixed choice of~$P^{[0]}$.
First we use Miura-type transformations reducing the consideration to the standard form~\eqref{hamperturb1127}--\eqref{hstand1128}.
Then with a help of a computer program we find that 
the requirement of existence of a $\tau$-structure implies that 
the functions 
$\alpha_{3^2}(w)$, $\alpha_{2^13^2}(w)$, $\alpha_{4^2}(w)$, 
$\alpha_{2^23^2}(w)$, $\alpha_{2^1 4^2}(w)$, $\alpha_{5^2}(w)$
are uniquely determined by $a_0(w)$, $\alpha_{2^2}(w)$, $\alpha_{2^3}(w)$, $\alpha_{2^4}(w)$, $\alpha_{2^5}(w)$ 
(agreeing with Remark~\ref{standardformremark}), 
and that the functions $\alpha_{2^2}(w)$, $\alpha_{2^3}(w)$, $\alpha_{2^4}(w)$ must have the expressions 
\begin{align}
& \alpha_{2^2} \= \frac{a_0 a_0' }{240} \+ q_1 a_0^3 \,, \label{a1class1202} 
\end{align}
\begin{align}
& \alpha_{2^3} \= \frac{31 a_0''' a_0^2}{96768} \+ \frac{527 q_1 a_0^3 a_0''}{1008} \+ \frac{1800 q_1^2 a_0^4 a_0'}{7} \+ \frac{499 q_1 a_0^2 a_0'^2}{336} 
\label{a2class1202}\\
& \qquad + \, \frac{23 a_0'^3}{45360} \+ \frac{1613 a_0 a_0' a_0''}{967680} \+ q_2 a_0^6 \,, \nn 
\end{align}
\begin{align}
& \alpha_{2^4} \= \frac{913 a_0''''' a_0^3}{46448640} \+ \frac{1795 q_1 a_0'''' a_0^4}{32256} \+ \frac{10357 q_1^2 a_0''' a_0^5}{168} \+ 25920 q_1^3 a_0^6 a_0''
 \label{a3class1202}\\
& \qquad  \+ \frac{167 q_2 a_0^6 a_0''}{105} \+ \frac{23087 q_1 a_0^3 a_0''^2}{40320} 
\+ 155520 q_1^3 a_0^5 a_0'^2 \+ \frac{12528 q_1 q_2 a_0^7 a_0'}{7} \nn\\
& \qquad + \, \frac{15635 q_1^2 a_0^3 a_0'^3}{14} \+  \frac{593 q_2 a_0^5 a_0'^2}{70} \+ \frac{20893 q_1 a_0 a_0'^4}{20160} \+ \frac{7733 a_0'''' a_0^2 a_0'}{30965760}\nn\\
& \qquad + \, \frac{212591 a_0''' a_0^2 a_0''}{464486400} \+ \frac{47953 q_1 a_0''' a_0^3 a_0'}{60480} \+ 
\frac{56519 a_0''' a_0 a_0'^2}{66355200} \+  \frac{48785 q_1^2 a_0^4 a_0' a_0''}{56} \nn\\
& \qquad + \, \frac{733 q_1 a_0^2 a_0'^2 a_0''}{224} \+ 
\frac{70229 a_0 a_0' a_0''^2}{58060800}  \+ \frac{1049357 a_0'^3 a_0''}{1393459200} \+ q_3 a_0^9 \,, \nn
\end{align}
where $q_1, q_2, q_3$ are arbitrary parameters (independent of~$w$) and where the arguments 
of the functions $a_0(w)$, $\alpha_{2^2}(w)$, $\alpha_{2^3}(w)$, $\alpha_{2^4}(w)$
 have been omitted. More generally, we expect that there are unique expressions giving 
 all $\alpha_\lambda$ 
 as polynomials in $a_0, a_0', a_0'', \thin \dots$ and constants $q_1$, $q_2, \thin \dots$, where 
 $q_i$ first appears linearly in~$\alpha_{2^{i+1}}(w)$. 
 This implies in particular that 
 if $a_0(w)$ is a constant function, then all of the $\alpha_\lambda(w)$ 
 are constants\footnote{This occurs, for example, in the presence of $\tau$-symmetry, 
 as shown in~\cite{BDGR18, BDGR20, DLYZ16}.}.
 Moreover, we expect that except for the term $q_i a_0^{3i}$ 
all terms in $\alpha_{2^{i+1}}$ contain higher derivatives of~$a_0$, so that when $a_0$
is a constant, then $\alpha_{2^{i+1}}$ is simply~$q_ia_0^{3i}$.
  
We continue to consider bihamiltonian perturbations of the (local) RH hierarchy 
possessing a $\tau$-structure. Of course, this class of perturbations is again invariant under 
Miura-type transformations (see~Lemma~\ref{haminvtaulemma1203}).
We reduce the considerations to the standard form as above, and the bihamiltonian axiom will further impose restrictions on~$q_i$'s. 
Note that in Theorem~A, B.~Dubrovin already did the bihamiltonian test 
for integrable hamiltonian perturbations (hamiltonian perturbations with a $\tau$-structure belong to this class)
up to order~$4$ in~$\e$. So by using~\eqref{a1class1202} and by using formula~\eqref{duthm-3} of Theorem~A,  we find 
\beq\label{qw1202}
q(w) \= \left\{\begin{array}{cc} 
C_1\int^w_{w^*} \, a_0(w') \, dw' \+ C_2 \,, & q_1 \= 0 \,,\\
\\
C_1 \frac{1-\exp\bigl(-960 \, q_1 \, \int^w_{w^*} \, a_0(w') \, dw'\bigr)}{960 \, q_1} \+ C_2\,, & q_1 \, \neq \, 0 \,, \\
\end{array}\right.
\eeq
where $C_1, C_2$ are arbitrary constants (that can depend on~$q_i$'s) and $C_1\neq0$. Continuing Dubrovin's bihamiltonian test, up to the order~8 in~$\e$, we find that 
\beq\label{q2q31129}
q_2 \= \frac{6400}{3} \, q_1^3 \,, \quad q_3 \= 0 \,.
\eeq
We expect that $q_4, q_5$,
$\cdots$ are also determined by~$q_1$ and $a_0(w)$. Note that since 
$q_2, q_3$ do not depend on $a_0(w)$, we can further expect this to be true for $q_4, q_5$, $\cdots$; 
with this consideration, we can restrict to the simple case $a_0(w)\equiv 1$ and the corresponding 
bihamiltonian test allows us to compute two more values:
\beq
q_4 \= -\frac{36805017600000}{77} \, q_1^7\,,  \quad q_5 \= -\frac{45612552683520000000}{7007} \, q_1^9\,.
\eeq

We have the following proposition. 
\begin{prop}
The central invariant for the pencil $P_2+\lambda P_1$ is given by 
\beq
c(u) \= \left\{\begin{array}{cc} 
\frac1{24 \, C_1}\,, & q_1 \= 0 \,,\\
\\
\frac1{24 \, (C_1-960 \, q_1 \, (u-C_2))}\,, & q_1 \, \neq \, 0 \,. \\
\end{array}\right.
\eeq
\end{prop}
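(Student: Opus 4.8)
The plan is to substitute the explicit expression~\eqref{qw1202} for the canonical-coordinate function $q(w)$ into the central-invariant formula~\eqref{centalcu1203} of Proposition~\ref{centralprop1202}, namely
\beq
c(u) \= \frac1{24} \, \frac{a_0(q^{-1}(u))}{q'(q^{-1}(u))}\,, \qquad u \= q(w)\,,
\eeq
so that the entire computation reduces to evaluating the ratio $a_0(w)/q'(w)$ and then re-expressing it through $u=q(w)$. I would treat the two cases $q_1=0$ and $q_1\neq0$ separately, abbreviating $I(w):=\int_{w^*}^w a_0(w')\,dw'$ so that $I'(w)=a_0(w)$, with $w^*$ the base point fixed in~\eqref{qw1202}.

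In the case $q_1=0$ the formula~\eqref{qw1202} reads $q(w)=C_1\,I(w)+C_2$, whence $q'(w)=C_1\,a_0(w)$ and the ratio $a_0(w)/q'(w)$ collapses to the constant $1/C_1$. Here no inversion of~$q$ is needed, and $c(u)=\frac1{24\,C_1}$ drops out immediately.

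For $q_1\neq0$ I would first differentiate $q(w)=\frac{C_1}{960\,q_1}\bigl(1-e^{-960\,q_1\,I(w)}\bigr)+C_2$ to get $q'(w)=C_1\,a_0(w)\,e^{-960\,q_1\,I(w)}$, so that $a_0(w)/q'(w)=\frac1{C_1}\,e^{960\,q_1\,I(w)}$. The one genuinely algebraic step is then to eliminate $I(w)$ in favour of~$u$: solving the defining relation $u=q(w)$ for the exponential gives $e^{-960\,q_1\,I(w)}=\bigl(C_1-960\,q_1\,(u-C_2)\bigr)/C_1$, and hence $e^{960\,q_1\,I(w)}=C_1/\bigl(C_1-960\,q_1\,(u-C_2)\bigr)$. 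Substituting this back produces $a_0(w)/q'(w)=1/\bigl(C_1-960\,q_1\,(u-C_2)\bigr)$, and therefore the claimed value of~$c(u)$.

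There is no real obstacle beyond this exponential inversion; in fact the two cases can be unified, since the $q_1=0$ expression is exactly the $q_1\to0$ limit of the $q_1\neq0$ one. The only points requiring a little care are to track the constants $C_1,C_2$ correctly through the differentiation and inversion, and to use the hypothesis $C_1\neq0$ (from Theorem~A and~\eqref{qw1202}) to guarantee that $q'$ is nonvanishing, so that $q^{-1}$ and the ratio $a_0/q'$ are well defined.
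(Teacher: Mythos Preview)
Your proof is correct and takes essentially the same approach as the paper, which simply invokes Proposition~\ref{centralprop1202} and the expression~\eqref{qw1202}; you have merely written out the straightforward substitution and inversion that those references encode.
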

\begin{proof}
By using Proposition~\ref{centralprop1202} and the expression~\eqref{qw1202}.
\end{proof}

Using equations~\eqref{a1class1202}, \eqref{a2class1202}, \eqref{q2q31129}, the above proposition, 
\cite[Theorem~1.8]{DLZ06} and 
\cite[Theorems~1 and~2]{LZ05}, we arrive at
\begin{theorem}\label{theorembiham43}
A bihamiltonian perturbation of the RH hierarchy with the central invariant $c(u)\not\equiv0$
admits a $\tau$-structure up to the $\epsilon^8$ approximation if and only if 
$1/c(u)$ is an affine-linear function of~$u$. 
\end{theorem}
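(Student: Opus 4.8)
The plan is to exploit the fact that both the property of admitting a $\tau$-structure and the central invariant $c(u)$ are invariants of the Miura-equivalence class of a bihamiltonian perturbation, so that the claimed equivalence can be checked on a single standard-form representative. Since the class of bihamiltonian perturbations of the RH hierarchy possessing a $\tau$-structure is Miura-invariant (Lemma~\ref{haminvtaulemma1203} together with its bihamiltonian analogue) and since two such perturbations with a common hydrodynamic limit are Miura-equivalent precisely when they share the same central invariant (\cite[Theorem~1.8]{DLZ06}), it suffices to analyse the standard form~\eqref{hamperturb1127}--\eqref{hstand1128} obtained after using Darboux to set $P_1=\p$. Thus the whole iff reduces to a statement about which central invariants are compatible with a $\tau$-structure through order~$\e^8$.

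For the necessity direction, I would assume that the perturbation admits a $\tau$-structure up to order~$\e^8$. In the standard form this forces $\alpha_{2^2}$, $\alpha_{2^3}$, $\alpha_{2^4}$ to take the shapes~\eqref{a1class1202}, \eqref{a2class1202}, \eqref{a3class1202}, introducing the free constants $q_1,q_2,q_3$. A bihamiltonian perturbation carrying a $\tau$-structure is in particular integrable, so Theorem~A applies; comparing its formula~\eqref{duthm-3} for $\alpha_{2^2}$ with the $\tau$-structure expression~\eqref{a1class1202} yields an ODE for $q(w)$ whose solution is~\eqref{qw1202}. Carrying Dubrovin's bihamiltonian test to order~$\e^8$ then pins down $q_2=\tfrac{6400}{3}q_1^3$ and $q_3=0$, as in~\eqref{q2q31129}. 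Substituting~\eqref{qw1202} into the central-invariant formula of Proposition~\ref{centralprop1202} makes the $a_0$-dependence cancel and gives, exactly as in the proposition preceding this theorem, $c(u)=\tfrac1{24C_1}$ when $q_1=0$ and $c(u)=\tfrac1{24(C_1-960\,q_1(u-C_2))}$ when $q_1\neq0$; in either case $1/c(u)$ is affine-linear in~$u$.

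For the sufficiency direction, I would assume $1/c(u)$ is affine-linear with $c(u)\not\equiv0$. By \cite[Theorems~1 and~2]{LZ05} every prescribed central invariant is realized by some bihamiltonian perturbation, and by \cite[Theorem~1.8]{DLZ06} all realizations with that invariant are Miura-equivalent, so it is enough to produce one representative admitting a $\tau$-structure to order~$\e^8$. Reading the previous step backwards, I would read off $q_1$ from the slope of $1/c$ and $C_1,C_2$ from its intercept, fix for instance $a_0\equiv1$, and define $\alpha_{2^2},\alpha_{2^3},\alpha_{2^4}$ by~\eqref{a1class1202}--\eqref{a3class1202} together with~\eqref{q2q31129}; the explicit computations establishing~\eqref{a1class1202}--\eqref{q2q31129} then show that the resulting standard-form system is bihamiltonian with the prescribed $c(u)$ and carries a $\tau$-structure through order~$\e^8$.

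The main obstacle is computational rather than conceptual: one must actually run, to order~$\e^8$, both the determination of the $\tau$-structure constraints that produce~\eqref{a1class1202}--\eqref{a3class1202} and Dubrovin's bihamiltonian test that produces~\eqref{qw1202} and~\eqref{q2q31129}, and then verify that these two families of constraints are mutually consistent exactly when $1/c$ is affine-linear and impose no further restriction. Ensuring that the constants $C_1,C_2$ and the choice of $a_0$ can always be arranged to hit an arbitrary affine-linear $1/c$, and checking the degenerate case $q_1=0$ separately, is the remaining bookkeeping.
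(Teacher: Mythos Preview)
Your proposal is correct and follows essentially the same route as the paper: reduce to the standard form~\eqref{hamperturb1127}--\eqref{hstand1128}, use the $\tau$-structure constraints~\eqref{a1class1202}--\eqref{a3class1202} together with Dubrovin's bihamiltonian test (Theorem~A, giving~\eqref{qw1202} and~\eqref{q2q31129}), and then invoke Proposition~\ref{centralprop1202} plus \cite[Theorem~1.8]{DLZ06} and \cite[Theorems~1 and~2]{LZ05} to translate everything into a condition on~$c(u)$. The paper's own proof is stated as a one-line consequence of precisely these ingredients, so your write-up is if anything more explicit about how the two directions are organized.
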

The following theorem
will be proved in the next section.
\begin{theorem}\label{theoremnew1}
The statement in Theorem~\ref{theorembiham43} holds to all orders in~$\e$.
\end{theorem}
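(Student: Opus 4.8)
The plan is to upgrade the order-$\e^8$ statement of Theorem~\ref{theorembiham43} to all orders not by continuing Dubrovin's bihamiltonian test to higher order, but by exhibiting, for each admissible central invariant, a genuine bihamiltonian perturbation that already possesses a $\tau$-structure to all orders — a WK mapping hierarchy — and then transporting this property to an arbitrary representative via Miura-invariance. The ``only if'' direction needs no new work: a perturbation possessing a $\tau$-structure to all orders in~$\e$ possesses one in particular up to the $\e^8$ approximation, so Theorem~\ref{theorembiham43} already forces $1/c(u)$ to be affine-linear. Thus everything is in the converse.

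First I would isolate the two facts that reduce the converse to a statement about $c(u)$ alone. By Lemma~\ref{haminvtaulemma1203} the class of bihamiltonian perturbations of the RH hierarchy possessing a $\tau$-structure is invariant under Miura-type transformations. By the classification recalled in Section~\ref{newnewsection7} (completeness in~\cite{DLZ06}, existence in~\cite{CPS16, CPS18}), two bihamiltonian structures with the same dispersionless pencil are Miura-equivalent if and only if they have the same central invariant; since every object here is a perturbation of the one RH pencil, the central invariant is a complete invariant, and the associated hierarchies and $\tau$-structures are carried along by such Miura transformations. Hence whether a perturbation with central invariant $c(u)$ admits a $\tau$-structure to all orders depends on~$c(u)$ only, and it suffices to produce a single representative with the required structure for each $c(u)$ with $1/c(u)$ affine-linear.

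These representatives are the WK mapping hierarchies. By Theorem~\ref{thmmainshortversion} every WK mapping hierarchy is a bihamiltonian perturbation of the RH hierarchy possessing a $\tau$-structure to all orders. It remains to compute its central invariant and to check surjectivity onto the affine-linear locus. Brought to the standard form of Section~\ref{newnewsection8}, the WK mapping hierarchy has its functional parameter $a_0(w)$ and constants $q_i$ determined by~$\varphi$; since it possesses a $\tau$-structure, in particular up to order~$\e^8$, its central invariant is given by Proposition~\ref{centralprop1202} and the explicit formula displayed just before Theorem~\ref{theorembiham43}, so that $1/c(u)$ is indeed affine-linear. I would then show that, letting $\varphi$ range over~$\G$ and using the rescaling of~$\e$ from the extended action~\eqref{upFphi1003} together with affine reparametrizations of the canonical coordinate, the resulting central invariants sweep out \emph{all} $c(u)$ with $1/c(u)$ affine-linear: the constant case comes from $\varphi=\mathrm{id}$ (the Magri pencil of KdV, $c\equiv 1/24$) after rescaling, while the nonconstant affine cases come from the special family $\varphi_{\rm special}(V)=\frac{e^{2qV}-1}{2q}$ as~$q$ varies, the two free constants $C_1,C_2$ of~\eqref{qw1202} being supplied by the rescaling and reparametrization freedom. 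Combined with the previous paragraph, this gives a $\tau$-structure to all orders for the arbitrary representative and completes the converse.

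The main obstacle is this surjectivity: one must verify that the map $\varphi\mapsto c_\varphi(u)$ (composed with the rescaling and reparametrization freedom) really hits every affine-linear $1/c(u)$, and in particular that the slope of $1/c$ — governed by the parameter $q_1$, hence by a specific jet of~$\varphi$ — is genuinely nonzero for the special family, so that the nonconstant cases are all attained rather than collapsing to the KdV point. A secondary technical point, needed to apply the completeness theorem of~\cite{DLZ06}, is to ensure that after a dispersionless change of the dependent variable a given perturbation and the matching WK mapping hierarchy share the \emph{same} dispersionless pencil; this is automatic here because both are perturbations of the RH pencil.
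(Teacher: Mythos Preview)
Your overall scheme—exhibit a bihamiltonian representative with a $\tau$-structure for each admissible $c(u)$ and then transport by Miura—is correct in spirit and is essentially the paper's argument. Your ``only if'' direction is fine. The gap is in your surjectivity step.

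The central invariant of the WK mapping hierarchy equals $1/24$ for \emph{every} $\varphi\in\G$; this is computed directly in Section~\ref{section6} (see the paragraph containing~\eqref{203719}). What varies with~$\varphi$ is the metric $g(\hat u)=\varphi'(\varphi^{-1}(\hat u))$ of the dispersionless pencil, not the central invariant. In the standard-form language this means $q_1=0$ for the WK mapping hierarchy regardless of~$\varphi$---consistent with~\eqref{qsigma1119} at $\boldsymbol\sigma={\bf 0}$---so your assertion that $\varphi_{\rm special}$ supplies the nonconstant affine cases is incorrect: it too gives $c\equiv 1/24$. The operations you allow (rescaling of~$\e$ via~\eqref{upFphi1003} and \emph{affine} reparametrizations $u\mapsto au+b$ of the canonical coordinate) correspond to the $c=0$ case of the M\"obius action~\eqref{tildec1204} and can only multiply $c(u)$ by a nonzero constant; they cannot convert a constant invariant into a nonconstant one.

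The missing ingredient is the full fractional-linear freedom on the choice of pencil basis, formula~\eqref{tildec1204}: a change $\tilde P_1=cP_2+dP_1$, $\tilde P_2=aP_2+bP_1$ sends $c(u)$ to $\frac{cu+d}{ad-bc}\,c(u)$, and precisely when $1/c(u)$ is affine-linear can one arrange $\tilde c\equiv 1/24$. The paper runs your argument in the opposite direction: rather than trying to realize every affine-linear $1/c(u)$ by a WK mapping hierarchy, it first applies this M\"obius change to the \emph{given} perturbation to bring its central invariant to~$1/24$, then invokes \cite{DLZ06, LZ05} to get Miura equivalence with some WK mapping hierarchy, and finally transports the $\tau$-structure via Lemma~\ref{haminvtaulemma1203}. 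Replacing ``affine reparametrization of the canonical coordinate'' by ``M\"obius change of pencil basis'' turns your outline into the paper's proof.
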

 
Notice that when there is a(n) (approximated) bihamiltonian structure, there is a choice of the associated 
Poisson pencil. Namely, consider the following change of the choice of Poisson pencil:
\beq
\widetilde P_1 \,:=\, c \, P_2 \+ d \, P_1\,, \quad  \widetilde P_2 \,:=\, a \, P_2 \+ b \, P_1\,, \quad ad-bc\neq 0 \, . 
\eeq
Here $a,b,c,d\in\CC$ are constants.
The canonical coordinate of $\widetilde P_1, \widetilde P_2$, denoted $\tilde u$, is related to~$u$ by
$$
\tilde u \= \frac{a \, u+b}{c \, u+d} \, .
$$
The pair of functions $(\tilde g, \tilde c)$ that characterizes the pencil $\widetilde P_2 + \lambda \widetilde P_1$ are given by
\begin{align}
& \tilde g (\tilde u) \= \frac{(ad-bc)^2}{(c \, u+d)^3} \, g(u) \,, \\
& \tilde{c}(\tilde u) \= \frac{c \, u+d}{ad-bc} \, c(u)\,. \label{tildec1204}
\end{align}
In particular, formula~\eqref{tildec1204} was obtained in~\cite{DLZ06}.
So, if the central invariant $c(u)$ of a Poisson pencil satisfies that $1/c(u)$ is an affine-linear function of~$u$, 
then it is always possible to choose properly the pencil so that the central invariant is $1/24$.

Hence the above Theorem~\ref{theoremnew1} can be more compactly reformulated as follows.

\smallskip

\noindent {\bf Theorem~10$'$.} {\it A bihamiltonian perturbation of the abstract local RH hierarchy possesses a $\tau$-structure if and only if 
under a proper choice of the associated Poisson pencil the central invariant is $1/24$.}

\smallskip

Here we recall again that according to~\cite{CPS16, CPS18, LZ13}, 
the existence of a bihamiltonian perturbation with central invariant $1/24$ is known (actually for arbitrary function $c(u)$ 
the existence is also known). 

\begin{remark}
Theorem~10$'$ was known for the case of the flat-exact Poisson pencils~\cite{DLZ18}, 
where the $\tau$-structure is associated to {\it $\tau$-symmetry}~\cite{DLYZ16, DZ-norm}. The flat-exact condition implies $g(u)=u$ which is a special case in our general consideration.
\end{remark}

\section{The WK mapping hierarchy and the WK mapping universality} \label{section6}
In this section, we introduce the hierarchy of equations associated to the 
WK mapping partition function, call it the {\it WK mapping hierarchy}, and 
prove it to be integrable and bihamiltonian with the central invariant 1/24. 
Then we propose and prove the WK mapping universality.

\subsection{The WK mapping hierarchy} 
For an arbitrary element $\varphi\in\G$, let $\F^\varphi(\bT;\e)$ be the WK mapping free energy. 
Introduce
\beq\label{defU813}
U^\varphi(\bT;\e) \,:=\, \e^2 \, \frac{\p^2 \F^\varphi(\bT;\e)}{\p X^2} 
\= E(\bT) \+ \sum_{g\ge1} \, \e^{2g} \, \frac{\p \F^\varphi_g(\bT)}{\p X^2} \,. 
\eeq
Here $X=T_0$, and we used 
\eqref{FZphig1001} and Theorem~\ref{thmgenus0}. 
From Proposition~\ref{jetreprep} we know that~\eqref{defU813} leads to a quasi-Miura 
transformation 
\beq\label{qM92phi2}
V \;\mapsto\; U^\varphi  \= V \+ \sum_{g\ge1} \, \e^{2g} \, \p^2 \bigl(F^\varphi_g\bigr)\,.
\eeq
It transforms the abstract local RH hierarchy $D_S(V) = S(V) \, V_1$ to 
\begin{align}
& D_S(U^\varphi) \= 
S \, U^\varphi_1 \+ \Biggl(\frac{S'}{12} \, U^\varphi_3 \+ \biggl(\frac{S''}6 \+ \frac{S'}8 \, \frac{\varphi''}{\varphi'}\biggr) 
\, U^\varphi_1 U^\varphi_2   \label{mappinghierarchy}\\
& \quad + \, \biggl(\frac{S'''}{24} \+ \frac{S''}{16} \, \frac{\varphi''}{\varphi'}  \+ \frac{S'}{16} \, 
\biggl(\frac{\varphi'''}{\varphi'}-\frac{\varphi''^2}{\varphi'^2}\biggr) \biggr)
\, \bigl(U^\varphi_1\bigr)^3 \Biggr) \, \e^2 \+ \cdots\,,\nn
\end{align}
which (by Proposition~\ref{jetreprep} and the Witten--Kontsevich theorem) is 
a perturbation of the abstract local KdV hierarchy~\eqref{localkdvhierarchy1223}. 
In~\eqref{mappinghierarchy}, $\varphi=\varphi(U^\varphi)$, $S=S(U^\varphi)$, 
and $\varphi^{(k)}=\varphi^{(k)}(U^\varphi)$, $S^{(k)}=S^{(k)}(U^\varphi)$ for $k\ge1$.
We call~\eqref{mappinghierarchy} the {\it abstract local WK mapping hierarchy associated to~$\varphi$}, 
for short the {\it abstract local WK mapping hierarchy}. 
In particular, $D_{U^\varphi}(U^\varphi)$ reads as follows:
\begin{align}
& D_{U^\varphi}(U^\varphi) \\
& = \; U^\varphi U^\varphi_1 \+\epsilon^2 \biggl(\frac{1}{12} \, U^\varphi_3 + \frac{\varphi''}{8 \, \varphi'} \, U^\varphi_1 U^\varphi_2 + 
\biggl(\frac{\varphi'''}{16 \, \varphi'}-\frac{\varphi''^2}{16 \, \varphi'^2}\biggr) \, (U^\varphi_1)^3 \biggr) \nn \\ 
&+ \, \epsilon^4 \biggl(
\frac{\varphi''}{480 \, \varphi'} \, U^\varphi_5   
+\biggl(\frac{7 \, \varphi'''}{480 \, \varphi'}
-\frac{11 \, \varphi''^2}{960 \, \varphi'^2}\biggr) \, U^\varphi_4 U^\varphi_1 + \biggl(\frac{ \varphi'''}{48 \, \varphi'} 
-\frac{\varphi''^2}{192 \varphi'^2}\biggr) \, U^\varphi_3 U^\varphi_2 
\nn\\
& \quad + \biggl(\frac{53 \, \varphi''^3}{2880 \varphi'^3} 
-\frac{11 \, \varphi''' \varphi''}{240 \, \varphi'^2} 
+\frac{9 \, \varphi^{(4)}}{320 \, \varphi'}
\biggr) \, U^\varphi_3 (U^\varphi_1)^2\nn\\
& \quad + \biggl(\frac{17 \, \varphi^{(4)}}{480 \, \varphi'}
-\frac{7 \, \varphi''^3}{1440 \, \varphi'^3}
-\frac{7 \, \varphi''' \varphi''}{240 \, \varphi'^2} \biggr) \,  (U^\varphi_2)^2 U^\varphi_1\nn\\
& \quad + \biggl(\frac{5 \, \varphi^{(5)}}{192 \, \varphi'}
+\frac{71 \, \varphi''^4}{1920 \, \varphi'^4}
-\frac{13 \, \varphi'''^2}{1920 \, \varphi'^2} -\frac{13 \, \varphi^{(4)} \varphi''}{320 \, \varphi'^2}
-\frac{\varphi''' \varphi''^2}{64 \, \varphi'^3}\biggr) \, U^\varphi_2 (U^\varphi_1)^3 \nn\\
& \quad + \biggl(\frac{\varphi^{(6)}}{384 \, \varphi'}
-\frac{23 \, \varphi''^5}{640 \, \varphi'^5}
-\frac{\varphi^{(5)} \varphi''}{192 \, \varphi'^2}
+\frac{\varphi''' \varphi^{(4)}}{768 \, \varphi'^2} \nn\\
& \quad \quad -\frac{11 \, \varphi^{(4)} \varphi''^2}{1920 \, \varphi'^3}
+\frac{11 \, \varphi''' \varphi''^3}{160 \, \varphi'^4}
-\frac{33 \, \varphi'''^2  \varphi''}{1280 \, \varphi'^3}\biggr) \, (U^\varphi_1)^5
\biggr) \+ \cdots\,. \nn
\end{align}
Alternatively, 
\begin{align}
& D_{U^\varphi}(U^\varphi) \= U^\varphi U^\varphi_1 \+ \biggl(\frac{1}{12} \, U^\varphi_3 
+ \frac{l_1}8 \, U^\varphi_1 U^\varphi_2 + 
\frac{l_2}{16} \, (U^\varphi_1)^3 \biggr) \epsilon^2 \label{firstflow63}\\ 
&+ \, \biggl(
\frac{l_1}{480} \, U^\varphi_5   
+\biggl(\frac{7 \, l_2}{480} + \frac{l_1^2}{320}\biggr) \, U^\varphi_4 U^\varphi_1 
+ \biggl(\frac{l_2}{48}+\frac{l_1^2}{64}\biggr) \, U^\varphi_3 U^\varphi_2 
\nn\\
& \quad + \biggl(\frac{9 \, l_3}{320} + \frac{37 \, l_1 l_2} {960} + \frac{l_1^3}{1440} \biggr) \, U^\varphi_3 (U^\varphi_1)^2\nn\\
& \quad + \biggl(\frac{17\, l_3}{480} + \frac{37 \, l_1 l_2} {480} + \frac{l_1^3}{720} \biggr) \,  (U^\varphi_2)^2 U^\varphi_1\nn\\
& \quad + \biggl(\frac{5 \, l_4}{192} + \frac{61 \, l_1 l_3}{960} + \frac{137 \, l_2^2}{1920} 
+ \frac{l_1^2 l_2}{192} \biggr) \, U^\varphi_2 (U^\varphi_1)^3 \nn\\
& \quad + \biggl(\frac{l_5}{384} + \frac{l_1 l_4}{128} + \frac{7 \, l_2 l_3}{256} + \frac{l_1^2 l_3}{1280} 
+ \frac{l_1 l_2^2}{640}  \biggr) \, (U^\varphi_1)^5
\biggr) \epsilon^4 \+ \cdots\,, \nn
\end{align}
where $l_k=l_k(U^\varphi)$ are defined in~\eqref{deflkv63}.
The abstract local WK mapping hierarchy~\eqref{mappinghierarchy} reduces to~\eqref{localkdvhierarchy1223} when $\varphi(V)=V$. 
Recalling that $S(V)\in \mathcal{O}_c(V)$, we note that for $c\ne0$ one should 
modify the infinite group $\G$ to $\G=V-c+(V-c)^2 R[[V-c]]$, which does not affect the previous formulations.
By construction, the power series $U^\varphi(\bT;\e)$ 
satisfies the following hierarchy of evolutionary PDEs: 
\beq
\frac{\p U^\varphi(\bT;\e)}{\p T_i} \= D_{(U^\varphi)^i/i!}(U^\varphi(\bT;\e))\,, \quad i\ge0\,,
\eeq
which we call the {\it mapping WK hierarchy associated to~$\varphi$}. 
Here on the right-hand side it is understood that one replaces $U^\varphi_k$ by $\p U^\varphi(\bT;\e)/ X^k$, $k\ge1$, with $X=T_0$.

\begin{remark}
Surprisingly, the rational numbers appearing on the right-hand side of~\eqref{firstflow63} are all {\it positive}. 
This may hold to all orders in~$\e$.
\end{remark}

{\it A priori} the coefficient of each power of~$\e^2$ on the right-hand side of~\eqref{mappinghierarchy} 
could be a polynomial in $(U_1^{\varphi})^{\pm1}$, $U_2^\varphi$, $U_3^\varphi$, \dots, 
but with the help of a general Mathematica package\footnote{The package is 
based on the method given in~\cite{Du08, DZ-norm, LZ13}.} 
designed by Joel Ekstrand, 
 we find that up to and including the term~$\e^{10}$ there are never any negative powers of~$U_1^\varphi$.
We will prove the following theorem.
\begin{theorem}\label{conj0307}
The abstract local WK mapping hierarchy~\eqref{mappinghierarchy} has {\it polynomiality}: for any $S$
the right-hand side of~\eqref{mappinghierarchy} belongs to $\mathcal{A}_{U^\varphi}[[\e^2]]_1$. 
\end{theorem}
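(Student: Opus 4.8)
The plan is to reduce the polynomiality of the whole hierarchy to the polynomiality of its two-point correlation functions, and then to establish the latter by a genus induction driven by the loop equation~\eqref{loopequationfinalfg11108}. First I note that the grading refinement to $\mathcal{A}_{U^\varphi}[[\e^2]]_1$ is automatic from the homogeneity~\eqref{homogfvpg} of the $F^\varphi_g$ (the quasi-Miura transformation~\eqref{qM92phi2} has gr-degree $0$ and the dispersionless flow $S\,U^\varphi_1$ has gr-degree $1$), so the real content is polynomiality, i.e.\ the absence of negative powers of $U^\varphi_1$. Next, since the right-hand side of~\eqref{mappinghierarchy} is, at each order in $\e^2$, a finite combination $\sum_m S^{(m)}(U^\varphi)\,\tilde c_m$ with universal coefficients $\tilde c_m$ that are independent of $S$ and depend only on the jets of $U^\varphi$ and on $\varphi$, it suffices to prove polynomiality for the countable family $S(V)=V^i/i!$, $i\ge0$; the $\tilde c_m$ are then recovered from these flows by a lower-triangular linear system (take $S=V^i/i!$, so $S^{(m)}(U^\varphi)=(U^\varphi)^{i-m}/(i-m)!$, and solve for $\tilde c_0,\tilde c_1,\dots$ inductively in $i$). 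For $S=V^i/i!$ the flow is $\p_{T_i}U^\varphi=\p_X\Omega^\varphi_{0,i}$, where $\Omega^\varphi_{i,j}:=\e^2\,\p_{T_i}\p_{T_j}\F^\varphi$; since $\mathcal{A}_{U^\varphi}[[\e^2]]$ is closed under $\p=\p_X$, the theorem follows once I show that $\Omega^\varphi_{0,i}\in\mathcal{A}_{U^\varphi}[[\e^2]]$ for all $i\ge0$.

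To analyze $\Omega^\varphi_{0,i}$ I would pass to jet form, writing $\Omega^\varphi_{0,i}=\sum_{g\ge0}\e^{2g}\,\omega^\varphi_{g,i}$. The genus-zero term is $\omega^\varphi_{0,i}=\p_X\p_{T_i}\F^\varphi_0=V^{i+1}/(i+1)!$, using $\p_X^2\F^\varphi_0=E(\bT)=V$ (Theorem~\ref{thmgenus0}) and the RH hierarchy $\p_{T_i}E(\bT)=\tfrac{E(\bT)^i}{i!}\,\p_X E(\bT)$ of~\eqref{RHhierarchy}; this is manifestly polynomial. For $g\ge1$ the jet representation~\eqref{jetfvg89}, together with $\p_{T_i}V_k=\p^k(\tfrac{V^i}{i!}V_1)$, gives $\omega^\varphi_{g,i}=\p\!\bigl(\sum_k \tfrac{\p F^\varphi_g}{\p V_k}\,\p^k(\tfrac{V^i}{i!}V_1)\bigr)$ as a function of the jets $V,V_1,V_2,\dots$ of $E(\bT)$. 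By the structure theorem~\eqref{fkpk59} this is a priori only rational in $V_1$, and after substituting the inverse of the quasi-Miura transformation~\eqref{qM92phi2} to re-express the $V$-jets through the $U^\varphi$-jets, it is a priori only rational in $U^\varphi_1$. The content of the theorem is that all these negative powers of $U^\varphi_1$ cancel.

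The main step, and the principal obstacle, is exactly this cancellation. I would prove it by induction on $g$, adapting the Dubrovin--Zhang $\lambda$-analyticity argument to our loop equation~\eqref{loopequationfinalfg11108}: the loop equation is an identity of Laurent series in $\Delta=\lambda-\varphi(V)$, and the explicit pole structures of $W$, of its $\p$-derivatives~\eqref{dkw420}, and of $\p^k(\Delta^{-2})$~\eqref{Qkexpr56} are precisely engineered so that, assuming all $\Omega^\varphi_{0,j}$ of genus $<g$ already lie in $\mathcal{A}_{U^\varphi}[[\e^2]]$, the genus-$g$ part of $\Omega^\varphi_{0,i}$ inherits polynomiality; the would-be singular (negative-$U^\varphi_1$) contributions are matched, order by order in $\Delta$, against the lower-genus data already shown to be polynomial. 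The base case $g=0$ is the computation above and $g=1$ follows from~\eqref{genus1K1}, which anchors the induction. I expect the bookkeeping of the $\Delta$-expansion of the nonlinear term $\sum_{k,\ell}(\p^{k+1}W\,\p^{\ell+1}W)^-\,(\p^2 F^\varphi_{\rm h.g.}/\p V_k\p V_\ell+\cdots)$, and keeping track of how the weight gradings interact with the $V_1$-degree, to be the most delicate part.

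Finally I would record that a conceptual alternative is to deduce polynomiality from the classical polynomiality of the KdV hierarchy via the space-time exchange used in the proof of Theorem~\ref{thmmainshortversion}, combined with Lemma~\ref{vVmaplemma}, which gives a polynomial change of dispersionless jet coordinates induced by $E(\bt)=\varphi(E(\bT))$. The subtlety of this route is that the KdV and mapping normal coordinates $u^{\scriptscriptstyle\rm WK}$ and $U^\varphi$ are governed by different derivations (the substitution $v_k\mapsto M_k$ of Lemma~\ref{vVmaplemma} is a ring homomorphism but not a differential-ring homomorphism, since $\p_X M_k\neq M_{k+1}$), so that one genuinely has $u^{\scriptscriptstyle\rm WK}=\sum_{j,k}d_j d_k\,\Omega^\varphi_{j,k}$ for constants $d_j$ and the reduction becomes equivalent to the very statement to be proved. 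This is precisely why I regard the self-contained loop-equation induction, rather than the KdV reduction, as the cleaner path.
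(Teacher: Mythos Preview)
Your main proposal (the loop-equation induction) contains a genuine gap: the crucial step---that the would-be negative powers of $U^\varphi_1$ in $\Omega^\varphi_{0,i}$ cancel order by order in~$\Delta$---is asserted but not demonstrated. You say that the pole structures of $W$, $\p^kW$, and $\p^k(\Delta^{-2})$ are ``precisely engineered'' to propagate polynomiality, and that you ``expect the bookkeeping \dots\ to be the most delicate part'', but you do not carry out any of this bookkeeping. The loop equation~\eqref{loopequationfinalfg11108} recursively determines the gradients $\partial F^\varphi_g/\partial V_k$, not the $\Omega^\varphi_{0,i}$ directly, and converting the former into a polynomiality statement for the latter (after inverting the quasi-Miura map~\eqref{qM92phi2}) is exactly where the work lies. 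As written, your argument is a plan, not a proof.

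Ironically, the route you dismiss as circular is exactly what the paper does. The paper's proof of Theorem~\ref{conj0307} is a one-line corollary of the stronger Theorem~\ref{mainconjecture}, whose proof proceeds via the KdV reduction and space-time exchange. Your circularity objection rests on a misidentification: you write $u^{\scriptscriptstyle\rm WK}=\sum d_j d_k\,\Omega^\varphi_{j,k}$, but the paper goes the other way, writing $U^\varphi=\sum_{i,j}\frac{\partial t_i}{\partial X}\frac{\partial t_j}{\partial X}\,\Omega^{\scriptscriptstyle\rm KdV}_{i,j}$ (see~\eqref{Uu719}), where the $\Omega^{\scriptscriptstyle\rm KdV}_{i,j}$ are already known to be differential polynomials in the $\partial_x$-jets of~$u$. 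The remaining issue---re-expressing $\partial_x$-jets of~$u$ polynomially in $\partial_X$-jets of~$u$---is resolved not by circular appeal to $\Omega^\varphi$, but by an \emph{external} result of Liu--Wang--Zhang~\cite{LWangZ}: since $\partial_X=D_{\sqrt{\varphi'(\varphi^{-1}(u))}}$ is a (polynomial) flow of the abstract local KdV hierarchy commuting with $D_u$, the space-time exchange yields a bihamiltonian system in Dubrovin--Zhang normal form, so in particular the inverse $\partial_x$-flow is polynomial in the $\partial_X$-jets. This breaks the apparent circularity and shows that $U^\varphi=\varphi^{-1}(u)+O(\e^2)$ is a genuine Miura-type transformation with $\partial_X$ as spatial derivative; polynomiality of the hierarchy and of all $\Omega^\varphi_{i,j}$ then follows.
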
  

We first prove a special case 
of Theorem~\ref{conj0307}. 
\begin{prop}\label{localkdv63}
Theorem~\ref{conj0307} holds when $\varphi={\rm id}$, i.e., for the local KdV hierarchy~\eqref{localkdvhierarchy1223}. More precisely, the abstract local KdV hierarchy~\eqref{localkdvhierarchy1223} has the form:
\begin{align} 
D_S(u) & \= 
\p \biggl(\int^{u} S \+ \sum_{g\ge1} \e^{2g} \sum_{\lambda \in \mathcal{P}_{2g}} \, 
 K_\lambda \, S^{(\ell(\lambda)+g-1)}(u) \, u_\lambda\biggr)\label{mappinghierarchykdvdivergence} \\
 & \= S(u) \, u_1 \+ \sum_{g\ge1} \e^{2g} \sum_{\mu \in \mathcal{P}_{2g+1}} \, 
 G_\mu \, S^{(\ell(\mu)+g-1)}(u) \, u_\mu \,, \label{mappinghierarchykdv}
\end{align}
where $K_\lambda, G_\mu$  are rational numbers. 
\end{prop}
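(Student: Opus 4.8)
The plan is to read off the flows \eqref{localkdvhierarchy1223} directly from the quasi-Miura transformation \eqref{qM92WK1} for a general symbol $S$, using that the whole construction is linear in $S$ and that every manipulation of $S$ is governed by universal (Fa\`a di Bruno) combinatorics. Writing $v=v_0$ and $v_k=\partial^k v$, one has $D_S(u)=D_S(v)+\sum_{g\ge1}\epsilon^{2g}\partial^2\bigl(D_S(F^{\rm WK}_g)\bigr)$ with $D_S(F^{\rm WK}_g)=\sum_k \frac{\partial F^{\rm WK}_g}{\partial v_k}\,\partial^k\!\bigl(S(v)\,v_1\bigr)$. Since $D_S(v)=S(v)v_1=\partial\bigl(\int^{v}S\bigr)$ while every higher-genus term already carries an outer $\partial^2$, the entire right-hand side is a total $x$-derivative; as $\partial$ is intrinsic (the same operator in the $v$- and $u$-jet coordinates), inverting \eqref{qM92WK1} to rewrite everything in the $u$-jets yields the divergence form \eqref{mappinghierarchykdvdivergence}, the genus-zero density being $\int^u S$ as claimed.

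The shape $S^{(\ell(\lambda)+g-1)}$ comes from two gradings. The $x$-degree (with $\deg v_k=k$ and $\deg\epsilon=-1$) is preserved by \eqref{qM92WK1}, since $\deg F^{\rm WK}_g=2g-2$ by \eqref{homowk-1}; thus $D_S(u)$ is homogeneous of $x$-degree $1$ and the genus-$g$ density multiplies a jet monomial of weight $2g$. To track the order of differentiation of $S$, I grade monomials by $\Theta:=(\text{order of the }S\text{-derivative})-\deg_{\mathrm{jet}}$, where $\deg_{\mathrm{jet}}$ counts the factors among $v_1,v_2,\dots$. Subtracting \eqref{homowk-2} from \eqref{homowk-1} shows that $F^{\rm WK}_g$ is $\deg_{\mathrm{jet}}$-homogeneous of degree $1-g$, so $\partial F^{\rm WK}_g/\partial v_k$ has $\deg_{\mathrm{jet}}=-g$ (the $\log v_1$ in $F^{\rm WK}_1$ disappearing after differentiation). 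Now $S(v)v_1$ has $\Theta=-1$; each $\partial$ preserves $\Theta$, because the only $v_0$ occurs inside $S$, so differentiating it raises the $S$-order and $\deg_{\mathrm{jet}}$ simultaneously; and multiplying by $\partial F^{\rm WK}_g/\partial v_k$ contributes $\Theta=g$. Hence the genus-$g$ density $\partial(D_S F^{\rm WK}_g)$ has $\Theta=g-1$ in the $v$-jets. Finally, the inverse of \eqref{qM92WK1} has $v_0$-independent corrections that at order $\epsilon^{2g'}$ are $\deg_{\mathrm{jet}}$-homogeneous of degree $1-g'$, so expanding $S^{(p)}(v_0)=S^{(p)}(u_0+c)$ and rewriting all jets preserves the identity $\Theta=\text{genus}-1$. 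Together with weight $2g$ this forces the $S$-order to equal $\ell(\lambda)+g-1$, and since every coefficient generated is a fixed rational number independent of $S$, we obtain \eqref{mappinghierarchykdvdivergence} with $K_\lambda\in\mathbb{Q}$; applying $\partial$ and regrouping the two families $S^{(\ell+g)}u_1u_\lambda$ and $S^{(\ell+g-1)}\partial(u_\lambda)$ produces \eqref{mappinghierarchykdv} with rational $G_\mu$ indexed by $\mu\in\mathcal{P}_{2g+1}$.

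The main obstacle is polynomiality: a priori the densities could carry negative powers of $v_1$ inherited from the $v_1^{-1}$ in the $F^{\rm WK}_g$, and the two gradings alone do not exclude monomials such as $v_1^{-1}v_2^2$. I will settle this by reduction to the classical case. For $S=u^i/i!$ the flow $D_{u^i/i!}(u)$ is, by the Witten--Kontsevich theorem, the $i$-th flow of the KdV hierarchy \eqref{kdvflows26} and hence a differential polynomial in $u$; the same holds for the corresponding density $\Omega^{\rm KdV}_{0,i}$. Because the coefficients in the universal expansion do not depend on $S$, the total coefficient of a fixed Laurent jet monomial equals a single universal rational times $u^{\,i-m}/(i-m)!$, the $S$-order $m$ being determined by $\Theta$ as above. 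If that jet monomial contained a negative power of $u_1$, then polynomiality for every $i$, together with the linear independence in $i$ of the functions $u^{\,i-m}/(i-m)!$, would force the universal coefficient to vanish. Hence only genuine partitions survive, $\lambda\in\mathcal{P}_{2g}$ in \eqref{mappinghierarchykdvdivergence} and $\mu\in\mathcal{P}_{2g+1}$ in \eqref{mappinghierarchykdv}, the coefficients $K_\lambda,G_\mu$ are rational, and the right-hand side lies in $\mathcal{A}_u[[\epsilon^2]]_1$, completing the proof of both forms.
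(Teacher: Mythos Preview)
Your proof is correct and takes a genuinely different route from the paper. The paper's argument is structural: it invokes the $\tau$-symmetric KdV densities $h_k$ satisfying the recursion $\partial h_i/\partial u=h_{i-1}$ and $h_{-1}=u$, so that writing $S(u)=\sum_m a_m u^m/m!$ and $D_S(u)=\partial\bigl(\sum_m a_m h_{m-1}\bigr)$ immediately yields the form~\eqref{mappinghierarchykdvdivergence}. You instead work directly from the quasi-Miura map, introducing the grading $\Theta=(\text{$S$-order})-\deg_{\mathrm{jet}}$; the fact that $F^{\scriptscriptstyle\rm WK}_g$ is $\deg_{\mathrm{jet}}$-homogeneous of degree $1-g$ (the difference of~\eqref{homowk-1} and~\eqref{homowk-2}) then pins down the exponent $\ell(\lambda)+g-1$, and linear independence of the $u^{i-m}/(i-m)!$ reduces polynomiality to the classical KdV flows. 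The paper in fact acknowledges your approach as an alternative in the sentence immediately following its proof (``assuming polynomiality the precise form~\eqref{mappinghierarchykdvdivergence} can also be obtained as a result of the quasi-trivial transformation combined with~\eqref{WK3gminus2}--\eqref{homowk-2}''). What your route buys is a transparent, self-contained explanation of where the exponent $\ell(\lambda)+g-1$ comes from, at the cost of a longer argument; the paper's route is shorter but hides this combinatorics inside the recursion $\partial h_i/\partial u=h_{i-1}$.
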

\begin{proof}
For $S(u)=u^i/i!$ ($i\ge0$), it is known that~\eqref{localkdvhierarchy1223}, i.e., the abstract KdV hierarchy, can be written as
\beq\label{abslokdvv2}
D_{u^i/i!}(u) \= \p(h_{i-1}(u,u_1,u_2,\dots,u_i))\,, \quad i\ge0\,,
\eeq
where $h_k=h_k(u,u_1,u_2,\dots,u_{k+1})\in \QQ[u,u_1,\dots,u_{k+1}][\e^2]_0$, $k\ge-1$, are 
$\tau$-symmetric hamiltonian densities for the KdV hierarchy \cite{DZ-norm} (see also~\cite{BDY16, DYZ21}), which satisfy 
\beq
h_k - \frac{u^{k+2}}{(k+2)!} \, \in \, \e^2 \cdot \QQ[u,u_1,\dots,u_{k+1}][\e^2]_{-2} ~(\forall\,k\ge-1), 
\eeq
\beq
 h_{-1} \= u\,, \quad D_{u^j/j!} (h_{i-1}) \= D_{u^i/i!} (h_{j-1}) ~ (\forall\,i,j\ge0)\,,
\eeq
and 
\beq\label{partialhipartialu}
\frac{\p h_i}{\p u} \= h_{i-1} \,, \quad i\ge0\,.
\eeq
For $S(u)=\sum_{m\ge0} a_m u^m/m!$ 
with $a_m$ being arbitrarily given constants, we have
\beq\label{DSusumsimple}
D_S(u) \= \sum_{m\ge0} a_m D_{u^m/m!}(u) \,.
\eeq
The expression~\eqref{mappinghierarchykdvdivergence} follows from~\eqref{partialhipartialu}, \eqref{DSusumsimple} and~\eqref{abslokdvv2}. 
For $S\in \mathcal{O}_c(u)$ with $c\neq 0$, the proof is then similar. Equation~\eqref{mappinghierarchykdv} follows from~\eqref{mappinghierarchykdvdivergence}.
\end{proof}

We note that assuming polynomiality the precise form~\eqref{mappinghierarchykdvdivergence} can also 
be obtained as a result of the quasi-trivial transformation combined with~\eqref{WK3gminus2}--\eqref{homowk-2}.

\begin{remark}
Let us briefly describe another way of defining the abstract local KdV hierarchy $D_S$, $S\in\mathcal{O}_c(u)$. 
Define $D_u$ as an admissible derivation such that $D_u(u)=u u_1 + \e^2 \frac{u_3}{12}$. 
Require $D_S$ to be the admissible derivation on $\mathcal{A}_u[\e^2]$ satisfying 
\beq\label{deflocalkdvunique}
D_S(u)-S(u) \, u_1 \, \in \, \e \cdot \mathcal{A}_u[\e]_{-2}\,, \quad  [D_S,D_u]\=0 \,.
\eeq
For the uniqueness of $D_S$ for any $S\in\mathcal{O}_c(u)$ see e.g.~\cite{Buryak, LZ06}. The existence of $D_{u^i/i!}$, $i\ge0$, is well known. 
For $S(u)=\sum_{m\ge0} a_m u^m/m!$, let $D_S(u)$ be assigned as the right-hand side of~\eqref{DSusumsimple}, then it can be checked that $D_S$ 
satisfies~\eqref{deflocalkdvunique}. For a general $S\in\mathcal{O}_c(u)$ the proof of existence is similar. 
\end{remark}

It is known (see e.g.~\cite{BDY16, Buryak, Dickey, DW90, DYZ21, DZ-norm, Ko92, Wi91}) that 
$\Omega^{\scriptscriptstyle \rm KdV}_{i,j}$, $i,j\ge0$, actually all belong to $\mathcal{A}_{u}[[\e^2]]_0$. Then by 
an argument similar to the proof of Proposition~\ref{localkdv63} we have
$\Omega_{S_1,S_2}^{\scriptscriptstyle \rm KdV} = \int^u S_1 S_2 + O(\e^2)
\in \mathcal{A}_{u}[[\e^2]]_0$, $\forall\,S_1,S_2 \in \mathcal{O}_c(u)$. 
Here $\Omega_{S_1,S_2}^{\scriptscriptstyle \rm KdV}$ is defined as the 
substitution of the inverse of the quasi-Miura type transformation~\eqref{qM92WK1}  
in~$\int^u S_1 S_2+ \sum_{g\ge1} \e^{2g} D_{S_1} D_{S_2} (F^{\scriptscriptstyle\rm WK}_g)$ 
(similar to the definition of $\Omega^{\scriptscriptstyle \rm KdV}_{i,j}$; see~\eqref{omegaij114}).

In order to prove Theorem~\ref{conj0307} for general $\varphi$, we will prove a 
stronger statement. (In Section~\ref{section8} we will give a more direct proof of a generalization 
of Theorem~\ref{conj0307}.)
Define two operators $P^\varphi_1$ and $P^\varphi_2$ by
\begin{align}
P^\varphi_1 \,:=\, 
\sum_{k,\ell\ge0} \, \frac{\p U^\varphi}{\p V_k} \circ \p^k 
\circ \biggl( \frac{1}{2\,\varphi'(V)} \circ \p \+ \p \circ 
\frac{1}{2\,\varphi'(V)} \biggr) \circ (-\p)^\ell \circ \frac{\p U^\varphi}{\p V_\ell} \,,\label{Poisson92-1} \\
P^\varphi_2 \,:=\, 
\sum_{k,\ell\ge0} \, \frac{\p U^\varphi}{\p V_k} \circ \p^k \circ \biggl(
\frac{\varphi(V)}{2\,\varphi'(V)} \circ \p \+ \p \circ \frac{\varphi(V)}{2\,\varphi'(V)} \biggr) \circ (-\p)^\ell \circ \frac{\p U^\varphi}{\p V_\ell} \,, \label{Poisson92-2}
\end{align}
where $U^\varphi$ is given by the quasi-Miura transformation~\eqref{qM92phi2}. It readily follows from the definition that 
the operators $P^\varphi_a$, $a=1,2$, have the form:
\begin{align}
& P^\varphi_a(U^\varphi) \= \sum_{g\ge0} \, \e^{2g} \, P^{\varphi,[g]}_a \,,\\
& P^{\varphi,[g]}_a \= \sum_{j=0}^{3g+1} \, A^\varphi_{2g,j;a} \, \p^j\,, 
\quad A_{2g,j;a}^{\varphi} \in \mathcal{O}_c(U^\varphi)\bigl[U^\varphi_1,\dots,U^\varphi_{3g+1},(U^\varphi_1)^{-1}\bigr]\,,\\
& P^{\varphi,[0]}_1  \= \frac{1}{2\,\varphi'(U^\varphi)} \circ \p \+ \p \circ \frac{1}{2\,\varphi'(U^\varphi)}\,, \label{explicitp11126}\\
& P^{\varphi,[0]}_2 \= \frac{\varphi(U^\varphi)}{2\,\varphi'(U^\varphi)} \circ \p \+ \p \circ \frac{\varphi(U^\varphi)}{2\,\varphi'(U^\varphi)}\,, 
\label{explicitp21126}\\
& \sum_{m\ge1} \, m \, U^\varphi_m \, \frac{\p A^\varphi_{2g,j;a} }{\p U^\varphi_m} \= (2g+1-j) A^\varphi_{2g,j;a}  \,.
\end{align}
We know that $[P^\varphi_a(U^\varphi), P^\varphi_b(U^\varphi)]=0$, for arbitrary $a,b\in\{1,2\}$.
The abstract local WK mapping hierarchy~\eqref{mappinghierarchy} can be written in the following form:
\beq\label{WKmappingham1126}
D_S(U^\varphi ) \= P^\varphi_1(U^\varphi) \biggl( \frac{\delta \int h^\varphi_{1;S}}{\delta U^\varphi} \biggr) 
\= P^\varphi_2(U^\varphi) \biggl(\frac{\delta \int h^\varphi_{2;S}}{\delta U^\varphi} \biggr) \,, \quad i\ge0\,.
\eeq
Here, the hamiltonian densities 
$h^\varphi_{1;S}$, $h^\varphi_{2;S}$ are understood as the substitutions of the inverse of 
the quasi-Miura transformation~\eqref{qM92phi2} into
\begin{align}
& h_{1;S}^{\varphi} \= \int_{0}^V \sqrt{\varphi'(x_2)} \int_0^{x_2} \, S(x_1) \, \sqrt{\varphi'(x_1)} \, dx_1 \, dx_2 \,, \label{h1s69}\\
& h_{2;S}^{\varphi} \= \int_{0}^V \sqrt{\frac{\varphi'(x_2)}{\varphi(x_2)}} \int_0^{x_2} \, S(x_1)\, \sqrt{\frac{\varphi'(x_1)}{\varphi(x_1)}} \, dx_1 \, dx_2 \,.
\label{h2s69}
\end{align}

{\it A priori}, the operators $P^\varphi_a(U^\varphi)$ and the variational derivatives of the 
hamiltonian densities $h^\varphi_{a;S}$ with respect to $U^\varphi$, 
$a=1,2$,
could contain negative powers of~$U^\varphi_1$, but just as in the remark preceding Theorem~\ref{conj0307}, 
we can use Ekstrand's Mathematica pakage to check that up to and including the $\e^8$ term 
this does not happen.
Explicit expressions 
for $P_1,P_2$ up to and including~$\e^2$ are given as follows:
\begin{align}
& P^\varphi_1(U^\varphi) \= \frac12 \, \frac{1}{\varphi'} \circ \p \+ \frac12 \, \p \circ \frac{1}{\varphi'} \label{p1explicit1126}\\
& + \, \frac12 \, \Biggl( \biggl(\frac{3 \varphi'''^2}{16 \varphi'^3}-\frac{\varphi^{(5)}}{24 \varphi'^2} +\frac{13 \varphi^{(4)} \varphi''}{48 \varphi'^3} 
-\frac{15 \varphi''' \varphi''^2}{16 \varphi'^4}+\frac{\varphi''^4}{2 \varphi'^5}\biggr) \, (U^\varphi_1)^3 \nn\\
& + \, \biggl(\frac{7 \varphi''' \varphi''}{8 \varphi'^3} -\frac{\varphi^{(4)}}{6 \varphi'^2}-\frac{3 \varphi''^3}{4 \varphi'^4} \biggr) \, U^\varphi_1 U^\varphi_2 
+ \biggl(\frac{ \varphi''^2}{6 \varphi'^3} -\frac{\varphi'''}{12 \varphi'^2}\biggr) \, U^\varphi_3 \nn\\ 
& + \, \biggl( \biggl(\frac{3 \varphi''' \varphi''}{8 \varphi'^3}-\frac{\varphi^{(4)}}{12 \varphi'^2}-\frac{\varphi''^3}{4 \varphi'^4}\biggr) \, (U^\varphi_1)^2 \+ \biggl(\frac{\varphi''^2}{3 \varphi'^3}-\frac{\varphi'''}{6 \varphi'^2}\biggr) \, U^\varphi_2 \biggr) \circ \p \Biggr) \e^2 \+ \cdots \nn
\end{align}
and
\begin{align}
& P^\varphi_2(U^\varphi) \= \frac12 \, \frac{\varphi}{\varphi'} \circ \p \+ \frac12\, \p \circ \frac{\varphi}{\varphi'} \label{p2explicit1126}\\
& + \, \frac{\e^2}2 \, \Biggl(\biggl(-\frac{\varphi \varphi^{(5)}}{24 \varphi'^2}
-\frac{\varphi^{(4)}}{8 \varphi'} +\frac{3 \varphi \varphi'''^2}{16 \varphi'^3} 
+\frac{\varphi \varphi''^4}{2 \varphi'^5}-\frac{\varphi''^3}{4 \varphi'^3} \nn\\
& \qquad\qquad +\frac{13 \varphi \varphi^{(4)} \varphi''}{48 \varphi'^3}-\frac{15 \varphi \varphi''' \varphi''^2}{16 \varphi'^4}
+\frac{19 \varphi''' \varphi''}{48 \varphi'^2} \biggr) \, (U^\varphi_1)^3 \nn\\
& \qquad +\, \biggl(-\frac{\varphi \varphi^{(4)}}{6 \varphi'^2} -\frac{\varphi'''}{3 \varphi'} 
-\frac{3 \varphi \varphi''^3}{4 \varphi'^4}+\frac{3 \varphi''^2}{8 \varphi'^2}
+\frac{7 \varphi \varphi''' \varphi''}{8 \varphi'^3}\biggr) \, U^\varphi_1 U^\varphi_2 \nn\\
& \qquad + \, \biggl( 
\frac{\varphi \varphi''^2}{6 \varphi'^3} - \frac{\varphi \varphi'''}{12 \varphi'^2} 
-\frac{\varphi''}{12 \varphi'}\biggr) \, U^\varphi_3 \nn\\
& \qquad + \, \biggl(
\biggl(-\frac{\varphi \varphi^{(4)}}{12 \varphi'^2} -\frac{\varphi'''}{6 \varphi'}-\frac{\varphi \varphi''^3}{4 \varphi'^4}
+\frac{\varphi''^2}{8 \varphi'^2}+\frac{3 \varphi \varphi''' \varphi''}{8 \varphi'^3}\biggr) \, (U^\varphi_1)^2\nn\\
& \qquad \qquad 
+ \, \biggl(\frac{\varphi \varphi''^2}{3 \varphi'^3}-\frac{\varphi \varphi'''}{6 \varphi'^2}-\frac{\varphi''}{6 \varphi'}\biggr) \, U^\varphi_2\biggr) \circ \p
+\frac{\p^3}{4} \Biggr) \+ \cdots \,. \nn
\end{align}

The following theorem, which is stronger than Theorem~\ref{conj0307}, gives a refined version of Theorem~\ref{thmmainshortversion}.
\begin{theorem}\label{mainconjecture} 
For $a=1,2$, $g\ge0$ and $0\leq j\leq 3g+1$, the elements $A^\varphi_{2g,j;a}$ all belong to $\mathcal{A}_{U^\varphi}^{[2g+1-j]}$. 
Moreover, the variational derivatives of the hamiltonians $\int h^\varphi_{1;S}$ and $\int h^\varphi_{2;S}$ 
with respect to $U^\varphi$ belong to $\mathcal{A}_{U^\varphi}[[\e]]$.
\end{theorem}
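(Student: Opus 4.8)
The plan is to derive the polynomiality on the WK-mapping side from the already-known polynomiality on the Korteweg--de~Vries side, transporting it through the $\G$-action. The dictionary between the two sides is supplied by Lemma~\ref{vVmaplemma} and Proposition~\ref{jetreprep}: the functions $M_k(V_0,\dots,V_k)$ express the $x$-jets of $E(\bt)$ as differential polynomials in the $X$-jets of $E(\bT)$, with $M_1=\varphi'(V)^{3/2}V_1$ a unit multiple of $V_1$, and under the substitution $v_k=M_k$ one has $\sum_g\e^{2g}F^{\scriptscriptstyle\rm WK}_g=\sum_g\e^{2g}F^\varphi_g$. Geometrically this dictionary is the combination of the change of dependent variable $v=\varphi(V)$ with the rescaling $\p_x=\sqrt{\varphi'}\,\p_X$ of the independent variable, i.e.\ precisely the \emph{space-time exchange combined with a Miura-type transformation} referred to in the Introduction. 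First I would verify at the dispersionless level that $P^{\varphi,[0]}_1$ and $P^{\varphi,[0]}_2$ of \eqref{explicitp11126}--\eqref{explicitp21126}, and the densities \eqref{h1s69}--\eqref{h2s69}, are exactly the images of the dispersionless KdV pencil $(\p,\ V\p+\tfrac12 V_1)$ and of the dispersionless KdV densities under this factorized map.

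Next I would promote the identification to the full dispersive level. On the KdV side, Proposition~\ref{localkdv63}, the polynomiality of $\Omega^{\scriptscriptstyle\rm KdV}_{i,j}\in\mathcal{A}_u[[\e^2]]_0$, and the Magri bihamiltonian structure give that $P^{\scriptscriptstyle\rm KdV}_1=\p$, $P^{\scriptscriptstyle\rm KdV}_2$ and all the KdV hamiltonian densities are genuine differential polynomials in $u,u_1,u_2,\dots$, with no negative powers of~$u_1$. By Proposition~\ref{jetreprep} the quasi-Miura transformation \eqref{qM92phi2} defining $U^\varphi$ is intertwined, through the substitution $v_k=M_k$, with the quasi-Miura transformation \eqref{qM92WK1} defining $u^{\scriptscriptstyle\rm WK}$; consequently the passage from the polynomial KdV data to $P^\varphi_a(U^\varphi)$ and $\delta\!\int h^\varphi_{a;S}/\delta U^\varphi$ is exactly the space-time exchange combined with a Miura-type transformation. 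Because the Miura-type part acts by the rule \eqref{Poissontrans1121}, under which the class of Poisson operators and the differential-polynomiality of variational derivatives are preserved (cf.\ Lemma~\ref{haminvtaulemma1203}), this step introduces no negative powers of the top jet, and the whole question is reduced to the single point that the space-time exchange itself does not spoil polynomiality.

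This reduction isolates the main obstacle. A reciprocal rescaling of the spatial variable such as $\p_x=\sqrt{\varphi'}\,\p_X$ generically produces non-polynomial, even non-local, terms, so the survival of polynomiality is not formal; it is exactly here that the recent result of S.-Q.~Liu, Z.~Wang and Y.~Zhang~\cite{LWangZ} is invoked, guaranteeing that the KdV hierarchy together with its bihamiltonian structure and its $\tau$-structure remains polynomial after the space-time exchange. Granting this, each $A^\varphi_{2g,j;a}$ lies in $\mathcal{A}_{U^\varphi}$; the weighted-homogeneity relation $\sum_{m\ge1} m\,U^\varphi_m\,\p A^\varphi_{2g,j;a}/\p U^\varphi_m=(2g+1-j)\,A^\varphi_{2g,j;a}$ recorded before the theorem then fixes the graded degree and yields $A^\varphi_{2g,j;a}\in\mathcal{A}^{[2g+1-j]}_{U^\varphi}$ for $a=1,2$, $g\ge0$ and $0\le j\le 3g+1$.

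For the second assertion I would run the same argument for the densities: $h^\varphi_{1;S}$ and $h^\varphi_{2;S}$ in \eqref{h1s69}--\eqref{h2s69} are the pullbacks of the polynomial KdV hamiltonian densities under the same space-time-exchange-plus-Miura map, so applying the variational derivative \eqref{defvard1122} and using its polynomial transformation behaviour under Miura-type transformations gives $\delta\!\int h^\varphi_{a;S}/\delta U^\varphi\in\mathcal{A}_{U^\varphi}[[\e]]$. As a consistency check the resulting expressions match, through order~$\e^8$, the direct computation with Ekstrand's package mentioned above, and a second, more direct proof of a generalization will be given in Section~\ref{section8}.
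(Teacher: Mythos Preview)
Your proposal is correct and follows essentially the same strategy as the paper: transport the polynomial KdV bihamiltonian data through the space-time exchange $\p_x\leadsto\p_X$ (whose polynomiality is the content of~\cite{LWangZ}) followed by a Miura-type change to $U^\varphi$, and then read off the graded degree from the already-established homogeneity. The paper's proof differs only in making the Miura-type nature of the map $u\mapsto U^\varphi$ completely explicit via the formula $U^\varphi=\sum_{i,j}\frac{\p t_i}{\p X}\frac{\p t_j}{\p X}\,\Omega^{\scriptscriptstyle\rm KdV}_{i,j}=\varphi^{-1}(u)+O(\e^2)$, which directly exhibits $U^\varphi$ as a differential polynomial in the $X$-jets of~$u$ once the $\Omega^{\scriptscriptstyle\rm KdV}_{i,j}$ have been rewritten in those jets; you reach the same conclusion through the intertwining of the two quasi-Miura maps via~$M_k$, which is equivalent but slightly less direct.
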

\begin{proof}
First of all, we have
\beq
\p \= \sum_{m\ge0} \frac{\p t_m}{\p X} D_{u^m/m!} \,.
\eeq
Here when we use a function of $u$, say $f(u)$, to label a derivation $D_{f(u)}$ we mean  
the corresponding derivation in the abstract local KdV hierarchy. By the definition~\eqref{defsigmaction} 
we can simplify the above equality to 
\beq
\p \= D_{\sqrt{\varphi'(\varphi^{-1}(u))}} \,.
\eeq
By Proposition~\ref{localkdv63} we know that $D_{\sqrt{\varphi'(\varphi^{-1}(u))}}(u)$ has polynomiality 
and of course it commutes with the KdV derivation $D_u(u)$. 
Then, using the results in~\cite{LWangZ}, we know that by taking $\p=\p_X$ as the spatial derivative 
 the abstract local KdV hierarchy is transformed to a bihamiltonian evolutionary system for~$u$ and 
 particularly the $\p_x$-flow of the abstract local KdV hierarchy after the transformation 
is bihamiltonian in Dubrovin--Zhang's normal form. 
Secondly, we note that 
\beq\label{Uu719}
U^{\varphi} \= \e^2 \, \frac{\p^2 \F^\varphi(\bT;\e)}{\p X^2} \= \e^2 \, 
\sum_{i,j\ge0} \frac{\p t_i}{\p X} \frac{\p t_j}{\p X} \, \Omega^{\scriptscriptstyle \rm KdV}_{i,j} 
\= \varphi^{-1}(u) \+ O(\e^2)\,.
\eeq
Since the $\p_x$-flow for $u$ with $\p=\p_X$ as the spatial derivative is an evolutionary PDE 
in Dubrovin--Zhang's normal form
and since $\Omega^{\scriptscriptstyle \rm KdV}_{i,j}\in \mathcal{A}_u[[\e^2]]_0$, we find by 
doing the substitution that $\Omega^{\scriptscriptstyle \rm KdV}_{i,j}$ are power series of $\e^2$ 
with coefficients being polynomials of $\p_X(u), \p^2_X(u)$, \dots.
This means that equation~\eqref{Uu719} gives a Miura-type transformation (with the spacial 
derivative being $\p$). The theorem is proved.
\end{proof}

We note that an equivalent description of the second statement of Theorem~\ref{mainconjecture} 
is that the hamiltonian densities $h^\varphi_{a;S}$, $a=1,2$, modulo certain total $\p$-derivatives,
both belong to $\mathcal{A}_{U^\varphi}[[\e]]$ for any~$S$.  
We also note that the first statement of Theorem~\ref{mainconjecture} implies 
in particular that $A^\varphi_{2g,j;a}=0$ for all $j\geq 2g+2$.

\begin{proof}[Proof of Theorem~\ref{conj0307}]
The theorem follows from Theorem~\ref{mainconjecture}.
\end{proof}

Since we have proved Theorem~\ref{conj0307}, by using the quasi-trivial transformation~\eqref{qM92phi2} with Theorem~\ref{structureFgphi719}, 
we can further prove that the abstract local WK mapping hierarchy~\eqref{mappinghierarchy} has the more precise form:
\beq\label{mappinghierarchydivergence}
D_S(U^\varphi) \= 
\p \Biggl(\int^{U^\varphi} S+\sum_{g\ge1} \e^{2g} \sum_{\lambda \in \mathcal{P}_{2g}} 
\sum_{j=1}^{\ell(\lambda)+g-1}  
Y^\varphi_{\lambda,j}(l_1(U^\varphi),\dots) \, S^{(j)}(U^\varphi) \, U^\varphi_\lambda\Biggr)\,,
\eeq
where 
$Y^\varphi_{\lambda,j}(\ell_1,\dots)\in \mathcal{R}^{[\ell(\lambda)+g-1-j]}$, and 
$l_k(U^\varphi)$ are defined in~\eqref{deflkv63}.

For $\varphi\in\G$, we have
\beq
\e^2 \, \frac{\p^2 \log \F^{\varphi}(\bT;\e)}{\p T_i \p T_j} 
\= \frac{E(\bT)^{i+j+1}}{i! \, j! \, (i+j+1)} \+ \sum_{g\ge1} \e^{2g} \, \frac{\p^2 \log \F_g^{\varphi}(\bT)}{\p T_i \p T_j} \,. \label{omegaijphi114}
\eeq
According to Proposition~\ref{jetreprep}, the right-hand side of~\eqref{omegaijphi114} can be represented by the jets $V, V_1, V_2$, $\dots$. 
Substituting the inverse of the quasi-Miura transformation~\eqref{qM92phi2} into the jet representation of the 
right-hand side of~\eqref{omegaijphi114} we get a power series of~$\e^2$, which we denote by 
$\Omega^{\varphi}_{i,j}$. We have
\beq
\Omega^{\varphi}_{i,j} \= \sum_{i_1,j_1\ge0} 
\frac{\p t_{i_1}}{\p T_i} \frac{\p t_{j_1}}{\p T_j} \, \Omega^{\scriptscriptstyle \rm KdV}_{i_1,j_1} \,, \quad i,j\ge0\,.
\eeq
This implies that 
$\Omega^{\varphi}_{i,j}$ belong to $\mathcal{A}_{U^\varphi}[[\e^2]]$. In Section~\ref{section8} we will give a more general description about this.

\begin{remark}\label{remark12-717}
Recall that the Hodge hierarchy is 
a $\tau$-symmetric integrable Hamiltonian perturbation of the RH hierarchy, depending on an infinite sequence of 
parameters~\cite{DLYZ16} (see also~\cite{BPS12-1, BPS12-2}). 
The Hodge universality conjecture proposed in~\cite{DLYZ16} says that the Hodge hierarchy is a universal object 
in $\tau$-symmetric Hamiltonian integrable hierarchies, meaning that any 
$\tau$-symmetric Hamiltonian integrable hierarchy in the sense of~\cite{DLYZ16} is related to the Hodge hierarchy via a 
Miura-type transformation.  
The WK mapping hierarchy is integrable, Hamiltonian (actually bihamiltonian) and 
possesses a $\tau$-structure. However, in general its hamiltonian densities cannot be chosen 
to satisfy the $\tau$-symmetry condition of~\cite{DLYZ16, DZ-norm}. So our result is consistent with~\cite{DLYZ16}.
\end{remark}

Using the explicit expressions~\eqref{p1explicit1126}--\eqref{p2explicit1126} one can easily compute 
the central invariant of the pencil $P^\varphi_2+\lambda P^\varphi_1$. Although this invariant
can be deduced from the result of~\cite{LWangZ} or Proposition~\ref{centralprop1202}, we give an 
explicit computation.
The canonical coordinate for the pencil $P^\varphi_2(U^\varphi)+\lambda P^\varphi_1(U^\varphi)$ is $\varphi(U^\varphi)$.
Perform the following Miura-type transformation to~\eqref{WKmappingham1126}:
\beq
\hat u \= \varphi(U^{\varphi}) \,.
\eeq
The Poisson operators $P_1,P_2$ in the $\hat u$-coordinate read:
\begin{align}
& P^\varphi_1(\hat u) \= \frac12 \, \varphi'(\varphi^{-1}(\hat u)) \circ \p 
\+ \frac12 \, \p \circ \varphi'(\varphi^{-1}(\hat u))  \+  \cdots \,, \\
& P^\varphi_2(\hat u) \= \frac12 \, \hat u \, \varphi'((\varphi^{-1}(\hat u)) \circ \p 
\+ \frac12 \, \p \circ \hat u \circ \varphi'(\varphi^{-1}(\hat u))  \+  \cdots \,,
\end{align}
where ``$\cdots$" denotes terms containing $\e^2, \e^4$, $\cdots$. In particular, 
the terms containing~$\e^2$ can be obtained from~\eqref{p1explicit1126}--\eqref{p2explicit1126}.
Now using \cite[formula~(1.49)]{DLZ06} we find that the central invariant is the constant-valued function $1/24$ 
for any~$\varphi$. 
So the WK mapping hierarchy leads to a construction of the representatives of Poisson pencils for 
\beq\label{203719}
(g,c) \= \Bigl(\varphi'(\varphi^{-1}(\hat u)),\frac1{24}\Bigr) \,.
\eeq 

Let us now give a proof of Theorem~\ref{theoremnew1} (or equivalently Theorem~10$'$).
\begin{proof}[Proof of Theorem~\ref{theoremnew1}] 
The necessity is already implied by Theorem~\ref{theorembiham43}.
For the sufficiency, using the argument given after the statement of Theorem~\ref{theorembiham43}, we know that, 
under a proper choice of the Poisson pencil for the bihamiltonian perturbation under consideration, 
the central invariant equals $1/24$. Since the WK mapping hierarchy, which is bihamiltonian, also 
has central invariant~$1/24$, the results \cite[Theorems~1 and~2]{LZ05} and \cite[Theorem~1.8]{DLZ06} then imply that 
the given bihamiltonian perturbation is Miura equivalent to the WK mapping hierarchy. 
As the WK mapping hierarchy has a tau-structure, the sufficiency part is proved by recalling Lemma~\ref{haminvtaulemma1203}.
\end{proof}
The above proof
 immediately leads also to a proof of the following theorem, which we call
the {\it WK mapping universality theorem}.
\begin{theorem}  \label{WKmuconj226}
The abstract local WK mapping hierarchy is a universal object for bihamiltonian perturbations of the abstract local RH hierarchy possessing  
a $\tau$-structure.
\end{theorem}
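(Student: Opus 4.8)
The plan is to reuse, essentially verbatim, the sufficiency argument given in the proof of Theorem~\ref{theoremnew1}, now tracking the dispersionless data carefully so as to exhibit the relevant~$\varphi$. First I would start from an arbitrary bihamiltonian perturbation of the abstract local RH hierarchy possessing a $\tau$-structure. By Theorem~\ref{theoremnew1}, in its reformulated version Theorem~10$'$, after a suitable change of the associated Poisson pencil $\widetilde P_1 = c P_2 + d P_1$, $\widetilde P_2 = a P_2 + b P_1$ the central invariant can be normalized to the constant~$1/24$. Let $g(\hat u)$ denote the function characterizing the hydrodynamic limit of this normalized pencil in the canonical coordinate~$\hat u$.

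The key step is then to produce a group element $\varphi\in\G$ whose associated WK mapping hierarchy carries exactly this dispersionless bihamiltonian data. By~\eqref{203719} the WK mapping hierarchy associated to~$\varphi$ realizes the pair $(g,c)=(\varphi'(\varphi^{-1}(\hat u)),1/24)$, so it suffices to solve $\varphi'(\varphi^{-1}(\hat u))=g(\hat u)$ for~$\varphi$. Writing $V=\varphi^{-1}(\hat u)$, this becomes the first-order ODE $\varphi'(V)=g(\varphi(V))$, which by separation of variables admits a unique formal-power-series solution with $\varphi(c)=c$ and $\varphi'(c)=1$, provided the value of~$g$ at the base point satisfies $g(c)=1$; the resulting $\varphi$ then lies in~$\G$. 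The residual normalization $g(c)=1$ is arranged using the scaling freedom recorded in~\eqref{tildec1204} and its companion formula for~$\tilde g$: taking $c=0$, $a=1$ in the pencil change leaves the central invariant equal to~$1/24$ while replacing $g$ by $g/d$ and shifting $\hat u$ by~$b$, so one may always select $d$ and~$b$ to bring $g$ to value~$1$ at the chosen base point.

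With this $\varphi$ fixed, both the given perturbation and the abstract local WK mapping hierarchy associated to~$\varphi$ are bihamiltonian perturbations sharing the same hydrodynamic limit~$g$ and the same central invariant~$1/24$. By \cite[Theorems~1 and~2]{LZ05} together with \cite[Theorem~1.8]{DLZ06}, two such bihamiltonian perturbations are equivalent under a Miura-type transformation. Since the WK mapping hierarchy possesses a $\tau$-structure and the class of bihamiltonian perturbations possessing a $\tau$-structure is invariant under Miura-type transformations by Lemma~\ref{haminvtaulemma1203}, this identifies the given perturbation, up to Miura equivalence, with a WK mapping hierarchy, which is precisely the asserted universality.

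I expect the main obstacle to lie in the normalization bookkeeping of the middle step: one must verify that the two-parameter freedom in~\eqref{tildec1204}, together with the choice of base point~$c$, is rich enough to bring an arbitrary admissible~$g$ into the image of the map $\varphi\mapsto\varphi'(\varphi^{-1}(\cdot))$ while \emph{simultaneously} preserving the normalization $c(u)\equiv1/24$. This is really a matching statement for flat-pencil data and should follow from the Dubrovin--Novikov correspondence between Poisson pencils of hydrodynamic type and flat pencils of metrics, but it must be checked that the constraint $\varphi\in\G$ (in particular $\varphi'(c)=1$) introduces no genuine obstruction beyond the single condition $g(c)=1$ that the scaling freedom resolves.
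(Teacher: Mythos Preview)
Your proof is correct and follows essentially the same approach as the paper: the paper simply remarks that its proof of Theorem~\ref{theoremnew1} immediately yields Theorem~\ref{WKmuconj226}. Your argument is a careful unfolding of that same sufficiency step, making explicit the choice of~$\varphi$ (by solving $\varphi'(V)=g(\varphi(V))$) and the attendant normalization~$g(c)=1$, both of which the paper leaves implicit; your residual concern about that normalization is not a genuine obstruction, since the two-parameter pencil freedom with $c=0$, $a=1$ suffices exactly as you describe.
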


To make the content of Theorem~\ref{WKmuconj226} clearer,
and at the same time as an additional check that the statement is correct, we provide 
some direct verifications up to order $\e^8$ (i.e., up to and including $a_3$), with $q_2$ and $q_3$ given by~\eqref{q2q31129}.
Namely, the following Miura-type transformation 
\beq
U^\varphi \;\mapsto\; w \= M(U^\varphi) \+ 
\sum_{k=1}^4 \e^{2k} \sum_{\lambda \in \mathcal{P}_{2k}} C_\lambda(U^\varphi) \, U^\varphi_{\lambda}  \+ \mathcal{O}(\e^{10})
\eeq 
transforms the WK mapping hierarchy~\eqref{mappinghierarchy} to the standard form \eqref{standardform1113}, \eqref{hstand1128} 
up to and including the $\e^8$ term.
Here, 
\beq
M(U^\varphi) \= \int_0^{U^{\varphi}} \sqrt{\varphi'(y)} \, dy\,,
\eeq
\beq
a_0(w) \= M'(M^{-1}(w)) \,,
\eeq
and $C_\lambda$ for $|\lambda| \le 8$ are explicit expressions, e.g.,
\begin{align}
C_{(2)}(U^\varphi) & \=  -\frac{q}{8} \, (\varphi'(U^\varphi))^{3/2}\,, \\
C_{(1^2)}(U^\varphi) & \=  -\frac{q}8 \, (\varphi'(U^\varphi))^{1/2} \, \varphi''(U^\varphi) \+ \frac{\varphi''(U^\varphi)^2}{24 \, \varphi'(U^\varphi)^{3/2}}
 - \frac{\varphi^{(3)}(U^\varphi)}{48 \sqrt{\varphi'(U^\varphi)}} \, , \\
C_{(1^8)}(U^\varphi) & \= - \frac{107}{185794560} \, \frac{\varphi^{(12)}(U^\varphi)}{\sqrt{\varphi'(U^\varphi)}} 
\+ ~{\rm more~than~two~hundred~terms} \,. 
\end{align}

We end this section with one more remark.

\begin{remark}
Using the Miura-type transformation we can  
  assume that $\Omega_{1,1}=U$ in the setting for 
 hamiltonian perturbations~\eqref{ham115} of the RH hierarchy possessing   
 a $\tau$-structure. When $\Omega_{1,1}=U$, we can define 
the {\it normal Miura-type transformation}, which has the form
\beq
 U \; \mapsto \; \tilde U \= \sum_{k\geq0} \tilde U^{[k]} \, \e^k \= U \+ \e^2 \, \p^2(A(U, U_1, U_2,\dots;\e))
\eeq 
for some $A\in  \mathcal{A}_U[[\e]]_0$.
So normal Miura-type transformations 
are a special class, but whenever we use the word ``normal" we mean that the
transformation does not just act on the hierarchy as usual, 
but also acts on the $\tau$-structure 
by
\beq
 \tilde \Omega_{S_1,S_2} \,:=\, \Omega_{S_1,S_2} \+ \e^2 \, D_{S_1} D_{S_2} (A)\,.
\eeq
Since the normal Miura-type transformation  
changes the resulting $\tau$-function when $A\neq{\rm constant}$, it plays the 
role of choosing different $\tau$-structures. However, in this paper, we do not use this terminology. 
\end{remark}

\section{A special group element and the Hodge--WK correspondence}\label{sectionexample223}
In this section, we will consider the particular example for the WK mapping hierarchy given by 
\beq\label{examplephi}
\varphi_{\rm special}(V) \= \frac{e^{2 \, q \, V} - 1}{2 \, q} \= V \+ q\, V^2 \+ \frac{2}3 \, q^2 \, V^3 \+ \cdots\, \in \, \G
\eeq
(as in~\eqref{varphispecialdefi})
over the ground ring $R=\QQ[q]$, where $q$ is a free parameter. The inverse group element is $f_{\rm special}(v)=\log(1+2qv)/2q$.
We will establish a relationship between the WK mapping partition function associated to~\eqref{examplephi} and 
a Hodge partition function. 

Before entering into the details, we recall some general terminology for the Hodge side.  
Let $\EE_{g,n}$ be the rank~$g$ Hodge bundle on~$\overline{\M}_{g,n}$. 
By {\it Hodge integrals} we mean integrals of the form
\beq\label{hodgeintegraldef}
\int_{\overline{\M}_{g,n}} \, \psi_1^{i_1}\cdots\psi_n^{i_n} \, \lambda_{j_1} \cdots \lambda_{j_m}\,,
\eeq
where $\lambda_j:=c_j(\EE_{g,n})$, $j=0,\dots,g$, are Chern classes of~$\EE_{g,n}$, and $0\leq j_1,\dots,j_m\leq g$. 
The degree-dimension matching now reads
\beq
\sum_{a=1}^n \, i_a \+ \sum_{b=1}^m \, j_b \= 3 g -  3 \+ n \,.
\eeq
We also denote by ${\rm ch}_{j}(\mathbb{E}_{g,n})$, $j\ge0$, the components of the Chern character of~$\EE_{g,n}$. 
Mumford's relation tells that even components of the Chern character must vanish. 
The Hodge partition function~\cite{DLYZ16, FP00, Gi01} is then 
defined by 
\beq\label{hodgepar1111}
Z_{\Omega(\boldsymbol{\sigma})}(\bt;\epsilon) \= 
\exp\biggl(\,\sum_{g, \,n\ge0} \frac{\e^{2g-2}}{n!} 
\int_{\overline{\M}_{g,n}} \, \Omega_{g,n}(\boldsymbol{\sigma}) \cdot t(\psi_1)\cdots t(\psi_n)\biggr)
\eeq 
where $\bt=(t_0,t_1,t_2,\dots)$ as before, $\boldsymbol{\sigma}=(\sigma_1,\sigma_3,\sigma_5,\dots)$ is an infinite tuple of parameters, and 
\beq\label{chernchrank1omega}
\Omega_{g,n}(\boldsymbol{\sigma}) \= \exp \biggl(\sum_{j\geq1} \, \sigma_{2j-1} \, {\rm ch}_{2j-1}(\mathbb{E}_{g,n}) \biggr) \,.
\eeq
Obviously, $Z_{\Omega({\bf 0})}(\bt;\epsilon)=Z^{\scriptscriptstyle\rm WK}(\bt;\e)$.
The logarithm $\log Z_{\Omega(\boldsymbol{\sigma})}(\bt;\epsilon)=:\F_{\Omega(\boldsymbol{\sigma})}(\bt; \epsilon)$ is called the 
{\it free energy of Hodge integrals}, for short the {\it Hodge free energy}. 
By definition the free energy $\F_{\Omega(\boldsymbol{\sigma})}(\bt; \epsilon)$ admits the genus expansion:
\beq\label{defHg1109}
\F_{\Omega(\boldsymbol{\sigma})}(\bt; \epsilon) 
\,=:\, \sum_{g\ge0} \, \e^{2g-2} \, \F_{\Omega_g(\boldsymbol{\sigma})}(\bt) \,. 
\eeq
We call $\F_{\Omega_g(\boldsymbol{\sigma})}(\bt)$ $(g\ge0)$ the {\it genus~$g$ Hodge free energy}. 

Faber and Pandharipande in \cite{FP00}
obtain the following explicit formula for the Hodge partition function:
\beq
Z_{\Omega(\boldsymbol{\sigma})}(\bt;\epsilon) \= e^{\sum_{k\ge1} \frac{B_{2k}}{(2k)!} \sigma_{2k-1} D_k} \bigl(Z^{\scriptscriptstyle\rm WK}(\bt;\e)\bigr) \, ,
\eeq
where $B_m$ denote the $m$th Bernoulli number, and $D_k$, $k\geq1$, are operators given by
\beq
D_k \= \frac{\p}{\p t_{k}} - \sum_{i\ge0} \, t_i \, \frac{\p}{\p t_{i+2k-1}} \+ \frac{\e^2}2 \, \sum_{m=0}^{2k-2} \, (-1)^m \frac{\p^2}{\p t_m\p t_{2k-2-m}} \,.
\eeq
This formula is interpreted by Givental~\cite{Gi01} as a Givental group action, and is generalized 
 from the viewpoint of Virasoro-like algebra in~\cite{LYZZ22}.

It was shown that the Hodge partition function gives rise to a $\tau$-symmetric integrable hierarchy of hamiltonian evolutionary PDEs,
 called the {\it Hodge hierarchy}, so that the Hodge partition function is a $\tau$-function for the 
 Hodge hierarchy \cite{BPS12-1,BPS12-2,DLYZ16,DZ-norm}.
Roughly speaking, the Hodge hierarchy is an integrable perturbation of the 
KdV hierarchy, with $\sigma$'s being the deformation parameters.

The interest of this section will be focused on the following specialization of the parameters in the Hodge partition function:
\beq\label{sigmaspecial1111}
\sigma_{2j-1}^{\rm special} \= (4^j-1) \, (2j-2)! \, q^{2j-1} \,, \quad j\geq1\,.
\eeq 
Using relations between the Chern character and the Chern polynomial~\cite{DLYZ16}, we have 
\beq\label{Omegaspecial81}
\Omega_{g,n}(\boldsymbol{\sigma}^{\rm special}) \= \Lambda_{g,n}(2q)^2 \, \Lambda_{g,n}(-q) \,=:\, \Omega^{\rm special}_{g,n}(q) \,,
\eeq
where $\Lambda_{g,n}(z)=\sum_{j=0}^g \lambda_j \, z^j$ denotes the Chern polynomial of~$\EE_{g,n}$. 
We call Hodge integrals with $\Omega^{\rm special}_{g,n}(q)$ the {\it special-Hodge integrals}, whose 
significance 
is manifested by the Gopakumar--Mari\~no--Vafa conjecture
regarding the Chern--Simons/string duality~\cite{GV99, MV02, LLZ03}, and is discussed in~\cite{DLYZ16, DLYZ20, DY17, YZ21}
from the viewpoints of bihamiltonian structures and random matrices.  
We call $Z_{\Omega^{\rm special}(q)}(\bt;\epsilon)$ and $\F_{\Omega^{\rm special}(q)}(\bt;\epsilon)$ 
the {\it special-Hodge partition function} and the {\it special-Hodge free energy}, respectively.

The Hodge hierarchy~\cite{DLYZ16} under the specialization~\eqref{sigmaspecial1111}, 
called the {\it special-Hodge hierarchy}, has the form
\begin{align}
&\frac{\p w}{\p t_1} \= 
w \, w' \+ \e^2 \Bigl(\frac{1}{12} w''' + \frac{q}{4} w'  w'' \Bigr) \label{hodgeh1}\\
& \quad +\, \e^4 \Bigl(\frac{q}{240} w'''''  +
\frac{q^2}{80} w'\,w'''' + \frac{q^2}{16} w'' w''' + \frac{q^3}{180} w'^2 w''' + \frac{q^3}{90} w' w''^2 \Bigr) \+ \mathcal{O}(\e^6) \,, \nn\\
&\frac{\p w}{\p t_i} \= \frac1{i!} \, w^i w' \+ \mathcal{O}(\e^2) \,, \quad i\geq 2\,. \label{hodgeh2}
\end{align}
Here, $w:=\e^2 \p_{t_0}^2 \bigl(\F_{\Omega^{\rm special}(q)}(\bt;\epsilon)\bigr)$ is the normal coordinate, and, 
prime, ${}'$, denotes the derivative with respect to~$t_0$. 

\subsection{The Hodge--WK correspondence}
The goal is to connect $Z_{\Omega^{\rm special}(q)}(\bt;\epsilon)$ with $Z^{\varphi_{\rm special}}(\bt;\e)$.
Let $\bt$ and $\bT$ be related by $\bT=\bt.\varphi_{\rm special}$ as in~\eqref{defsigmaction}
 with $\varphi_{\rm special}$ given by~\eqref{varphispecialdefi} or~\eqref{examplephi}. 
We have the following lemma.
\begin{lemma}\label{alexandrovmap}
We have 
\begin{align}\label{Ttmap}
& T_i - \delta_{i,1} \= \sum_{m=0}^i \, A(i,m) \, q^{i-m} \, (t_m-\delta_{m,1}) \,, \qquad i\geq0\,,
\end{align}
where
\beq\label{Adef}
A(i,m) \;:=\; \frac1{2^m\, m!} \, \sum_{j=0}^m \, (-1)^{m-j} \, (2j+1)^i \, \binom{m}{j} \,,
\eeq
or equivalently via the following generating series
\begin{align}
T(z) \= \sum_{m\geq0} \, (t_m-\delta_{m,1}) \, \frac{z^m}{(1-qz)(1-3qz)\cdots(1-(2m+1)qz)} \,. \label{Agen}
\end{align}
The inverse map is given by
\beq\label{tTmap814}
t_m - \delta_{m,1} \= \sum_{i=0}^m \, P(m,i) \, q^{m-i} \, (T_i-\delta_{i,1}) \,,
\eeq
where 
\beq
P(m,i) \,:=\, (-1)^{m-i} \, \sum_{j=i}^m  \, (-2)^{m-j} \, \binom{j}{i} \, s(m,j) \,.
\eeq
Here, $s(m,j)$ denotes the Stirling number of the first kind.
\end{lemma}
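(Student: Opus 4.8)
The plan is to prove the three components of the lemma — the explicit map \eqref{Ttmap}--\eqref{Adef}, its repackaging as the generating series \eqref{Agen}, and the inverse \eqref{tTmap814} — by reducing everything to a single clean linear identity among exponential generating functions and then carrying out two explicit coefficient extractions, one elementary and one using classical special numbers.

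First I would pass to the shifted variables $\tilde t_m := t_m-\delta_{m,1}$ and $\tilde T_i := T_i-\delta_{i,1}$, as in~\eqref{ThatT}. Substituting the definition~\eqref{Bvt1001} of $B_{\bt}$ into the defining relation~\eqref{defsigmaction} and splitting off the degree‑one term $V$ (resp.\ $\varphi(V)$) produced by the $\delta_{i,1}$, the affine constants cancel and one is left with the purely linear identity
\[
\sum_{i\ge0}\tilde T_i\,\frac{V^i}{i!} \= \sqrt{\varphi'(V)}\,\sum_{m\ge0}\tilde t_m\,\frac{\varphi(V)^m}{m!}\,.
\]
This is the engine for the entire lemma, and the analogous identity with $\varphi$ replaced by $\varphi^{-1}$ (and the roles of $\bt,\bT$ exchanged) will give the inverse map.

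Second, for $\varphi=\varphi_{\rm special}$ one has $\sqrt{\varphi'(V)}=e^{qV}$ and $\varphi(V)^m=(2q)^{-m}(e^{2qV}-1)^m=(2q)^{-m}\sum_{j=0}^m(-1)^{m-j}\binom mj e^{2qjV}$, so the right‑hand side above is a finite combination of exponentials $e^{(2j+1)qV}$. Reading off the coefficient of $V^i/i!$ yields $\tilde T_i=\sum_{m\le i}A(i,m)\,q^{i-m}\tilde t_m$ with $A(i,m)$ exactly as in~\eqref{Adef}; the truncation at $m=i$ follows because $A(i,m)$ is the $m$‑th finite difference $\sum_{j}(-1)^{m-j}\binom mj(2j+1)^i$ of a polynomial of degree $i<m$ in $j$, hence vanishes. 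To obtain~\eqref{Agen} I would observe that the ordinary generating function $T(z)=\sum_i\tilde T_i z^i$ is the formal Borel sum of the exponential generating function above, in which each $e^{\alpha V}$ is replaced by $(1-\alpha z)^{-1}$; applied to $e^{(2j+1)qV}$ this produces $(1-(2j+1)qz)^{-1}$, and the required identity
\[
\sum_{j=0}^m\frac{(-1)^{m-j}}{2^m m!}\binom mj\,\frac1{1-(2j+1)w}\=\frac{w^m}{\prod_{k=0}^m(1-(2k+1)w)}
\]
is a one‑line partial‑fraction computation, the residue at the pole $w=1/(2j+1)$ being precisely $\frac{(-1)^{m-j}}{2^m m!}\binom mj$ after simplifying $\prod_{k\ne j}\bigl(1-\tfrac{2k+1}{2j+1}\bigr)$.

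Finally, the inverse comes from the linear identity applied to $f=\varphi_{\rm special}^{-1}$, where $f(v)=\tfrac{1}{2q}\log(1+2qv)$ and $\sqrt{f'(v)}=(1+2qv)^{-1/2}$. The clean move is the substitution $w=f(v)$, under which $v=\varphi_{\rm special}(w)$, $\sqrt{f'(v)}=e^{-qw}$, and $\sqrt{f'(v)}\,f(v)^i/i!=e^{-qw}\,w^i/i!$; expanding $e^{-qw}=\sum_{\ell}(-qw)^\ell/\ell!$ and invoking the standard generating function $\tfrac{1}{k!}(\log(1+x))^k=\sum_{m\ge k}s(m,k)\,x^m/m!$ for the signed Stirling numbers of the first kind, the coefficient of $v^m/m!$ collects into $q^{m-i}\sum_{j=i}^m(-1)^{j-i}2^{m-j}\binom ji s(m,j)$, which matches $P(m,i)\,q^{m-i}$ after the elementary rewriting $(-1)^{j-i}=(-1)^{i+j}=(-1)^{m-i}(-1)^{m-j}$. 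I expect the only delicate points to be pure bookkeeping — tracking signs and powers of $2q$, and confirming that the Borel‑sum/partial‑fraction passage correctly links the exponential and ordinary generating functions — rather than any conceptual obstacle, since the full statement is a transparent consequence of the single linear identity together with two classical expansions.
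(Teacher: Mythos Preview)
Your proposal is correct and is precisely the ``direct verification'' and ``elementary exercise'' that the paper's one-line proof alludes to; you have carried out in full the exponential-generating-function computation (via the linear identity of Remark~\ref{remark2}) and the classical Stirling expansion for the inverse, with all signs and powers of $2q$ checked. One terminological quibble: the passage from the exponential generating function to the ordinary one (replacing $e^{\alpha V}$ by $(1-\alpha z)^{-1}$) is the formal Laplace transform rather than the Borel transform, but your actual computation is unaffected.
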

\begin{proof}
Formula~\eqref{Ttmap} can be obtained via a direct verification. The rest of the proof is an elementary exercise. 
\end{proof}

It follows from the definition of Stirling numbers that the $P(m,i)$ admit the generating function:
\beq
\sum_{i=0}^{m} \, P(m,i) \, z^i \= (z-1)(z-3)(z-5) \cdots (z-(2m-1)) \= 2^m\,\Bigl(\frac{z}2-\frac{2m-1}2\Bigr)_{m} \,, 
\eeq
and also that $P(m,0)$ and $P(m,1)$ have the more explicit expressions
\beq\label{pm1116}
P(m,0) \= (-1)^m (2m-1)!!\,, \;  \; 
P(m,1) \= (-1)^{m-1} (2m-1)!! \, \sum_{j=0}^{m-1} \, \frac{1}{2j+1} \,,
\eeq
the first values of $P(m,1)$ being 0, 1, $-4$, $23$, $-176$, $1689$.

The loop equation~\eqref{eqn:loopdz} now reads explicitly as follows:
\begin{align}
&  \sum_{k\ge0} \, \sum_{m=1}^k \, \binom{k}{m} \, \Biggl(\p^{m-1} \biggl(\frac{\sqrt{1+2 q v}}{(1+2 \lambda  q) \sqrt{\lambda - v}}\biggr) \, 
\p^{k-m+1} \biggl(\frac{\sqrt{1+2 q v}}{(1+2 \lambda  q) \sqrt{\lambda - v}}\biggr) \Biggr)_- \, \frac{\p F_{\rm h.g.}^{\varphi_{\rm special}}}{\p V_{k}}  \label{230} \\
& + \, \sum_{k\ge0} \p^k\biggl(\frac{1+2 q v}{(1+2 \lambda  q)^2 (\lambda - v)}\biggr)_- \, \frac{\p F_{\rm h.g.}^{\varphi_{\rm special}}}{\p V_{k}} \nn\\
& - \frac{\e^2}2 \, 
\sum_{k_1,k_2\ge0} \, \Biggl(\p^{k_1+1} \biggl(\frac{\sqrt{1+2 q v}}{(1+2 \lambda  q) \sqrt{\lambda - v}}\biggr) \, 
\p^{k_2+1} \biggl(\frac{\sqrt{1+2 q v}}{(1+2 \lambda  q) \sqrt{\lambda - v}}\biggr) \Biggr)_- \, 
s \bigl(F_{\rm h.g.}^{\varphi_{\rm special}}, V_{k_1}, V_{k_2} \bigr) \nn\\
& - \frac{\e^2}{16} \, \sum_{k\ge0} \p^{k+2} \biggl(\frac{1}{(\lambda-v)^2}\biggr)
\frac{\p F_{\rm h.g.}^{\varphi_{\rm special}}}{\p V_{k}} \= \frac{1} {16 \, (\lambda-v)^2}  \,. \nn
\end{align}
(Recall that ``h.g." stands for 
``higher genera" and refers to the sum over all contributions from~$g>0$, 
while $\varphi_{\rm special}$ is the function defined in~\eqref{examplephi}.)

We are ready to give a proof of the Hodge--WK correspondence.
\begin{proof}[Proof of Theorem~\ref{thmhwk}]
According to Definition~\ref{definitionZphi}, 
$$
Z^{\varphi_{\rm special}}(\bT;\e) \= Z^{\scriptscriptstyle\rm WK}(\bt,\e) \,.
$$
To show~\eqref{mainidentity}, it is equivalent to show
\beq
Z_{\Omega^{\rm special}(q)}(\bT; \e) \= Z^{\varphi_{\rm special}}(\bT;\e) \, ,
\eeq 
or equivalently to show for all $g\ge0$, 
$\F_{\Omega_g^{\rm special}(q)}(\bT) = \F^{\varphi_{\rm special}}_g(\bT)$.
Let us first prove their genus zero parts are equal, 
and by Theorem~\ref{thmgenus0} this is equivalent to the following known fact for Hodge integrals:
\beq
\F_{\Omega_0^{\rm special}(q)}(\bT) \= \F_0(\bT) \,.  
\eeq 

Let us continue to show the higher genus parts of $\F_{\Omega^{\rm special}(q)}(\bT; \e) $ 
and $\F^{\varphi_{\rm special}}(\bT;\e)$ are equal. 
To this end, we first notice that both $\F_{\Omega_g^{\rm special}(q)}(\bT)$ and 
$\F^{\varphi_{\rm special}}_g(\bT)$ for $g\ge1$ admit the jet representations. Indeed, for
$\F^{\varphi_{\rm special}}_g(\bT)$, the jet representation is given by Proposition~\ref{jetreprep}; 
for $\F_{\Omega_g^{\rm special}(q)}(\bT)$, 
it is known from for example~\cite{DLYZ16, DLYZ20} that 
\begin{align} 
& \F_{\Omega_1^{\rm special}(q)}(\bT) \= F_{\Omega_1^{\rm special}(q)}\biggl(E(\bT), \frac{\p E(\bT)}{\p T_0}\biggr)\,, \label{h1815-1}
\end{align}
with
\begin{align}
& F_{\Omega_1^{\rm special}(q)}(V_0,V_1) \,:=\, \frac1{24} \, \log V_1 \+ \frac{q}{8} \, V_0 \,, \label{h1815-2}
\end{align}
and that for each $g\geq2$ there exists
$$F_{\Omega_g^{\rm special}(q)}=F_{\Omega_g^{\rm special}(q)}(V_1,\dots,V_{3g-2}) \;\in\; \QQ[q]\bigl[V_1^{-1},V_1,V_2,\dots,V_{3g-2}\bigr]\,,$$
satisfying
\begin{align}
&\sum_{k=1}^{3g-2} \, k \, V_k \, \frac{\p F_{\Omega_g^{\rm special}(q)}}{\p V_k} \= (2g-2) \, F_{\Omega_g^{\rm special}(q)} \,, \label{2g-2Fom1203}\\ 
& q \, \frac{\p F_{\Omega_g^{\rm special}(q)}}{\p q} \+ \sum_{k=1}^{3g-2} \, (k-1) \, V_k \, \frac{\p F_{\Omega_g^{\rm special}(q)}}{\p V_k} 
\= (3g-3) \, F_{\Omega_g^{\rm special}(q)} \,, 
\end{align}
such that 
\beq
\F_{\Omega_g^{\rm special}(q)}(\bT) 
\= F_{\Omega_g^{\rm special}(q)}\biggl(\frac{\p V(\bT)}{\p T_0}, \dots, \frac{\p^{3g-2} V(\bT)}{\p T_0^{3g-2}}\biggr)\,.
\eeq

So, in order to show $F_{\Omega_g^{\rm special}(q)}(\bT)=\F^{\varphi_{\rm special}}_g(\bT)$, $g\geq1$, it suffices to show that 
\beq\label{HFg810}
F_{\Omega_g^{\rm special}(q)} \= F_g^{\varphi_{\rm special}} \,.
\eeq

For $g=1$, \eqref{HFg810} is true due to~\eqref{h1815-2}, \eqref{genus1K1}, \eqref{examplephi}.

For $g\ge2$, let us compare the loop equations. 
We use $F_{\rm h.g.}^{\Omega^{\rm special}(q)}$ to denote $\sum_{g\ge1} \e^{2g-2} F_{\Omega_g^{\rm special}(q)}$. The following loop equation 
for $F_{\rm h.g.}^{\Omega^{\rm special}(q)}$ can be obtained from~\cite{DLYZ20}: 
\begin{align}
& \sum_{k\geq 0} 
\biggl(\p^k \Bigl(\frac1{P^2}\Bigr) + \sum_{r=1}^k \binom{k}{r} \, \p^{r-1} \Bigl(\frac1P\Bigr) \, \p^{k-r+1} \Bigl(\frac1P\Bigr)\biggr)  
\frac{\p F_{\rm h.g.}^{\Omega^{\rm special}(q)}}{\p V_k}  \label{loopf} \\
& - \, \frac{\e^2}2 \, \sum_{k_1, k_2 \ge 0} \p^{k_1+1} 
\Bigl(\frac1P\Bigr) \p^{k_2+1} \Bigl(\frac1P\Bigr)  \Biggl(\frac{\p F_{\rm h.g.}^{\Omega^{\rm special}(q)}}{\p V_{k_1}} 
\frac{\p F_{\rm h.g.}^{\Omega^{\rm special}(q)}}{\p V_{k_2}} + \frac{\p^2 F_{\rm h.g.}^{\Omega^{\rm special}(q)}}{\p V_{k_1} \p V_{k_2}}\Biggr)  \nn\\
& - \, \frac{\e^2}{16} \, \sum_{k\ge0} \, \p^{k+2} \biggl(\frac1{P^4} + \frac{4 q}{P^2} \biggr) \, 
\frac{\p F_{\rm h.g.}^{\Omega^{\rm special}(q)}}{\p V_k} 
\,-\, \frac{q} {4 P^2} \,-\, \frac1{16 P^4} \= 0 \,, \nn
\end{align}
where $P=\sqrt{-\frac1{2q}-\frac{4 \, e^{-2qV}}{\lambda}}$ and $\p=\sum_k V_{k+1} \p/\p V_k$. Recall that the loop equation~\eqref{loopf} holds identically in~$\lambda$ and so holds identically in~$P$. Note that 
\beq\label{dp815}
\p(P) \= - \Bigl(q\, P + \frac{1}{2 P}\Bigr) \, V_1 \,.
\eeq
Introduce
\beq
\widetilde P \= e^{-q V} \sqrt{\lambda -\frac{e^{2 q V}-1}{2 q}} \,,
\eeq
and observe that 
\beq
\p\bigl(\widetilde P\bigr) \= - \biggl(q \, \widetilde P + \frac{1}{2 \widetilde P}\biggr) \, V_1 \,,
\eeq
which has the same form as~\eqref{dp815}. Therefore,  
\begin{align}
& \sum_{k\geq 0} 
\biggl(\p^k \biggl(\frac1{\widetilde P^2}\biggr) 
+ \sum_{r=1}^k \binom{k}{r} \, \p^{r-1} \biggl(\frac1{\widetilde P}\biggr) \; \p^{k-r+1} \biggl(\frac1{\widetilde P}\biggr)\biggr)  
\frac{\p F_{\rm h.g.}^{\Omega^{\rm special}(q)}}{\p V_k}  \label{loopf815} \\
& - \, \frac{\e^2}2 \, \sum_{k_1, k_2 \ge0} \p^{k_1+1} 
\biggl(\frac1{\widetilde P}\biggr) \p^{k_2+1} \biggl(\frac1{\widetilde P}\biggr)  \biggl(\frac{\p F_{\rm h.g.}^{\Omega^{\rm special}(q)}}{\p V_{k_1}} 
\frac{\p F_{\rm h.g.}^{\Omega^{\rm special}(q)}}{\p V_{k_2}} + \frac{\p^2 F_{\rm h.g.}^{\Omega^{\rm special}(q)}}{\p V_{k_1} \p V_{k_2}}\biggr)  \nn\\
& - \, \frac{\e^2}{16} \, \sum_{k\ge0} \, \p^{k+2} \biggl(\frac1{\widetilde P^4} + \frac{4 \, q}{\widetilde P^2} \biggr) \, 
\frac{\p F_{\rm h.g.}^{\Omega^{\rm special}(q)}}{\p V_k} \,-\, \frac{q} {4 \widetilde P^2} \,-\, \frac1{16 \widetilde P^4} \= 0 \nn
\end{align}
holds identically in~$\widetilde P$. Dividing both sides of~\eqref{loopf815} by $(1+2\lambda q)^2$, we obtain 
\begin{align}
& \sum_{k} \sum_{m=1}^k \, \binom{k}{m} \, \p^{m-1} \Biggl(\frac{e^{q V}/(1+2\lambda q)}{ \bigl(\lambda -\frac{e^{2 q V}-1}{2 q}\bigr)^{1/2}}\Biggr) \, 
\p^{k-m+1} \Biggl(\frac{e^{q V}/(1+2\lambda q)}{\bigl(\lambda -\frac{e^{2 q V}-1}{2 q}\bigr)^{1/2}}\Biggr) \, 
\frac{\p F_{\rm h.g.}^{\Omega^{\rm special}(q)}}{\p V_{k}} \nn \\
& + \, \sum_{k\ge0} \p^k\Biggl(\frac{e^{2q V}}{(1+2\lambda q)^2 \bigl(\lambda-\frac{e^{2 q V}-1}{2 q}\bigr)}\Biggr) \, 
\frac{\p F_{\rm h.g.}^{\Omega^{\rm special}(q)}}{\p V_{k}} \nn\\
& - \, \frac{\e^2}2 \, \sum_{k_1,k_2\ge0} \, 
\p^{k_1+1} \Biggl(\frac{e^{q V}/(1+2\lambda q)}{ \bigl(\lambda -\frac{e^{2 q V}-1}{2 q}\bigr)^{1/2}}\Biggr) \, 
\p^{k_2+1} \Biggl(\frac{e^{q V}/(1+2\lambda q)}{\bigl(\lambda -\frac{e^{2 q V}-1}{2 q}\bigr)^{1/2}}\Biggr) \nn \\
& \qquad \qquad \times \Biggl(\frac{\p F_{\rm h.g.}^{\Omega^{\rm special}(q)}}{\p V_{k_1}} 
\frac{\p F_{\rm h.g.}^{\Omega^{\rm special}(q)}}{\p V_{k_2}} + \frac{\p^2 F_{\rm h.g.}^{\Omega^{\rm special}(q)}}{\p V_{k_1} \p V_{k_2}}\Biggr) \nn\\
& - \frac{\e^2}{16} \, \sum_{k\ge0} \, \p^{k+2} 
\Biggl(\frac{(1+2 \lambda  q)^2}{\bigl(\lambda-\frac{e^{2 q V}-1}{2 q}\bigr)^2} \Biggr) \, 
\frac{\p F_{\rm h.g.}^{\Omega^{\rm special}(q)}}{\p V_{k}} \= \frac{1} {16 \bigl(\lambda-\frac{e^{2 q V}-1}{2 q}\bigr)^2} \,-\, \frac{q^2}{4 (1+2\lambda q)^2}  \,. \nn
\end{align}
This implies that $F_{\rm h.g.}^{\Omega^{\rm special}(q)}$ satisfies equation~\eqref{230}. Since the solution to~\eqref{230} is 
unique with~\eqref{2g-2Fom1203}, we conclude that $F_{\rm h.g.}^{\varphi_{\rm special}}= F_{\rm h.g.}^{\Omega^{\rm special}(q)}$.
The theorem is proved. 
\end{proof}

We mention that for $\varphi_{\rm special}$ 
we have the following explicit 
$v\leftrightarrow V$ map: 
\begin{align}
& M_0(Y;q) \;=\; \frac{e^{2q Y_0}-1}{2q} \,, \label{M0Y} \\
& \frac1{2q} \,+\,  \sum_{j\geq1} \, e^{-(j+2)qY_0} \, M_j(Y;q) \, \frac{\lambda^j}{j!} \;=\; \frac{1}{2q} \, \frac{w(Y;\lambda;q)^2}{\lambda^2} 
\end{align}
with $w=w(Y;\lambda;q)=\lambda+\cdots$ being the unique element in $\lambda+\CC[Y,q][[\lambda]]\lambda$ satisfying
\beq
 w \, e^{- \, q \, R(Y;w)} \= \lambda\,, \quad R(Y;w)\,:=\, \sum_{k\ge1} \, Y_k \, \frac{w^k}{k!} \,.
\eeq

Although it is not completely obvious, Theorem~\ref{thmhwk} is equivalent to the following theorem obtained  
by Alexandrov.

\smallskip

\noindent {\bf Theorem~B} (Alexandrov~\cite{Al212}). 
{\it Define an invertible $\bt \to \hat\bT$ map by 
\beq\label{amap813}
\hat T_i \= S^{i} (t_0) \,,  \quad i\geq0\,,
\eeq
where $S$ denotes the following differential operator 
\beq
S \= \sum_{m\ge0} \, t_{m+1} \, \frac{\p }{\p t_m} \+ q \, \sum_{m\ge0} \, (2m+1) \, t_{m} \, \frac{\p }{\p t_m} \,,
\eeq
and define a sequence of elements $c_m \in \QQ[q]$ by
\beq
c_0 \= c_1 \= 0\,, \quad c_m \= P(m,1) \, q^{m-1}
\eeq
with $P(m,1)$ as in~\eqref{pm1116}.
Then we have}
\beq
Z_{\Omega^{\rm special}}\bigl(\hat \bT; \e\bigr) \= Z^{\scriptscriptstyle \rm WK}(\bt;\e)|_{t_m\mapsto t_m-c_m,\, m\ge0} \,.
\eeq

 To see the equivalence between Theorem~\ref{thmhwk} and Theorem~B, 
first of all, it is an elementary exercise to verify that the linear $\bt \leftrightarrow \hat\bT$ map defined 
by~\eqref{amap813} coincides with the one given in 
equation~\eqref{tTdef627} specialized to $\varphi_{\rm special}(V)=(e^{2qV}-1)/2q$. 
The equivalence is then proved by using the relation~\eqref{thatTrelation}.

Since the WK partition function $Z^{\scriptscriptstyle \rm WK}(\bt; \e)$ is a KdV $\tau$-function, Theorem~\ref{thmhwk} 
immediately implies the following corollary.

\smallskip 

\noindent {\bf Corollary}. The partition function 
{\it $Z_{\Omega^{\rm special}(q)}(\bt . \varphi_{\rm special};\e)$ is a KdV $\tau$-function.}

\smallskip

From~\eqref{ThatT} we know that the Corollary can be equivalently stated as the following

\smallskip

\noindent {\bf Proposition~A} (Alexandrov~\cite{Al21}). {\it The partition function
$Z_{\Omega^{\rm special}(q)}\bigl(\bt | \varphi_{\rm special};\e\bigr)$ is a KdV $\tau$-function.}

\begin{remark}
Proposition~A was obtained by Alexandrov in~\cite{Al21}. Alexandrov's proof in~\cite{Al212} of Theorem~B
was based on Proposition~A. 
In an early version of the current paper, we also deduced Theorem~B  
 from Proposition~A before \cite{Al212} appeared on arXiv with a complete proof (different from Alexandrov's),  
 and also sketched a second proof, not based on Proposition~A but using the Virasoro constraints~\eqref{virasorowk} instead, which 
reduces Theorem~B to a few elementary identities including e.g. 
\begin{align}
&\sum_{i=m}^n \, \frac{2i+1}2 \, A(i,m) \, 
P(n,i) \;-\; \sum_{j\ge1} \, \frac{4^j-1}{2j} \, B_{2j} \, \sum_{i=m}^{n-(2j-1)} A(i,m) \, 
 P(n,i+2j-1) \\
&=  \delta_{n,m} \, \frac{2m+1}{2}
+ \delta_{n\geq m+1} \frac{(-1)^{n-m-1}}{(n-m) (n-m+1)}  \, \frac{(2n+1)!!}{2 \cdot (2m-1)!!}  \,, \quad \forall\, n\geq m\geq0. \nn
\end{align} 
The proof of Theorem~\ref{thmhwk} given here provides a yet different proof of Theorem~B, again not using Proposition~A (and thus self-contained),
but using the loop equations instead. 
In this proof, 
we used a technique 
found during the preparation of the paper~\cite{YZ21} by Q.~Zhang and the first-named author
 of the present paper of identifying solutions to the loop equations (although in the end that 
 technique was not given there), whereas our earlier and less self-contained 
proof preceded~\cite{YZ21}.
\end{remark}

\subsection{Two applications} 
In this subsection we will give two applications of   
the Hodge--WK correspondence. 

\smallskip

\noindent {\bf Application I.} {\it The WK--GUE correspondence.} 
Recall that the Hodge--GUE correspondence was 
recently obtained in~\cite{DLYZ20, DY17} (see also~\cite{DLYZ16}), which establishes a relationship between the special-Hodge 
partition function with $q=-1/2$ 
and the even Gaussian Unitary Ensemble (GUE) partition function  (see~\cite{AvM, BIZ, DY17-2, HZ86, Wi91, Y22}). 
So 
\beq
\varphi_{\rm special}(V)|_{q=-1/2} \= 1-e^{-V}
\eeq
is now under consideration. 

Let 
\beq\label{globaldefZ}
Z^{\scriptscriptstyle \rm eGUE1}_n({\bf s};\e) 
 \,:=\, 
2^{-\frac{n}2} (\pi \e)^{-\frac{n^2}2} 
\int_{{\mathcal H}(n)} e^{-\frac1{2\e} {\rm tr} \, M^2 + {\frac1\e}  \sum_{j\ge1} s_j {\rm tr} \, M^{2j}} dM 
\eeq
be the normalized even GUE partition function of size~$n$ \cite{BIZ, HZ86, Wi91}.
Here ${\bf s}=(s_1,s_2,\cdots)$ is an infinite tuple of indeterminates, $\e$ is an indeterminate, and 
 ``normalized" means that $Z^{\scriptscriptstyle \rm eGUE1}_n({\bf 0};\e) = 1$ for all $n$ and~$\e$.
According to~\cite{BIZ, HZ86, Hooft1, Hooft2}, 
the logarithm 
$\log Z^{\scriptscriptstyle \rm eGUE1}_n({\bf s};\e)=:\F^{\scriptscriptstyle \rm eGUE1}_n({\bf s};\e)$ has the expansion:
\beq
\F^{\scriptscriptstyle \rm eGUE1}_n({\bf s};\e) \= 
\sum_{g\ge0} \sum_{k\ge1} \sum_{j_1,\dots,j_k\ge1} a_g(2j_1,\dots,2j_k) \, s_{j_1} \dots s_{j_k} \, n^{2-2g-k+|j|} \, \e^{|j|-k}\,,
\eeq
where $|j|:=j_1+\dots+j_k$, and 
\beq
a_g(2j_1,\dots,2j_k) \= \sum_{G\in {\rm Ribb}_g(2j_1,\dots,2j_k)} \frac{2j_1\cdots 2j_k}{|{\rm Aut}(G)|}\,.
\eeq
Here ${\rm Ribb}_g(2j_1,\dots,2j_k)$ denotes the set of connected ribbon graphs of genus $g$ with 
$k$ vertices of valencies $2j_1$, \dots, $2j_k$. Note that $a_g(2j_1,\dots,2j_k)$ vanishes unless 
$2-2g-k+|j|>0$. Therefore, $\log Z^{\scriptscriptstyle \rm eGUE1}_n({\bf s};\e)\in n \, \QQ[n,\e][[{\bf s}]]$.

Let $x=n\e$ denote the t'Hooft coupling constant \cite{Hooft1, Hooft2}, and introduce 
\beq
\gamma(z)= \frac{z^2}2 \Bigl(\log z - \frac32 \Bigr) \,-\, \frac{\log z}{12} 
 \+ \sum_{g\ge2} \frac{B_{2g}}{2g (2g-2) \, z^{2g-2}} \,,
\eeq
which satisfies the second-order difference equation
\beq
\gamma(z+1) \,-\, 2 \, \gamma(z) \+ \gamma(z-1) \= \log z 
\eeq
and for $z$ large and integral gives the asymptotic expansion of $\log (1! \thin 2! \cdots (z-1)!)$ 
up to an additive affine-linear function $\zeta'(-1)+  \log (2\pi) z/2$ \cite{WW}.
Following~\cite{AvM, DY17, DY17-2}, 
define the {\it corrected even GUE free energy} (for short the {\it even GUE free energy}), 
denoted $\F^{\scriptscriptstyle \rm eGUE}(x, {\bf s};\e)$, as follows: 
\begin{align}
\F^{\scriptscriptstyle \rm eGUE}(x, {\bf s};\e) \:= 
C(x,\e) \+  \F^{\scriptscriptstyle \rm eGUE1}_{x/\e}({\bf s};\e) \,,
\end{align}
where 
\begin{align}
C(x,\e) \:= \, & \gamma(\frac{x}\e) \+ \frac{x^2}{2 \, \e^2} \log \e 
\,-\, \frac1{12} \log \e \\
= \, & \frac{x^2}{2\,\e^2} \, \Bigl(\log x - \frac32 \Bigr) \,-\, \frac{\log x}{12} 
 \+ \sum_{g\ge2} \frac{\e^{2g-2} B_{2g}}{2g \, (2g-2) \, x^{2g-2}} \,.
\end{align}
We also define the even GUE partition function $Z^{\scriptscriptstyle \rm eGUE}(x, {\bf s};\e)$ as  
$e^{\F^{\scriptscriptstyle \rm eGUE}(x, {\bf s};\e)}$. From the definition we know that 
$\F^{\scriptscriptstyle \rm eGUE}(x, {\bf s};\e)\in \e^{-2} \QQ[[x-1, \e^2]][[{\bf s}]]$ and 
$Z^{\scriptscriptstyle \rm eGUE}(x, {\bf s};\e) \in \QQ((\e^2))[[x-1]][[{\bf s}]]$.
For more details about the  
even GUE partition function see \cite{BIZ, Du09, DLYZ20, DY17, DY17-2, HZ86, Wi91, Y22}. 
For $k\geq1$, and $j_1,\dots,j_k\ge1$, introduce the notation:
\begin{align}
\langle m_{j_1}\dots m_{j_k}\rangle (x,\e) \:= 
\frac{\p^k \F^{\scriptscriptstyle \rm eGUE1}_{x/\e}({\bf s};\e)}{\p s_{j_1} \dots \p s_{j_k}} \bigg|_{{\bf s}={\bf 0}} \,.
\end{align}
 Explicitly, 
\begin{align}
\langle m_{j_1}\dots m_{j_k}\rangle (x,\e) \= k!\, \sum_{0\leq g\leq \frac{|j|}2 -\frac{k}2 +\frac12} \, a_g(2j_1,\dots,2j_k)  \, x^{2-2g-k+|j|} \, \e^{2g-2}\,.
\end{align}

The {\it modified even GUE partition function} $Z^{\scriptscriptstyle \rm meGUE}(x, {\bf s};\e)$
 is introduced in~\cite{DLYZ16} (see also~\cite{DY17}) as the unique element in $\QQ((\e^2))[[x-1]][[{\bf s}]]$ such that 
\beq
Z^{\scriptscriptstyle \rm eGUE}(x, {\bf s};\e) \= 
Z^{\scriptscriptstyle \rm meGUE}\Bigl(x-\frac\e2, {\bf s};\e\Bigr) \, Z^{\scriptscriptstyle \rm meGUE}\Bigl(x+\frac\e2, {\bf s};\e\Bigr) \,.
\eeq
This partition function also relates to the Laguerre Unitary Ensemble (LUE) or say to 
Grothendieck's dessins d'enfant \cite{GGR, YZ22, Zhou}. The logarithm 
$\log Z^{\scriptscriptstyle \rm meGUE}(x, {\bf s};\e)=:\F^{\scriptscriptstyle \rm meGUE}(x, {\bf s};\e)$ 
is called the {\it modified even GUE free energy}, which has the form
\begin{align}
& \F^{\scriptscriptstyle \rm meGUE}(x, {\bf s};\e) \= B(x,\e)  \+ 
\sum_{k\ge1} \, \frac1{k!} \, \sum_{j_1,\dots,j_k\ge1} \, 
\langle \phi_{j_1} \dots \phi_{j_k} \rangle(x,\epsilon) \, s_{j_1} \cdots s_{j_k}\,, 
\end{align}
where 
\begin{align}
& B(x,\e) \= \gamma\Bigl(\frac{x}{2\e}+\frac14\Bigr) \+ \gamma\Bigl(\frac{x}{2\e}-\frac14\Bigr) 
\+ \frac{x^2}{4\, \e^2} \log (2\e) \,-\, \frac{5}{48} \, \log (2\e) \label{263712} \\
& \quad\quad\quad \= \Bigl(\frac14 \, \log x - \frac 38\Bigr) \, \frac{x^2}{\e^2} 
\,-\, \frac{5}{48} \, \log x \,-\, \frac{53 \, \e^2}{3840 \, x^2} \+\frac{599 \, \e^4}{64512 \, x^4}  \+ \dots, \\
& \langle\phi_{j_1}\dots \phi_{j_k}\rangle(x,\e) \= \frac1{e^{\e\p_x/2}+e^{-\e \p x/2}} \,\langle m_{j_1} \dots m_{j_k} \rangle(x,\e) \,, \quad k,j_1,\dots,j_k\ge1\,.
\end{align}
Recall that the Hodge--GUE correspondence says that the following identity holds true in $\CC((\e^2))[[x-1]][[{\bf s}]]$:
\beq\label{Hodgegue1107}
Z_{\rm \Omega^{\rm special}(-1/2)}\bigl({\bf T}^{\scriptscriptstyle \rm Hodge-GUE}(x,{\bf s});\sqrt{2} \e\bigr) 
\, e^{\frac{A(x,{\bf s})}{2\e^2}} \= Z^{\scriptscriptstyle \rm meGUE}(x, {\bf s};\e) \,,
\eeq
where $A(x,{\bf s})$ is the explicit quadratic series given by~\eqref{aseries1107}
and ${\bf T}^{\scriptscriptstyle \rm Hodge-GUE}(x,{\bf s})$ is defined by 
\beq
T^{\scriptscriptstyle \rm Hodge-GUE}_i(x,{\bf s}) \= -  1 \+ \delta_{i,1} 
\+ x \, \delta_{i,0} \+ \sum_{j\geq1} \, j^{i+1} \, \binom{2j}{j} \, s_j \,, \quad i\geq 0\,.
\eeq 

We are ready to prove Theorem~\ref{WKGUEthm81}.
\begin{proof}[Proof of Theorem~\ref{WKGUEthm81}]
By using~\eqref{Hodgegue1107} and~\eqref{mainidentity}. 
\end{proof}

Let us denote  
\beq
f_m(x) \,:=\, \frac{(2m-1)!!}{2^{m}} \, x  - \frac{(2m+1)!!}{2^{m}} \quad (m\geq2).
\eeq
We will also use Witten's notation $\langle \tau_{m_1}\dots\tau_{m_n}\rangle_{g}$:
\beq
\langle \tau_{m_1}\dots\tau_{m_n}\rangle_{g} \= 
\int_{\overline{\M}_{g,n}} \, \psi_1^{m_1}\cdots\psi_n^{m_n} \,.
\eeq
Performing the Taylor expansion with respect to $s_1,s_2,\cdots$ 
on the logarithms of both sides of~\eqref{wkgue1106} and using the dilaton equation~\eqref{gwdilaton}, we arrive at the 
following proposition. 
\begin{prop}\label{taylorwkgue}
We have
\begin{align}
& \frac{\frac14-x}{\e^2} - \frac1{24} \log \frac{3-x}2 \label{272712}\\
& +\, \sum_{g, \, p\ge0} \, \e^{2g-2} \frac{(x-1)^p}{p!}  \sum_{\lambda\in\mathcal{P}_{3g-3+p}} 
\frac{\langle\tau_0^p \tau_{\lambda+1} \rangle_{g}}{{\rm mult}(\lambda)!}
\Bigl(\frac{2}{3-x}\Bigr)^{2g-2+\ell(\lambda)+p} 
\prod_{s=1}^{\ell(\lambda)} f_{1+\lambda_s}(x) \nn\\
& = \, B\Bigl(x,\frac{\e}{\sqrt{2}}\Bigr)\,, \nn
\end{align}
where $B(x,\e)$ is defined in~\eqref{263712}, and for $k, j_1,\dots,j_k\ge1$, 
\begin{align}
& \frac{\delta_{k,1}}{\e^2} \binom{2j_1}{j_1} \Bigl(x - \frac{j_1}{j_1+1}\Bigr) \+ \frac{\delta_{k,2}}{\e^2} \, \frac{j_1j_2}{j_1+j_2} \binom{2j_1}{j_1} \binom{2j_2}{j_2} \label{274712} \\
& + \, \sum_{m_1,\dots,m_k\ge0}  U_{m_1,\dots,m_k}(x,\e) \, \prod_{i=1}^k e_{m_i, \, j_i}  \nn\\
&= \,  \langle\phi_{j_1}\dots \phi_{j_k}\rangle\Bigl(x,\frac\e{\sqrt2}\Bigr) \,, \nn
\end{align}
where 
\beq
e_{m,j} \,:=\, \frac{(2m+2j-1)!!}{2^{m-j} (j-1)!}\,,
\eeq and for $m_1,\dots,m_k\ge0$,
\begin{align}
&U_{m_1,\dots,m_k}(x,\e) \, := \, \sum_{g\ge0} \e^{2g-2} \sum_{p\ge0} \, \frac{(x-1)^p}{p!} \\
& \quad \times \sum_{\lambda\in\mathcal{P}_{3g-3+p+k-|m|}} 
\frac{\langle\tau_0^p \tau_{\lambda+1} \tau_{m_1}\dots\tau_{m_k}\rangle_{g}}{{\rm mult}(\lambda)!} 
\Bigl(\frac{2}{3-x}\Bigr)^{2g-2+\ell(\lambda)+k+p} 
\prod_{s=1}^{\ell(\lambda)} f_{1+\lambda_s}(x) \,. \nn
\end{align}
\end{prop}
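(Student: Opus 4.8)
The plan is to take logarithms of the WK--GUE identity~\eqref{wkgue1106}, i.e. to rewrite it as
\[
\F^{\scriptscriptstyle\rm WK}\bigl({\bf t}^{\scriptscriptstyle \rm WK-GUE}(x,{\bf s});\e\bigr) \+ \frac{A(x,{\bf s})}{\e^2} \= \F^{\scriptscriptstyle\rm meGUE}\Bigl(x,{\bf s};\tfrac{\e}{\sqrt2}\Bigr),
\]
and then to compare the coefficients of each monomial $s_{j_1}\cdots s_{j_k}$ on the two sides. The right-hand side comes with its ${\bf s}$-expansion built into its definition: its ${\bf s}$-constant term is $B(x,\e/\sqrt2)$ and its coefficient of $s_{j_1}\cdots s_{j_k}$ is $\langle\phi_{j_1}\cdots\phi_{j_k}\rangle(x,\e/\sqrt2)$. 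On the left-hand side the key observation is that $t_m^{\scriptscriptstyle \rm WK-GUE}(x,{\bf s})$ is affine-linear in~${\bf s}$. First I would record, using the elementary identity $\binom{m+j-1/2}{j-1}=\frac{(2m+2j-1)!!}{2^{j-1}(j-1)!\,(2m+1)!!}$, that its value at ${\bf s}={\bf 0}$ is $t_m^{(0)}(x)=\delta_{m,1}+f_m(x)$ (extending the notation by $f_0=x-1$, $f_1=(x-3)/2$) and that the coefficient of $s_j$ is precisely $e_{m,j}$. Hence, by the chain rule, the ${\bf s}$-constant term of $\F^{\scriptscriptstyle\rm WK}({\bf t}^{\scriptscriptstyle \rm WK-GUE})$ is $\F^{\scriptscriptstyle\rm WK}$ evaluated at ${\bf t}^{(0)}(x)$, while its coefficient of $s_{j_1}\cdots s_{j_k}$ is $\sum_{m_1,\dots,m_k}\bigl(\prod_i e_{m_i,j_i}\bigr)\,\p_{t_{m_1}}\!\cdots\p_{t_{m_k}}\F^{\scriptscriptstyle\rm WK}\big|_{\bt={\bf t}^{(0)}(x)}$.

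The central step is to evaluate these derivatives at ${\bf t}^{(0)}(x)$ by a dilaton rescaling. Differentiating the genus-$g$ part of~\eqref{gwdilaton} $k$ times and using $[\p_{t_m},\sum_i t_i\p_{t_i}]=\p_{t_m}$ shows that $\p_{t_{m_1}}\!\cdots\p_{t_{m_k}}\F^{\scriptscriptstyle\rm WK}_g$ is quasi-homogeneous of weight $2g-2+k$ along the flow $\dot t_1=t_1-1$, $\dot t_i=t_i$ ($i\neq1$). Integrating this flow from $t_1=0$ up to $t_1=t_1^{(0)}=(x-1)/2$, where $e^{s}=1-t_1^{(0)}=(3-x)/2$, rewrites the derivative at ${\bf t}^{(0)}(x)$ as $(1-t_1^{(0)})^{-(2g-2+k)}$ times its value at the base point with $t_1$ set to $0$ and each other $\hat t_i$ divided by $1-t_1^{(0)}$; expanding this base point in the bare times $\hat t_0=x-1$, $\hat t_m=f_m(x)$ ($m\ge2$) then attaches a factor $\tfrac{2}{3-x}$ to every summed leg and produces the overall weight $\bigl(\tfrac{2}{3-x}\bigr)^{2g-2+\ell(\lambda)+k+p}$. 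Separating the $\tau_0$-insertions (which contribute $(x-1)^p/p!$) from the insertions $\tau_{\lambda_s+1}$ indexed by a partition~$\lambda$ (which contribute $\prod_s f_{1+\lambda_s}(x)$ together with a factor $1/{\rm mult}(\lambda)!$), and using the degree-dimension matching~\eqref{dd1003} to force $\lambda\in\mathcal{P}_{3g-3+p+k-|m|}$, reproduces exactly $U_{m_1,\dots,m_k}(x,\e)$. In the $k=0$ case this is the middle sum of~\eqref{272712}, and there the inhomogeneity $\delta_{g,1}/24$ of the dilaton equation integrates to the extra term $-\tfrac1{24}\log\tfrac{3-x}2$ (this correction disappears once $k\ge1$, which is why~\eqref{274712} carries no logarithm).

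It then remains to record the $A$-contributions, which are elementary because $A(x,{\bf s})$ is quadratic: a direct computation from~\eqref{aseries1107} gives $A(x,{\bf 0})=\tfrac14-x$, $\p_{s_j}A|_{{\bf s}={\bf 0}}=\binom{2j}{j}\bigl(x-\tfrac{j}{j+1}\bigr)$, and $\p_{s_{j_1}}\p_{s_{j_2}}A|_{{\bf s}={\bf 0}}=\tfrac{j_1j_2}{j_1+j_2}\binom{2j_1}{j_1}\binom{2j_2}{j_2}$, which supply exactly the $\tfrac14-x$, the $\delta_{k,1}$ and the $\delta_{k,2}$ terms in~\eqref{272712}--\eqref{274712}. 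Matching the ${\bf s}$-constant terms of the two sides yields~\eqref{272712}, and matching the coefficient of $s_{j_1}\cdots s_{j_k}$ yields~\eqref{274712}. I expect the only genuine obstacle to be the combinatorial bookkeeping inside the dilaton rescaling: correctly tracking the exponent $2g-2+\ell(\lambda)+k+p$ of $\tfrac{2}{3-x}$, the symmetry factors $1/p!$ and $1/{\rm mult}(\lambda)!$ produced when the ordered sum over legs is reorganized into a sum over partitions, and the single genus-one logarithmic correction, all of which must be assembled consistently before the comparison of coefficients can be carried out.
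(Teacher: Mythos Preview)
Your proposal is correct and follows exactly the approach the paper uses: take logarithms of the WK--GUE identity~\eqref{wkgue1106}, Taylor expand both sides in the variables $s_1,s_2,\dots$, and use the dilaton equation~\eqref{gwdilaton} to reorganize the WK side. The paper states this in a single sentence; what you have written is essentially a careful unpacking of that sentence, including the identification $t_m^{(0)}(x)=\delta_{m,1}+f_m(x)$, the identification of the $s_j$-slope with $e_{m,j}$, the dilaton rescaling producing the factor $\bigl(\tfrac{2}{3-x}\bigr)^{2g-2+\ell(\lambda)+k+p}$, and the genus-one logarithmic correction $-\tfrac{1}{24}\log\tfrac{3-x}{2}$ coming from the inhomogeneity of the dilaton equation.
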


For instance, using~\eqref{genuszero712} 
the coefficient of $\e^{-2}$ of the left-hand side of~\eqref{272712} begins
\begin{align}
&\frac14 - x \+ \frac{(x-1)^3}{3!} \biggl[\frac2{3-x}\biggr] 
\+ \frac{(x-1)^4}{4!} \biggl[\Bigl(\frac2{3-x}\Bigr)^3 \Bigl(\frac34 x-\frac{5!!}4\Bigr)\biggr] \\
&\+ \frac{(x-1)^5}{5!} \biggl[\Bigl(\frac2{3-x}\Bigr)^4 \Bigl(\frac{5!!}8 x-\frac{7!!}8\Bigr)+
\frac6{2!}\Bigl(\frac2{3-x}\Bigr)^5 \Bigl(\frac34 x-\frac{5!!}4\Bigr)^2\biggr] \+ \cdots \,, \nn
\end{align}
which is consistent with the expansion of $\frac{x^2} 4 \log x - \frac 38 x^2$
as $x\to 1$. (With the above given terms one can check the agreement up to and including ${\rm O}((x-1)^5)$.) 
Similarly, using $\langle\tau_0\tau_2\rangle_1=\langle\tau_0^3\tau_3\rangle_1=1/24$ 
and $\langle\tau_0^2\tau_2^2\rangle_1=1/6$,
the coefficient of~$\e^0$ of the left-hand side of~\eqref{272712} begins
\begin{align}
&- \frac1{24} \log \frac{3-x}2 \+ (x-1) \biggl[\frac1{24}\Bigl(\frac{2}{3-x}\Bigr)^2 \Bigl(\frac34 x-\frac{5!!}4\Bigr)\biggr] \\
&\+ 
\frac{(x-1)^2}{2!} \biggl[\frac1{24}\Bigl(\frac{2}{3-x}\Bigr)^3\Bigl(\frac{5!!}8 x-\frac{7!!}8\Bigr) 
+ \frac16 \frac1{2!} \Bigl(\frac{2}{3-x}\Bigr)^4 \Bigl(\frac34 x-\frac{5!!}4\Bigr)^2\biggr] \+ \cdots\,, \nn
\end{align}
which is consistent with the expansion of $\frac{5}{48} \log x$ as $x\to 1$.  
The coefficient of $\e^2$ of the left-hand side of~\eqref{272712} begins
\begin{align}
& \biggl[\langle\tau_4\rangle_2 \Bigl(\frac{2}{3-x}\Bigr)^4 \Bigl(\frac{7!!}{16} x-\frac{9!!}{16}\Bigr) \+ 
\langle\tau_3\tau_2\rangle_2 \Bigl(\frac{2}{3-x}\Bigr)^5
\Bigl(\frac{5!!}8 x-\frac{7!!}8\Bigr)\Bigl(\frac34 x-\frac{5!!}4\Bigr) \\
&\+ \frac{\langle\tau_2^3\rangle_2}{3!}\Bigl(\frac{2}{3-x}\Bigr)^6\Bigl(\frac34 x-\frac{5!!}4\Bigr)^3\biggr] \+ 
(x-1) \Bigl[\cdots\Bigr] \+ \cdots \,, \nn
\end{align}
which is consistent with the expansion of $- \frac12 \frac{53 \, \e^2}{3840 \, x^2}$ as $x\to 1$ 
by substituting $\langle\tau_4\rangle_2=1/1152$, $\langle\tau_3\tau_2\rangle_2=29/5760$ 
and $\langle\tau_2^3\rangle_2=7/240$ \cite{Wi91}.
Let us present one more verification.
Recall that $\langle m_1 \rangle (x,\e)=x^2/\e^2$, giving  
$\langle \phi_1 \rangle(x,\e) = \frac{x^2}{2\, \e^2} - \frac1 8$. 
Using~\eqref{genuszero712} the coefficient of $\e^{-2}$ of the left-hand side of~\eqref{274712} for $k=1$ and $j_1=1$ begins
\begin{align}
& e_{0,1} \biggl\{ \frac{(x-1)^2}{2!} \biggl[\frac2{3-x}\biggr] \+ 
\frac{(x-1)^3}{3!}\biggl[\Bigl(\frac2{3-x}\Bigr)^3 \Bigl(\frac34 x-\frac{5!!}4\Bigr) \biggr] \+ \cdots\biggr\} \\
&\+ e_{1,1} \biggl\{\frac{(x-1)^3}{3!}\biggl[\Bigl(\frac2{3-x}\Bigr)^2\biggr]\+\cdots\biggr\} \+ \cdots \+2\Bigl(x-\frac12\Bigr) \,, \nn
\end{align}
which is consistent with the expansion of $x^2$ as $x\to1$.

\smallskip

\noindent {\bf Application II.} {\it The WK--BGW correspondence.}  The Hodge--BGW correspondence was 
 recently found in~\cite{YZ21}, which gives a relationship between 
 the special-Hodge partition function again with $q=-1/2$ 
and the generalized Br\'ezin--Gross--Witten (BGW) model (see \cite{Al18, BG, GW, MMS96}). 
Let $Z^{\scriptscriptstyle\rm cBGW}(x,{\bf r};\e)$
denote the generalized BGW partition function in the sense of~\cite{YZ21} 
(it is denoted by $Z(x,{\bf T};\hbar)$ in~\cite{YZ21}), which is an element in $\CC((\e^2))[[x+2]][[{\bf r}]]$.
Here $x$ is the {\it Alexandrov coupling constant}, $\e$ is an indeterminate, and ${\bf r}=(r_0,r_1,r_2,\dots)$ 
is an infinite tuple of indeterminates. 
In particular, we recall that 
\beq\label{276712}
\log Z^{\scriptscriptstyle\rm cBGW}(x,{\bf 0};\e) \= \frac{x^2}{4\,\e^2} \, \Bigl(\log \bigl(-\frac{x}2\bigr) - \frac32 \Bigr) \+ \frac{\log \bigl(-\frac{x}2\bigr)}{12} 
 - \sum_{g\ge2} \frac{\e^{2g-2} (-2)^{g-1} B_{2g}}{2g \, (2g-2) \, x^{2g-2}}
\eeq
For more details about the generalized BGW partition function 
see e.g.~\cite{DYZ21, YZ21, YZ23} and the references therein. 

The Hodge--BGW correspondence says that 
\beq\label{hodgebgw1107id}
Z_{\Omega(-1/2)}\bigl({\bf T}^{\scriptscriptstyle \rm Hodge-BGW}(x,{\bf r});\sqrt{-4} \e\bigr) \, e^{\frac{A_{\scriptscriptstyle \rm cBGW} (x,{\bf r})}{\e^2}} \= 
Z^{\scriptscriptstyle\rm cBGW}(x,{\bf r};\e) \,,
\eeq
where $A_{\scriptscriptstyle \rm cBGW} (x,{\bf r})$ is the quadratic series given by~\eqref{Abgw1107}, 
and 
\begin{align}
&T^{\scriptscriptstyle \rm Hodge-BGW}_i(x,{\bf r}) \label{Trhb1107}\\
& \quad \=  - \biggl(-\frac12\biggr)^{i-1} \+ \delta_{i,1} \+ x \, \delta_{i,0} - 
2 \, \sum_{j\ge0} \, \frac1{j!} \, \biggl(-\frac{2j+1}2\biggr)^i \, r_j \,, \quad i\geq 0\,. \nn
\end{align}

Both the WK partition function $Z^{\scriptscriptstyle \rm WK}({\bf t}; \epsilon)$ 
and the generalized BGW partition function $Z^{\scriptscriptstyle\rm cBGW}(x,{\bf r};\e)$ 
are particular $\tau$-functions for the KdV hierarchy~\cite{DYZ21}. However, 
we would like to remark that the Hodge--WK correspondence~\eqref{mainidentity} is different from 
the Hodge--BGW correspondence~\eqref{hodgebgw1107id}. 
This can be seen from the simple fact that  
the change of the independent variables~\eqref{Trhb1107} in the Hodge--BGW correspondence is 
NOT invertible, while that in the Hodge--WK correspondence is part of a group action and hence certainly invertible.
As far as we know the WK partition function 
$Z^{\scriptscriptstyle \rm WK}({\bf t}; \epsilon)$ cannot be 
obtained from the generalized BGW partition function $Z^{\scriptscriptstyle\rm cBGW}(x,{\bf r};\e)$ 
 by just shifting the independent variables (vice versa). However, these two different correspondences enable us 
 to establish a relationship between $Z^{\scriptscriptstyle \rm WK}({\bf t}; \epsilon)$ 
 and $Z^{\scriptscriptstyle\rm cBGW}(x,{\bf r};\e)$ (see Theorem~\ref{thmwkbgw} below).
Of course, some important connections between $Z^{\scriptscriptstyle \rm WK}({\bf t}; \epsilon)$ 
 and $Z^{\scriptscriptstyle\rm cBGW}(x,{\bf r};\e)$ were 
 already known: the genus zero parts of the WK partition function $Z^{\scriptscriptstyle \rm WK}({\bf t}; \epsilon)$ and the  
generalized BGW partition function $Z^{\scriptscriptstyle\rm cBGW}(x,{\bf r};\e)$ are different, 
but they are equal for genus bigger than or equal to~1 in a non-obvious way. Indeed, for $g\ge1$, 
the jet representations of the genus $g$ WK free energy and the genus $g$ 
 generalized BGW free energy are the same, which is actually the content of the Okuyama--Sakai conjecture~\cite{OS20} 
 proved in~\cite{YZ21, YZ23}.

We are ready to prove Theorem~\ref{thmwkbgw}.
\begin{proof}[Proof of Theorem~\ref{thmwkbgw}]
By using~\eqref{hodgebgw1107id} and~\eqref{mainidentity} with $q=-1/2$. 
\end{proof}

Denote  
\beq
g_m(x) \,:=\, \frac{(2m-1)!!}{2^m} \, x  \quad (m\geq2),
\eeq
and denote by $\langle \omega_{j_1}\dots\omega_{j_k} \rangle (x,\e)$ the generalized BGW correlators, i.e.,
\beq
\langle \omega_{j_1}\dots\omega_{j_k} \rangle (x,\e) \= 
\frac{\p^k \log Z^{\scriptscriptstyle\rm cBGW}(x,{\bf r};\e)}{\p r_{j_1} \dots \p r_{j_k}}\bigg|_{{\bf s}={\bf 0}} \,.
\eeq
Similarly to Proposition~\ref{taylorwkgue} we have the 
following proposition. 
\begin{prop}
There holds that 
\begin{align}
& \frac{\frac{x}3+\frac12}{\e^2} - \frac1{24} \log \Bigl(-\frac{x}2\Bigr) \label{correctionbgw712}\\
&+ \, \sum_{g,p\ge0} \,  (\sqrt{-4}\e)^{2g-2} \frac{(x+2)^p}{p!}  \sum_{\lambda\in\mathcal{P}_{3g-3+p}} 
\frac{\langle\tau_0^p \tau_{\lambda+1} \rangle_{g}}{{\rm mult}(\lambda)!}
\Bigl(-\frac{2}{x}\Bigr)^{2g-2+\ell(\lambda)+p} 
\prod_{s=1}^{\ell(\lambda)} g_{1+\lambda_s}(x) \nn \\
& = \, \log Z^{\scriptscriptstyle\rm cBGW}(x,{\bf 0};\e) \,, \nn
\end{align}
where the expression of 
$\log Z^{\scriptscriptstyle\rm cBGW}(x,{\bf 0};\e)$ is given in~\eqref{276712}, and for $k\ge1$ and $j_1,\dots,j_k\ge0$, 
\begin{align}
& - \frac{\delta_{k,1}}{\e^2} \frac1{j_1!} \Bigl(\frac{x}{2j_1+1} + \frac{1}{j_1+1}\Bigr) 
\+ \frac{\delta_{k,2}}{\e^2} \, \frac{1}{j_1! j_2! (j_1+j_2+1)} \label{289713} \\
& + \, \sum_{0\leq m_1\leq j_1\,, \dots, \, 0 \leq m_k\leq j_k}  V_{m_1,\dots,m_k}(x,\e) \, \prod_{i=1}^k E_{m_i, \, j_i}  \nn\\
&= \,  \langle\omega_{j_1}\dots \omega_{j_k}\rangle(x,\e) \,, \nn
\end{align}
where 
\beq
E_{m,j} \,:=\, -2 \, \frac{(-1)^m}{(j-m)!}\,,
\eeq and for $0\leq m_1\leq j_1$, \dots, $0\leq m_k\leq j_k$,
\begin{align}
&V_{m_1,\dots,m_k}(x,\e) \, := \, \sum_{g\ge0} (-4)^{g-1} \e^{2g-2} \sum_{p\ge0} \, \frac{(x+2)^p}{p!} \\
& \quad \times \sum_{\lambda\in\mathcal{P}_{3g-3+p+k-|m|}} 
\frac{\langle\tau_0^p \tau_{\lambda+1} \tau_{m_1}\dots\tau_{m_k}\rangle_{g}}{{\rm mult}(\lambda)!}
\Bigl(-\frac{2}{x}\Bigr)^{2g-2+\ell(\lambda)+k+p} 
\prod_{s=1}^{\ell(\lambda)} g_{1+\lambda_s}(x) \,. \nn
\end{align}
\end{prop}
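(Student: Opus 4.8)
The plan is to mimic the proof of Proposition~\ref{taylorwkgue}, replacing the WK--GUE correspondence by the WK--BGW correspondence~\eqref{wkbgw1106} of Theorem~\ref{thmwkbgw}. First I would take the logarithm of~\eqref{wkbgw1106}, obtaining
\beq
\F^{\scriptscriptstyle\rm WK}\bigl({\bf t}^{\scriptscriptstyle\rm WK-BGW}(x,{\bf r});\sqrt{-4}\e\bigr) \+ \frac{A_{\scriptscriptstyle\rm cBGW}(x,{\bf r})}{\e^2} \= \log Z^{\scriptscriptstyle\rm cBGW}(x,{\bf r};\e)\,,
\eeq
and then expand both sides as formal power series in~${\bf r}$ around ${\bf r}={\bf 0}$, comparing coefficients. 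By definition the coefficients on the right-hand side are the correlators $\langle\omega_{j_1}\cdots\omega_{j_k}\rangle(x,\e)$, so it remains to expand the left-hand side. Since ${\bf t}^{\scriptscriptstyle\rm WK-BGW}(x,{\bf r})$ depends on~${\bf r}$ affine-linearly with $\partial t_m^{\scriptscriptstyle\rm WK-BGW}/\partial r_j = E_{m,j}$ for $0\le m\le j$ (and $0$ otherwise), the chain rule turns the $k$-th ${\bf r}$-derivative of $\F^{\scriptscriptstyle\rm WK}$ into the $E_{m,j}$-weighted sum $\sum_{m_i\le j_i}\prod_iE_{m_i,j_i}\,\partial_{t_{m_1}}\cdots\partial_{t_{m_k}}\F^{\scriptscriptstyle\rm WK}$ evaluated at ${\bf r}={\bf 0}$, while the quadratic polynomial $A_{\scriptscriptstyle\rm cBGW}(x,{\bf r})$ of~\eqref{Abgw1107} contributes only to the $k=0,1,2$ coefficients and produces exactly the explicit $\delta_{k,1}$ and $\delta_{k,2}$ terms of~\eqref{289713} (differentiating~\eqref{Abgw1107} gives $\partial_{r_{j_1}}\partial_{r_{j_2}}A_{\scriptscriptstyle\rm cBGW}=\frac{1}{j_1!j_2!(j_1+j_2+1)}$ and $\partial_{r_{j_1}}A_{\scriptscriptstyle\rm cBGW}|_{{\bf r}={\bf 0}}=-\frac1{j_1!}\bigl(\frac{x}{2j_1+1}+\frac1{j_1+1}\bigr)$).

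Next I would compute the $\tau$-derivatives $\partial_{t_{m_1}}\cdots\partial_{t_{m_k}}\F^{\scriptscriptstyle\rm WK}$ at the base point, where $t_i=\frac{(2i-1)!!}{2^i}x+2\,\delta_{i,0}+\delta_{i,1}$. Here each derivative inserts a descendent $\tau_{m_i}$, while the remaining marked points carry the weights~$t_i$. The three pieces of the specialization play distinct roles: the $\tau_0$-insertions (coming jointly from the $\frac{(2\cdot0-1)!!}{2^0}x=x$ part and the $2\,\delta_{i,0}$ part, total weight $x+2$) generate the $(x+2)^p/p!\,\tau_0^p$ factor; the dilaton shift $\delta_{i,1}$, together with the dilaton equation~\eqref{gwdilaton}, is used to resum all $\tau_1$-insertions, which accounts for the restriction to partition parts $\lambda_s\ge1$ (i.e. descendents $\tau_{1+\lambda_s}$ of order $\ge2$) and simultaneously produces the separated $\e^{-2}$ and logarithmic terms in~\eqref{correctionbgw712}; and the remaining insertions carry the weight $\frac{(2(1+\lambda_s)-1)!!}{2^{1+\lambda_s}}x=g_{1+\lambda_s}(x)$. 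The rescaling $\e\mapsto\sqrt{-4}\e$ supplies the $(-4)^{g-1}$ in $V_{m_1,\dots,m_k}$, and the degree-dimension matching~\eqref{dd1003} fixes $|\lambda|=3g-3+p+k-|m|$, giving the summation range $\lambda\in\mathcal{P}_{3g-3+p+k-|m|}$.

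Finally, the powers $\bigl(-\frac2x\bigr)^{2g-2+\ell(\lambda)+k+p}$ arise from the Euler/homogeneity grading carried through the dilaton resummation: at the base point one has $E(\bt)=1-\tfrac{x^2}4$ and $\partial_{t_0}E=\tfrac{x^2}4$, so that $-\tfrac2x=(\partial_{t_0}E)^{-1/2}$, and the total power is dictated by the weighted-homogeneity degrees~\eqref{homowk-1} of the jet representation of $\F^{\scriptscriptstyle\rm WK}_g$ together with the number $2g-2+n$ of the dilaton scaling, where $n=p+\ell(\lambda)+k$. Assembling the $k=0$ comparison yields~\eqref{correctionbgw712} (matched against the known constant-${\bf r}$ value~\eqref{276712}) and the $k\ge1$ comparison yields~\eqref{289713}. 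The main obstacle is precisely this last bookkeeping step: verifying that the dilaton-equation resummation of $\tau_1$-insertions together with the substitution of the jet values at $x=-2$ reproduces exactly the stated $(-2/x)$-powers, the separated genus-zero and genus-one terms, and the $1/\mathrm{mult}(\lambda)!$ partition combinatorics. This is, however, structurally identical to the corresponding step in Proposition~\ref{taylorwkgue}, the only differences being the base point $x=-2$ in place of $x=1$ and the rescaling $\sqrt{-4}\e$ in place of $\e/\sqrt2$, so the same argument applies once these substitutions are made.
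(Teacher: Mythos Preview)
Your proposal is correct and follows exactly the route the paper indicates: the paper's entire argument is the one-line remark ``Similarly to Proposition~\ref{taylorwkgue}'', i.e.\ Taylor-expand the logarithm of the WK--BGW correspondence~\eqref{wkbgw1106} in~${\bf r}$ and use the dilaton equation~\eqref{gwdilaton}, precisely as you outline. Your computation of the base-point jet values $E(\bt)=1-x^2/4$ and $\partial_{t_0}E=x^2/4$ (whence $(-2/x)=(\partial_{t_0}E)^{-1/2}$) is also correct, so the bookkeeping step you flag as the main obstacle goes through without difficulty.
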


For instance, using~\eqref{genuszero712} the coefficient of $\e^{-2}$ of the left-hand side of~\eqref{correctionbgw712} begins
\begin{align}
& x+\frac12 - \frac14 \, \biggl\{\frac{(x+2)^3}{3!} \biggl[-\frac2x\biggr] 
\+ \frac{(x+2)^4}{4!} \biggl[\Bigl(-\frac2x\Bigr)^3 \Bigl(\frac34 x\Bigr)\biggr] \\
&\+ \frac{(x+2)^5}{5!} \biggl[\Bigl(-\frac2x\Bigr)^4 \Bigl(\frac{5!!}8 x\Bigr)+
\frac6{2!}\Bigl(-\frac2x\Bigr)^5 \Bigl(\frac34 x\Bigr)^2\biggr] \+\cdots\biggr\} \,,\nn
\end{align}
which agrees with the expansion of 
$\frac{1}{4} x^2 \log \bigl(-\frac{x}2\bigr) - \frac38 x^2$
as $x\to -2$. Similarly, the coefficient of $\e^0$ of the left-hand side of~\eqref{correctionbgw712} begins
\begin{align}
&- \frac1{24} \log \Bigl(-\frac{x}2\Bigr) \+ (x+2) \biggl[\frac1{24}\Bigl(-\frac{2}{x}\Bigr)^2 \Bigl(\frac34 x\Bigr)\biggr] \\
&\+ 
\frac{(x+2)^2}{2!} \biggl[\frac1{24}\Bigl(-\frac{2}x\Bigr)^3\Bigl(\frac{5!!}8 x\Bigr) 
+ \frac16 \frac1{2!} \Bigl(-\frac{2}{x}\Bigr)^4 \Bigl(\frac34 x\Bigr)^2\biggr] \+ \cdots\,, \nn
\end{align}
which agrees with the expansion of $\frac{1}{12} \log \bigl(-\frac{x}2\bigr)$ as $x\to -2$.
The coefficient of~$\e^2$ of the left-hand side of~\eqref{correctionbgw712} begins
\begin{align}
& -4 \, \biggl[\langle\tau_4\rangle_2 \Bigl(-\frac{2}{x}\Bigr)^4 \Bigl(\frac{7!!}{16} x\Bigr) \+ 
\langle\tau_3\tau_2\rangle_2 \Bigl(-\frac{2}{x}\Bigr)^5
\Bigl(\frac{5!!}8 x\Bigr)\Bigl(\frac34 x\Bigr) \\
&\+ \frac{\langle\tau_2^3\rangle_2}{3!}\Bigl(-\frac{2}{x}\Bigr)^6\Bigl(\frac34 x\Bigr)^3\biggr] - 4 \, 
(x+2) \Bigl[\cdots\Bigr] \+ \cdots \,, \nn
\end{align}
which agrees with the expansion of $-\frac{1}{480} \frac1{x^2}$ as $x\to -2$.
Using~\eqref{genuszero712} the coefficient of $\e^{-2}$ of the left-hand 
side of~\eqref{289713} for $k=1$ and $j_1=1$ begins
\begin{align}
& -\frac14 \, E_{0,1} \biggl\{ \frac{(x-1)^2}{2!} \biggl[-\frac2{x}\biggr] \+ 
\frac{(x-1)^3}{3!}\biggl[\Bigl(-\frac2{x}\Bigr)^3 \Bigl(\frac34 x\Bigr) \biggr] \+ \cdots\biggr\} \\
& - \frac14 \, E_{1,1} \biggl\{\frac{(x-1)^3}{3!}\biggl[\Bigl(-\frac2{x}\Bigr)^2\biggr]\+\cdots\biggr\} \+ \cdots - \Bigl(\frac{x}3+\frac12\Bigr) \,, \nn
\end{align}
which is consistent with the expansion of $x^4/96$ as $x\to-2$.

\section{The Hodge mapping partition function} \label{section8}
In the previous sections, we have introduced a $\G$-action on infinite tuples, have defined for any $\varphi\in\G$ the 
WK mapping partition function associated to~$\varphi$, and have associated to it the WK mapping hierarchy. 
In this section, just like the previous constructions, for any $\varphi\in\G$
we define the {\it Hodge mapping partition function} $Z^{\varphi}_{\Omega(\boldsymbol{\sigma})}$ associated to~$\varphi$ 
by 
\beq
Z^{\varphi}_{\Omega(\boldsymbol{\sigma})}(\bT;\e) \= Z_{\Omega(\boldsymbol{\sigma})}\bigl(\bT . \varphi^{-1};\e\bigr) \,,
\eeq
which as we will see also has several nice properties. Here 
$Z_{\Omega(\boldsymbol{\sigma})}(\bt;\e)$ is the Hodge partition function defined in~\eqref{hodgepar1111}. 
Obviously, $Z^{\varphi}_{\Omega({\bf 0})}(\bT;\e) = Z^{\varphi}(\bT;\e).$

Recall that the Hodge partition function satisfies the dilaton equation:
\beq
\sum_{i\ge0} \, t_i \, \frac{\p Z_{\Omega(\boldsymbol{\sigma})}(\bt;\e)}{\p t_i} \+ \e \, \frac{\p Z_{\Omega(\boldsymbol{\sigma})}(\bt;\e)}{\p \e} 
\+ \frac1{24} \, Z_{\Omega(\boldsymbol{\sigma})}(\bt;\e)
\= \frac{\p Z_{\Omega(\boldsymbol{\sigma})}(\bt;\e)}{\p t_1}\,. \label{dilatonhodgeoriginal1111}
\eeq
It follows the dilaton equation for the Hodge mapping partition function:
\beq
\sum_{i\ge0} \, T_i \, \frac{\p Z^{\varphi}_{\Omega(\boldsymbol{\sigma})}(\bT;\e)}{\p T_i} 
\+ \e \, \frac{\p Z^{\varphi}_{\Omega(\boldsymbol{\sigma})}(\bT;\e)}{\p \e} 
\+ \frac1{24} \, Z^{\varphi}_{\Omega(\boldsymbol{\sigma})}(\bT;\e)
\= \frac{\p Z^{\varphi}_{\Omega(\boldsymbol{\sigma})}(\bT;\e)}{\p T_1}\,. \label{dilatonhodgemapping1111}
\eeq
The Virasoro constraints for the Hodge partition function can be found in~\cite{LYZZ22}. So it is possible to translate 
them to the Hodge mapping partition function, which we will do elsewhere.  

As before, the logarithm $\log Z^{\varphi}_{\Omega(\boldsymbol{\sigma})}(\bT;\e) =: \F^{\varphi}_{\Omega(\boldsymbol{\sigma})}(\bT;\e)$,
called the {\it Hodge mapping free energy}, has a genus expansion:
\beq
\F^{\varphi}_{\Omega(\boldsymbol{\sigma})}(\bT;\e) \,=:\, \sum_{g\ge0} \, \e^{2g-2} \F^{\varphi}_{\Omega_g(\boldsymbol{\sigma})}(\bT) \, .
\eeq
We call $\F^{\varphi}_{\Omega_g(\boldsymbol{\sigma})}(\bT)$, $g\ge0$, {\it the genus $g$ Hodge mapping free energy}.
By Theorem~\ref{thmgenus0} and the well-known fact
$\F_{\Omega_0(\boldsymbol{\sigma})}(\bt)\equiv\F_0^{\scriptscriptstyle \rm WK}(\bt)$ 
we immediately obtain the following 
\begin{prop}\label{propgenus01011}
For any $\varphi\in\G$, we have $\F^\varphi_{\Omega_0(\boldsymbol{\sigma})}=\F_0^{\scriptscriptstyle\rm WK}$.
\end{prop}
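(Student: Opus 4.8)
The plan is to deduce the statement directly from Theorem~\ref{thmgenus0} together with the genus-zero triviality of the Hodge classes, so that essentially no new computation is needed. First I would unwind the definition of the Hodge mapping free energy in genus zero: by construction
\[
\F^{\varphi}_{\Omega_0(\boldsymbol{\sigma})}(\bT) \= \F_{\Omega_0(\boldsymbol{\sigma})}\bigl(\bT . \varphi^{-1}\bigr)\,,
\]
which is the exact analogue of~\eqref{fgphi1001} with $\F^{\scriptscriptstyle\rm WK}_0$ replaced by the genus-zero Hodge free energy $\F_{\Omega_0(\boldsymbol{\sigma})}$. Thus the whole content of the proposition is the interplay of the $\G$-action with the $\boldsymbol{\sigma}$-dependence of the Hodge free energy in genus zero.

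Next I would invoke the quoted fact $\F_{\Omega_0(\boldsymbol{\sigma})}\equiv\F_0^{\scriptscriptstyle\rm WK}$. This is where the genus-zero geometry enters: the Hodge bundle $\EE_{0,n}$ over $\overline{\M}_{0,n}$ has rank $g=0$ and is therefore the zero bundle, so every component ${\rm ch}_{2j-1}(\EE_{0,n})$ with $j\ge1$ vanishes and hence $\Omega_{0,n}(\boldsymbol{\sigma})=1$ for all $\boldsymbol{\sigma}$ by~\eqref{chernchrank1omega}. Consequently the genus-zero Hodge integrals appearing in~\eqref{hodgepar1111} collapse to the pure $\psi$-class integrals defining $\F_0^{\scriptscriptstyle\rm WK}$, so that $\F_{\Omega_0(\boldsymbol{\sigma})}(\bt)=\F_0^{\scriptscriptstyle\rm WK}(\bt)$ as power series in~$\bt$. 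Since this is an identity of formal power series, it survives the substitution $\bt\mapsto\bT.\varphi^{-1}$ unchanged.

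Combining these two observations yields
\[
\F^{\varphi}_{\Omega_0(\boldsymbol{\sigma})}(\bT) \= \F_0^{\scriptscriptstyle\rm WK}\bigl(\bT.\varphi^{-1}\bigr) \= \F_0^{\varphi}(\bT)\,,
\]
and Theorem~\ref{thmgenus0} identifies the right-hand side with $\F_0^{\scriptscriptstyle\rm WK}(\bT)$, completing the argument. I do not expect any genuine obstacle here: the single point deserving care is the $\boldsymbol{\sigma}$-independence of the genus-zero Hodge free energy, which is immediate from the vanishing of the Hodge classes over $\overline{\M}_{0,n}$, and once that is recorded the already-established $\G$-invariance of $\F_0^{\scriptscriptstyle\rm WK}$ supplies everything else. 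The proposition is therefore best presented as a short corollary of Theorem~\ref{thmgenus0} rather than as an independent result.
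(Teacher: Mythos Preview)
Your proposal is correct and follows exactly the same approach as the paper: the proposition is stated there as an immediate consequence of Theorem~\ref{thmgenus0} together with the well-known fact $\F_{\Omega_0(\boldsymbol{\sigma})}\equiv\F_0^{\scriptscriptstyle\rm WK}$, and you have simply unpacked these two ingredients in slightly more detail.
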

For genus bigger than or equal to~1, it is known from e.g.~\cite{DLYZ16, DY20-2} that the genus $g$ ($g\ge1$) Hodge free energy 
$\F_{\Omega_g(\boldsymbol{\sigma})}(\bt)$ has the $(3g-2)$-jet representation, i.e., 
there exists $F_{\Omega_g(\boldsymbol{\sigma})}(v_0, v_1,\dots,v_{3g-2}; \boldsymbol{\sigma})$, such that 
\beq\label{Hodge3gminus21111}
\F_{\Omega_g(\boldsymbol{\sigma})}(\bt) \= 
F_{\Omega_g(\boldsymbol{\sigma})}\biggl(E(\bt), \frac{\p E(\bt)}{\p t_0}, \dots, \frac{\p^{3g-2} E(\bt)}{\p t_0^{3g-2}}; \boldsymbol{\sigma}\biggr)\,, 
\quad g\geq1\,,
\eeq 
where $E(\bt)$ is defined by~\eqref{defVandvf1001}.
We then have the following 
\begin{prop}\label{jetreprephodgemap1111}
For $g=1$ we have the identity:
\beq\label{genus1K1hodge1111}
\F^\varphi_{\Omega_1(\boldsymbol{\sigma})}(\bT) \= 
F^\varphi_{\Omega_1(\boldsymbol{\sigma})}\biggl(E(\bT), \frac{\p E(\bT)}{\p X}; \boldsymbol{\sigma}\biggr) \,, 
\eeq
with
\beq
F^\varphi_{\Omega_1(\boldsymbol{\sigma})}(V,V_1; \boldsymbol{\sigma}) \,:=\, \frac{1}{24} \log V_1 \+ \frac1{16} \log \varphi'(V) 
\+ \frac{\sigma_1}{24} \, \varphi(V)\,.
\eeq
For each $g\ge2$, there exists $F^\varphi_{\Omega_g(\boldsymbol{\sigma})}(V_0, \dots, V_{3g-2}; \boldsymbol{\sigma})$ that is a 
polynomial of $\sigma_1,\dots,\sigma_{2g-1}$, 
$V_2$, \dots, $V_{3g-2}$ and a rational function of~$V_1$, such that  
\beq\label{jetfvg89new}
\F^\varphi_{\Omega_g(\boldsymbol{\sigma})}(\bT) \= 
F^\varphi_{\Omega_g(\boldsymbol{\sigma})}\biggl(E(\bT), \dots, \frac{\p^{3g-2} E(\bT)}{\p X^{3g-2}}; \boldsymbol{\sigma}\biggr) \,.
\eeq
\end{prop}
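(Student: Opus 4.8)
The plan is to prove this by exactly the mechanism used for Proposition~\ref{jetreprep}, feeding the known jet representation~\eqref{Hodge3gminus21111} of the ordinary Hodge free energy through the change of jet variables supplied by Lemmas~\ref{lemmavfv} and~\ref{vVmaplemma}. By definition $\F^\varphi_{\Omega_g(\boldsymbol{\sigma})}(\bT)=\F_{\Omega_g(\boldsymbol{\sigma})}(\bt)$ with $\bt:=\bT.\varphi^{-1}$, so that $\bT=\bt.\varphi$ and the hypotheses of both lemmas hold with exactly this pairing of tuples. First I would substitute~\eqref{Hodge3gminus21111} and then rewrite each $t_0$-derivative of $E(\bt)$ by Lemma~\ref{vVmaplemma}, i.e. $\p^k E(\bt)/\p t_0^k=M_k\bigl(E(\bT),\dots,\p^k E(\bT)/\p X^k\bigr)$, together with $E(\bt)=\varphi(E(\bT))$ from Lemma~\ref{lemmavfv}. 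This exhibits $\F^\varphi_{\Omega_g(\boldsymbol{\sigma})}(\bT)$ as a function of the jets $E(\bT),\p E(\bT)/\p X,\dots,\p^{3g-2}E(\bT)/\p X^{3g-2}$ and thereby defines the sought function by composition,
\[
F^\varphi_{\Omega_g(\boldsymbol{\sigma})}(V_0,\dots,V_{3g-2};\boldsymbol{\sigma})
\,:=\, F_{\Omega_g(\boldsymbol{\sigma})}\bigl(\varphi(V_0),M_1(V_0,V_1),\dots,M_{3g-2}(V_0,\dots,V_{3g-2});\boldsymbol{\sigma}\bigr)\,.
\]

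For $g=1$ this is an immediate computation. Using the standard genus one Hodge jet function $F_{\Omega_1(\boldsymbol{\sigma})}(v_0,v_1;\boldsymbol{\sigma})=\tfrac1{24}\log v_1+\tfrac{\sigma_1}{24}v_0$ (as in~\eqref{h1815-2}, whose specialization to $\boldsymbol{\sigma}^{\rm special}$ uses $\sigma_1^{\rm special}=3q$) and $M_1(V,V_1)=\varphi'(V)^{3/2}V_1$, I obtain
\[
F^\varphi_{\Omega_1(\boldsymbol{\sigma})}(V,V_1;\boldsymbol{\sigma})
\= \frac1{24}\log V_1 \+ \frac1{16}\log\varphi'(V) \+ \frac{\sigma_1}{24}\,\varphi(V)\,,
\]
which is the claimed formula and reduces to~\eqref{genus1K1} when $\boldsymbol{\sigma}=\mathbf 0$.

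The remaining, and only delicate, task is to verify for $g\ge2$ that the composite $F^\varphi_{\Omega_g(\boldsymbol{\sigma})}$ has the asserted shape: polynomial in $\sigma_1,\dots,\sigma_{2g-1}$ and in $V_2,\dots,V_{3g-2}$, and rational in $V_1$. Polynomiality in the $\sigma$'s is inherited verbatim, since the substitution touches only the jet arguments, not $\boldsymbol{\sigma}$. For the $V$-dependence I would invoke the refined form of the $M_k$ recorded in the remark following Theorem~\ref{structureFgphi719}, namely $M_k=\varphi'(V_0)^{k/2+1}V_1^{k}\,N_k\bigl(V_2/V_1^2,\dots,V_k/V_1^k;l_1(V_0),\dots,l_{k-1}(V_0)\bigr)$ with $N_k$ polynomial; in particular $M_1=\varphi'(V_0)^{3/2}V_1$ and hence $M_1^{-1}=\varphi'(V_0)^{-3/2}V_1^{-1}$. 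Since $F_{\Omega_g(\boldsymbol{\sigma})}$ for $g\ge2$ is a polynomial in $v_2,\dots,v_{3g-2}$ and in $v_1^{-1}$, substituting the $M_k$ produces a finite sum of terms of the form (a power of $\varphi'(V_0)$) times $V_1^{-m}$ times a polynomial in $V_1,\dots,V_{3g-2}$; expanding that polynomial in powers of $V_1$ shows the result is a Laurent polynomial in $V_1$ whose coefficients are polynomials in $V_2,\dots,V_{3g-2}$, the only denominators coming from powers of $v_1$, which never place $V_2,\dots,V_{3g-2}$ in a denominator. The residual $V_0$-dependence is carried harmlessly by $\varphi$, its derivatives and the $l_k(V_0)$. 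The main obstacle is thus purely bookkeeping: controlling the powers of $V_1$ and confirming the absence of spurious denominators in the higher jets, which the homogeneous form of the $M_k$ guarantees.
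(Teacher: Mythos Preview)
Your proposal is correct and follows exactly the approach the paper indicates: just as Proposition~\ref{jetreprep} is obtained by feeding the WK jet representation~\eqref{WK3gminus2}--\eqref{homowk-2} through Lemmas~\ref{lemmavfv} and~\ref{vVmaplemma}, here you feed the Hodge jet representation~\eqref{Hodge3gminus21111} through the same lemmas, and the paper offers no further argument beyond that parallel. Your explicit $g=1$ computation and the bookkeeping for the structural claims at $g\ge2$ fill in precisely the details the paper leaves to the reader.
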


Let 
\beq\label{defU118}
U^{\varphi}_{\Omega(\boldsymbol{\sigma})}(\bT;\e) \,:=\, 
\e^2 \, \frac{\p^2 \F^\varphi_{\Omega(\boldsymbol{\sigma})}(\bT;\e)}{\p X^2} \,, 
\eeq
where $X=T_0$. This gives a quasi-Miura transformation 
\beq\label{qMhodge69}
V \;\mapsto\; U^{\varphi}_{\Omega(\boldsymbol{\sigma})} \= V \+ \sum_{g\ge1} \, \e^{2g} \, \p^2 \bigl(F^\varphi_{\Omega_g(\boldsymbol{\sigma})}\bigr)\,,
\eeq
which transforms the abstract local RH hierarchy $D_S(v) = S(v) \, v_1$ to
\begin{align}
& D_{S}  (U) \=  S\, U_1 \+ 
\e^2 \biggl(\frac{S'}{12} \, U_3 \+ \biggl(\frac{\varphi'' S'}{8 \varphi'} +\frac{S''}{6} + \frac{\sigma_1}{12} \varphi' S'\biggr) \, 
U_1 \, 
U_2  \label{abslocH1224}\\
& \qquad \qquad \quad +\, \biggl(-\frac{ \varphi''^2 S'}{16 \varphi'^2}
+\frac{\varphi''' S'+\varphi'' S''}{16 \varphi'} +\frac{S'''}{24} 
+ \frac{\sigma_1}{24}  \bigl(\varphi' S'\bigr)' \biggr) \, U_1^3\biggr) \+ \cdots \,, \nn
\end{align}
where $U=U^{\varphi}_{\Omega(\boldsymbol{\sigma})}$, and 
we omitted the arguments $U^{\varphi}_{\Omega(\boldsymbol{\sigma})}$ from $\varphi', \varphi''$, $\cdots$ 
and from $S, S', S''$, $\cdots$.
We call~\eqref{abslocH1224} the {\it abstract local Hodge mapping hierarchy associated to~$\varphi$}.
When $\varphi=id$, we call~\eqref{abslocH1224} the {\it abstract local Hodge hierarchy}.

Define 
$\Omega_{S_1(w),S_2(w)}^{\scriptscriptstyle \rm Hodge}$, $S_1(w),S_2(w)\in \mathcal{O}_c(w)$, as the 
substitution of the inverse of the quasi-Miura type transformation
$v \mapsto w = v + \sum_{g\ge1} \, \e^{2g} \, \p^2 \bigl(F_{\Omega_g(\boldsymbol{\sigma})}\bigr)$ 
in~$\int^w S_1 S_2+ \sum_{g\ge1} \e^{2g} D_{S_1} D_{S_2} (F_{\Omega_g(\boldsymbol{\sigma})})$. Similarly, 
 define 
$\Omega_{S_1(U),S_2(U)}^{\varphi,\, {\scriptscriptstyle \rm Hodge}}$, $S_1(U),S_2(U)\in \mathcal{O}_c(U)$, as the 
substitution of the inverse of~\eqref{qMhodge69} 
in $\int^U S_1 S_2+ \sum_{g\ge1} \e^{2g} D_{S_1(U)} D_{S_2(U)} (F^\varphi_{\Omega_g(\boldsymbol{\sigma})})$.

The following theorem, which is a refinement of Theorem~\ref{thmmain2}, gives a generalization of Theorem~\ref{conj0307} 
and some results in~\cite{BPS12-1, BPS12-2, DLYZ16}.
\begin{theorem}\label{conj060923}
The abstract local Hodge mapping hierarchy~\eqref{mappinghierarchy}
have {\it polynomiality}: for any $S$
the right-hand side of~\eqref{mappinghierarchy} belongs to 
$\mathcal{A}_{U^\varphi_{\Omega(\boldsymbol{\sigma})}}[[\e^2]]_{1}$. 
Moreover, the elements $\Omega_{S_1(U), \, S_2(U)}^{\varphi,\, {\scriptscriptstyle \rm Hodge}}$, $S_1(U),S_2(U)\in \mathcal{O}_c(U)$, belong to $\mathcal{A}_{U^\varphi_{\Omega(\boldsymbol{\sigma})}}[[\e^2]]_0$.
\end{theorem}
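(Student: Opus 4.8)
The plan is to reduce Theorem~\ref{conj060923} to the already-established polynomiality for the WK mapping hierarchy (Theorem~\ref{conj0307}, equivalently Theorem~\ref{mainconjecture}) via a change-of-space-variable argument parallel to the proof of Theorem~\ref{mainconjecture}. The key observation is that the Hodge partition function $Z_{\Omega(\boldsymbol{\sigma})}(\bt;\e)$ is obtained from $Z^{\scriptscriptstyle\rm WK}(\bt;\e)$ by the Faber--Pandharipande operator $\exp\bigl(\sum_{k\ge1}\frac{B_{2k}}{(2k)!}\sigma_{2k-1}D_k\bigr)$, and that the genus~$g$ Hodge free energies still admit $(3g-2)$-jet representations in~$E(\bt)$ (equation~\eqref{Hodge3gminus21111}). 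First I would record, exactly as in~\eqref{defU813kdv1223}--\eqref{localkdvhierarchy1223}, that the quasi-Miura transformation $v\mapsto w=v+\sum_{g\ge1}\e^{2g}\p^2(F_{\Omega_g(\boldsymbol{\sigma})})$ carries the abstract local RH hierarchy $D_S(v)=S(v)v_1$ into a perturbation which I claim has polynomiality, i.e.\ lies in $\mathcal{A}_w[[\e^2]]_1$ with no negative powers of~$w_1$; this is the Hodge analogue of Proposition~\ref{localkdv63} and follows from the $\tau$-symmetric hamiltonian structure of the Hodge hierarchy established in~\cite{DLYZ16, BPS12-1, BPS12-2} together with the divergence form~\eqref{partialhipartialu}. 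Likewise $\Omega^{\scriptscriptstyle\rm Hodge}_{S_1,S_2}\in\mathcal{A}_w[[\e^2]]_0$.

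Next I would introduce the $\varphi$-twist. By the definition $\F^\varphi_{\Omega(\boldsymbol{\sigma})}(\bT;\e)=\F_{\Omega(\boldsymbol{\sigma})}(\bT.\varphi^{-1};\e)$ and Theorem~\ref{thmgenus0} (which applies to the Hodge case via Proposition~\ref{propgenus01011}), the genus-zero two-point function is again governed by $E(\bT)$, so the quasi-Miura transformation~\eqref{qMhodge69} has the same leading term $U^\varphi_{\Omega(\boldsymbol{\sigma})}=V+O(\e^2)$. The crucial structural input is the relation, exactly as in the proof of Theorem~\ref{mainconjecture}, that
\beq
\p_X \= D_{\sqrt{\varphi'(\varphi^{-1}(w))}}
\eeq
expresses the $X$-derivative on the $\bT$-side as a flow in the $w$-variable abstract local Hodge hierarchy, where on the right the label denotes the corresponding Hodge derivation. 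Since $\sqrt{\varphi'(\varphi^{-1}(w))}\in\mathcal{O}_c(w)$, polynomiality of the $w$-hierarchy (established in the previous paragraph) shows $D_{\sqrt{\varphi'(\varphi^{-1}(w))}}(w)$ is polynomial and commutes with all Hodge flows $D_{S}(w)$.

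Then I would invoke the theorem of S.-Q.~Liu, Z.~Wang and Y.~Zhang~\cite{LWangZ} precisely as in the proof of Theorem~\ref{mainconjecture}: taking $\p=\p_X$ as the new spatial derivative transforms the abstract local Hodge hierarchy into a hamiltonian evolutionary system in Dubrovin--Zhang normal form, and in particular the $\p_X$-flow is polynomial. Combined with the fact that the two-point correlators satisfy $U^\varphi_{\Omega(\boldsymbol{\sigma})}=\varphi^{-1}(w)+O(\e^2)$ and that $\Omega^{\scriptscriptstyle\rm Hodge}_{i,j}\in\mathcal{A}_w[[\e^2]]_0$, the substitution expressing $w$-jets in terms of $U^\varphi_{\Omega(\boldsymbol{\sigma})}$-jets is a genuine Miura-type transformation (with spatial derivative~$\p$). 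Polynomiality of the abstract local Hodge mapping hierarchy~\eqref{abslocH1224} and the claim $\Omega^{\varphi,\,{\scriptscriptstyle\rm Hodge}}_{S_1(U),S_2(U)}\in\mathcal{A}_{U^\varphi_{\Omega(\boldsymbol{\sigma})}}[[\e^2]]_0$ then follow by conjugating the $w$-side statements through this Miura-type transformation, using the invariance of the Dubrovin--Zhang normal form under such transformations.

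The main obstacle I anticipate is the first paragraph's claim that the bare Hodge hierarchy (before any $\varphi$-twist) is itself polynomial in divergence form with no negative powers of~$w_1$; for the WK case this rested on the explicitly $\tau$-symmetric hamiltonian densities $h_k$ of~\cite{DZ-norm} and the precise identity~\eqref{partialhipartialu}, whereas for general parameters $\boldsymbol{\sigma}$ one must argue from the existence of the $\tau$-symmetric hamiltonian structure of the Hodge hierarchy in~\cite{DLYZ16}. I would secure this by noting that a $\tau$-symmetric integrable hamiltonian perturbation automatically yields hamiltonian densities in divergence form whose flows are differential polynomials, so that the quasi-trivial transformation combined with the jet representations~\eqref{Hodge3gminus21111} produces exactly the required polynomial shape. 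Once this base case is in place, everything else is a faithful transcription of the $\G$-twist argument already carried out for the WK mapping hierarchy, so I do not expect further essential difficulty.
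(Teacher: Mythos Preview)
Your overall architecture is the same as the paper's: first establish the base case $\varphi={\rm id}$ (this is Proposition~\ref{localhodge69}, proved exactly as you outline from the $\tau$-symmetric hamiltonian densities of the Hodge hierarchy in~\cite{DLYZ16, BPS12-1, BPS12-2, Buryak}), then use the identity $\p_X = D_{\sqrt{\varphi'(\varphi^{-1}(w))}}$, and finally show that $U^\varphi_{\Omega(\boldsymbol{\sigma})}=\varphi^{-1}(w)+O(\e^2)$ is a genuine Miura-type transformation with $\p_X$ as spatial derivative.

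The one place where your argument diverges from the paper, and where there is a potential gap, is your invocation of~\cite{LWangZ} for the space-time exchange. That reference treats \emph{bihamiltonian} systems, and this is exactly what is exploited in the proof of Theorem~\ref{mainconjecture} for KdV. The Hodge hierarchy for generic~$\boldsymbol{\sigma}$, however, is not known to be bihamiltonian (only the special-Hodge case is), so it is not clear that~\cite{LWangZ} applies. The paper sidesteps this issue by an elementary iteration: since $\p_X(w)=D_{\sqrt{\varphi'(\varphi^{-1}(w))}}(w)$ is a differential polynomial in the $\p_x$-jets of~$w$ with invertible leading term $\sqrt{\varphi'(\varphi^{-1}(w))}\,\p_x(w)$, one can recursively express each $\p_x^k(w)$ as a differential polynomial in the $\p_X$-jets of~$w$; substituting this into $\Omega^{\scriptscriptstyle\rm Hodge}_{w^{i_1}/i_1!,\,w^{j_1}/j_1!}\in\mathcal{A}_w[[\e^2]]_0$ shows these, and hence $\Omega^{\varphi,\,{\scriptscriptstyle\rm Hodge}}_{U^i/i!,\,U^j/j!}=\sum_{i_1,j_1}\frac{\p t_{i_1}}{\p T_i}\frac{\p t_{j_1}}{\p T_j}\Omega^{\scriptscriptstyle\rm Hodge}_{w^{i_1}/i_1!,\,w^{j_1}/j_1!}$, are polynomial in $\p_X$-jets. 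This is the ``more direct proof'' the paper advertises before Theorem~\ref{mainconjecture}.

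The paper also orders the conclusions slightly differently than you do: it first proves $\Omega^{\varphi,\,{\scriptscriptstyle\rm Hodge}}_{S_1,S_2}\in\mathcal{A}_U[[\e^2]]_0$ (since $U=\Omega^{\varphi,\,{\scriptscriptstyle\rm Hodge}}_{1,1}$ itself is one of these, the Miura-type property follows), and then obtains polynomiality of the flows via the clean identity $D_{S(U)}(U)=D_{S(U)}\bigl(\Omega^{\varphi,\,{\scriptscriptstyle\rm Hodge}}_{1,1}\bigr)=\p\bigl(\Omega^{\varphi,\,{\scriptscriptstyle\rm Hodge}}_{1,S(U)}\bigr)\in\mathcal{A}_U[[\e^2]]_1$. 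This avoids any separate argument for the flows themselves.
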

Let us first prove Theorem~\ref{conj060923} for the case when $\varphi={\rm id}$. Indeed, 
similarly to the proof of Proposition~\ref{localkdv63}, by using the properties of 
the $\tau$-symmetric hamiltonian densities of the Hodge hierarchy \cite{DLYZ16} and the 
results in~\cite{Buryak, BPS12-1, BPS12-2}, 
we arrive at the following proposition.
\begin{prop}\label{localhodge69}
Theorem~\ref{conj060923} holds when $\varphi={\rm id}$.  Moreover, the elements 
$\Omega_{S_1(w), \, S_2(w)}^{\scriptscriptstyle \rm Hodge}$, $S_1(w),S_2(w)\in \mathcal{O}_c(w)$, 
belong to $\mathcal{A}_{w}[[\e^2]]_0$.
\end{prop}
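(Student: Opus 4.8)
The plan is to transcribe, essentially line by line, the proof of Proposition~\ref{localkdv63} for the abstract local KdV hierarchy, replacing the $\tau$-symmetric hamiltonian densities $h_k$ of the KdV hierarchy by the corresponding densities of the Hodge hierarchy. The crucial input, which I will take as a black box from~\cite{BPS12-1, BPS12-2, DLYZ16} (complemented by~\cite{Buryak}), is that for $\varphi=\mathrm{id}$ the Hodge hierarchy is a $\tau$-symmetric hamiltonian perturbation of the RH hierarchy in the normal coordinate~$w$, so that there exist densities $h_k^{\scriptscriptstyle\rm Hodge}\in\mathcal{A}_w[[\e^2]]_0$, $k\ge-1$, which are \emph{differential polynomials} (polynomial in the jet variables $w_1,w_2,\dots$, and at genus~$g$ polynomial in $\sigma_1,\dots,\sigma_{2g-1}$, the latter finiteness coming from Mumford's relation together with the degree-dimension matching), with $h_{-1}^{\scriptscriptstyle\rm Hodge}=w$, satisfying $D_{w^i/i!}(w)=\p\bigl(h_{i-1}^{\scriptscriptstyle\rm Hodge}\bigr)$, the commutativity $D_{w^j/j!}\bigl(h_{i-1}^{\scriptscriptstyle\rm Hodge}\bigr)=D_{w^i/i!}\bigl(h_{j-1}^{\scriptscriptstyle\rm Hodge}\bigr)$, and the recursion $\p h_i^{\scriptscriptstyle\rm Hodge}/\p w=h_{i-1}^{\scriptscriptstyle\rm Hodge}$.

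First I would treat the monomial flows. Writing $S(w)=\sum_{m\ge0}a_m\,w^m/m!$, I set $D_S(w)=\sum_{m\ge0}a_m\,D_{w^m/m!}(w)=\p\bigl(\sum_{m\ge0}a_m\,h_{m-1}^{\scriptscriptstyle\rm Hodge}\bigr)$, and then use the $\mathrm{gr}$-homogeneity of the densities together with the recursion $\p h_i^{\scriptscriptstyle\rm Hodge}/\p w=h_{i-1}^{\scriptscriptstyle\rm Hodge}$, exactly as in Proposition~\ref{localkdv63}, to re-sum the inner series so that the coefficient of each jet monomial $w_\lambda$ becomes a universal constant times a single derivative $S^{(\ell(\lambda)+g-1)}(w)$. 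This simultaneously yields the polynomiality statement (membership in $\mathcal{A}_w[[\e^2]]_1$) and an explicit divergence form parallel to~\eqref{mappinghierarchykdvdivergence}. The passage from $S=\sum a_m w^m/m!$ to a general $S\in\mathcal{O}_c(w)$ with $c\ne0$ is handled exactly as in the KdV case, via the uniqueness of the admissible derivation commuting with the Hodge flow $D_w$, so I would only remark on it rather than repeat it.

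For the second statement I would argue by bilinearity. For $S_1=w^i/i!$ and $S_2=w^j/j!$ the definition of $\Omega^{\scriptscriptstyle\rm Hodge}_{S_1,S_2}$ specializes to the $\tau$-structure element $\Omega^{\scriptscriptstyle\rm Hodge}_{i,j}$, the Hodge analogue of~\eqref{omegaij114}, which lies in $\mathcal{A}_w[[\e^2]]_0$ precisely by the polynomiality of the Hodge $\tau$-structure from~\cite{DLYZ16, BPS12-1, BPS12-2}. Expanding $S_1=\sum_i a_i w^i/i!$ and $S_2=\sum_j b_j w^j/j!$ gives $\Omega^{\scriptscriptstyle\rm Hodge}_{S_1,S_2}=\sum_{i,j}a_i b_j\,\Omega^{\scriptscriptstyle\rm Hodge}_{i,j}$, and I would verify that this double series is well defined in $\mathcal{A}_w[[\e^2]]_0$: since each $\Omega^{\scriptscriptstyle\rm Hodge}_{i,j}$ is $\mathrm{gr}$-homogeneous with leading term $\int^w (w^i/i!)(w^j/j!)$, only finitely many pairs $(i,j)$ contribute to any fixed jet monomial at any fixed power of~$\e$, so the sum converges coefficientwise; the case $c\ne0$ is identical.

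The part I expect to be the genuine obstacle is not any of the above bookkeeping but rather invoking the Hodge-side polynomiality in exactly the form required: that the full $\boldsymbol{\sigma}$-dependent Hodge hierarchy admits $\tau$-symmetric hamiltonian densities that are \emph{polynomial} in the jets (not merely rational in $w_1$, as the free energies $F_{\Omega_g(\boldsymbol{\sigma})}$ of Proposition~\ref{jetreprephodgemap1111} are) and obey the recursion $\p h_i^{\scriptscriptstyle\rm Hodge}/\p w=h_{i-1}^{\scriptscriptstyle\rm Hodge}$. This is the content of~\cite{BPS12-1, BPS12-2} together with~\cite{DLYZ16, Buryak}; once it is quoted in the correct normalization, the remainder of the argument is a direct transcription of Proposition~\ref{localkdv63}.
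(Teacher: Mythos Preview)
Your approach is essentially the same as the paper's: the paper's proof reads, in its entirety, ``similarly to the proof of Proposition~\ref{localkdv63}, by using the properties of the $\tau$-symmetric hamiltonian densities of the Hodge hierarchy~\cite{DLYZ16} and the results in~\cite{Buryak, BPS12-1, BPS12-2}'', and you have filled this in correctly. One small caveat: in the Hodge case the re-summation does not give a \emph{single} derivative $S^{(\ell(\lambda)+g-1)}$ per jet monomial but a range $\sum_j Y_{\lambda,j}(\boldsymbol{\sigma})\,S^{(j)}$ (cf.~\eqref{mappinghierarchydivergencehodge}), since the second KdV homogeneity~\eqref{homowk-2} is absent; this does not affect the polynomiality conclusion.
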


\begin{proof}[Proof of Theorem~\ref{conj060923}]
First, 
\beq
\p \= \sum_{m\ge0} \frac{\p t_m}{\p X} D_{w^m/m!} \=  D_{\sqrt{\varphi'(\varphi^{-1}(w))}}\,.
\eeq
Here $D_{w^m/m!}$, $m\ge0$, are derivations of the abstract Hodge hierarchy. 
By Proposition~\ref{localhodge69} the element $D_{\sqrt{\varphi'(\varphi^{-1}(w))}}(w)$ has polynomiality.
Note that  
\beq
\Omega^{\varphi,\, {\scriptscriptstyle \rm Hodge}}_{U^i/i!, \, U^j/j!} \= 
\sum_{i_1, \, j_1\ge0} \frac{\p t_{i_1}}{\p T_i} \frac{\p t_{j_1}}{\p T_j} \, 
\Omega^{\scriptscriptstyle \rm Hodge}_{w^{i_1}/i_1!, \, w^{j_1}/j_1!} \,, \quad i,j\ge0\,.
\eeq
By an iteration, the $\p_x$-flow for $w$ with $\p=\p_X$ as the spatial derivative 
is an evolutionary PDE in Dubrovin--Zhang's normal form. 
Since $\Omega^{\scriptscriptstyle \rm Hodge}_{w^{i_1}/i_1!, \, w^{j_1}/j_1!}\in \mathcal{A}_w[[\e^2]]_0$
and by substituting the $\p_x$-flow, 
we find that $\Omega^{\scriptscriptstyle \rm Hodge}_{w^{i_1}/i_1!, \, w^{j_1}/j_1!}$ are power series of $\e^2$ 
with coefficients being polynomials of $\p_X(w)$, $\p_X^2(w)$, \dots, so are 
$\Omega^{\varphi,\, {\scriptscriptstyle \rm Hodge}}_{U^i/i!, \, U^j/j!}$. This implies in particular 
that $U=U^{\varphi}_{\Omega(\boldsymbol{\sigma})}  
= \Omega^{\varphi,\, {\scriptscriptstyle \rm Hodge}}_{1,1} = \varphi^{-1}(w) + \dots$ 
gives a Miura-type transformation. So 
$\Omega^{\varphi,\, {\scriptscriptstyle \rm Hodge}}_{U^i/i!, \, U^j/j!}\in \mathcal{A}_U[[\e^2]]_0$, 
and thus 
$\Omega^{\varphi,\, {\scriptscriptstyle \rm Hodge}}_{S_1(U), \, S_2(U)}\in \mathcal{A}_U[[\e^2]]_0$.
Finally, 
$
D_{S(U)} (U) = 
D_{S(U)} \bigl(\Omega^{\varphi,\, {\scriptscriptstyle \rm Hodge}}_{1, \, 1}\bigr) = 
\p \bigl(\Omega^{\varphi,\, {\scriptscriptstyle \rm Hodge}}_{1, \, S(U)}\bigr) \in \mathcal{A}_U[[\e^2]]_{1}$.
\end{proof}
We also verified Theorem~\ref{conj060923} directly up to and including terms of~$\e^8$. 

By using again the definition (i.e., using the quasi-Miura map), we find that the abstract local Hodge mapping hierarchy~\eqref{mappinghierarchy} has the more precise form:
\beq\label{mappinghierarchydivergencehodge}
D_{S} (U) \= 
\p \Biggl(\int^{U} S \+ \sum_{g\ge1} \e^{2g} \sum_{\lambda \in \mathcal{P}_{2g}} 
\sum_{j=1}^{\ell(\lambda)+g-1} Y^\varphi_{\lambda,j}(l_1(U),\dots; m_1(U), \dots) \, S^{(j)}(U)\, U_\lambda\Biggr),
\eeq
where $U=U^{\varphi}_{\Omega(\boldsymbol{\sigma})}$,  
$Y^\varphi_{\lambda,j}(\ell_1,\dots; \rho_1, \dots)$ are weighted homogeneous polynomials 
of degree $\ell(\lambda)+g-1-j$ in 
variables $\ell_i$ and $\rho_i$ of weight~$i$ ($i\ge 1$), $l_i(U)$ are defined in~\eqref{deflkv63}, and $m_i(U)=\sigma_{2i-1} \varphi'(U)^i$.

The abstract local Hodge mapping hierarchy~\eqref{abslocH1224} can also be written in the form 
\beq\label{hodgemappinglocalham69}
D_{S} (U) \= P_1^\varphi(U) 
\biggl( \frac{\delta \int h^\varphi_{1;S}}{\delta U} \biggr) \,, \quad S\in \mathcal{O}_c\,,
\eeq
where $U=U^{\varphi}_{\Omega(\boldsymbol{\sigma})}$, 
$P_1^\varphi(U)$ is the operator given by 
\beq\label{Poisson1112}
P_1^\varphi(U) 
\,:=\, \sum_{k,\ell\ge0} \, (-1)^\ell \, \frac{\p U}{\p V_k} \circ \p^k \circ \biggl(\frac12 \, 
\frac{1}{\varphi'(V)} \circ \p \+ \frac12 \, \p \circ \frac{1}{\varphi'(V)} \biggr) \circ \p^\ell \circ \frac{\p U}{\p V_\ell} \,,
\eeq
and the hamiltonian density $h^\varphi_{1;S}$ is understood as the substitution of the inverse of 
the quasi-Miura transformation~\eqref{qMhodge69} into~\eqref{h1s69}. As before, $P^\varphi(U)$ 
has the form:
\begin{align}
& P_1^\varphi(U) \= \sum_{g\ge0} \, \e^{2g} \, P^{\varphi,[g]}_{1;\Omega(\boldsymbol{\sigma})} \,, \quad 
P^{\varphi,[0]}_{1;\Omega(\boldsymbol{\sigma})} 
\= \frac{1}{2\,\varphi'(U)} \circ \p \+ \p \circ \frac{1}{2\,\varphi'(U)}\,, \label{explicitp1hodge}\\
& P^{\varphi,[g]}_{1;\Omega(\boldsymbol{\sigma})} \= \sum_{j=0}^{3g+1} \, A^\varphi_{2g,j;\Omega(\boldsymbol{\sigma})} \, \p^j\,, 
\quad A_{2g,j;\Omega(\boldsymbol{\sigma})}^{\varphi} \in \mathcal{O}_c(U)\bigl[U_1,\dots,U_{3g+1},U_1^{-1}\bigr][\boldsymbol{\sigma}]\,,\\
& \sum_{m\ge1} \, m \, U_m \, \frac{\p A^\varphi_{2g,j;\Omega(\boldsymbol{\sigma})} }{\p U_m} 
\= (2g+1-j) A^\varphi_{2g,j;\Omega(\boldsymbol{\sigma})}  \,.
\end{align}
We have the following conjecture. 
\begin{conjecture}\label{mainconjecturehodge} 
For $g\ge0$ and $0\leq j\leq 3g+1$, the elements $A^\varphi_{2g,j;\Omega(\boldsymbol{\sigma})}$ all belong to $\mathcal{A}_{U}^{[2g+1-j]}$. 
Moreover, for $i\ge0$, the variational derivatives of the hamiltonians $\int h^\varphi_{1;S}$ 
with respect to~$U$ belong to $\mathcal{A}_{U}[[\e]]$.
\end{conjecture}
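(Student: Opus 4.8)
The plan is to follow the strategy used to prove Theorem~\ref{mainconjecture} and Theorem~\ref{conj060923}, namely a space-time exchange combined with a Miura-type transformation, but now carried out at the level of the hamiltonian structure rather than merely of the evolutionary equations. By Theorem~\ref{conj060923} we already know that the hierarchy $D_S(U)$ and the two-point symbols $\Omega^{\varphi,\, {\scriptscriptstyle \rm Hodge}}_{S_1(U),S_2(U)}$ are differential polynomials in the jets of $U=U^\varphi_{\Omega(\boldsymbol{\sigma})}$; what remains is to show that the Poisson operator $P_1^\varphi(U)$ of~\eqref{Poisson1112} has differential-polynomial coefficients (equivalently that $A^\varphi_{2g,j;\Omega(\boldsymbol{\sigma})}\in\mathcal{A}^{[2g+1-j]}_U$) and that the hamiltonian densities $h^\varphi_{1;S}$ defined by~\eqref{h1s69} have, modulo total $\p$-derivatives, differential-polynomial variational derivatives.

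First I would, exactly as in the proof of Theorem~\ref{conj060923}, record that $\p=D_{\sqrt{\varphi'(\varphi^{-1}(w))}}$, where $w$ denotes the normal coordinate of the abstract local Hodge hierarchy and the $D$'s are its derivations. Proposition~\ref{localhodge69} supplies the polynomiality of $D_{\sqrt{\varphi'(\varphi^{-1}(w))}}(w)$, and this flow commutes with the Hodge hierarchy, which in the coordinate $w$ is hamiltonian with Poisson operator $\p$ in the standard form~\eqref{hamperturb1127}. The crucial step is then to establish a hamiltonian analogue of the Liu--Wang--Zhang space-time exchange of~\cite{LWangZ}: I would show that when $\p=\p_X$ is taken as the new spatial derivative, the resulting $\p_x$-flow for $w$ is again a hamiltonian derivation of Dubrovin--Zhang normal form, with a Poisson operator whose coefficients are differential polynomials and with differential-polynomial hamiltonian densities. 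Granting this, the Miura-type transformation $U=\varphi^{-1}(w)+O(\e^2)$ coming from $U=\Omega^{\varphi,\, {\scriptscriptstyle \rm Hodge}}_{1,1}$ (whose polynomiality is part of Theorem~\ref{conj060923}), together with the transformation law~\eqref{Poissontrans1121} and the Miura-invariance of the class of differential-polynomial Poisson operators and of differential-polynomial hamiltonian densities~\cite{DZ-norm}, transports polynomiality to $P_1^\varphi(U)$ and to the variational derivatives of $\int h^\varphi_{1;S}$, yielding both assertions.

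The main obstacle is precisely the hamiltonian space-time exchange in the middle step. In the bihamiltonian setting the needed statement is furnished directly by~\cite{LWangZ}, which is what let the proof of Theorem~\ref{mainconjecture} go through for the WK mapping hierarchy; in the present purely hamiltonian situation there is no bihamiltonian pencil to lean on, and one must argue directly that a single Poisson operator survives the exchange with polynomial coefficients. Concretely, the difficulty is that space-time exchange interchanges the dispersionless-flat operator $\p$ with the hydrodynamic operator $g(w)\p+\frac12 g'(w)w_1$ attached to the $x$-flow, and controlling the $\e$-corrections to the exchanged operator requires a deformation-theoretic, bihamiltonian-cohomology type computation of the kind in~\cite{DLZ06, LZ05} but \emph{without} the second structure to rigidify it. This is exactly why the statement is formulated here only as a conjecture; I expect that once the appropriate hamiltonian input is in place, the Miura-transfer argument sketched above will complete the proof essentially verbatim.
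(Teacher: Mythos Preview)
The paper does not give a proof of this statement: it is stated explicitly as a \emph{conjecture} (Conjecture~\ref{mainconjecturehodge}), and the text immediately moves on without any argument. So there is no ``paper's own proof'' to compare against.

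Your proposal is not really a proof either, and you recognize this yourself in the final paragraph. Your analysis of \emph{why} the statement remains conjectural is accurate and matches the logic of the paper: the proof of Theorem~\ref{mainconjecture} (the WK case) relies on the bihamiltonian space-time exchange result of Liu--Wang--Zhang~\cite{LWangZ}, which requires a Poisson \emph{pencil}; for the general Hodge mapping hierarchy there is (conjecturally) only a single hamiltonian structure, so that input is unavailable. The Miura-transfer argument you outline in the middle paragraph is exactly the scheme used in Theorems~\ref{mainconjecture} and~\ref{conj060923}, and would indeed complete the proof once a purely hamiltonian analogue of the \cite{LWangZ} exchange is established. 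That missing analogue is the genuine gap, and the paper leaves it open as well.
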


Motivated by the Hodge universality conjecture 
proposed in~\cite{DLYZ16} (see Remark~\ref{remark12-717}) and the classification work mentioned in Section~\ref{section6}, 
we propose the following {\it Hodge mapping universality conjecture}.

\begin{conjecture} \label{Hmuconj226}
The abstract local Hodge mapping hierarchy is a universal object for hamiltonian perturbations of the abstract local RH hierarchy possessing  
a $\tau$-structure.
\end{conjecture}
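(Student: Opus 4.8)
The plan is to mimic the proof of the WK mapping universality theorem (Theorem~\ref{WKmuconj226}), replacing the bihamiltonian classification by central invariant with a mono-hamiltonian classification of $\tau$-structures. First I would invoke Lemma~\ref{haminvtaulemma1203}, together with the Miura-invariance of the class of abstract local Hodge mapping hierarchies (which follows from Theorem~\ref{conj060923} and Proposition~\ref{propgenus01011}), to reduce to the standard form \eqref{hamperturb1127}--\eqref{hstand1128}. It then suffices to show that every hamiltonian perturbation of the abstract local RH hierarchy possessing a $\tau$-structure, once brought to standard form, is Miura-equivalent to some Hodge mapping hierarchy $Z^\varphi_{\Omega(\boldsymbol{\sigma})}$.

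Next I would extract and match the classification data. As recorded in Section~\ref{newnewsection8}, imposing a $\tau$-structure forces the functions $\alpha_\lambda(w)$ in \eqref{hstand1128} to be determined by the single free function $a_0(w)$ together with the free constants $q_1,q_2,q_3,\dots$, where $q_i$ enters linearly in $\alpha_{2^{i+1}}(w)$ (see \eqref{a1class1202}--\eqref{a3class1202}). On the Hodge mapping side, the dispersionless limit of the Poisson operator $P^\varphi_1$ in \eqref{explicitp1hodge}, after the Darboux normalization $M(U)=\int_0^U\sqrt{\varphi'(y)}\,dy$, fixes $a_0(w)=\sqrt{\varphi'(M^{-1}(w))}$, so that $\varphi$ is recovered from $a_0$; and I expect the constants $q_i$ to be affine-linear in the Hodge parameters $\sigma_{2i-1}$ (the genus-one computation in Proposition~\ref{jetreprephodgemap1111} already exhibits $\sigma_1$ entering through $\frac{\sigma_1}{24}\varphi(V)$, consistent with $q_1$ appearing at order $\e^2$). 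Crucially, turning on $\boldsymbol{\sigma}$ should relax the bihamiltonian constraints \eqref{q2q31129} among the $q_i$ that hold at $\boldsymbol{\sigma}=\mathbf{0}$ (the WK case), thereby freeing \emph{all} of the $q_i$. Matching degrees of freedom, the pair $(\varphi,\boldsymbol{\sigma})$ — one free function and countably many free constants — should map bijectively onto the data $(a_0,(q_i))$; establishing this surjectivity, i.e.\ that the Hodge parameters realize arbitrary sequences $(q_i)$, is one concrete and checkable step.

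The decisive and hardest ingredient is a completeness statement: that $(a_0(w),q_1,q_2,\dots)$ is a \emph{complete} Miura-invariant of hamiltonian perturbations possessing a $\tau$-structure, i.e.\ that two such perturbations in standard form with the same $a_0$ and the same $q_i$ are Miura-equivalent. In the bihamiltonian setting this role is played by the central invariant, whose completeness is the deep result of \cite{DLZ06, LZ05} proved via bihamiltonian cohomology. For the mono-hamiltonian $\tau$-structure setting no such theorem exists in the literature, and I expect the main obstacle to be the development of the corresponding cohomological framework — a ``hamiltonian $\tau$-cohomology'' whose vanishing in the relevant degrees would upgrade the formal determination of the $\alpha_\lambda$ from $(a_0,(q_i))$, currently verified only to order $\e^8$ (Theorem~\ref{theoremnew1} being the analogous all-orders upgrade in the bihamiltonian case), to an actual classification to all orders. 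A natural route is to first prove to all orders that the $\tau$-structure axiom forces the unique polynomial expressions for the $\alpha_\lambda$ asserted in Section~\ref{newnewsection8}, and then to show that the residual Miura freedom fixing these data is exhausted by normal Miura-type transformations, so that no further invariants can arise. Combining such a completeness theorem with the parameter-matching above and with the integrability and $\tau$-structure of the Hodge mapping hierarchy (Theorem~\ref{thmmain2}, refined by Theorem~\ref{conj060923}) would then yield the conjecture exactly as in the proof of Theorem~\ref{WKmuconj226}.
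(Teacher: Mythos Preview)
The statement is a \emph{conjecture} in the paper, not a theorem: the paper does not prove it, and explicitly says so. What the paper offers is exactly the evidence you describe, made precise. First, the parameter matching: with $M(U)=\int_0^U\sqrt{\varphi'(y)}\,dy$ one gets $a_0(w)=M'(M^{-1}(w))$, recovering $\varphi$ from $a_0$, and the paper computes the explicit relations \eqref{qsigma1119} between the $q_i$ and the $\sigma_{2j-1}$ (each $q_i$ is linear in $\sigma_{2i-1}$ plus a polynomial in $\sigma_1,\dots,\sigma_{2i-3}$, hence the map is triangular and invertible --- slightly more than the ``affine-linear'' you wrote, but to the same effect). Second, direct verification of the Miura equivalence up to and including order $\e^8$ via an explicit transformation. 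The paper also records, via \eqref{BDGRrelation}, a conjectural closed formula for all $q_i$ in terms of $\boldsymbol{\sigma}$ coming from the double-ramification side.

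Your strategy --- mimic the proof of Theorem~\ref{WKmuconj226}, replace the central-invariant classification by a mono-hamiltonian $\tau$-structure classification --- is the natural one and is consistent with the paper's own framing. You have also correctly located the real obstruction: there is at present no analogue of the Liu--Zhang/Dubrovin--Liu--Zhang completeness theorem for hamiltonian (as opposed to bihamiltonian) perturbations with a $\tau$-structure, and the paper does not supply one. The determination of the $\alpha_\lambda$ by $(a_0,(q_i))$ is stated in Section~\ref{newnewsection8} only as an expectation, checked to order~$\e^8$; upgrading this to all orders, together with showing that no further Miura-invariants arise, is precisely the missing ingredient. So your proposal is a sound outline of what a proof would require, but --- in agreement with the paper --- it is not a proof.
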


Conjecture~\ref{Hmuconj226} generalizes Theorem~\ref{WKmuconj226} as well as 
the Hodge universality conjecture from~\cite{DLYZ16}.
Let us verify 
Conjecture~\ref{Hmuconj226} {\it directly}
up to and including terms of order~$8$ in~$\epsilon$. Indeed, the following Miura-type transformation 
\begin{align}
& w \= M(U) \+ \sum_{k=1}^4 \e^{2k} \sum_{\lambda \in \mathcal{P}_{2k}} C_\lambda(U) \, U_{\lambda}  \+ \mathcal{O}(\e^{10})
\end{align}
transforms the abstract local Hodge mapping hierarchy~\eqref{abslocH1224}
to the standard form~\eqref{standardform1113} up to~$\e^8$, with $U=U^{\varphi}_{\Omega(\boldsymbol{\sigma})}$,
\beq
M(U) \= \int_0^{U} \sqrt{\varphi'(y)} \, dy \,,
\eeq
\beq
a_0(w) \= M'(M^{-1}(w)) \,,
\eeq
and 
\begin{align}
& C_{(2)} (U) \= -\frac{\sigma_1}{24} \, \varphi'(U)^{3/2}\,, \nn\\
& C_{(1^2)} (U) \= - \frac{\sigma_1}{24}  \sqrt{\varphi'(U)} \, \varphi''(U)  
\+ \frac{\varphi''(U)^2}{24 \, \varphi'(U)^{3/2}}
 - \frac{\varphi^{(3)}(U)}{48 \sqrt{\varphi'(U)}}  \,, \nn\\
& C_{(4)} (U) \= -\frac{\sigma_1}{240}  \sqrt{\varphi'(U)} \, \varphi''(U) 
\+ \frac{\sigma_1^2}{1920}  \varphi'(U)^{5/2} 
\+\frac{\varphi''(U)^2}{384 \, \varphi'(U)^{3/2}}
-\frac{\varphi^{(3)}(U)}{480 \sqrt{\varphi'(U)}}
 \,,\nn\\
& \dots \,, \nn\\
& C_{(1^8)}(U) \= - \frac{107}{185794560} \, \frac{\varphi^{(12)}(U)}{\sqrt{\varphi'(U)}} 
\+ ~{\rm more~than~two~hundred~terms} \,. \nn
\end{align}
Here the beginning relationships between 
the classification invariants $q_1,q_2$, $\cdots$ and the Chern-Hodge-Mumford parameters $\sigma_1,\sigma_3$, $\cdots$ are given by
\begin{align}
& q_1 \= \frac{\sigma_1}{2^5 \, 3^2 \, 5^1} \,, \qquad q_2 \=  \frac{2 \, \sigma_1^3-\sigma_3}{2^{10} \, 3^5 \, 5^1} \,, 
\qquad 
 q_3 \= \frac{16 \, \sigma_1^5 - 20 \, \sigma_1^2\sigma_3 + \sigma_5}{2^{13} \, 3^6 \, 5^2 \, 7^1}\,. \label{qsigma1119} 
\end{align}
We note that the relations in~\eqref{qsigma1119} coincide 
with the ones given in~\cite{DLYZ16} (see~also~\cite{BDGR20}). Note that in~\cite{DLYZ16, BDGR20} only the 
case with \hbox{$a_0(w) \equiv 1$} (i.e., the case 
with \hbox{$\varphi(V)=V$}) was considered. But the results in this paper show that the above beginning relations
\eqref{qsigma1119} do not depend on~$\varphi$. In general, this independence of~$\varphi$ is   
expected.   
Note that equations~\eqref{qsigma1119} 
specialize to~\eqref{q2q31129} when the $\sigma$'s are specialized by~\eqref{sigmaspecial1111}.

\begin{remark}
For each CohFT, A.~Buryak~\cite{Buryak15} defined the {\it double ramification (DR) hierarchy}, which 
is a $\tau$-symmetric hamiltonian system~\cite{BDGR18, BDGR20}. For the trivial
case (the case when the CohFT is given by $\Omega({\bf 0})=1$), the DR hierarchy coincides with the KdV hierarchy. 
For the Hodge CohFT $\Omega(\boldsymbol{\sigma})$ (see~\eqref{chernchrank1omega}),
it is conjectured in~\cite{Buryak15} and refined in~\cite{BDGR18, DLYZ16} that 
the DR hierarchy associated to $\Omega(\boldsymbol{\sigma})$ is normal Miura-type 
equivalent~\cite{DLYZ16, DZ-norm} to the Hodge hierarchy. Later it is shown by 
Buryak, Dubrovin, Gu\'er\'e and Rossi~\cite{BDGR20} that 
the DR hierarchy is the standard deformation with $a_0(w)\equiv 1$, 
and moreover, by an explicit computation in the DR side they obtained the following conjectural 
relations between $q$'s and $\sigma$'s when $a_0(w)\equiv1$:
\beq\label{BDGRrelation}
q_{g-1} \= (3g-2) \, \int_{\overline{\mathcal{M}}_{g,0}} 
\lambda_g \exp \biggl(\sum_{j\geq1} \, \sigma_{2j-1} \, {\rm ch}_{2j-1}(\mathbb{E}_{g,0}) \biggr) \,, \quad g\ge2\,.
\eeq
By the discussion given right above this remark, we conjecture this holds for all $\varphi$ which makes the discussion 
more explicit. Using formula~\eqref{BDGRrelation} and the algorithm in~\cite{DLYZ16} 
for computing Hodge integrals (or the Hodge--GUE 
correspondence~\cite{DLYZ16, DLYZ20, DY17}), we can compute more 
explicit values for $q_i$ in the following table:
\begin{align}
& \arraycolsep=5.1pt\def\arraystretch{1.5}
\begin{array}{?c?c|c|c|c|c|c|c?}
\Xhline{2\arrayrulewidth}  i &  1 & 2 & 3 & 4 & 5 & 6 &7   \\
\Xhline{2\arrayrulewidth}  q_i &  \frac{q}{2^5 3^1 5^1} & \frac{q^3}{2^7 3^4 5^1}  & 0 & \frac{-13 \, q^7}{2^{10} 3^4 5^2 7^1 11^1} 
& \frac{-59 \, q^9}{2^{5} 3^7 5^2 7^2 11^1 13^1} & \frac{19 \, q^{11}}{2^{11} 3^4 5^1 7^2 11^1 13^1} 
& \frac{1493 \, q^{13}}{2^{9} 3^7 5^3 7^2 13^1 17^1}   \\
\Xhline{2\arrayrulewidth}
\end{array} \nn 
\end{align}
\end{remark}

\medskip

\section{The generalized Hodge--WK correspondence}\label{section9}
In this section, by using the $\G$-action and the Hodge--WK correspondence we obtain 
 explicit relationships between the WK mapping partition functions and the special-Hodge mapping partition 
 functions, and we investigate bihamiltonian structures for the Hodge mapping hierarchy.  

\begin{theorem}\label{thm1114}
The special-Hodge mapping partitions and the WK mapping partition functions are related by 
\beq\label{ZZss1112}
Z^\psi_{\Omega^{\rm special}(q)} \= Z^\varphi \,,
\eeq
where the two power series $\psi$ and $\varphi$ are related by~\eqref{phipsi1122}.
\end{theorem}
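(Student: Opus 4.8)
The plan is to deduce the identity~\eqref{ZZss1112} directly from the Hodge--WK correspondence (Theorem~\ref{thmhwk}), the only extra ingredient being that $\bt\mapsto\bt.\varphi$ is a genuine \emph{right} action of the group~$\G$. First I would record the composition law
\beq\label{planrightaction}
(\bt.\alpha).\beta \= \bt.(\alpha\circ\beta)\,, \qquad \alpha,\beta\in\G\,,
\eeq
which follows at once from the defining relation~\eqref{defsigmaction} and the chain rule: writing $\bs=\bt.\alpha$ and $\bS=\bs.\beta$, one has $B_\bS(V)=\sqrt{\beta'(V)}\,B_\bs(\beta(V))=\sqrt{\beta'(V)\,\alpha'(\beta(V))}\,B_\bt\bigl(\alpha(\beta(V))\bigr)$, and since $\beta'(V)\,\alpha'(\beta(V))=(\alpha\circ\beta)'(V)$ this is exactly the defining relation for $\bt.(\alpha\circ\beta)$. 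This is the $\G$-part of the semidirect-product law mentioned in the remark following Lemma~\ref{lemmavfv}.

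With~\eqref{planrightaction} in hand, the second step is purely formal. By Definition~\ref{definitionZphi} and the definition of the Hodge mapping partition function, $Z^\varphi(\bT;\e)=Z^{\scriptscriptstyle\rm WK}(\bT.\varphi^{-1};\e)$ and $Z^\psi_{\Omega^{\rm special}(q)}(\bT;\e)=Z_{\Omega^{\rm special}(q)}(\bT.\psi^{-1};\e)$. I would then specialize Theorem~\ref{thmhwk}, which asserts $Z_{\Omega^{\rm special}(q)}(\bt.\varphi_{\rm special};\e)=Z^{\scriptscriptstyle\rm WK}(\bt;\e)$, to the tuple $\bt:=\bT.\varphi^{-1}$. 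Its right-hand side is then $Z^{\scriptscriptstyle\rm WK}(\bT.\varphi^{-1};\e)=Z^\varphi(\bT;\e)$, while by~\eqref{planrightaction} its left-hand side is $Z_{\Omega^{\rm special}(q)}\bigl((\bT.\varphi^{-1}).\varphi_{\rm special};\e\bigr)=Z_{\Omega^{\rm special}(q)}\bigl(\bT.(\varphi^{-1}\circ\varphi_{\rm special});\e\bigr)$. Since relation~\eqref{phipsi1122} is exactly $\varphi=\varphi_{\rm special}\circ\psi$, one has $\varphi^{-1}\circ\varphi_{\rm special}=\psi^{-1}$, so the left-hand side equals $Z_{\Omega^{\rm special}(q)}(\bT.\psi^{-1};\e)=Z^\psi_{\Omega^{\rm special}(q)}(\bT;\e)$, which is~\eqref{ZZss1112}.

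The essential difficulty is thus entirely absorbed into Theorem~\ref{thmhwk}, whose proof via the loop equations is the genuinely nontrivial input; beyond that, the present theorem is a bookkeeping rearrangement and I expect no real obstacle. The only points to check carefully are formal: that $\psi\in\G$ if and only if $\varphi\in\G$ --- immediate from~\eqref{phipsi1122}, since both $\varphi_{\rm special}$ and its inverse $f_{\rm special}(v)=\log(1+2qv)/(2q)$ lie in $\G$ over $R=\QQ[q]$ and $\G$ is a group under composition, so $\varphi=\varphi_{\rm special}\circ\psi$ and $\psi=\varphi_{\rm special}^{-1}\circ\varphi$ transfer membership in both directions --- and that all the substitutions $\bT.\varphi^{-1}$, $\bT.\psi^{-1}$, $\bt.\varphi_{\rm special}$ are well defined and that the resulting identity holds in the same formal power series ring $\CC((\e^2))[[q]][[\bT]]$ in which Theorem~\ref{thmhwk} is stated. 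These verifications are routine.
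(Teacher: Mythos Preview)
Your proposal is correct and follows essentially the same route as the paper: both proofs reduce the identity to the Hodge--WK correspondence (Theorem~\ref{thmhwk}) together with the right-action property~\eqref{planrightaction} of the $\G$-action on tuples. The paper's argument is slightly terser---it rewrites Theorem~\ref{thmhwk} as $Z_{\Omega^{\rm special}(q)}(\bT;\e)=Z^{\scriptscriptstyle\rm WK}(\bT.\varphi_{\rm special}^{-1};\e)$ and substitutes $\bT\mapsto\bT.\psi^{-1}$, whereas you substitute $\bt:=\bT.\varphi^{-1}$ into the form stated in~\eqref{mainidentity}---but these are trivially equivalent rearrangements of the same bookkeeping, and your explicit verification of the composition law~\eqref{planrightaction} and of the $\psi\leftrightarrow\varphi$ membership in~$\G$ fills in details the paper leaves implicit.
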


\begin{proof}
Recall from Section~\ref{sectionexample223} that the Hodge--WK correspondence says
$$
Z_{\Omega^{\rm special}(q)}(\bT;\e) \= Z^{\scriptscriptstyle \rm WK}\bigl(\bT . \varphi_{\rm special}^{-1};\e\bigr) \,,
$$
where we recall that $\varphi_{\rm special}$ is defined as in~\eqref{examplephi}.
Therefore, 
$$
Z_{\Omega^{\rm special}(q)}\bigl(\bT . \psi^{-1};\e\bigr) \= 
 Z^{\scriptscriptstyle \rm WK}\bigl(\bT . \psi^{-1} \circ \varphi_{\rm special}^{-1};\e\bigr)
 \= 
 Z^{\scriptscriptstyle \rm WK}\bigl(\bT . (\varphi_{\rm special} \circ \psi)^{-1};\e\bigr) \,.
$$
The theorem is proved.
\end{proof}

We call~\eqref{ZZss1112} the {\it generalized Hodge--WK correspondence}.  
From the definition, an alternative form of~\eqref{ZZss1112} is 
\beq
Z^\psi_{\Omega^{\rm special}(q)}(\bt . \varphi;\e) \= Z^{\scriptscriptstyle \rm WK} (\bt;\e) \,,
\eeq
where $\varphi$ and $\psi$ are related by~\eqref{phipsi1122}.

Let us consider the Poisson geometry behind 
this theorem.
Indeed, via a bihamiltonian test, we find that 
up to order~$\e^8$,
the Hodge mapping hierarchy associated to 
an arbitrarily given group element $\psi\in\G$ is bihamiltonian if and only if its parameters have the specific values
\beq\label{2041114}
 \sigma_1 \= 3 \, q \,, \quad \sigma_3 \= 30 \, q^3\,, \quad \sigma_5 \= 1512 \, q^5\,, \quad \sigma_7 \= 183600 \, q^7\,.
\eeq
This specialization is remarkable because it does not depend on~$\psi$.
For the case when $\psi(V) = V$, we already know that the answer is 
the special-Hodge specialization~\eqref{sigmaspecial1111}, conjectured\footnote{About this known conjecture, the sufficiency part 
is proved by the Hodge--GUE correspondence~\cite{DLYZ20} but the necessity part is still open.} in~\cite{DLYZ16}. 
So we expect that the Hodge mapping hierarchy associated to 
$\psi\in\G$ is bihamiltonian if and only if $\sigma_{2j-1}=\sigma_{2j-1}^{\rm special} \= (4^j-1) \, (2j-2)! \, q^{2j-1}$, $j\geq1$, 
of which the first four values are the ones given in equation~\eqref{2041114}.
We call the Hodge mapping hierarchy associated to~$\psi$ with this specialization the {\it special-Hodge mapping hierarchy associated to~$\psi$}. The following corollary gives the sufficiency part. 
\begin{cor}
The special-Hodge mapping hierarchy associated to~$\psi$ 
has a bihamiltonian structure with the Poisson pencil $Q_2^\psi\bigl(U_{\Omega^{\rm special}(q)}^\psi\bigr) 
+ \lambda \, Q_1^\psi\bigl(U_{\Omega^{\rm special}(q)}^\psi\bigr)$ given by   
\beq
Q_1^\psi\bigl(U_{\Omega^{\rm special}(q)}^\psi\bigr) \,:=\, \sum_{k,\ell\ge0} \, (-1)^\ell \, 
\frac{\p U_{\Omega^{\rm special}(q)}^\psi}{\p V_k} \circ \p^k \circ Q^{\psi,[0]}(V) 
\circ \p^\ell \circ \frac{\p U_{\Omega^{\rm special}(q)}^\psi}{\p V_\ell} \,,
\eeq
\beq
Q_1^{\psi,[0]}(V) \,:=\, \frac12 \, \frac{e^{-2 \, q \, \psi(V)}}{\psi'(V)} \circ \p \+ \frac12 \, \p \circ \frac{e^{-2 \, q \, \psi(V)}}{\psi'(V)} \,,
\eeq 
\beq
Q_2^\varphi\bigl(U_{\Omega^{\rm special}(q)}^\psi\bigr) \,:=\, \sum_{k,\ell\ge0} \, (-1)^\ell \, \frac{\p U_{\Omega^{\rm special}(q)}^\varphi}{\p V_k} \circ \p^k \circ Q_2^{\psi,[0]}(V) \circ \p^\ell \circ \frac{\p U_{\Omega^{\rm special}(q)}^\psi}{\p V_\ell} \,,
\eeq
\beq
Q_2^{\psi,[0]}(V) \,:=\, \frac12 \, 
\frac{1-e^{-2 \, q \, \psi(V)}}{2 \, q\, \psi'(V)} \circ \p \+ \frac12 \, \p \circ \frac{1-e^{-2 \, q \, \psi(V)}}{2 \, q\, \psi'(V)} \,.
\eeq 
\end{cor}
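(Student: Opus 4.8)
The plan is to prove the corollary not by a direct bihamiltonian test but by transporting the bihamiltonian structure of the WK mapping hierarchy across the generalized Hodge--WK correspondence. Set $\varphi=\varphi_{\rm special}\circ\psi$, so that $\varphi(V)=(e^{2q\psi(V)}-1)/(2q)$ as in~\eqref{phipsi1122}. By Theorem~\ref{thm1114} one has the identity of partition functions $Z^\psi_{\Omega^{\rm special}(q)}=Z^\varphi$. Taking logarithms and applying $\e^2\p_X^2$ (the definitions~\eqref{defU813} and~\eqref{defU118}) shows that the normal coordinates coincide, $U^\psi_{\Omega^{\rm special}(q)}=U^\varphi$, and moreover that the two quasi-Miura transformations $V\mapsto U^\varphi$ of~\eqref{qM92phi2} and $V\mapsto U^\psi_{\Omega^{\rm special}(q)}$ of~\eqref{qMhodge69} (specialized to $\boldsymbol{\sigma}^{\rm special}$) are literally the same change of variables. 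In particular, the special-Hodge mapping hierarchy associated to $\psi$ is flow-by-flow identical to the WK mapping hierarchy associated to $\varphi$.

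Next I would match the operators $Q_1^\psi,Q_2^\psi$ of the statement with the Poisson operators $P_1^\varphi,P_2^\varphi$ of~\eqref{Poisson92-1}--\eqref{Poisson92-2}. Each of these four operators is a quasi-Miura conjugate (by the \emph{same} map of Step~1) of a dispersionless seed of hydrodynamic type, so it is enough to match the seeds. Differentiating the relation above gives $\varphi'(V)=\psi'(V)\,e^{2q\psi(V)}$, and hence
\[
\frac1{2\varphi'(V)}\=\frac12\,\frac{e^{-2q\psi(V)}}{\psi'(V)}\,,\qquad
\frac{\varphi(V)}{2\varphi'(V)}\=\frac12\,\frac{1-e^{-2q\psi(V)}}{2q\,\psi'(V)}\,.
\]
Comparing the left-hand sides, which are the seeds appearing in~\eqref{Poisson92-1}--\eqref{Poisson92-2} (cf.~\eqref{explicitp11126}--\eqref{explicitp21126}), with the seeds $Q_1^{\psi,[0]}(V)$ and $Q_2^{\psi,[0]}(V)$ written in the statement, I find they agree term by term. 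Since the Jacobians $\p U^\varphi/\p V_k$ and $\p U^\psi_{\Omega^{\rm special}(q)}/\p V_k$ coincide by Step~1, the full conjugated operators coincide: $Q_1^\psi=P_1^\varphi$ and $Q_2^\psi=P_2^\varphi$.

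Finally I would invoke the facts already established on the WK mapping side. By Theorem~\ref{mainconjecture} and the discussion surrounding~\eqref{Poisson92-1}--\eqref{Poisson92-2}, the operators $P_1^\varphi,P_2^\varphi$ are genuine, pairwise compatible Poisson operators with $[P_a^\varphi,P_b^\varphi]=0$, each giving a hamiltonian form of every flow $D_S(U^\varphi)$ through the densities $h^\varphi_{1;S},h^\varphi_{2;S}$ of~\eqref{h1s69}--\eqref{h2s69}, and the pencil $P_2^\varphi+\lambda P_1^\varphi$ has central invariant $1/24$ (see~\eqref{203719}). Carrying these statements verbatim through the identification $Q_a^\psi=P_a^\varphi$ shows that $Q_2^\psi+\lambda Q_1^\psi$ is a Poisson pencil for the special-Hodge mapping hierarchy associated to $\psi$, which is exactly the claim. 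The one step demanding real care---and thus the main obstacle---is the assertion in Step~1 that the two quasi-Miura transformations are equal as elements of the differential polynomial ring, not merely that they produce the same flows: this equality is forced because both are obtained by applying $\e^2\p_X^2$ to one and the same free energy, so the genus-by-genus jet representations furnished by Proposition~\ref{jetreprep} and Proposition~\ref{jetreprephodgemap1111} must coincide.
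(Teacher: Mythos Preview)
Your argument is correct and is precisely the unpacking of the paper's two-line proof ``By Theorem~\ref{mainconjecture} and Theorem~\ref{thm1114}'': you use Theorem~\ref{thm1114} to identify $U^\psi_{\Omega^{\rm special}(q)}$ with $U^\varphi$ and hence the two quasi-Miura maps, then compute $\varphi'(V)=\psi'(V)e^{2q\psi(V)}$ to match the dispersionless seeds and conclude $Q_a^\psi=P_a^\varphi$, after which Theorem~\ref{mainconjecture} supplies the bihamiltonian structure. The paper leaves all of these steps implicit, but your reconstruction is exactly what is intended.
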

\begin{proof}
By Theorem~\ref{mainconjecture} and Theorem~\ref{thm1114}.
\end{proof}
Note that, by definition, the Schouten bracket of 
$Q_2^\psi\bigl(U_{\Omega^{\rm special}(q)}^\psi\bigr) \+ \lambda \, Q_1^\psi\bigl(U_{\Omega^{\rm special}(q)}^\psi\bigr)$ 
and itself vanishes identically in~$\lambda$, so 
 the non-trivial part of the above corollary is about the polynomial dependence of the 
 coefficients of both $Q_1^\psi\bigl(U_{\Omega^{\rm special}(q)}^\psi\bigr)$ and $Q_2^\psi\bigl(U_{\Omega^{\rm special}(q)}^\psi\bigr)$. We also verified the polynomiality directly up to and including the terms of order~8 in~$\e$.
It also follows from 
Theorem~\ref{mainconjecture}, Theorem~\ref{thm1114} and the computation for~\eqref{203719} that 
for any $\psi\in\G$, the central invariant of the Poisson pencil 
$
Q_2^\psi\bigl(U_{\Omega^{\rm special}(q)}^\psi\bigr) 
+ \lambda \, Q_1^\psi\bigl(U_{\Omega^{\rm special}(q)}^\psi\bigr)
$
is $1/24$ identically in~$q$.

There can be choices for 
$Q_a^\psi\bigl(U_{\Omega^{\rm special}(q)}^\psi\bigr)$, $a=1,2$, for a pencil. Our choice satisfies 
\beq
Q_2^\psi\bigl(U_{\Omega^{\rm special}(q)}^\psi\bigr) \+ \frac1{2\,q} \, Q_1^\psi\bigl(U_{\Omega^{\rm special}(q)}^\psi\bigr) 
\= P^\psi\bigl(U_{\Omega^{\rm special}(q)}^\psi\bigr)\,,
\eeq
where $P^\psi\bigl(U_{\Omega^{\rm special}(q)}^\psi\bigr)$ is defined in~\eqref{Poisson1112}. 
Note that we did not choose either the Poisson operator 
$Q_1^\psi\bigl(U_{\Omega^{\rm special}(q)}^\psi\bigr)$ or 
$Q_2^\psi\bigl(U_{\Omega^{\rm special}(q)}^\psi\bigr)$ to simply be $P^\psi\bigl(U_{\Omega^{\rm special}(q)}^\psi\bigr)$, 
but we choose them to match with the Poisson pencil for the bihamiltonian structure for the WK mapping hierarchy,  
along the generalized Hodge--WK correspondence. 
For the particular case when $\psi(V)=V$, a similar but different choice was made in~\cite{DLYZ16}, 
where $Q_2^\psi\bigl(U_{\Omega^{\rm special}(q)}^\psi\bigr)$ was chosen 
to be $-P^\psi\bigl(U_{\Omega^{\rm special}(q)}^\psi\bigr)$ and $Q_1^\psi\bigl(U_{\Omega^{\rm special}(q)}^\psi\bigr)$ was  
chosen the same as above, giving rise also to the central invariant $1/24$. 

Before ending the paper, we would like to mention a generalization of 
part of our constructions to semisimple Frobenius manifolds. This will be studied in a subsequent publication.

Let $M$ be an $n$-dimensional calibrated semisimple Frobenius manifold. Denote by $Z_M(\bt)$ 
and $Z_{M, \, \Omega(\boldsymbol{\sigma})}(\bt)$ 
the topological partition function of~$M$ and the Hodge partition function of~$M$, respectively. 
Here $\bt=(t^{\alpha,k})_{\alpha=1,\dots,n, k\ge0}$ is an infinite tuple of indeterminates. 
Recall that the integrable hierarchies corresponding to 
the partition functions $Z^M$ and $Z_{M, \, \Omega(\boldsymbol{\sigma})}$ are the 
{\it Dubrovin--Zhang hierarchy of~$M$} ({\it aka the integrable hierarchy of topological type of~$M$}) and the {\it Hodge hierarchy of~$M$}, 
respectively. The logarithm $\log Z_M=:\F_M$ is called the {\it topological free energy of~$M$},
 and $\log Z_{M, \, \Omega(\boldsymbol{\sigma})}=:\F_{M, \, \Omega(\boldsymbol{\sigma})}$ the {\it Hodge free energy of~$M$}.
 Both $\F_M$ and $\F_{M, \, \Omega(\boldsymbol{\sigma})}$ have genus expansions:
 \beq
 \F_M(\bt;\e) \= \sum_{g\ge0} \e^{2g-2} \F_{M, \, g}(\bt) \,,\quad 
 \F_{M, \, \Omega(\boldsymbol{\sigma})} (\bt;\e) \= \sum_{g\ge0} \e^{2g-2} \F_{M, \, \Omega(\boldsymbol{\sigma}), \, g}(\bt)\,.
 \eeq
In this more general context, the group $\G$ is replaced by a more general group of 
affine-linear transformations % whose linear part is close to the identity 
 such that 
$\F_{M, \, 0}(\bt)=\F_{M, \, \Omega(\boldsymbol{\sigma}), \, 0}(\bt)$ is invariant under the transformation. 
For any such transformation~$\varphi$ we define
the {\it mapping partition function of~$M$ associated to~$\varphi$} 
as before by $Z^\varphi_M(\bT;\e):=Z_M(\bT. \varphi^{-1};\e)$ 
and the {\it Hodge mapping partition function of~$M$ associated to~$\varphi$} 
by $Z^\varphi_{M, \, \Omega(\boldsymbol{\sigma})}(\bT;\e):=Z_{M, \, \Omega(\boldsymbol{\sigma})}(\bT.\varphi^{-1};\e)$.
Since $Z^\varphi_{M, \, \Omega({\bf 0})}(\bT;\e)=Z^\varphi_M(\bT;\e)$, 
it is enough to study the Hodge mapping partition function of~$M$.
Let $X=T^{1,0}$ and let 
\beq
U^{\varphi}_{\alpha, \, M, \, \Omega(\boldsymbol{\sigma})}(\bT;\e) \,:=\, 
\e^2 \, \frac{\p^2 F^\varphi_{M, \, \Omega(\boldsymbol{\sigma})}(\bT;\e)}{\p X \p T^{\alpha,0}} \,, \quad \alpha=1,\dots,n\,.
\eeq
By the arguments similar to the proof of Theorem~\ref{conj060923}, we know that 
$U^{\varphi}_{\alpha, \, M, \, \Omega(\boldsymbol{\sigma})}(\bT;\e)$, $\alpha=1,\dots,n$, satisfy an integrable hierarchy
of evolutionary PDEs, which we call the {\it Hodge mapping hierarchy of~$M$ associated to~$\varphi$}. 
We expect that this hierarchy is hamiltonian. In 
particular, when $\boldsymbol{\sigma}={\bf 0}$ we call it the {\it Dubrovin--Zhang mapping hierarchy of~$M$ associated to~$\varphi$}, which is bihamiltonian for reasons similar to the 
proof of Theorem~\ref{mainconjecture} (cf.~\cite{DZ-norm, LWangZ21, LWangZ}).
We also call $Z^\varphi_{M, \, \Omega^{\rm special}(q)}(\bT;\e)$ the {\it special-Hodge mapping partition function of~$M$ associated to~$\varphi$}, 
and the integrable hierarchy satisfied by $U^{\varphi}_{\alpha, \, M, \, \Omega^{\rm special}(q)}(\bT;\e)$ is 
called the {\it special-Hodge mapping hierarchy of~$M$ associated to~$\varphi$}.

\medskip
\medskip

\noindent Di Yang

\noindent School of Mathematical Sciences, University of Science and Technology of China,

\noindent Jinzhai Road 96, Hefei 230026, P.R. China 

\noindent diyang@ustc.edu.cn

\medskip
\medskip

\noindent Don Zagier

\noindent Max-Planck-Institut f\"ur Mathematik, Vivatsgasse 7, Bonn 53111, Germany 

and 

\noindent International Centre for Theoretical Physics, Strada Costiera 11, Trieste 34014, Italy

\noindent dbz@mpim-bonn.mpg.de


\begin{thebibliography}{99}
%%%%
\bibitem{AvM92}
Adler, M., van Moerbeke, P.: A matrix integral solution to two-dimensional $W_p$-gravity. Comm. Math. Phys.,~{\bf 147} (1992), 25--56.

%%%%
\bibitem{AvM}
Adler, M., van Moerbeke, P.: 
Matrix integrals, Toda symmetries, Virasoro constraints, and orthogonal polynomials. Duke Math. J.,~{\bf 80} (1995), 863--911.

%%%%
\bibitem{Al18} 
Alexandrov, A.: Cut-and-join description of generalized Brezin-Gross-Witten model. Adv. Theor. Math. Phys., {\bf 22} (2018), 1347--1399.

%%%%
\bibitem{Al21}
Alexandrov, A.: KP integrability of triple Hodge integrals. I. From Givental group to hierarchy symmetries. 
Commun. Number Theory Phys., {\bf 15} (2021), 615--650.

%%%%
\bibitem{Al212}
Alexandrov, A.: 
KP integrability of triple Hodge integrals. III. Cut-and-join description, KdV reduction, 
and topological recursions. arXiv:2108.10023.

%%%%
\bibitem{AIS}
Alexandrov, A., Iglesias, F. H., Shadrin, S.: 
Buryak-Okounkov formula for the $n$-point function and a new proof of the Witten conjecture. 
IMRN 2021, no.~{\bf 18}, 14296--14315.

%%%%
\bibitem{AIK}
Arakawa, T., Ibukiyama, T., Kaneko, M.: 
{\it Bernoulli Numbers and Zeta Functions (With an Appendix by Don Zagier)}. Springer Monographs in Mathematics, Springer, Tokyo, 2014.

%%%%
\bibitem{AL}
Arsie, A., Lorenzoni, P.: On bi-Hamiltonian deformations of exact pencils of hydrodynamic type. J. Phys. A,~{\bf 44} (2011), 225205, 31~pp.

%%%%
\bibitem{ALM}
Arsie, A., Lorenzoni, P., Moro, A.: On integrable conservation laws.
Proc. A.~{\bf 471} (2015), 20140124, 12~pp.

%%%%
\bibitem{BDY16}
Bertola, M., Dubrovin, B., Yang, D.: 
Correlation functions of the KdV hierarchy and applications to intersection numbers over $\overline{\mathcal{M}}_{g,n}$. 
Phys. D, {\bf 327} (2016), 30--57.

%%%%
\bibitem{BIZ}
Bessis, D., Itzykson, C., Zuber, J.-B.: Quantum field theory techniques in graphical enumeration. Adv.
Appl. Math.,~{\bf 1} (1980), 109--157.

%%%%
\bibitem{Bouss}
Boussinesq, J.: {\it Essai sur la th\'eorie des eaux courantes}. 
M\'em. Pr\'es. Divers Savants Acad. Sci. Institut France, {\bf 23} and {\bf 24}, 1--680, Paris, 1877.

%%%%
\bibitem{BG}
Br\'ezin, E., Gross, D. J.: The external field problem in the large N limit of QCD, Phys. Lett. B, {\bf 97} (1980), 120--124.

%%%%
\bibitem{Buryak}
Buryak, A.: Dubrovin-Zhang hierarchy for the Hodge integrals. Commun. Number Theory Phys.~{\bf 9} (2015), 239--272.

%%%%
\bibitem{Buryak15}
Buryak, A.: Double ramification cycles and integrable hierarchies. Comm. Math. Phys.~{\bf 336}, (2015), 1085--1107.

%%%%
\bibitem{BDGR18}
Buryak, A., Dubrovin, B., Gu\'er\'e, J., Rossi, P.: 
Tau-structure for the double ramification hierarchies. Comm. Math. Phys.~{\bf 363} (2018), 191--260.

%%%%
\bibitem{BDGR20}
Buryak, A., Dubrovin, B., Gu\'er\'e, J., Rossi, P.: 
Integrable systems of double ramification type. Int. Math. Res. Not. IMRN~{\bf 2020}, 10381--10446.

%%%%
\bibitem{BPS12-1}
Buryak, A., Posthuma, H., Shadrin, S.: 
A polynomial bracket for the Dubrovin-Zhang hierarchies. J. Differential Geom., {\bf 92} (2012), 153--185.

%%%%
\bibitem{BPS12-2} 
Buryak, A., Posthuma, H., Shadrin, S.: On deformations of quasi-Miura transformations and the 
Dubrovin-Zhang bracket. J. Geom. Phys.,~{\bf 62} (2012), 1639--1651.

%%%%
\bibitem{CPS16}
Carlet, G., Posthuma, H., Shadrin, S.: 
The bi-Hamiltonian cohomology of a scalar Poisson pencil. Bull. Lond. Math. Soc.,~{\bf 48} (2016), 617--627.

%%%%
\bibitem{CPS18}
Carlet, G., Posthuma, H., Shadrin, S.: 
Deformations of semisimple Poisson pencils of hydrodynamic type are unobstructed. J. Differential Geom.,~{\bf 108} (2018), 63--89.

%%%%
\bibitem{CLL08}
Chen, L., Li, Y., Liu, K.: Localization, Hurwitz Numbers and the Witten Conjecture. Asian Journal of Mathematics,~{\bf 12}, 511--518.

%%%%
\bibitem{DMS05}
Degiovanni, L, Magri, F., Sciacca, V.: On deformation of Poisson manifolds of hydrodynamic type. 
Comm. Math. Phys.~{\bf 253} (2005), 1--24.

%%%%
\bibitem{DM69}
Deligne, P., Mumford, D.:
The irreducibility of the space of curves of given genus. 
Inst. Hautes \'Etudes Sci. Publ. Math.,~{\bf 36} (1969), 75--109.

%%%%
\bibitem{Dickey}
Dickey, L.~A.: {\it Soliton equations and Hamiltonian systems}. 
Second edition. Advanced Series in Mathematical Physics,~{\bf 26}. World Scientific Publishing Co., Inc., River Edge, NJ, 2003.

%%%%
\bibitem{DVV91}
Dijkgraaf, R., Verlinde, H., Verlinde, E.: Loop equations and Virasoro constraints in 
nonperturbative two-dimensional quantum gravity. Nucl. Phys. B, {\bf 348} (1991), 435--456.

%%%%
\bibitem{DW90}
Dijkgraaf, R., Witten, E.: Mean field theory, topological field theory, and multi-matrix models. 
Nucl. Phys. B, {\bf 342} (1990), 486--522.

%%%%
\bibitem{Du91}
Dubrovin, B. A.: {\it Geometry of Hamiltonian evolutionary systems}. 
Monographs and Textbooks in Physical Science. Lecture Notes,~{\bf 22}. Bibliopolis, Naples, 1991.

%%%%
\bibitem{Du96}
Dubrovin, B.: Geometry of 2D topological field theories. In ``Integrable Systems and Quantum Groups" (Montecatini Terme, 1993), 
editors: Francaviglia, M., Greco, S., Lecture Notes in Math., vol.~{\bf 1620}, pp.~120--348. Springer, Berlin, 1996.

%%%%
\bibitem{Du06}
Dubrovin, B.: On Hamiltonian perturbations of hyperbolic systems of conservation laws. II. 
Universality of critical behaviour. Comm. Math. Phys.,~{\bf 267} (2006), 117--139.

%%%%
\bibitem{Du08}
Dubrovin, B.: On universality of critical behaviour in Hamiltonian PDEs. Amer. Math. Soc. Transl.,~{\bf 224} (2008), 59--109.

%%%%
\bibitem{Du09}
Dubrovin, B.: 
Hamiltonian perturbations of hyperbolic PDEs: from classification results to the properties of solutions. 
In: V. Sidoravi\v{c}ius (eds), New Trends in Mathematical Physics. Springer, Dordrecht, 2009.

%%%%%%
%\bibitem{Du09-ictpschool}
%Dubrovin, B., Hamiltonian Perturbations of Hyperbolic PDEs and Applications. 
%Lecture notes in School/Workshop on Integrable Systems and Scientific Computing, ICTP (2009).

%%%%
\bibitem{Du14}
Dubrovin, B.: Gromov--Witten invariants and integrable hierarchies of topological type. 
In ``Topology, Geometry, Integrable Systems, and Mathematical Physics: Novikov's Seminar 2012--2014", editors: Buchstaber, V. M., Dubrovin, B.~A., 
and Krichever, I.~M., Amer. Math. Soc. Transl., Ser. 2, Vol.,~{\bf 234}, pp.~141--171. AMS, Providence, RI, 2014.

%%%%
\bibitem{DLYZ16} 
Dubrovin, B., Liu, S.-Q., Yang, D., Zhang, Y.:
Hodge integrals and tau-symmetric integrable hierarchies of Hamiltonian evolutionary PDEs. Adv. Math., {\bf 293} (2016), 382--435.

%%%%
\bibitem{DLYZ20}
Dubrovin, B., Liu, S.-Q., Yang, D., Zhang, Y.:
Hodge--GUE correspondence and the discrete KdV equation. Comm. Math. Phys., {\bf 379} (2020), 461--490.

%%%%
\bibitem{DLZ06}
Dubrovin, B., Liu, S.-Q., Zhang, Y.: On Hamiltonian perturbations of hyperbolic systems of conservation laws. I. Quasi-triviality of bi-Hamiltonian perturbations. Comm. Pure Appl. Math., {\bf 59} (2006), 559--615.

%%%%
\bibitem{DLZ18}
Dubrovin, B., Liu, S.-Q., Zhang, Y.: Bihamiltonian cohomologies and integrable hierarchies II: The tau structures. 
Comm. Math. Phys., {\bf 361} (2018), 467--524.

%%%%
\bibitem{DN83}
Dubrovin, B., Novikov, S. P.: Hamiltonian formalism of one-dimensional systems of the hydrodynamic type and the Bogolyubov-Whitham averaging method. Dokl. Akad. Nauk SSSR, {\bf 270} (1983), 781--785. 

%%%%
\bibitem{DN89}
Dubrovin, B., Novikov, S. P.: Hydrodynamics of weakly deformed soliton lattices. Differential geometry and Hamiltonian theory. 
Uspekhi Mat. Nauk, {\bf 44} (1989), 29--98, 203; translation in Russian Math. Surveys, {\bf 44} (1989), 35--124.

%%%%
\bibitem{DY17} 
Dubrovin, B., Yang, D.: On cubic Hodge integrals and random matrices.
Commun. Number Theory Phys., {\bf 11} (2017), 311--336.

%%%%
\bibitem{DY17-2}
Dubrovin, B., Yang, D.: Generating series for GUE correlators. Lett. Math. Phys., {\bf 107} (2017), 1971--2012.

%%%%
\bibitem{DY20-1}
Dubrovin, B., Yang, D.: 
Matrix resolvents and the discrete KdV hierarchy. Comm. Math. Phys.,~{\bf 377} (2020), 1823--1852. 

%%%%
\bibitem{DY20-2}
Dubrovin, B., Yang, D.: Remarks on intersection numbers and integrable hierarchies. I. Quasi-triviality. 
Adv. Theor. Math. Phys., {\bf 24} (2020), 1055--1085.

%%%%
\bibitem{DYZ21}
Dubrovin, B., Yang, D., Zagier, D.: On tau-functions for the KdV hierarchy. Selecta Math.,~{\bf 27} (2021), 
Paper No. 12, 47 pp.

%%%%
\bibitem{DZ-norm}  
Dubrovin, B., Zhang, Y.: Normal forms of hierarchies of integrable PDEs, 
Frobenius manifolds and Gromov--Witten invariants. arXiv:math/0108160.

%%%%
\bibitem{FP00}
Faber, C., Pandharipande, R.: Hodge integrals and Gromov-Witten theory. Invent. Math.,~{\bf 139} (2000), 173--199.

%%%%
\bibitem{Getzler02}
Getzler, E.: A Darboux theorem for Hamiltonian operators in the formal calculus of variations.
Duke Math. J., {\bf 111} (2002), 535--560. 

%%%%
\bibitem{Getzler}
Getzler, E.: The jet-space of a Frobenius manifold and higher-genus Gromov-Witten invariants. 
Frobenius manifolds, 45--89, Aspects Math., E36, Friedr. Vieweg, Wiesbaden, 2004.

%%%%%
\bibitem{Gi01}
Givental, A.: Gromov--Witten invariants and quantization of quadratic Hamiltonians. Mosc. Math. J.,~{\bf 1} (2001), 551--568.

%%%%%
\bibitem{GV99} 
Gopakumar, R., Vafa, C.: On the gauge theory/geometry correspondence. Adv. Theor. Math. Phys.,~{\bf 5} (1999), 1415--1443.

%%%%
\bibitem{GJV01}
Goulden, I.~P., Jackson, D.~M., Vakil, R.: 
The Gromov-Witten potential of a point, Hurwitz numbers, and Hodge integrals. Proc. London Math. Soc. (3),~{\bf 83} (2001), 563--581.

%%%%
\bibitem{GGR}
Gisonni, M., Grava, T., Ruzza, G.: 
Laguerre ensemble: correlators, Hurwitz numbers and Hodge integrals.
Ann. Henri Poincar\'e,~{\bf 21} (2020), 3285--3339.

%%%%
\bibitem{GW}
Gross, D. J., Witten, E.: Possible third order phase transition in the large N lattice gauge theory. 
Phys. Rev. D,~{\bf 21} (1980), 446--453.

%%%%
\bibitem{HZ86}
Harer, J., Zagier, D.: 
The Euler characteristic of the moduli space of curves. Invent. Math.,~{\bf 85} (1986), 457--485.

%%%%
\bibitem{Hooft1}
t' Hooft, G.: A planar diagram theory for strong interactions. Nucl. Phys. B,~{\bf 72} (1974), 461--473. 

%%%%
\bibitem{Hooft2}
t' Hooft, G.: A two-dimensional model for mesons. Nucl. Phys. B,~{\bf 75} (1974), 461--470. 

%%%%
\bibitem{IZ92}
Itzykson, C., Zuber, J.-B.: Combinatorics of the modular group. II. The Kontsevich integrals. 
Internat. J. Modern Phys. A,~{\bf 7} (1992), 5661--5705.

%%%%
\bibitem{KMZ}
Kaufmann, R., Manin, Yu., Zagier, D.: Higher Weil-Petersson volumes of moduli spaces of stable  $n$-pointed curves.
Comm. Math. Phys.~{\bf 181} (1996), 763--787.

%%%%
\bibitem{KKN}
Kazakov, V., Kostov, I., Nekrasov, N.: D-particles, matrix integrals and KP hierarchy. Nucl. Phys. B,~{\bf 557} (1999), 413--442.

%%%%
\bibitem{Ka09}
Kazarian, M.: KP hierarchy for Hodge integrals. Adv. Math.,~{\bf 221} (2009), 1--21. 

%%%%%
\bibitem{KL07}
Kazarian, M., Lando, S.: An algebro-geometric proof of Witten's conjecture. 
Journal of the American Mathematical Society,~{\bf 20} (2007), 1079--1089.

%%%%
\bibitem{KL05}
Kim, Y. S., Liu, K.: A simple proof of Witten conjecture through localization. arXiv:math/0508384.

%%%%%
\bibitem{Ko92} 
Kontsevich, M.: Intersection theory on the moduli space of curves and the matrix Airy function. 
Comm. Math. Phys.,~{\bf 147} (1992), 1--23.

%%%%
\bibitem{KM}
Kontsevich, M., Manin, Yu.: Gromov-Witten classes, quantum cohomology, and enumerative geometry. 
Comm. Math. Phys., {\bf 164} (1994), 525--562.

%%%%
\bibitem{KdV}
Korteweg, D. J., de Vries, G.: On the change of form of long waves advancing in a rectangular canal, 
and on a new type of long stationary waves. Philos. Mag. (5),~{\bf 39} (1895), 422--443.

%%%%
\bibitem{Lax}
Lax, P.: 
Integrals of nonlinear equations and solitary waves, Commun. Pure Appl. Math.,~{\bf 21} (1968) 467--490.

%%%%%
\bibitem{LLZ03}
Liu, C.-C.M., Liu, K., Zhou, J.:  A proof of a conjecture of Mari\~no--Vafa on Hodge integrals. 
J. Diff. Geom., {\bf 65} (2003), 289--340.

%%%%%
\bibitem{LX09}
Liu, K., Xu, H.:
Recursion formulae of higher Weil-Petersson volumes. IMRN~{\bf 2009}, 835--859.

%%%%%
\bibitem{LWangZ21}
Liu, S.-Q., Wang, Z., Zhang, Y.: Linearization of Virasoro symmetries associated with semisimple Frobenius manifolds. 	arXiv:2109.01846.

%%%%
\bibitem{LWangZ}
Liu, S.-Q., Wang, Z., Zhang, Y.: Variational Bihamiltonian 
Cohomologies and Integrable Hierarchies III: Linear Reciprocal Transformations. arXiv:2305.10851.

%%%%%%
\bibitem{LWZ}
Liu, S.-Q., Wu, C.-Z., Zhang, Y.:
On properties of Hamiltonian structures for a class of evolutionary PDEs. 
Lett. Math. Phys.~{\bf 84} (2008), 47--63.

%%%%
\bibitem{LYZZ22}
Liu, S.-Q., Yang, D., Zhang, Y., Zhou, J.: The Virasoro-like Algebra of a Frobenius Manifold. IMRN, 
https://doi.org/10.1093/imrn/rnac209.

%%%%
\bibitem{LZ05}
Liu, S.-Q., Zhang, Y.: Deformations of semisimple bihamiltonian structures of hydrodynamic type. 
J. Geom. Phys.,~{\bf 54} (2005), 427--453.

%%%%
\bibitem{LZ06}
Liu, S.-Q., Zhang, Y.: On quasi-triviality and integrability of a class of scalar evolutionary PDEs. 
J. Geom. Phys., {\bf 57} (2006), 101--119.

%%%%
\bibitem{LZ11}
Liu, S.-Q., Zhang, Y.: Jacobi structures of evolutionary partial differential equations. Adv. Math., 
{\bf 227} (2011), 73--130.

%%%%
\bibitem{LZ13}
Liu, S.-Q., Zhang, Y.:
Bihamiltonian cohomologies and integrable hierarchies I: A special case. 
Comm. Math. Phys.,~{\bf 324} (2013), 897--935. 

%%%%
\bibitem{L02}
Lorenzoni, P.: Deformations of bi-Hamiltonian structures of hydrodynamic type. J. Geom. Phys.,~{\bf 44} (2002), 331--375.

%%%%
\bibitem{M78}
Magri, F.: A simple model of the integrable Hamiltonian equation, J. Math. Phys.,~{\bf 19} (1978), 1156--1162.

%%%%
\bibitem{Ma99}
Manin, Yu. I.: Frobenius manifolds, quantum cohomology, and moduli spaces. 
American Mathematical Society Colloquium Publications,~{\bf 47}. American Mathematical Society, Providence, RI, 1999.

%%%%
\bibitem{MZ00}
Manin, Yu., Zograf, P.: 
Invertible cohomological field theories and Weil-Petersson volumes.
Ann. Inst. Fourier (Grenoble),~{\bf 50} (2000), 519--535.

%%%%
\bibitem{MV02}
Mari\~no, M., Vafa, C.: Framed knots at large N. Contemporary Mathematics, {\bf 310} (2002), 185--204.

%%%%
\bibitem{MMS96}
Mironov, A., Morozov, A., Semenoff, G. W.: 
Unitary matrix integrals in the framework of generalized
Kontsevich model. 1. Brezin-Gross-Witten model, Int. J. Mod. Phys. A, {\bf 11} (1996), 5031--5080.

%%%%%
\bibitem{Mir07}
Mirzakhani, M.: Weil--Petersson volumes and intersection theory on the moduli space of curves. 
Journal of the American Mathematical Society, {\bf 20} (2007), 1--23.

%%%%
\bibitem{Miura68}
Miura, R.: Korteweg-de Vries equation and generalizations. I. A remarkable explicit nonlinear transformation, 
J. Math. Phys.,~{\bf 9} (1968), 1202--1204.

%%%%
\bibitem{MGK}
Miura, R., Gardner, C., Kruskal, M.: Korteweg-de Vries equations and generalizations. 
II. Existence of Conservation laws and constants of motion, J. Math. Phys.~{\bf 9} (1968), 1204--1209.

%%%%
\bibitem{MS08}
Mulase, M.,  Safnuk, B.: Mirzakhani's recursion relations, 
Virasoro constraints and the KdV hierarchy. Indiana Journal of Mathematics~{\bf 50} (2008), 189--218.

%%%%%
\bibitem{Mu83}
Mumford, D.: Towards an enumerative geometry of the moduli space of curves. 
In ``Arithmetic and Geometry", Birkh\"auser, Boston, 1983, pp. 271--328.

%%%%
\bibitem{N92}
Novikov, S.~P.:
{\it Solitons and geometry}. 
Lezioni Fermiane. New York, NY: Cambridge University Press. Pisa: Accademia Nazionale dei Lincei, Scuola Normale Superiore, 58~pp, 1992.

%%%%
\bibitem{Ok}
Okounkov, A.: Random matrices and random permutations. International Mathematics Research Notices, {\bf 2000}, 1043--1095.

%%%%
\bibitem{OP04}
Okounkov, A., Pandharipande, R.: Hodge integrals and invariants of the unknot. Geom. Topol.,~{\bf 8} (2004), 675--699.

%%%%%
\bibitem{OP09}
Okounkov, A., Pandharipande, R.: 
Gromov--Witten theory, Hurwitz numbers, and matrix models. Proc. Symposia Pure Math. Vol., {\bf 80}, Part 1, pp. 325--414, 2009.

%%%%
\bibitem{OS20}
Okuyama, K., Sakai, K.: 
JT supergravity and Brezin-Gross-Witten tau-function, J. High Energy Phys.~{\bf 2020}, 160.

%%%%
\bibitem{RJ}
Ro\v{z}destvenski\u{\i}, B. L., Janenko, N. N.:
Systems of quasilinear equations and their applications to gas dynamics. 
Translated from the second Russian edition by J. R. Schulenberger. 
Translations of Mathematical Monographs,~{\bf 55}. American Mathematical Society, Providence, RI, 1983.

%%%%%
\bibitem{Sato}
Sato, M.: Soliton Equations as Dynamical Systems on a Infinite Dimensional Grassmann Manifolds 
(Random Systems and Dynamical Systems). RIMS Kokyuroku, {\bf 439} (1981), 30--46.

%%%%
\bibitem{SW}
Segal, G., Wilson, G.: 
Loop groups and equations of KdV type. Inst. Hautes \'Etudes Sci. Publ. Math., {\bf 61} (1985), 5--65.

%%%%%
\bibitem{VY}
Valeri, D., Yang, D.: Remarks on intersection numbers and integrable hierarchies. III. 
Tau-structure. arXiv:2312.16575.

%%%%
\bibitem{WW}
Whittaker, E. T., Watson, G. N.: {\it A Course of Modern Analysis}. Fourth edition. Cambridge University Press,
Cambridge, 1963.

%%%%%
\bibitem{Wi91} 
Witten, E.: Two-dimensional gravity and intersection theory on moduli space. 
Surveys in differential geometry, {\bf 1} (1991), 243--320. Lehigh Univ., Bethlehem, PA.

%%%%%
\bibitem{Y22}
Yang, D.: GUE via Frobenius Manifolds. I. From Matrix Gravity to Topological Gravity and Back. 
Acta Mathematica Sinica, English Series, DOI: 10.1007/s10114-023-2258-8 (2023). 
%arXiv:2205.01618.

%%%%%
\bibitem{YZ21}
Yang, D., Zhang, Q.-S.: On the Hodge-BGW correspondence. arXiv:2112.12736.

%%%%
\bibitem{YZ23} 
Yang, D., Zhang, Q.-S.: 
On a new proof of the Okuyama--Sakai conjecture. 
Rev. Math. Phys.,~{\bf 35} (2023), 2350025, 14~pp.

%%%%
\bibitem{YZ22}
Yang, D., Zhou, J.: Grothendieck's dessins d'enfants in a web of dualities. III. 
J. Phys. A,~{\bf 56} (2023), Paper No.~055201, 34~pp.

%%%%
\bibitem{Zhou}
Zhou, J.: Grothendieck's dessins d'enfants in a web of dualities. arXiv:1905.10773.

\end{thebibliography}
\end{document}